\documentclass[11pt]{article}
\usepackage{paralist}
\usepackage{color}

\usepackage{amsfonts}
\usepackage{amssymb}

\usepackage{amsmath}
\usepackage{constants}
\usepackage{amsthm}
\makeatletter
\newtheorem*{rep@theorem}{\rep@title}
\newcommand{\newreptheorem}[2]{%
\newenvironment{rep#1}[1]{%
 \def\rep@title{#2 \ref{##1}}%
 \begin{rep@theorem}}%
 {\end{rep@theorem}}}
\makeatother


\allowdisplaybreaks

\usepackage{ifpdf}
\newcommand{\mydriver}{hypertex}
\ifpdf
 \renewcommand{\mydriver}{pdftex}
\fi
\usepackage[breaklinks,\mydriver]{hyperref}


\usepackage[margin=1in]{geometry}

\usepackage{tikz}
\usetikzlibrary{shapes,calc,positioning,arrows,}
\newcommand{\ie}{\text{i.\,e.}\ }

\newcommand{\ar}{\operatorname{ar}}
\newcommand{\Npos}{\mathbb N_{> 0}}
\newcommand{\FO}{\operatorname{FO}}
\newcommand{\FOsigma}{\operatorname{FO}[\sigma]}

\newcommand{\F}{\mathcal{F}}

\newcommand{\x}{\textbf{x}}
\newcommand{\y}{\textbf{y}}

\newcommand{\rot}{\operatorname{ROT}}
\newcommand{\dist}{\operatorname{dist}}
\newcommand{\indexSetRotation}{[D]^2}
\newcommand{\indexSetH}{([D]^2)^2}
\newcommand{\sampler}{\textsc{EstimateFrequencies}}
\newcommand{\setOfMaxCliques}[1][G]{\mathcal{K}^{#1}}
\newcommand{\maxcl}{\operatorname{maxcl}}
\renewcommand{\phi}{\varphi}
\renewcommand{\mod}{\mathrel{\mathrm{mod}}}

\newcommand*\circled[1]{\tikz[baseline=(char.base)]{
		\node[shape=circle,draw,inner sep=1pt] (char) {#1};}}
\DeclareMathOperator{\zigzag}{\circled{{\rm z}}}

\newcommand{\margin}[1]{$\bullet$\marginpar{\raggedright\footnotesize #1}}

\providecommand{\abs}[1]{\left\lvert#1\right\rvert}

\theoremstyle{plain}
\newtheorem{theorem}{Theorem}[section]
\newtheorem{fact}[theorem]{Fact}
\newtheorem{lemma}[theorem]{Lemma}

\newtheorem{claim}[theorem]{Claim}
\newtheorem{proposition}[theorem]{Proposition}

\newreptheorem{theorem}{Theorem}
\newreptheorem{lemma}{Lemma}
\newtheorem{definition}[theorem]{Definition}

\theoremstyle{definition}

\newtheorem{example}[theorem]{Example}

\newcommand{\junk}[1]{{}}

\providecommand{\abs}[1]{\lvert#1\rvert} 


\title{On Testability of First-Order Properties in  Bounded-Degree Graphs}

\author{Isolde Adler
	\thanks{School of Computing, University of Leeds, UK. Email: \url{I.M.Adler@leeds.ac.uk}. }
	\and
	Noleen K\"ohler
	\thanks{School of Computing, University of Leeds, UK. Email: 	\url{scnk@leeds.ac.uk}.}
	\and
	Pan Peng
	\thanks{Department of Computer Science, University of Sheffield, UK. Email: \url{p.peng@sheffield.ac.uk}.}
}

\date{}

\begin{document}
	
	
\begin{titlepage}
		
\maketitle
		
\thispagestyle{empty}
\begin{abstract}
We study property testing of properties that are definable in first-order logic (FO) in the
bounded-degree graph and relational structure models. We show that any FO property that is defined by a formula with quantifier prefix $\exists^*\forall^*$ is testable (i.e., testable with constant query complexity), while there exists an FO property that is expressible by a formula with quantifier prefix  $\forall^*\exists^*$ that is not testable. In the dense graph model, a similar picture is long known (Alon,
Fischer, Krivelevich, Szegedy, Combinatorica 2000), despite the very different nature of the two models. In particular, we obtain our lower bound by 
a first-order formula that defines a class of bounded-degree expanders,
based on zig-zag products of graphs. We expect this to be of independent
interest.



We then prove testability of some first-order
properties that speak about isomorphism types of neighbourhoods,
including testability of $1$-neigh\-bour\-hood-freeness, and 
$r$-neighbourhood-freeness under a mild assumption on the degrees.
	
\end{abstract}
		
\end{titlepage}
	
\section{Introduction}
Graph property testing is a framework for studying sampling-based algorithms that solve a relaxation of classical decision problems on graphs. Given a graph $G$ and a property $P$ (e.\,g.\ triangle-freeness), the goal of a property testing algorithm, called a \emph{property tester}, is to distinguish if a graph satisfies $P$ or is \emph{far} from satisfying $P$, where the definition of \emph{far} depends on the model. The general notion of property testing was first proposed by Rubinfeld and Sudan \cite{rubinfeld1996robust}, with the motivation for the study of program checking. Goldreich, Goldwasser and Ron \cite{goldreich1998property} then introduced the property testing for combinatorial objects and graphs. They formalized the \emph{dense graph model} for testing graph properties, in which the algorithm can query if any pair of vertices of the input graph $G$ with $n$ vertices are adjacent or not, and the goal is to distinguish, with probability at least $2/3$, the case of $G$ satisfying a property $P$ from the case that one has to modify (delete or insert) more than $\varepsilon n^2$ edges to make it satisfy $P$, for any specified proximity parameter $\varepsilon\in (0,1]$.
A property $P$ is called testable (in the dense graph model), if it can be tested with constant query complexity, i.e., the number of queries made by the tester is bounded by a function of $\varepsilon$ and is independent of the size of the input graph. Since \cite{goldreich1998property}, much effort has been made on the testability of graph properties in this model, culminating in the work by Alon et al.~\cite{alon2009combinatorial}, who showed that a property is testable if and only if it can be reduced to testing for a finite number of regular partitions. 

Since Goldreich and Ron's seminal work~\cite{GoldreichRon2002} introducing property testing on bounded-degree graphs, much 
attention has been paid to property testing in sparse graphs. 
Nevertheless, our understanding of testability of properties in such graphs is still limited. In the \emph{bounded-degree graph model}~\cite{GoldreichRon2002}, the algorithm has oracle access to the input graph $G$ with maximum degree $d$, which is assumed to be a constant, and is allowed to perform \emph{neighbour queries} to the oracle. That is, for any specified vertex $v$ and index $i\leq d$, the oracle returns the $i$-th neighbour of $v$ if it exists or a special symbol $\bot$ otherwise in constant time. 
A graph $G$ with $n$ vertices is called \emph{$\varepsilon$-far} from satisfying a property $P$, if one needs to modify more than $\varepsilon dn$ edges to make it satisfy $P$. 
The goal now becomes to distinguish, with probability at least $2/3$, if $G$ satisfies a property $P$ or is $\varepsilon$-far from satisfying $P$, for any specified proximity parameter $\varepsilon\in (0,1]$. 
Again, a property $P$ is testable in the bounded-degree model, if it can be tested with constant query complexity, where the constant can depend on $\varepsilon, d$ while being  independent of $n$. So far, it is known that some properties are testable, including subgraph-freeness, $k$-edge connectivity, cycle-freeness, being Eulerian, degree-regularity~\cite{GoldreichRon2002}, minor-freeness~\cite{benjamini2010every,hassidim2009local,kumar2019random}, hyperfinite properties \cite{NewmanSohler2013}, $k$-vertex connectivity~\cite{yoshida2012property,forster2019computing}, and subdivision-freeness~\cite{kawarabayashi2013testing}. 

In this paper, we study the testability of properties definable in first-order  logic (FO) in the bounded-degree graph model. Recall that formulas of first-order logic on graphs are built from predicates for the edge relation and equality, using
Boolean connectives $\vee,
\wedge,\neg$ and universal and existential quantifiers $\forall,\exists$, where the variables represent graph vertices. First-order logic can e.\,g.\ express subgraph-freeness (i.\,e., no isomorphic copy of some fixed graph $H$ appears as a subgraph) and subgraph containment (i.\,e., an isomorphic copy of some fixed $H$ appears as a subgraph).  
Note however, that there are constant-query testable properties, such as connectivity and cycle-freeness, that cannot be expressed in first-order logic. We study the question of which first-order properties are testable in the bounded-degree graph model. Our study extends to the bounded-degree \emph{relational structure} model \cite{AdlerH18}, while we focus on the classes of relational structures  
with binary relations, i.e., edge-coloured directed graphs. In this model for relational structures, one can perform neighbour queries for each edge
colour class, querying for both in- and out-neighbours via edges in that class.
This model is natural in the context of relational databases, where each
(edge-)relation is given by a list of the tuples it contains. 

We consider the testability of first-order 
properties in the bounded degree model according to 
quantifier alternation, inspired by a similar study for dense graphs by Alon et al.~\cite{alon2000efficient}. On relational structures of bounded degree over a fixed finite signature, 
we have the following simple observation: Any 
first-order property definable by a sentence \emph{without} quantifier alternations is testable.
This means the sentence either consists of a quantifier prefix of the form $\exists^*$ (any
finite number of existential quantifications), followed by a quantifier-free formula, or it consists of
a quantifier prefix of the form $\forall^*$ (any finite number of universal quantifications), 
followed by a quantifier-free formula.
%
Basically, every property of the form $\exists^*$ is testable because the structure required by 
the quantifier free part of the formula can be planted
with a small number of tuple modifications if the input structure is large enough (depending on the formula), and
we can use an exact algorithm to determine the answer in constant time otherwise.
Every property of the form $\forall^*$ is testable because a formula of the form $\forall \bar x \phi(\bar x)$,
where $\phi$ is quantifier free, is logically equivalent to a formula of the form 
$\neg \exists \bar x \psi(\bar x)$, where $\psi$ is quantifier free. Testing $\neg \exists \bar x \psi(\bar x)$
then amounts to testing for the absence of a finite number of induced substructures, which can be done
similar to testing subgraph freeness~\cite{GoldreichRon2002}. 
The testability of a property becomes less clear if it is defined by a sentence \emph{with} quantifier alternations. 
Formally, we let $\Pi_2$ (resp. $\Sigma_2$) denote the set of properties that can be expressed by a formula in the $\forall^*\exists^*$-prefix (resp. $\exists^*\forall^*$-prefix) class. 
We obtain the following.
\\[-0.3cm]

\textit{Every first-order property in $\Sigma_2$ is testable in the bounded-degree model (Theorem \ref{thm:sigma2}). On the other hand, there is a first-order property in $\Pi_2$, that is not testable in the bounded-degree model (Theorem~\ref{thm:pi2}).}\\[-0.3cm]

The theorems that we refer to in the above statement are for relational structures, while we also give a lower bound on graphs (Theorem~\ref{thm:simpleDelta2}),
so the statement also holds when restricted to FO on graphs. 
Interestingly, the above dividing line is the
same as for FO properties in dense graph model~\cite{alon2000efficient},
despite the very different nature of the two models.
Our proof uses a number of new proof techniques, combining graph theory, combinatorics and logic. 

We remark that our lower bound, i.e., the existence of a property in $\Pi_2$ that is not testable, is somewhat 
astonishing (on an intuitive level)  
due to the following two reasons. Firstly, it is proven by constructing a 
first-order definable class of structures that encode a class of expander graphs, which highlights that FO is surprisingly expressive on bounded degree graphs, despite its locality~\cite{Hanf1965,Gaifman82,FaginStockmeyerVardi1995}.   
Secondly, it is known that property testing algorithms in the bounded-degree model proceed by sampling vertices from the input graph and exploring their local neighbourhoods, and FO can only express `local' properties, while our lower bound shows that this is not sufficient for testability. We elaborate this in more details in the following. 
On one hand, Hanf's Theorem~\cite{Hanf1965} gives insight into first-order logic on graphs of bounded degree and 
implies a strong normal form, called \emph{Hanf Normal Form} (HNF) in~\cite{BolligKuske2012}, which we briefly sketch.
For a graph $G$ of maximum degree $d$ and a vertex $x$ in $G$, the
neighbourhood of fixed radius $r$ around $x$ in $G$ can be described by a first-order formula $\tau_r(x)$, up to isomorphism.
A \emph{Hanf sentence} is a first-order sentence of the form `there are at least $\ell$ vertices $x$ of 
neighbourhood (isomorphism) type $\tau_r(x)$'. A  first-order sentence is in HNF, if it is a Boolean combination of Hanf sentences.
By Hanf's Theorem, every first-order sentence is equivalent to a sentence in HNF on bounded-degree graphs~\cite{Hanf1965,FaginStockmeyerVardi1995,EF95}. Note that Hanf sentences only speak about local 
neighbourhoods. Hence this theorem gives evidence that first-order logic can only express local
properties. 
On the other hand, 
if a property is constant-query testable in the bounded-degree graph model, then it can be tested by approximating the distribution of local neighbourhoods (see \cite{CzumajPS16} and~\cite{goldreich2011proximity}). That is, a constant-query tester can essentially only test properties that are close to being defined by a distribution of local neighbourhoods. For these reasons\footnote{Furthermore, previously, typical FO properties are all known to be testable, including degree-regularity for a fixed given degree, containing a $k$-clique and a dominating set of size $k$ for fixed $k$ (which are trivially testable), and the aforementioned subgraph-freeness and subgraph containment (see e.g.~\cite{goldreich2017introduction}).}, a priori, it could be true that every property that can be expressed in first-order logic is testable in the bounded degree model. Indeed, the validity of this statement was raised as an open question in~\cite{AdlerH18}. However, our lower bound gives a negative answer to this question. 

Motivated by our above results, we further study testability of graph properties described through Hanf sentences or negated Hanf sentences, 
which are first-order properties that speak about isomorphism types of neighbourhoods. Given a bounded degree graph, an \emph{$r$-ball} around a vertex $x$ is the neighbourhood of radius $r$ around $x$ in the graph. We call the isomorphism types of $r$-balls \emph{$r$-types}. 
We consider two basic such properties, called \emph{$\tau$-neighbourhood regularity} and \emph{$\tau$-neighbourhood-freeness}, that correspond to ``all vertices have $r$-type $\tau$'' and ``no vertex has $r$-type $\tau$'', respectively. (Neighbourhood-regularity can be seen as a generalisation of degree-regularity, which is known to be testable \cite{goldreich2017introduction}.) 
As we show in Lemma \ref{lem:existencesigma2}, there exist $1$-types $\tau,\tau'$ such that neither $\tau$-neighbourhood-freeness nor $\tau'$-neighbourhood regularity can be defined by a formula in $\Sigma_2$. 
Thus, our previous tester for $\Sigma_2$ cannot be applied to these properties. We give constant-query testers for them under certain conditions on 
	$\tau$ (Theorem~\ref{thm:dNeighbourhoodFreeness}, Theorem~\ref{thm:1NeighbourhoodFreeness}, 
	Theorem~\ref{thm:neighbourhoodRegularity}).
	Both $\tau$-neighbourhood-freeness and $\tau$-neighbourhood regularity can be defined by formulas in $\Pi_2$ for any neighbourhood type $\tau$. Thus, our results imply that there are properties defined by formulas in $\Pi_2\setminus \Sigma_2$ which are testable. 

\paragraph{Our techniques} To show that every property $P$ defined by a formula $\varphi$ in $\Sigma_2$ (i.e.\ of the form $\exists^*\forall^*$) is testable, we show that $P$ is equivalent to 
{the union of properties $P_i$, each of which is} `indistinguishable' from a property $Q_i$ that is defined by a formula of form $\forall^*$. Here the indistinguishability means we can transform any structure satisfying $P_i$, into a structure satisfying $Q_i$ by modifying a small fraction of the tuples of the structure and vice versa. This allows us to reduce the problem of testing $P$ to testing properties defined by $\forall^*$ formulas. 
Then the testability of $P$ follows, as any property of the form $\forall^*$ is testable and testable properties are closed under union \cite{goldreich2017introduction}. The main challenge here is to deal with the interactions between existentially quantified variables and universally quantified variables.
Intuitively, the degree bound limits the structure that can be imposed by the universally quantified variables. Using this, we are able to deal with the existential variables together with these interactions by `planting' a required
constant size substructure in such a way, that we are only a constant number of modifications `away' from a formula of the form $\forall^*$. 

Complementing this, we use Hanf's theorem to observe that every FO property on degree-regular structures is in $\Pi_2$ (see Lemma \ref{lem:d-regHNF}). Thus to prove that there exists a property defined by a formula in $\Pi_2$ which is not testable, it suffices to show the existence of an FO property that is not testable and degree-regular. 
%
For the latter, 
we note that it suffices to construct a formula $\phi$, that defines a class of relational structures with binary relations only (edge-coloured directed graphs) whose underlying undirected graphs are expander graphs. To see this, we use an earlier result that if a property is constant-query testable, then the distance between the local (constant-size) neighbourhood distributions of a relational structure $\mathcal{A}$ satisfying the property $\phi$ and a relational structure $\mathcal{B}$ that is $\varepsilon$-far from having the property must be relatively large (see \cite{AdlerH18} which in turn is built upon the so-called ``canonical testers'' for bounded-degree graphs in \cite{CzumajPS16,goldreich2011proximity}). We then exploit a result of Alon (see Proposition 19.10 in~\cite{LovaszBook2012}), that the neighbourhood distribution of an arbitrarily large relational structure $\mathcal{A}$
can be approximated by the neighbourhood distribution of a structure $\mathcal{H}$ of small constant size. Thus, for any $\mathcal{A}$ in $\phi$, by taking the union of ``many'' disjoint copies of the ``small'' structure $\mathcal{H}$, we obtain another structure $\mathcal{B}$ such that the local neighbourhood distributions of $\mathcal{A}$ and $\mathcal{B}$ have small distance. If the underlying undirected graphs of the structures in $\phi$ are expander graphs, it immediately follows that $\mathcal{B}$ is far from the property defined by the formula $\phi$, from which we can conclude that the property $\phi$ is not testable. We remark that for simple undirected graphs, it was known before that any property that only consists of expander graphs is not testable \cite{fichtenberger2019every}. 


Now we construct a formula $\phi$, that defines a class of relational structures with binary relations only whose underlying
undirected graphs are expander graphs, arising from the zig-zag product by Reingold, Vadhan and Wigderson~\cite{ZigZagProductIntroduction}. For expressibility in FO, we hybridise the zig-zag construction of 
expanders with a tree structure. Roughly speaking, we start with a small graph
$H$, which is a good expander, and the formula $\phi$ expresses that each model
\footnote{When the context is clear, we use ``model'' to indicate that a structure satisfies some formula. This should not be confused with the names for our computational models, e.g., the bounded-degree model.}
looks like a rooted $k$-ary tree (for a suitable fixed $k$), where level $0$ consists of the root only,  
level $1$ contains $G_1:=H^2$, and level $i$ contains the zig-zag product
of $G_{i-1}^2$ with $H$. The class of trees is not definable in FO. 
However, we achieve that every finite model of our formula is connected and looks like a 
$k$-ary tree with the desired graphs on the levels.
This structure is
obtained by a recursive `copying-inflating' mechanism, 
to mimic the expander construction locally between consecutive levels.
For this we use a constant number of edge-colours, one set of colours for the edges of the tree, and  
another for the edges of the `level' graphs $G_i$.
On the way, many technicalities need to be tackled, such as encoding the zig-zag construction
into the local copying mechanism (and achieving the right degrees), and finally proving connectivity.
We then show that the underlying 
undirected graphs of the models of $\phi$ are expander graphs. 
Finally, we extend this construction to simple undirected graphs, 
by using carefully designed gadgets to encode the different edge-colours and maintain degree regularity.


To give our testers for $\tau$-neighbourhood regularity and $\tau$-neighbourhood-freeness, we show that if a graph $G$ is $\varepsilon$-far from having the property, it contains a linear fraction of constant-size neighbourhoods certifying that $G$ does not satisfy the property. Such a statement may be intuitively true, but it is tricky to prove. Assume we want to test for $\tau$-freeness, for some fixed $r$-neighbourhood type $\tau$,
	and assume a graph $G$ has one vertex $x$ with forbidden neighbourhood of type $\tau$.
	Changing the $r$-neighbourhood of $x$  
	by edge modifications, in order to remove $\tau$, might introduce new forbidden 
	neighbourhoods around vertices close to $x$, triggering
	a `chain reaction' of necessary modifications.
	This means that a graph might be $\epsilon$-far from being $\tau$-free, but we do not see it by sampling constantly many neighbourhoods in the graph. Such a subtle difficulty has already been observed for testing degree-regularity (see Claim 8.5.1 in \cite{goldreich2017introduction}). We show that under appropriate assumptions, such a `chain reaction' can be bypassed 
	by carefully fixing the neighbourhood of $x$ without changing the neighbourhood type of the vertices surrounding $x$. 
	Though fairly simple, it provides non-trivial analysis, handling the subtle difficulty of relating local distance to global distance without triggering a `chain reaction'.

\paragraph{Other related work}
Besides the aforementioned works on testing properties with constant query complexity in the bounded-degree graph model, Goldreich and Ron~\cite{goldreich2011proximity} have obtained a characterisation for a class of properties that are testable by a constant-query proximity-oblivious tester in bounded-degree graphs (and dense graphs). Such a class is a rather restricted subset of the class of all constant-query testable properties. 
Fichtenberger et al.~\cite{fichtenberger2019every} showed that every testable property is either finite or contains an infinite hyperfinite subproperty (see Definition~\ref{def:hyperfinite}). Ito et al.~\cite{ito2019characterization} gave characterisations of one-sided error (constant-query) testable monotone graph properties, and one-sided error testable hereditary graph properties in the bounded-degree (directed and undirected) graph model.

In the bounded-degree graph model, 
there are also properties (e.g.\ bipartiteness, expansion, $k$-clusterability) that require $\Omega(\sqrt{n})$ 
queries, and properties (e.g.\ $3$-colorability) that require $\Omega(n)$ queries. We refer the reader to Goldreich's recent book~\cite{goldreich2017introduction}.

Property testing on relational structures was recently motivated by the application in databases. Besides the aforementioned work \cite{AdlerH18}, Chen and Yoshida  \cite{chen2019testability} studied the testability of relational database queries for each relational structure in the framework of property testing.


\paragraph{Further discussions and open problems} The question whether first-order definable properties are testable with
a \emph{sublinear} number of queries  (e.g. $\sqrt{n}$) in the bounded-degree model is left open. 


We believe it is natural to study the problem of testing properties of neighbourhood types. Firstly, our previous results can be seen as an indication that
quantifier prefix classes are perhaps less suitable when searching for a dividing line
between testable and non-testable first-order properties in the bounded-degree model. 
Since subgraph-freeness and subgraph containment are testable, Hanf's normal form suggests 
studying testability of Hanf sentences and their negations, i.\,e.\ neighborhood properties, as a next step. Secondly, studying such properties helps us gain more insights on which properties that are defined by
distributions of neighbourhood types are testable, which is crucial to solve one of the most important open questions in this area, namely to characterise the combinatorial structure of testable properties in the bounded-degree model.


Furthermore, we remark that our testers for neighbourhood properties have \emph{one-sided error}, i.\,e.\ the testers always accept the graphs that satisfy the property. We note that in contrast to subgraph-freeness and induced subgraph-freeness, the properties $\tau$-neighbourhood regularity and $\tau$-neighbourhood-freeness are neither \emph{monotone} nor \emph{hereditary}, which are properties that are closed under edge deletion and closed under vertex deletion, respectively.  
As we mentioned before, Ito et al. \cite{ito2019characterization} recently characterised one-sided error (constant-query) testable monotone and hereditary graph properties in the bounded-degree (directed and undirected) graph model. In order to give a full characterisation of one-sided error testable properties in the bounded-degree graph model, it is important to take a step beyond monotone and hereditary graph properties.


\paragraph{Structure of the paper}
Section~\ref{sec:preliminaries} contains the preliminaries, including logic,
property testing and the zig-zag construction of expander graphs.
In Section~\ref{sec: definitionFormula} we construct the FO formula $\phi$ and prove properties 
of its models.
In Section~\ref{sec:FOnontestability}, we prove that there is a $\Pi_2$-property that is not testable, by proving that the property $P_\phi$ defined by $\phi$ on bounded-degree structures is not constant-query testable. We also 
provide a $\Pi_2$-property of simple, undirected graphs that is non-testable. In Section \ref{sec:testableSigma2}, we show that all $\Sigma_2$ properties are testable. 
In Section~\ref{sec:freeness} we give positive results for some first-order properties that
speak about isomorphism types of neighbourhoods. We conclude in Section \ref{sec:conclusion}. 

\section{Preliminaries}\label{sec:preliminaries}

We give some basics of graphs, relational structures and first-order logic in Appendix~\ref{sec:basic_appendix}. We use standard definitions and notation unless otherwise specified.

\subsection{The bounded-degree relational structure model}\label{sec:boundeddeg_model}
Let $\sigma=\{R_1,\dots,R_\ell\}$ be a relational signature and let $\mathcal{A}$ be a $\sigma$-structure with universe $A$.
The \emph{degree} of an element $a\in A$ denoted by $\deg_\mathcal{A}(a)$ is defined to be the number of tuples in $\mathcal{A}$ containing $a$.
We define the \textit{degree} of $\mathcal{A}$ denoted by $\deg(\mathcal{A})$ to be the maximum degree of its elements. For any $d\in \mathbb{N}$ we let $C_d$ be the class of all $\sigma$-structures of bounded degree $d$. Let us remark that $\deg(\mathcal{A})$ and the degree of the Gaifman graph of  $\mathcal{A}$ only differ by at most a constant factor (cf.\ e.\,g.~\cite{DBLP:journals/tocl/DurandG07}), so the definitions are equivalent in the sense that the same classes have bounded degree.
A \textit{property} on any class of structures $C$ is a subset $P\subseteq C$ of structures that is closed under isomorphism.
We say that a structure $\mathcal{A}\in C$ has property $P$ if $\mathcal{A}\in P$.
On $C_d$, every FO-sentence $\phi$ defines a property $P_\phi\subseteq C_d$, where
$P_\phi=\{\mathcal A\in C_d\mid \mathcal{A}\models \varphi\}$.

 We describe the model for bounded-degree relational structures as defined in~\cite{AdlerH18}. This extends the bounded-degree model for undirected graphs introduced in ~\cite{GoldreichRon2002} and conforms with the bidirectional model of ~\cite{CzumajPS16}.

 An algorithm that processes a $\sigma$-structure $\mathcal{A}\in C_d$
does not obtain an encoding of $\mathcal{A}$ as a bit string in the usual way. Instead, we assume that the algorithm receives the number $n$ of elements of $\mathcal{A}$, and that the elements of $\mathcal{A}$ are numbered $1,2,\ldots,n$. In addition, the algorithm has direct access to $\mathcal{A}$ using an \emph{oracle} which answers \emph{neighbour queries}
in $\mathcal{A}$ in constant time. That is, the oracle accepts queries of the form $(i,j,k)$, for
$i\in \{1,\dots,n\}$, $j\in\{1,\dots,\ell\}$ and $k\in \{1,\dots,d\}$, to which it responds
with the $k$-th tuple in $R_j^\mathcal{A}$ containing  
$i$, or with $\bot$ if $i$ is contained in less than $k$ tuples. 

The \emph{running time} of the
algorithm is defined as usual, \ie with respect to the size of the structure $n$. We assume
a uniform cost model, i.\,e.\ we assume that all basic arithmetic operations 
including random sampling can be performed in constant time, regardless of 
the size of the numbers involved.

\textbf{Distance.}
For two $\sigma$-structures $\mathcal{A}$ and $\mathcal{B}$, both of size $n$,
$\dist(\mathcal{A},\mathcal{B})$ denotes the minimum number of tuples that
have to be modified (\ie inserted or removed) in $\mathcal{A}$ and $\mathcal{B}$ to make  $\mathcal{A}$ and $\mathcal{B}$
isomorphic. 
For $\epsilon \in (0,1]$,
we say $\mathcal{A}$ and $\mathcal{B}$, both of size $n$ and with degree bound $d$, are 
$\epsilon$-\emph{close} if $\dist(\mathcal{A},\mathcal{B}) \leq \epsilon d n$.
If $\mathcal{A},\mathcal{B}$ are not $\epsilon$-\emph{close}, then they are \emph{$\epsilon$-far}.
Note that in particular, $\mathcal{A}$ and $\mathcal{B}$ are $\epsilon$-far if their size differs. 
A $\sigma$-structure $\mathcal{A}$  is \emph{$\epsilon$-close} to a property $P$ if $\mathcal{A}$
is $\epsilon$-close to some $\mathcal{B} \in P$. Otherwise, $\mathcal{A}$ is
\emph{$\epsilon$-far} from~$P$.
\begin{definition}[$\epsilon$-tester]
	Let $P \subseteq C_{d}$ be a property and $\epsilon\in(0,1]$. 
	An $\epsilon$-\emph{tester for $P$} is a probabilistic
	algorithm with oracle access to an input $\mathcal{A}\in C_{d}$  and auxiliary input
	$n:=\abs{A}$. The algorithm does the following.
	\begin{enumerate}
		\item If $\mathcal{A} \in P$, then the $\epsilon$-tester accepts with probability at
		least ${2}/{3}$.
		\item If $\mathcal{A}$ is $\epsilon$-far from $P$, then the $\epsilon$-tester rejects
		with probability at least ${2}/{3}$.
	\end{enumerate}
\end{definition}
The \emph{query complexity} of an $\epsilon$-tester is the maximum number of oracle queries made.
A property $P$ is \emph{testable}, if for each $\epsilon\in (0,1]$ and each $n$, there is an $\epsilon$-tester 
for $P\cap \{\mathcal{A}\in C_d\mid \abs{A}=n\}$ on inputs from $\{\mathcal{A}\in C_d\mid \abs{A}=n\}$ with constant query complexity, \ie the \emph{query complexity} is  independent of $n$.
A property $P$ is \emph{uniformly testable}, if for each $\epsilon \in (0,1]$
there is an $\epsilon$-tester for $P$, that
has \emph{constant query complexity}. 
Note that 
{`uniformly'} emphasises that this tester  
must work for all~$n$.

\subsection{Quantifier alternations of first-order formulas}
%
	
Let $\sigma$ be any relational signature and $C_d$ the set of $\sigma$-structures of bounded degree $d$.
We use the following recursive definition, classifying first-order formulas according to the number of quantifier alterations in their quantifier prefix. Let $\Sigma_0=\Pi_0$ be the class of all quantifier free first-order formulas over $\sigma$. Then for every $i\in \Npos$ we let $\Sigma_i$ be the set of all FO formulas $\varphi(y_1,\dots,y_\ell)$ for which there is $k\in \mathbb{N}$ and a formula $\psi(x_1,\dots,x_k,y_1,\dots,y_\ell)\in \Pi_{i-1}$ such that $\varphi\equiv \exists x_1\dots \exists x_k\psi(x_1,\dots,x_k,y_1,\dots,y_\ell)$. Analogously, $\Pi_i$ consists of all FO formulas $\varphi(y_1,\dots,y_\ell)$ for which there is $k\in \mathbb{N}$ and a formula $\psi(x_1,\dots,x_k,y_1,\dots,y_\ell)\in \Sigma_{i-1}$ such that\linebreak $\varphi\equiv \exists x_1\dots \exists x_k\psi(x_1,\dots,x_k,y_1,\dots,y_\ell)$. 
\begin{example}[Substructure freeness]
	Let $\mathcal B$ be a $\sigma$-structure, and let $d\in \mathbb N$. The property \[P:=\{\mathcal A\in C_d\mid \mathcal
	A \text{ does not contain }\mathcal B\text{ as substructure}\}\]
	is in $\Pi_1$ and is uniformly testable on $C_d$ with constant
	running time which can be proven similar to substructure freeness in simple graphs \cite{GoldreichRon2002}.
\end{example}
As we discussed in the introduction, every FO property in $\Sigma_1$ or $\Pi_1$, i.e., without quantifier alternation, is testable.


\subsection{Expansion and the zig-zag product}\label{sec:zigZagProduct}


In this section we  recall a construction of a class of expanders introduced in ~\cite{ZigZagProductIntroduction}. This construction uses undirected graphs with parallel edges and self-loops. We therefore encode a graph $G$ as a triple $(G,E,f)$ where $V$ is a finite sets of vertices, $E$ is a finite  set of edges, and $f$ is the incidence  map from $E$ to the set $\{x\subseteq V\mid 1\leq |x|\leq 2\}$.

Let $G=(V,E,f)$ be an undirected $D$-regular graph on $N$ vertices.  We follow the convention that each self-loop counts $1$ towards the degree. Let $I$ be a set of size $D$. 
Then a \emph{rotation map} of $G$ is a function $\rot_G:V\times I\rightarrow V\times I$ such that for every two not necessarily different vertices $u,v\in V$
\begin{displaymath}
|\{(i,j)\in I\times I \mid \rot_G(u,i)=(v,j)\}|=|\{e\in E\mid f(e)=\{u,v\} \}|
\end{displaymath}
and $\rot_{G}$ is self inverse, i.e. $\rot_G(\rot_G(v,i))=(v,i)$ for all $v\in V, i\in I$. A rotation map is a representation of a graph that additionally for every vertex $v$ fixes an order on all edges incident to $v$. We let the normalised adjacency matrix $M$ of $G$ be 
define by 
\begin{displaymath}
M_{u,v}:=\frac{1}{D}\cdot|\{e \mid f(e) = \{u,v\}\}|.
\end{displaymath}
Let $1=\lambda_1\geq \lambda_2\geq \dots\geq \lambda_N\geq -1$ denote the eigenvalues of $M$. Since $M$ is real, symmetric, contains no negative entries and all columns sum up to $1$, all its eigenvalues are  in the real  interval $[-1,1]$. We let $\lambda(G):=\max\{|\lambda_2|,|\lambda_N|\}$. Note that these notions do not depend on the rotation map. We say that a graph is an $(N,D,\lambda)$-graph, if $G$ has $N$ vertices, is $D$-regular and $\lambda(G)\leq \lambda$.
We will use the following lemma.
\begin{lemma}[\cite{Hoory06expandergraphs}]\label{lem:connectedBipartiteEigenvalues}
The graph $G$ is connected if and only if $\lambda_2<1$. Furthermore, if $G$ is connected, then $G$ is bipartite if and only if $\lambda_N=-1$.
\end{lemma}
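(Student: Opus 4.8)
The plan is to apply the standard ``maximum principle'' for the symmetric matrix $M$, using that $D$-regularity makes each row $(M_{vu})_{u\in V}$ a probability distribution: indeed $M_{vu}\ge 0$ and $\sum_{u\in V}M_{vu}=\frac1D\deg_G(v)=1$ (a self-loop at $v$ contributes once to $\deg_G(v)$ and once to $M_{v,v}$). I will use throughout that $M\mathbf 1=\mathbf 1$, so $\lambda_1=1$, and that all eigenvalues lie in $[-1,1]$, as already noted in the text.

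\emph{Connectivity.} For the easy direction, if $G$ has connected components $C_1,\dots,C_k$ with $k\ge 2$, then the indicator vectors $\mathbf 1_{C_1},\dots,\mathbf 1_{C_k}$ are linearly independent and each satisfies $M\mathbf 1_{C_j}=\mathbf 1_{C_j}$, since no edge leaves $C_j$; hence the eigenvalue $1$ has multiplicity at least $2$, so $\lambda_2=1$. For the converse I would show that every $x$ with $Mx=x$ is constant on each connected component. Pick $v$ with $|x_v|$ maximal, and after possibly replacing $x$ by $-x$ assume $x_v=\max_u|x_u|\ge 0$. Then $x_v=(Mx)_v=\sum_u M_{vu}x_u$ expresses $x_v$ as a convex combination of values $x_u\le x_v$, which forces $x_u=x_v$ whenever $M_{vu}>0$, i.e.\ for every neighbour $u$ of $v$; iterating along paths, $x$ is constant on the component of $v$, and since $|x_v|$ was a global maximum the same argument applies to every component. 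In particular, if $G$ is connected then every such $x$ is a scalar multiple of $\mathbf 1$, the eigenspace of $1$ is one-dimensional, and $\lambda_2<1$.

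\emph{Bipartiteness.} Assume $G$ is connected. If $G$ is bipartite with parts $V_1,V_2$ (so $G$ has no self-loop, a self-loop being an odd closed walk), set $y=\mathbf 1_{V_1}-\mathbf 1_{V_2}$; for $v\in V_1$ every edge at $v$ goes to $V_2$, so $(My)_v=\sum_u M_{vu}y_u=-\sum_u M_{vu}=-1=-y_v$, and symmetrically for $v\in V_2$, whence $My=-y$ and $\lambda_N=-1$. Conversely, suppose $My=-y$ with $y\ne 0$. Pick $v$ with $|y_v|$ maximal and, after possibly negating $y$, assume $y_v=\max_u|y_u|>0$. Then $-y_v=(My)_v=\sum_u M_{vu}y_u$ is a convex combination of values $y_u\ge -y_v$, which forces $y_u=-y_v$ whenever $M_{vu}>0$; in particular $M_{v,v}=0$, so $v$ has no self-loop. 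Applying this at every vertex and using connectedness, each $u\in V$ satisfies $y_u\in\{y_v,-y_v\}$, no vertex has a self-loop, and every edge joins a vertex of value $y_v$ to one of value $-y_v$; hence $V_1:=\{u: y_u=y_v\}$ and $V_2:=\{u: y_u=-y_v\}$ exhibit $G$ as bipartite.

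The only part needing care is the bookkeeping with parallel edges and self-loops when verifying that $(M_{vu})_u$ is a probability vector and that $\sum_u M_{vu}y_u$ really is the claimed convex combination; once this is set up, the equality cases of the convex-combination bounds do all the work, and in the $\lambda_N=-1$ analysis the argument automatically rules out self-loops. I do not expect a genuine obstacle here.
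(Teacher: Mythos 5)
The paper does not prove this lemma; it is cited verbatim from the Hoory--Linial--Wigderson survey and used as a black box, so there is no in-paper argument to compare against. Your proof is correct and is the standard Perron--Frobenius/maximum-principle argument for symmetric stochastic matrices, adapted appropriately to the paper's conventions: you correctly observe that with the paper's degree convention (a self-loop contributes $1$ to $\deg_G(v)$ and $1$ to $|\{e: f(e)=\{v\}\}|$), the rows of $M$ sum to $1$, which is exactly what makes the convex-combination/equality-case reasoning go through at each vertex. The one place the phrasing is slightly loose is the claim that ``since $|x_v|$ was a global maximum the same argument applies to every component'': the global maximum need not be attained in each component, so to conclude that \emph{any} $1$-eigenvector is constant on \emph{each} component one should restrict $x$ to a fixed component $C$, pick $v\in C$ maximizing $|x_v|$ over $C$, and note that $(Mx)_v$ only involves $u\in C$; this is a cosmetic fix and does not affect the result (and for the direction actually used in the lemma, namely ``$G$ connected $\Rightarrow\lambda_2<1$'', the global-maximum argument already suffices). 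The bipartiteness half, including the automatic exclusion of self-loops from the equality case at a $(-1)$-eigenvector, is handled correctly. No genuine gap.
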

For any subsets $S,T\subseteq V$ let $\langle S,T\rangle_G:=\{e\in E\mid f(e)\cap S\not=\emptyset,f(e)\cap T\not=\emptyset\}$ be the set of edges \emph{crossing} $S$ and $T$. 
\begin{definition}\label{def:expansionRatio}
	For any set $S\subseteq V$, we let $h(S):=\frac{|\langle S,\overline{S}\rangle_G|}{|S|}$ be 
	the \emph{expansion} of $S$. We let $h(G)$ be the \emph{expansion ratio} of $G$ defined by 
	$h(G):=\min_{\{S\subset V \mid |S|\leq N/2\}}h(S)$.
\end{definition}

For any constant $\epsilon>0$ we call a sequence $\{G_m\}_{m\in\Npos}$ of graphs of increasing number of vertices a \emph{family of $\epsilon$-expanders}, if  $h(G_m)\geq \epsilon$ for all $m\in \Npos$.
There exists the following connection between $h(G)$ and $\lambda(G)$.
\begin{theorem}[\cite{Dod1984difference,AM1985lambda1}]\label{thm:boundExpansionRatioInTermsOfLambda}
	Let $G$ be a $D$-regular graph. Then
	$h(G)\geq {D(1-\lambda(G))}/{2}.$
\end{theorem}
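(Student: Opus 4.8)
The claim is the ``easy direction'' of the discrete Cheeger inequality, and I would prove it by the standard Rayleigh-quotient argument applied to a carefully chosen test vector. First observe that it suffices to prove the stronger bound $h(G)\geq D(1-\lambda_2)/2$, since $\lambda_2\leq |\lambda_2|\leq \lambda(G)$ and hence $1-\lambda(G)\leq 1-\lambda_2$. Because $G$ is $D$-regular, every row of the normalised adjacency matrix $M$ sums to $1$, so $\mathbf 1$ is an eigenvector of $M$ for the top eigenvalue $\lambda_1=1$; since $M$ is real and symmetric, the Courant--Fischer theorem gives $x^{\top}Mx\leq \lambda_2\, x^{\top}x$ for every $x\in\mathbb R^V$ with $x\perp\mathbf 1$, equivalently $x^{\top}(I-M)x\geq (1-\lambda_2)\,x^{\top}x$.

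Now fix any nonempty $S\subseteq V$ with $|S|=s\leq N/2$ and apply this to the test vector $x:=\mathbf 1_S-\tfrac{s}{N}\mathbf 1$, which indeed satisfies $x\perp\mathbf 1$. A direct computation gives $x^{\top}x=s(1-\tfrac sN)^2+(N-s)(\tfrac sN)^2=\tfrac{s(N-s)}{N}$. For the other side I would use the Laplacian identity $x^{\top}(I-M)x=\tfrac1D\sum_{e\in E,\, f(e)=\{u,v\}}(x_u-x_v)^2$, valid because $D(I-M)$ is the combinatorial Laplacian of the $D$-regular graph $G$; self-loops contribute $0$ and are harmless. In this sum, every edge lying inside $S$ or inside $\overline S$ contributes $0$ since $x$ is constant on each side, a self-loop contributes $0$, and every edge crossing between $S$ and $\overline S$ contributes $\big((1-\tfrac sN)-(-\tfrac sN)\big)^2=1$; therefore $x^{\top}(I-M)x=\tfrac1D\,|\langle S,\overline S\rangle_G|$.

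Combining the two computations with the Rayleigh bound yields $\tfrac1D\,|\langle S,\overline S\rangle_G|\geq (1-\lambda_2)\tfrac{s(N-s)}{N}$, i.e.\ $h(S)=\tfrac{|\langle S,\overline S\rangle_G|}{s}\geq D(1-\lambda_2)\tfrac{N-s}{N}\geq \tfrac{D(1-\lambda_2)}{2}$, where the last step uses $s\leq N/2$. Taking the minimum over all such $S$ gives $h(G)\geq D(1-\lambda_2)/2\geq D(1-\lambda(G))/2$, as required. The argument is entirely routine; the only points that need a little care are (i) checking that $\mathbf 1$ really is the top eigenvector so that orthogonality to $\mathbf 1$ suffices to invoke $\lambda_2$ --- this is exactly where $D$-regularity enters --- and (ii) matching the Laplacian quadratic-form identity to the paper's graph encoding with parallel edges and self-loops, which is why it is worth an explicit remark that self-loops contribute $0$.
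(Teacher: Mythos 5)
The paper does not prove this theorem; it is imported from the cited references (Dodziuk, Alon--Milman). Your argument is correct and is exactly the standard easy-direction Cheeger proof given there: the Rayleigh-quotient bound for the test vector $\mathbf 1_S-\tfrac{s}{N}\mathbf 1$, with the two side computations $x^{\top}x=\tfrac{s(N-s)}{N}$ and $x^{\top}(I-M)x=\tfrac1D\lvert\langle S,\overline S\rangle_G\rvert$ both checking out under the paper's conventions (in particular, self-loops contribute once to the degree and once to the diagonal of $M$, so they cancel in the Laplacian form as you note), and the final relaxation $\lambda_2\le\lambda(G)$ is valid.
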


This implies that for a sequence of graphs $\{G_m\}_{m\in\Npos}$ of increasing number of vertices, if there is a constant $\epsilon<1$ such that $\lambda(G_m)\leq \epsilon$ for all $m\in \Npos$, then the sequence $\{G_m\}_{m\in \Npos}$ is a family of ${D(1-\varepsilon)}/{2}$-expanders. 
\begin{definition}
	Let $G$ be a $D$-regular graph on $N$ vertices with rotation map $\rot_{G}$ and $I$ a set of size $D$. Then the \emph{square of $G$}, denoted by $G^2$, is a $D^2$-regular graph on $N$ vertices with rotation map
	$\rot_{G^2}(u,(k_1,k_2)):=(w,(\ell_2,\ell_1)),\text{ where}$ 
	\begin{align*}
	\rot_{G}(u,k_1)=&(v,\ell_1) \text{ and }
	\rot_{G}(v,k_2)=(w,\ell_2),
	\end{align*}
	and $u,v,w\in V$, $k_1,k_2,\ell_1,\ell_2\in I$.
\end{definition}
Note that the edges of $G^2$ correspond to walks of length $2$ in $G$ and the adjacency matrix of $G^2$ is the square of the adjacency matrix of $G$.
Note here that if $G$ is bipartite then $G^2$ is not connected, which can be easily explained by using Lemma~\ref{lem:connectedBipartiteEigenvalues}.
\begin{lemma}[\cite{ZigZagProductIntroduction}]\label{lem:expansionOfSquaring}
	If $G$ is a $(N,D,\lambda)$-graph then $G^2$ is a $(N,D^2,\lambda^2)$-graph.	
\end{lemma}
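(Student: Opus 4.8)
The plan is to verify the three defining properties of an $(N,D^2,\lambda^2)$-graph for $G^2$ in turn: that it has $N$ vertices, that it is $D^2$-regular (with a genuine rotation map), and that $\lambda(G^2)\le\lambda^2$. The first is immediate, since by definition $G^2$ has the same vertex set $V$ as $G$, so $|V|=N$. For $D^2$-regularity, observe that $\rot_{G^2}$ maps $V\times(I\times I)$ to itself and $|I\times I|=D^2$. I would first check that $\rot_{G^2}$ is self-inverse by a direct index chase: if $\rot_{G^2}(u,(k_1,k_2))=(w,(\ell_2,\ell_1))$ with $\rot_G(u,k_1)=(v,\ell_1)$ and $\rot_G(v,k_2)=(w,\ell_2)$, then evaluating $\rot_{G^2}$ at $(w,(\ell_2,\ell_1))$ uses $\rot_G(w,\ell_2)=(v,k_2)$ and then $\rot_G(v,\ell_1)=(u,k_1)$ (both by self-inverseness of $\rot_G$), returning $(u,(k_1,k_2))$. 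To see that $\rot_{G^2}$ is the rotation map of a genuinely $D^2$-regular graph, I would invoke the observation already recorded before the lemma: the edges of $G^2$ correspond to the walks of length $2$ in $G$ (equivalently, the unnormalised adjacency matrix of $G^2$ is the square of that of $G$); since $G$ is $D$-regular, each vertex lies on exactly $D^2$ such walks, and $\rot_{G^2}$ simply orders these walks at each vertex, so the counting condition in the definition of a rotation map holds.

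For the eigenvalue bound, let $M$ be the normalised adjacency matrix of $G$, with eigenvalues $1=\lambda_1\ge\lambda_2\ge\dots\ge\lambda_N\ge-1$, so that $\lambda(G)=\max\{|\lambda_2|,|\lambda_N|\}$. By the adjacency-matrix-square observation, the normalised adjacency matrix of $G^2$ is $M^2$, whose eigenvalues are exactly $\lambda_1^2,\dots,\lambda_N^2$, all lying in $[0,1]$ with largest value $\lambda_1^2=1$. Write $1=\mu_1\ge\mu_2\ge\dots\ge\mu_N\ge0$ for the eigenvalues of $M^2$ in decreasing order. Since $\mu_N\ge0$ we have $|\mu_N|\le\mu_2$, hence $\lambda(G^2)=\max\{|\mu_2|,|\mu_N|\}=\mu_2$. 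Now $\mu_2$ is the second-largest element of the multiset $\{\lambda_1^2,\dots,\lambda_N^2\}$, so $\mu_2\le\max_{i\ge2}\lambda_i^2$; and since every $\lambda_i$ with $i\ge2$ lies in $[\lambda_N,\lambda_2]$ and $x\mapsto x^2$ is maximised at an endpoint of an interval, $\max_{i\ge2}\lambda_i^2=\max\{\lambda_2^2,\lambda_N^2\}=(\max\{|\lambda_2|,|\lambda_N|\})^2=\lambda(G)^2\le\lambda^2$. Therefore $\lambda(G^2)\le\lambda^2$, which together with the first two points gives that $G^2$ is an $(N,D^2,\lambda^2)$-graph.

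There is no serious obstacle here; the only points requiring a little care are (i) matching the combinatorial definition of $G^2$ via its rotation map with the algebraic fact that its normalised adjacency matrix is $M^2$ — which the text has essentially already flagged — and (ii) handling the definition $\lambda(\cdot)=\max\{|\lambda_2|,|\lambda_N|\}$ correctly, in particular noting that squaring sends all eigenvalues into $[0,1]$, so the ``$|\lambda_N|$'' term for $G^2$ can never exceed its ``$|\lambda_2|$'' term, and that the second-largest eigenvalue of $M^2$ is bounded by the square of $\lambda(G)$.
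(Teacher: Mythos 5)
The paper does not prove this lemma: it is cited directly from Reingold, Vadhan and Wigderson~\cite{ZigZagProductIntroduction}, and the preceding remark ``Note that the edges of $G^2$ correspond to walks of length $2$ in $G$ and the adjacency matrix of $G^2$ is the square of the adjacency matrix of $G$'' is the only justification the authors record. Your write-up supplies the standard proof underlying that citation, and it is correct: the vertex count is immediate, the $D^2$-regularity and self-inverseness of $\rot_{G^2}$ follow from a direct index chase using the self-inverseness of $\rot_G$, and the eigenvalue bound follows from the fact that the normalised adjacency matrix of $G^2$ is $M^2$, whose spectrum is $\{\lambda_1^2,\dots,\lambda_N^2\}\subseteq[0,1]$. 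Your extra care with the definition $\lambda(\cdot)=\max\{|\lambda_2|,|\lambda_N|\}$ — namely that all eigenvalues of $M^2$ are nonnegative, so the ``$|\lambda_N|$'' term is dominated by $\mu_2$, and that $\mu_2\le\max\{\lambda_2^2,\lambda_N^2\}=\lambda(G)^2$ because $\lambda_2,\lambda_N$ are the extremes of the interval containing all $\lambda_i$ with $i\ge 2$ — is exactly the right bookkeeping, so the argument is sound. There is no gap; this matches the intended (cited) proof in approach and detail.
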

\begin{definition}
	Let $G_1=(V_1,E_1,f_1)$ be a $D_1$-regular graph on $N_1$ vertices, $I_1$ a set of size $D_1$ and $\rot_{G_1}:V_1\times I_1\rightarrow V_1\times I_1$ a rotation map of $G_1$. Let $G_2=(I_1,E_2,f_2)$ be a $D_2$-regular graph, let $I_2$ be a set of size $D_2$ and $\rot_{G_2}:I_1\times I_2\rightarrow I_1\times I_2$ be a rotation map of $G_2$. Then the \emph{zig-zag product of $G_1$ and $G_2$}, denoted by $G_1\zigzag G_2$, is the $D_2^2$-regular graph on $V_1\times I_1$ with rotation map given by
	$\rot_{G_1\zigzag G_2}((v,k),(i,j)):=((w,\ell),(j',i')),\text{ where}$
	\begin{align*}
	&\rot_{G_2}(k,i)=(k',i'), \quad
	\rot_{G_1}(v,k')=(w,\ell'),\text{ and}
	\rot_{G_2}(\ell',j)=(\ell,j'),
	\end{align*}
	and $v,w\in V_1$, $k,k',\ell,\ell'\in I_1$, $i,i',j,j'\in I_2$.
\end{definition}

The zig-zag product $G_1\zigzag G_2$ can be seen as the result of the following construction. First pick some numbering of the vertices of $G_2$. Then replace every vertex in $G_1$ by a copy of $G_2$ where we colour edges from $G_1$, say, red, and edges from $G_2$ blue. We do this in such a way that the $i$-th edge in $G_1$ of a vertex $v$ will be incident to vertex $i$ of the to-$v$-corresponding-copy of $G_2$. Then for every red edge $(v,w)$ and for every tuple $(i,j)\in I_2\times I_2$ we add an edge to the zig-zag product $G_1\zigzag G_2$ connecting $v'$ and $w'$ where $v'$ is the vertex reached from $v$ by taking its $i$-th blue edge and $w'$ can be reached from $w$ by taking its $j$-th blue edge. 
Figure~\ref{fig:zigZagProduct} shows an example, where in the graph on the right hand side we show the $4$ edges that are added to the zig-zag product for the highlighted edge of the graph on the left hand side.
\definecolor{C1}{RGB}{1,1,1}
\definecolor{C2}{RGB}{0,0,170}
\definecolor{C3}{RGB}{251,86,4}
\definecolor{C4}{RGB}{50,180,110}
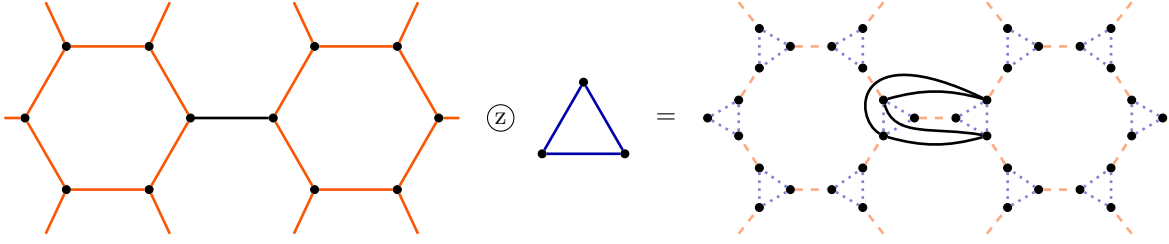
\begin{figure*}
	\centering
	\begin{tikzpicture}[level distance=7mm,scale = 0.55]
	
	\tikzstyle{ns1}=[line width=1]
	\node[draw,circle,fill=black,inner sep=0pt, minimum width=3pt] (5) at (-4,-1.732) {};
	\node[draw,circle,fill=black,inner sep=0pt, minimum width=3pt] (6) at (-2,-1.732) {};
	\node[draw,circle,fill=black,inner sep=0pt, minimum width=3pt] (7) at (2,-1.732) {};
	\node[draw,circle,fill=black,inner sep=0pt, minimum width=3pt] (8) at (4,-1.732) {};
	\node[draw,circle,fill=black,inner sep=0pt, minimum width=3pt] (9) at (-5,0) {};
	\node[draw,circle,fill=black,inner sep=0pt, minimum width=3pt] (10) at (-1,0) {};
	\node[draw,circle,fill=black,inner sep=0pt, minimum width=3pt] (11) at (1,0) {};
	\node[draw,circle,fill=black,inner sep=0pt, minimum width=3pt] (12) at (5,0) {};
	\node[draw,circle,fill=black,inner sep=0pt, minimum width=3pt] (13) at (-4,1.732) {};
	\node[draw,circle,fill=black,inner sep=0pt, minimum width=3pt] (14) at (-2,1.732) {};
	\node[draw,circle,fill=black,inner sep=0pt, minimum width=3pt] (15) at (2,1.732) {};
	\node[draw,circle,fill=black,inner sep=0pt, minimum width=3pt] (16) at (4,1.732) {};

	\draw[ns1,C3]
	(-4.5,-2.8)--(5)--(6)--(-1.5,-2.8)(6)--(10)(11)--(7)--(1.5,-2.8)(7)--(8)--(4.5,-2.8)(-5.5,0)--(9)--(5)(9)--(13)--(-4.5,2.8)(13)--(14)--(-1.5,2.8)(14)--(10)(11)--(15)--(1.5,2.8)(15)--(16)--(12)--(8)(12)--(5.5,0)(16)--(4.5,2.8);
	\draw[ns1,C1] (10)--(11);	
	
	\node[minimum height=10pt,inner sep=0,font=\small] at (6.5,0) {$\zigzag$};

	\node[draw,circle,fill=black,inner sep=0pt, minimum width=3pt] (41) at (7.5,-0.866) {};
	\node[draw,circle,fill=black,inner sep=0pt, minimum width=3pt] (42) at (9.5,-0.866) {};
	\node[draw,circle,fill=black,inner sep=0pt, minimum width=3pt] (43) at (8.5,0.866) {};
	\draw[ns1,C2] (41)--(42)--(43)--(41);
	
	\node[minimum height=10pt,inner sep=0,font=\small] at (10.5,0) {$=$};	
	
	\node[draw,circle,fill=black,inner sep=0pt, minimum width=3pt] (33) at (13.5,-1.732) {};
	\node[draw,circle,fill=black,inner sep=0pt, minimum width=3pt] (34) at (12.75,-1.209) {};
	\node[draw,circle,fill=black,inner sep=0pt, minimum width=3pt] (35) at (12.75,-2.165) {};
	\node[draw,circle,fill=black,inner sep=0pt, minimum width=3pt] (36) at (14.5,-1.732) {};
	\node[draw,circle,fill=black,inner sep=0pt, minimum width=3pt] (37) at (15.25,-1.209) {};
	\node[draw,circle,fill=black,inner sep=0pt, minimum width=3pt] (38) at (15.25,-2.165) {};
	\node[draw,circle,fill=black,inner sep=0pt, minimum width=3pt] (39) at (19.5,-1.732) {};
	\node[draw,circle,fill=black,inner sep=0pt, minimum width=3pt] (40) at (18.75,-1.209) {};
	\node[draw,circle,fill=black,inner sep=0pt, minimum width=3pt] (41) at (18.75,-2.165) {};
	\node[draw,circle,fill=black,inner sep=0pt, minimum width=3pt] (42) at (20.5,-1.732) {};
	\node[draw,circle,fill=black,inner sep=0pt, minimum width=3pt] (43) at (21.25,-1.209) {};
	\node[draw,circle,fill=black,inner sep=0pt, minimum width=3pt] (44) at (21.25,-2.165) {};
	\node[draw,circle,fill=black,inner sep=0pt, minimum width=3pt] (45) at (11.5,0) {};
	\node[draw,circle,fill=black,inner sep=0pt, minimum width=3pt] (46) at (12.25,0.433) {};
	\node[draw,circle,fill=black,inner sep=0pt, minimum width=3pt] (47) at (12.25,-0.433) {};
	\node[draw,circle,fill=black,inner sep=0pt, minimum width=3pt] (48) at (16.5,0) {};
	\node[draw,circle,fill=black,inner sep=0pt, minimum width=3pt] (49) at (15.75,0.433) {};
	\node[draw,circle,fill=black,inner sep=0pt, minimum width=3pt] (50) at (15.75,-0.433) {};
	\node[draw,circle,fill=black,inner sep=0pt, minimum width=3pt] (51) at (17.5,0) {};
	\node[draw,circle,fill=black,inner sep=0pt, minimum width=3pt] (52) at (18.25,0.433) {};
	\node[draw,circle,fill=black,inner sep=0pt, minimum width=3pt] (53) at (18.25,-0.433) {};
	\node[draw,circle,fill=black,inner sep=0pt, minimum width=3pt] (54) at (22.5,0) {};
	\node[draw,circle,fill=black,inner sep=0pt, minimum width=3pt] (55) at (21.75,0.433) {};
	\node[draw,circle,fill=black,inner sep=0pt, minimum width=3pt] (56) at (21.75,-0.433) {};
	\node[draw,circle,fill=black,inner sep=0pt, minimum width=3pt] (57) at (13.5,1.732) {};
	\node[draw,circle,fill=black,inner sep=0pt, minimum width=3pt] (58) at (12.75,1.209) {};
	\node[draw,circle,fill=black,inner sep=0pt, minimum width=3pt] (59) at (12.75,2.165) {};
	\node[draw,circle,fill=black,inner sep=0pt, minimum width=3pt] (60) at (14.5,1.732) {};
	\node[draw,circle,fill=black,inner sep=0pt, minimum width=3pt] (61) at (15.25,1.209) {};
	\node[draw,circle,fill=black,inner sep=0pt, minimum width=3pt] (62) at (15.25,2.165) {};
	\node[draw,circle,fill=black,inner sep=0pt, minimum width=3pt] (63) at (19.5,1.732) {};
	\node[draw,circle,fill=black,inner sep=0pt, minimum width=3pt] (64) at (18.75,1.209) {};
	\node[draw,circle,fill=black,inner sep=0pt, minimum width=3pt] (65) at (18.75,2.165) {};
	\node[draw,circle,fill=black,inner sep=0pt, minimum width=3pt] (66) at (20.5,1.732) {};
	\node[draw,circle,fill=black,inner sep=0pt, minimum width=3pt] (67) at (21.25,1.209) {};
	\node[draw,circle,fill=black,inner sep=0pt, minimum width=3pt] (68) at (21.25,2.165) {};

	\foreach \x in {33,36,...,66}{
		\pgfmathtruncatemacro{\xx}{\x+1};
		\pgfmathtruncatemacro{\xxx}{\x+2};
		\draw[dotted,ns1,C2!50](\x)--(\xx)--(\xxx)--(\x);	
	}
	
	\draw[dashed,ns1,C3!50]
	(18.25,2.8)--(65)(21.75,2.8)--(68)(15.75,2.8)--(62)(12.25,2.8)--(59)(18.25,-2.8)--(41)(21.75,-2.8)--(44)(15.75,-2.8)--(38)(12.25,-2.8)--(35)(33)--(36)(39)--(42)(34)--(47)(46)--(58)(57)--(60)(63)--(66)(43)--(56)(55)--(67)(37)--(50)(49)--(61)(40)--(53)(52)--(64)(48)--(51);
	\draw[ns1,C1] (49).. controls (16.7,0.7) and (17.3,0.7) ..(52);
	\draw[ns1,C1] (50).. controls (16.7,-0.7) and (17.3,-0.7) ..(53);
	
	\draw[ns1,C1] (49).. controls (16,-0.5) and (17,-0.2) .. (53);
	\draw[ns1,C1] (50).. controls (15,0) and (15,2) ..(52);
	\end{tikzpicture}
	\caption{Zig-zag product of a $3$-regular grid with a triangle} \label{fig:zigZagProduct}
\end{figure*}

\begin{theorem}[\cite{ZigZagProductIntroduction}]\label{thm:expansionOfZigZag}
	If $G_1$ is an $(N_1,D_1,\lambda_1)$-graph and $G_2$ is a $(D_1,D_2,\lambda_2)$-graph then $G_1\zigzag G_2$ is a $(N_1\cdot D_1,D_2^2,g(\lambda_1,\lambda_2))$-graph, where 
	\begin{displaymath}
		g(\lambda_1,\lambda_2)=\frac{1}{2}(1-\lambda_2^2)\lambda_1+\frac{1}{2}\sqrt{(1-\lambda_2^2)^2\lambda_1+4\lambda_2^2}.
	\end{displaymath}
	This function has the following properties.
	\begin{enumerate}
		\item If both $\lambda_1<1$ and $\lambda_2<1$ then $g(\lambda_1,\lambda_2)< 1$. 
		\item $g(\lambda_1,\lambda_2)<\lambda_1+\lambda_2$.
	\end{enumerate}	
\end{theorem}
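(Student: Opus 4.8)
The plan is to handle the three assertions of the theorem in turn. The statements about the number of vertices and the degree are immediate from the definition of $\zigzag$: the vertex set of $G_1\zigzag G_2$ is $V_1\times I_1$, which has $N_1\cdot D_1$ elements, and the rotation map $\rot_{G_1\zigzag G_2}$ is a function on $(V_1\times I_1)\times(I_2\times I_2)$, so the product is $D_2^2$-regular. The substance is the spectral bound $\lambda(G_1\zigzag G_2)\le g(\lambda_1,\lambda_2)$, which is due to Reingold, Vadhan and Wigderson~\cite{ZigZagProductIntroduction}; I would argue as follows. Write $A$ and $B$ for the normalised adjacency matrices of $G_1$ and $G_2$, and let $\dot A$ be the $(N_1D_1)\times(N_1D_1)$ permutation matrix of $\rot_{G_1}$ on $\mathbb R^{V_1\times I_1}$, which is symmetric because $\rot_{G_1}$ is self-inverse. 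Reading off the three lines defining $\rot_{G_1\zigzag G_2}$ — a step inside a ``cloud'' $\{v\}\times I_1$, then a step between clouds, then a step inside a cloud — shows that the normalised adjacency matrix $M$ of $G_1\zigzag G_2$ factorises as $M=\tilde B\,\dot A\,\tilde B$, where $\tilde B=I_{N_1}\otimes B$. Since $\mathbf 1$ is the top eigenvector of $M$ with eigenvalue $1$, it remains to bound $|\langle\alpha,M\alpha\rangle|$ over unit vectors $\alpha$ orthogonal to $\mathbf 1$.

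Next I would decompose $B$. As $\lambda(G_2)\le\lambda_2$, write $B=(1-\lambda_2)J+\lambda_2 C$ where $J=\tfrac1{D_1}\mathbf 1\mathbf 1^{T}$ and $\|C\|\le 1$; equivalently $\tilde B=\tilde J+\lambda_2\tilde{C}'$, where $\tilde J=I\otimes J$ is the orthogonal projection onto the subspace of vectors that are constant on each cloud, and $\tilde{C}'$ is a contraction that vanishes on that subspace. Using the identity $\tilde J\dot A\tilde J=A\otimes J$ — which is exactly the defining property of a rotation map, namely that averaging $\dot A$ over clouds reproduces the adjacency matrix of $G_1$ — one obtains, with respect to the decomposition of $\mathbf 1^{\perp}$ into cloud-constant and cloud-balanced vectors, a block form
\[
M=\begin{pmatrix}\hat A & \lambda_2 P\\ \lambda_2 P^{T} & \lambda_2^2 R\end{pmatrix},
\]
where $\hat A$ is the restriction of $A\otimes J$ to the cloud-constant part and satisfies $\|\hat A\|\le\lambda_1$, and $\|P\|\le1$, $\|R\|\le1$. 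The decisive extra ingredient is that $\dot A$ is an isometry: for a cloud-constant vector $v$, the cloud-constant part of $\dot A v$ is exactly $\hat A v$, so the cloud-balanced part has squared norm $\|v\|^2-\|\hat A v\|^2$, giving $\|\hat A v\|^2+\|P^{T}v\|^2\le\|v\|^2$. Plugging the block form into the Rayleigh quotient, bounding the contribution of each block via these norm bounds together with the isometry constraint, and optimising over the two remaining scalars $\|\alpha'\|$ and $\|\alpha''\|$ (where $\alpha=\alpha'+\alpha''$ is the cloud-constant/cloud-balanced split, $\|\alpha'\|^2+\|\alpha''\|^2=1$), I would reduce the whole estimate to finding the larger root of the quadratic $x^2-(1-\lambda_2^2)\lambda_1 x-\lambda_2^2=0$, which is exactly $g(\lambda_1,\lambda_2)$.

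The two stated properties of $g$ are then elementary: $g$ is strictly increasing in $\lambda_1$ when $\lambda_2<1$, and the radicand in $g$ is a perfect square at $\lambda_1=1$, giving $g(1,\lambda_2)=1$, so $\lambda_1<1$ and $\lambda_2<1$ force $g(\lambda_1,\lambda_2)<1$; and subadditivity of the square root gives $g(\lambda_1,\lambda_2)\le(1-\lambda_2^2)\lambda_1+\lambda_2\le\lambda_1+\lambda_2$, with strict inequality once $\lambda_2>0$.

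\textbf{Main obstacle.} The delicate step is the sharp spectral bound. A naive expansion of $|\langle\alpha,M\alpha\rangle|=|\langle\tilde B\alpha,\dot A\tilde B\alpha\rangle|$ followed by Cauchy--Schwarz on the three resulting terms (using $\|\dot A\|=1$, $\|C\|\le1$, $\lambda(A)\le\lambda_1$) only yields the top eigenvalue of the $2\times2$ matrix with rows $(\lambda_1,\lambda_2)$ and $(\lambda_2,\lambda_2^2)$, which is strictly weaker than $g$ and can even exceed $1$ when $\lambda_1,\lambda_2$ approach $1$. Recovering the sharp constant $g$ requires genuinely using the isometry constraint linking $\hat A$ and $P$, so that the cross term and the cloud-balanced diagonal term cannot both be near-extremal at once; carrying out that coupled optimisation correctly is the technical heart of the argument.
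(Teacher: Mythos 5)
The paper does not prove this theorem: it is imported verbatim from Reingold--Vadhan--Wigderson~\cite{ZigZagProductIntroduction} with only a citation, so there is no in-paper argument to compare against. What you have written is a reasonable condensed account of the RVW proof itself, and the architecture is right: the factorisation $M=\tilde B\,\dot A\,\tilde B$, the identity $\tilde J\dot A\tilde J=A\otimes J$, the $2\times2$ block form with respect to the cloud-constant/cloud-balanced splitting of $\mathbf 1^\perp$, and — crucially — the isometry constraint $\|\hat A v\|^2+\|P^{T}v\|^2\le\|v\|^2$, which is exactly what separates the sharp bound $g$ from the cruder norm-of-a-$2\times2$-matrix estimate. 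You correctly flag that last point as the technical heart; that is the right diagnosis.

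Two small remarks. First, the displayed formula for $g$ in the paper has $\lambda_1$ rather than $\lambda_1^2$ under the square root; the quadratic $x^2-(1-\lambda_2^2)\lambda_1 x-\lambda_2^2=0$ that your optimisation produces has larger root $\tfrac12(1-\lambda_2^2)\lambda_1+\tfrac12\sqrt{(1-\lambda_2^2)^2\lambda_1^2+4\lambda_2^2}$, which is the formula in RVW; so your derivation is internally consistent and the paper's display appears to have a typo. Second, your decomposition $B=(1-\lambda_2)J+\lambda_2 C$ is not literally ``equivalently'' the decomposition $\tilde B=\tilde J+\lambda_2\tilde C'$ with $\tilde C'$ annihilating the cloud-constant subspace — the latter comes from writing $B=J+E$ with $\|E\|\le\lambda_2$ and $EJ=JE=0$, and it is the one you actually use — but this is only an expositional wobble, not a gap. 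Finally, in property 2 your subadditivity bound gives equality (not strict inequality) in the degenerate corners $\lambda_1=0$ or $\lambda_2\in\{0,1\}$; for the ranges arising in the zig-zag construction this is immaterial, but if one cares about the boundary the statement should really be $g(\lambda_1,\lambda_2)\le\lambda_1+\lambda_2$ with strictness away from those corners.
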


\begin{definition}[\cite{Hoory06expandergraphs}]\label{dfn:expanders}
	Let $D$ be a sufficiently large prime power (e.g. $D=2^{16}$). Let $H$ be a  $(D^4,D,{1}/{4})$ expander (explicit constructions for $H$ exist, cf.~\cite{ZigZagProductIntroduction}.) 
	We define $\{G_{m}\}_{m\in\Npos}$ by 
	\begin{eqnarray}
	G_1:=H^2, \hspace{20pt} G_m:=G_{m-1}^2\zigzag H \text{ for }m >1. \label{eqn:zigzagconstruction}
	\end{eqnarray}
\end{definition}
\begin{proposition}[\cite{Hoory06expandergraphs}]\label{prop:recursiveConstruction}
	For every $m\in\Npos$, the graph $G_m$ is a $(D^{4m},D^2,1/2)$-graph.
\end{proposition}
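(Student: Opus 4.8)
The plan is to prove the statement by induction on $m$, threading the three parameters --- number of vertices, degree, and spectral bound $\lambda$ --- through the recursion~\eqref{eqn:zigzagconstruction}, using Lemma~\ref{lem:expansionOfSquaring} for the squaring step and Theorem~\ref{thm:expansionOfZigZag} for the zig-zag step.

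For the base case $m=1$, I would start from $G_1 = H^2$, where $H$ is a $(D^4, D, 1/4)$-graph by Definition~\ref{dfn:expanders}. Lemma~\ref{lem:expansionOfSquaring} immediately gives that $G_1$ is a $(D^4, D^2, 1/16)$-graph, and since $D^4 = D^{4\cdot 1}$ and $1/16 \le 1/2$, this is in particular a $(D^{4\cdot 1}, D^2, 1/2)$-graph.

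For the inductive step, assume $G_{m-1}$ is a $(D^{4(m-1)}, D^2, 1/2)$-graph. Applying Lemma~\ref{lem:expansionOfSquaring} (note $(1/2)^2 = 1/4$) shows $G_{m-1}^2$ is a $(D^{4(m-1)}, D^4, 1/4)$-graph; in particular its degree is $D^4$, which matches the number of vertices of $H$, so the zig-zag product $G_{m-1}^2 \zigzag H$ is well-defined (identifying $V(H)$ with the index set of $G_{m-1}^2$ is harmless, since $\lambda$ does not depend on the choice of rotation maps). Theorem~\ref{thm:expansionOfZigZag}, applied with $(N_1,D_1,\lambda_1) = (D^{4(m-1)}, D^4, 1/4)$ and $(D_1,D_2,\lambda_2) = (D^4, D, 1/4)$, then yields that $G_m = G_{m-1}^2 \zigzag H$ is a $\bigl(D^{4(m-1)} \cdot D^4,\ D^2,\ g(1/4,1/4)\bigr)$-graph, i.e.\ a $(D^{4m}, D^2, g(1/4,1/4))$-graph. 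Finally, property~(2) of Theorem~\ref{thm:expansionOfZigZag} gives $g(1/4,1/4) < 1/4 + 1/4 = 1/2$, so $G_m$ is a $(D^{4m}, D^2, 1/2)$-graph, closing the induction.

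There is no real obstacle here; the only points needing care are the bookkeeping at the joints of the recursion --- checking that $H$ has exactly $D^4$ vertices so that the zig-zag product with the $D^4$-regular graph $G_{m-1}^2$ is defined, and observing that the induction has to be run with the threshold $1/2$ rather than the sharper value $1/16$ available in the base case. The latter is because, after one squaring ($1/2 \mapsto 1/4$) and one zig-zag step, the crude estimate $g(\lambda_1,\lambda_2) < \lambda_1 + \lambda_2$ of Theorem~\ref{thm:expansionOfZigZag} only returns a value below $1/2$ precisely when $\lambda_1 = \lambda_2 = 1/4$, which is exactly the situation produced by the induction hypothesis; aiming for a smaller target constant would require the finer (non-crude) estimate for $g$ and does not go through in general, so $1/2$ is the natural clean bound.
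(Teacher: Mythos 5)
The paper states this proposition as a cited result from \cite{Hoory06expandergraphs} and gives no proof of its own, so there is nothing in the paper to compare against; what you have written is the standard inductive argument one finds in that reference. Your proof is correct: the base case $G_1 = H^2$ gives a $(D^4, D^2, 1/16)$-graph by Lemma~\ref{lem:expansionOfSquaring}, and the inductive step correctly threads $1/2 \mapsto 1/4$ through the squaring, checks that $G_{m-1}^2$ being $D^4$-regular matches the $D^4$ vertices of $H$ so the zig-zag product is defined and has degree $D^2$, and then uses the bound $g(1/4,1/4) < 1/4 + 1/4 = 1/2$ from Theorem~\ref{thm:expansionOfZigZag}(2) to close the induction. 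Your closing observation about why $1/2$ is the right invariant to carry (the crude bound $g < \lambda_1 + \lambda_2$ is tight-to-purpose exactly when both arguments are $1/4$) is accurate and is the standard remark about this construction.
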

In the next section we will use the following lemma whose proof is deferred to Appendix~\ref{sec:proof_app}.
\begin{lemma}\label{lem:nonBipartitenessConnectedness}
	Let $G$ be a $D$-regular graph and $S$ be the vertices of a connected component of $G^2$. Then $\lambda(G^2[S])< 1$.
\end{lemma}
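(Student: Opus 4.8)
The plan is to bound the two extreme eigenvalues of the normalised adjacency matrix of $G^2[S]$ separately. Since $S$ is the vertex set of a connected component of the $D^2$-regular graph $G^2$, no edge of $G^2$ leaves $S$, so $G^2[S]$ is again $D^2$-regular and, being a connected component, connected. Writing $1=\mu_1\ge \mu_2\ge\dots\ge\mu_{|S|}\ge -1$ for the eigenvalues of the normalised adjacency matrix of $G^2[S]$, we have $\lambda(G^2[S])=\max\{|\mu_2|,|\mu_{|S|}|\}$. It therefore suffices to show $\mu_2<1$ and $\mu_{|S|}>-1$.

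The bound $\mu_2<1$ is immediate from Lemma~\ref{lem:connectedBipartiteEigenvalues}, as $G^2[S]$ is connected. The bound $\mu_{|S|}>-1$ is the crux, and I would obtain it from the fact, recorded already after the definition of $G^2$, that the normalised adjacency matrix of $G^2$ equals the square $M^2$ of the normalised adjacency matrix $M$ of $G$ (the $D^2$-regularity of $G^2$ makes the normalisation come out right). Because $x^\top M^2 x=\|Mx\|^2\ge 0$ for every vector $x$, the matrix $M^2$ is positive semidefinite, so all of its eigenvalues are $\ge 0$. To transfer this to the component: since no $G^2$-edge crosses between $S$ and $V\setminus S$, the matrix $M^2$ is block-diagonal with respect to the partition of $V$ into components of $G^2$, and the block indexed by $S$ is exactly the normalised adjacency matrix of $G^2[S]$; hence every $\mu_i$ is an eigenvalue of $M^2$ and so $\mu_{|S|}\ge 0>-1$. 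Combining, $\lambda(G^2[S])=\max\{|\mu_2|,|\mu_{|S|}|\}=\mu_2<1$, using $0\le\mu_{|S|}\le\mu_2<1$.

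I do not expect a genuine obstacle here; the only points that need care are the identification of the normalised adjacency matrix of $G^2$ with $M^2$ and the block-diagonality argument by which $G^2[S]$ inherits a non-negative spectrum from $M^2$. If one preferred to avoid the matrix manipulations, the lower bound $\mu_{|S|}>-1$ can instead be argued combinatorially: every vertex $v$ of $G$ has an incident edge (as $D\ge 1$), so $v\to u\to v$ is a closed walk of length $2$ and hence $G^2$ has a self-loop at $v$; thus $G^2[S]$ has a self-loop, is not bipartite, and Lemma~\ref{lem:connectedBipartiteEigenvalues} gives $\mu_{|S|}>-1$ directly. Either route finishes the proof.
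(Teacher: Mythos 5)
Your primary argument is the paper's proof in different clothing. The paper also invokes Lemma~\ref{lem:connectedBipartiteEigenvalues} to get $\mu_2<1$ from connectedness, and for the bottom eigenvalue it supposes $-1$ were an eigenvalue of $G^2[S]$ with eigenvector $\bar v$, extends $\bar v$ by zeros to obtain an eigenvector of the full $G^2$ with eigenvalue $-1$ (this is exactly your block-diagonality observation), and then notes that $G^2$ cannot have a negative eigenvalue because its eigenvalues are squares of real numbers (this is exactly your positive-semidefiniteness observation). So your first route and the paper's proof are the same argument. Your alternative route, however, is genuinely different and more elementary: for $D\geq 1$ every vertex of $G$ lies on an edge, so the length-two closed walk back and forth along it yields a self-loop at every vertex of $G^2$; hence $G^2[S]$ is non-bipartite, and the connected-plus-non-bipartite direction of Lemma~\ref{lem:connectedBipartiteEigenvalues} gives $\mu_{|S|}>-1$ directly, with no spectral calculation at all. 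The matrix route buys the stronger conclusion $\mu_{|S|}\geq 0$ (which is what the paper's phrasing ``eigenvalues are squares'' also implicitly gives); the combinatorial route buys simplicity and only yields $\mu_{|S|}>-1$, which is all the lemma needs. Both are correct.
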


\section{A class of expanders definable in FO}\label{sec: definitionFormula}
In this section we define a formula such that the underlying graphs of its models are expanders.
We start with a high-level description of the formula. 
Let $\{G_m\}_{m \in \Npos}$ be as in Definition~\ref{dfn:expanders}. 
Loosely speaking, each model of our formula is a structure which consists of the disjoint union of $G_1,\dots,G_n$ for some $n\in \Npos$ 
with some underlying tree structure connecting $G_{m-1}$ to $G_{m}$ for all $m\in \{2,\dots,n\}$. For illustration see Figure~\ref{fig:modelOfFormula}. 
The tree structure enables us to provide an FO-checkable certificate for this construction.
The tree structure is a  $D^4$-ary tree, that is used to connect a vertex $v$ of $G_{m-1}$ to every vertex of the copy of $H$ which will replace $v$ in $G_{m}$. 
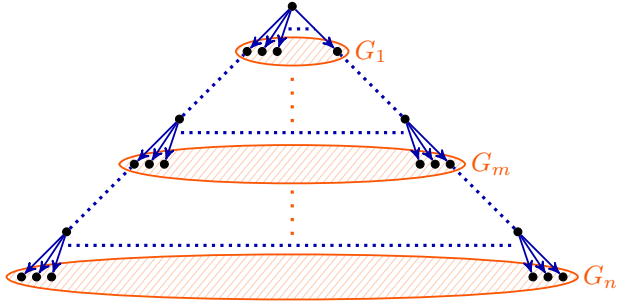
\begin{figure}
	\centering
	\begin{tikzpicture}
	\tikzstyle{ns1}=[line width=0.7]
	\tikzstyle{ns2}=[line width=1.2]
	\def \depth {3.6}
	\def \middle {2.1}
	\def \slope {1}
	\def \r {0.3}
	\def \levelDist {0.6}
	\begin{scope}
	\clip[postaction={fill=white,fill opacity=0.2}] (\depth*\slope+0.2,-\depth)..controls(\depth*\slope+0.1,-\depth+0.4)and(-\depth*\slope-0.1,-\depth+0.4)..(-\depth*\slope-0.2,-\depth)..controls(-\depth*\slope-0.1,-\depth-0.4)and(\depth*\slope+0.1,-\depth-0.4)..(\depth*\slope+0.2,-\depth);			
	\foreach \x in {-7.1,-7,...,15.2}%
	\draw[C3!30](\x, -5)--+(12,14.4);
	\end{scope}
	\draw[ns1,C3](\depth*\slope+0.2,-\depth)..controls(\depth*\slope+0.1,-\depth+0.4)and(-\depth*\slope-0.1,-\depth+0.4)..(-\depth*\slope-0.2,-\depth)..controls(-\depth*\slope-0.1,-\depth-0.4)and(\depth*\slope+0.1,-\depth-0.4)..(\depth*\slope+0.2,-\depth);
	\begin{scope}
	\clip[postaction={fill=white,fill opacity=0.2}] (\middle*\slope+0.2,-\middle)..controls(\middle*\slope+0.1,-\middle+0.34)and(-\middle*\slope-0.1,-\middle+0.34)..(-\middle*\slope-0.2,-\middle)..controls(-\middle*\slope-0.1,-\middle-0.34)and(\middle*\slope+0.1,-\middle-0.34)..(\middle*\slope+0.2,-\middle);		
	\foreach \x in {-7.1,-7,...,15.2}%
	\draw[C3!30](\x, -5)--+(12,14.4);
	\end{scope}
	\draw[ns1,C3](\middle*\slope+0.2,-\middle)..controls(\middle*\slope+0.1,-\middle+0.34)and(-\middle*\slope-0.1,-\middle+0.34)..(-\middle*\slope-0.2,-\middle)..controls(-\middle*\slope-0.1,-\middle-0.34)and(\middle*\slope+0.1,-\middle-0.34)..(\middle*\slope+0.2,-\middle);
	\begin{scope}
	\clip[postaction={fill=white,fill opacity=0.2}] (\levelDist*\slope+0.15,-\levelDist)..controls(\levelDist*\slope+0.1,-\levelDist+0.25)and(-\levelDist*\slope-0.1,-\levelDist+0.25)..(-\levelDist*\slope-0.15,-\levelDist)..controls(-\levelDist*\slope-0.1,-\levelDist-0.25)and(\levelDist*\slope+0.1,-\levelDist-0.25)..(\levelDist*\slope+0.15,-\levelDist);			
	\foreach \x in {-7.1,-7,...,15.2}%
	\draw[C3!30](\x, -5)--+(12,14.4);
	\end{scope}
	\draw[ns1,C3](\levelDist*\slope+0.15,-\levelDist)..controls(\levelDist*\slope+0.1,-\levelDist+0.25)and(-\levelDist*\slope-0.1,-\levelDist+0.25)..(-\levelDist*\slope-0.15,-\levelDist)..controls(-\levelDist*\slope-0.1,-\levelDist-0.25)and(\levelDist*\slope+0.1,-\levelDist-0.25)..(\levelDist*\slope+0.15,-\levelDist);
	\node[draw,circle,fill=black,inner sep=0pt, minimum width=3pt] (1) at (0,0) {};
	\node[draw,circle,fill=black,inner sep=0pt, minimum width=3pt] (2) at (\depth*\slope,-\depth) {};
	\node[draw,circle,fill=black,inner sep=0pt, minimum width=3pt] (3) at (-\depth*\slope,-\depth) {};
	\node[draw,circle,fill=black,inner sep=0pt, minimum width=3pt] (4) at (-\levelDist*\slope,-\levelDist) {};
	\node[draw,circle,fill=black,inner sep=0pt, minimum width=3pt] (5) at (-\levelDist*\slope+0.2,-\levelDist) {};
	\node[draw,circle,fill=black,inner sep=0pt, minimum width=3pt] (6) at (-\levelDist*\slope+0.4,-\levelDist) {};
	\node[draw,circle,fill=black,inner sep=0pt, minimum width=3pt] (7) at (\levelDist*\slope,-\levelDist) {};
	\node[draw,circle,fill=black,inner sep=0pt, minimum width=3pt] (8) at (\depth*\slope-\levelDist*\slope,-\depth+\levelDist) {};
	\node[draw,circle,fill=black,inner sep=0pt, minimum width=3pt] (9) at (-\depth*\slope+\levelDist*\slope,-\depth+\levelDist) {};
	\node[draw,circle,fill=black,inner sep=0pt, minimum width=3pt] (10) at (-\depth*\slope+0.2,-\depth) {};
	\node[draw,circle,fill=black,inner sep=0pt, minimum width=3pt] (11) at (-\depth*\slope+0.4,-\depth) {};
	\node[draw,circle,fill=black,inner sep=0pt, minimum width=3pt] (12) at (\depth*\slope-0.2,-\depth) {};
	\node[draw,circle,fill=black,inner sep=0pt, minimum width=3pt] (13) at (\depth*\slope-0.4,-\depth) {};
	
	\node[draw,circle,fill=black,inner sep=0pt, minimum width=3pt] (14) at (\middle*\slope-\levelDist*\slope,-\middle+\levelDist) {};
	\node[draw,circle,fill=black,inner sep=0pt, minimum width=3pt] (15) at (-\middle*\slope+\levelDist*\slope,-\middle+\levelDist) {};
	\node[draw,circle,fill=black,inner sep=0pt, minimum width=3pt] (16) at (\middle*\slope,-\middle) {};
	\node[draw,circle,fill=black,inner sep=0pt, minimum width=3pt] (17) at (-\middle*\slope,-\middle) {};
	\node[draw,circle,fill=black,inner sep=0pt, minimum width=3pt] (18) at (\middle*\slope-0.2,-\middle) {};
	\node[draw,circle,fill=black,inner sep=0pt, minimum width=3pt] (19) at (-\middle*\slope+0.2,-\middle) {};
	\node[draw,circle,fill=black,inner sep=0pt, minimum width=3pt] (20) at (\middle*\slope-0.4,-\middle) {};
	\node[draw,circle,fill=black,inner sep=0pt, minimum width=3pt] (21) at (-\middle*\slope+0.4,-\middle) {};

	\draw[ns2,C2,dotted] (15)--(4);
	\draw[ns2,C2,dotted] (9)--(17);
	\draw[ns2,C2,dotted] (8)--(16);
	\draw[ns2,C2,dotted] (14)--(7);
	\draw[->, >=stealth',ns1,C2](1)--(4);
	\draw[->, >=stealth',ns1,C2](1)--(5);
	\draw[->, >=stealth',ns1,C2](1)--(6);
	\draw[->, >=stealth',ns1,C2](1)--(7);
	\draw[->, >=stealth',ns1,C2](8)--(2);
	\draw[->, >=stealth',ns1,C2](9)--(3);
	\draw[->, >=stealth',ns1,C2](9)--(10);
	\draw[->, >=stealth',ns1,C2](9)--(11);
	\draw[->, >=stealth',ns1,C2](8)--(12);
	\draw[->, >=stealth',ns1,C2](8)--(13);
	\draw[->, >=stealth',ns1,C2](14)--(16);
	\draw[->, >=stealth',ns1,C2](14)--(18);
	\draw[->, >=stealth',ns1,C2](14)--(20);
	\draw[->, >=stealth',ns1,C2](15)--(17);
	\draw[->, >=stealth',ns1,C2](15)--(19);
	\draw[->, >=stealth',ns1,C2](15)--(21);
	\draw[dotted,ns2,C2](-0.05,-0.3)--(0.3,-0.3);
	\draw[dotted,ns2,C2](-\depth*\slope+0.7*\levelDist*\slope+0.2,-\depth+0.7*\levelDist)--(\depth*\slope-0.7*\levelDist*\slope-0.2,-\depth+0.7*\levelDist);
	\draw[dotted,ns2,C2](-\middle*\slope+0.7*\levelDist*\slope+0.2,-\middle+0.7*\levelDist)--(\middle*\slope-0.7*\levelDist*\slope-0.2,-\middle+0.7*\levelDist);
	\draw[loosely dotted,ns2,C3](0,-\levelDist-0.35)--(0,-\middle+0.5);
	\draw[loosely dotted,ns2,C3](0,-\middle-0.35)--(0,-\depth+0.5);
	\node[minimum height=10pt,inner sep=0,font=\small,C3] at (1.05,-\levelDist) {$G_1$};
	\node[minimum height=10pt,inner sep=0,font=\small,C3] at (2.65,-\middle) {$G_m$};
	\node[minimum height=10pt,inner sep=0,font=\small,C3] at (4.1,-\depth) {$G_{n}$
	};

	\end{tikzpicture}
	\caption{Schematic representation of a model of $\varphi_{\zigzag}$, where the parts in red (grey) only contain relations from $E$ and relations in $F$ are blue (black). Relation $R$ and $L$ are omitted.}\label{fig:modelOfFormula}
\end{figure}
We use $D^4$ relations $\{F_{k}\}_{k \in \indexSetH}$ to enforce an ordering on the $D^4$ children of each vertex. We use additional relations to encode rotation maps. 
For $i,j\in \indexSetRotation$ let  $E_{i,j}$ be a binary relation. For every pair $i,j\in \indexSetRotation$ we represent an  edge  $\{v,w\}$ in  $G_m$ by the two tuples $(v,w)\in E_{i,j}^\mathcal{A}$ 
and $(w,v)\in E_{j,i}^\mathcal{A}$. 
This allows us to encode the relationship $\rot_{G_m}(v,i)=(w,j)$ in first-order logic using the formula $`E_{i,j}(v,w)$'.

We use auxiliary relations $R$ and $L_{k}$ for $k\in \indexSetH$, to force the models to be degree regular. The relation $R$ contains the tuple $(r,r)$ for the  root $r$ of the tree, and $L_{k}$ will contain the tuple $(v,v)$ for every leaf $v$ of the tree. 

We now give the precise definition of the formula. We use $[n]:=\{0,1,\dots,n-1\}$ for $n\in \mathbb{N}$. Let 
\begin{displaymath}
\sigma:=\{ \{E_{i,j}\}_{i,j \in \indexSetRotation},\{F_{k}\}_{k \in \indexSetH},R,\{L_k\}_{k\in \indexSetH}\},
\end{displaymath} 
where $E_{i,j}$, $F_{k}$, $R$ and $L_k$ are binary relation symbols for $i,j\in [D]^2$ and $k\in ([D]^2)^2$.  For convenience
we let $E:=\bigcup_{i,j\in \indexSetRotation}E_{i,j}$ and $F:=\bigcup_{k\in \indexSetH}F_{k}$. 
Note that we can express the relations $E$ and $F$ in our language, by replacing formulas of the form '$E(x,y)$' by '$\bigvee_{i,j\in\indexSetRotation}E_{i,j}(x,y)$' and formulas of the form `$F(x,y)$' by
`$\bigvee_{k\in \indexSetH}F_{k}(x,y)$' below. 
We use the following formula to identify the root
$\varphi_{\operatorname{root}}(x):= \forall y \lnot F(y,x).$

We now define a formula $\varphi_{\operatorname{tree}}$, which expresses that the model restricted to the relation $F$ locally looks like a $D^4$-ary tree. More precisely, the formula defines that the structure has exactly one root, that every other vertex has exactly one parent and every vertex has either no children or exactly one child for each of the $D^4$ relations $F_k$. It also defines the self-loops used to make the structure degree regular.
\begin{align*}
\varphi_{\operatorname{tree}}&:= \exists^{=1} x \varphi_{\operatorname{root}}(x)\land \forall x \Big(\big(\varphi_{\operatorname{root}}(x)\land R(x,x)\big)\lor 
\big(\exists^{=1} y F(y,x)\land \lnot \exists y R(x,y)\land \lnot \exists y R(y,x)\big)\Big)\land
\\&\forall x \bigg(\Big[\neg\exists y F(x,y)\land \bigwedge_{k\in \indexSetH} L_k(x,x)\land \forall y \big(y\not= x \rightarrow \bigwedge_{k\in \indexSetH}\lnot L_k(x,y) \wedge \bigwedge_{k\in \indexSetH}\lnot L_k(y,x)\big)\Big]
\\&\lor \Big[\lnot\exists y \bigvee_{k\in \indexSetH}\big(L_k(x,y)\lor L_k(y,x)\big)\land 
\\&
\bigwedge_{k\in \indexSetH}\exists y_{k} \Big(x\not=y_{k}\land F_{k}(x,y_{k})\land 
(\bigwedge_{k'\in \indexSetH,k'\not=k}\lnot F_{k'}(x,y_k))\land \forall y(y\not=y_k\rightarrow \lnot F_{k}(x,y))\Big)\Big]\bigg).
\end{align*}

\bigskip
The formula $\varphi_{\operatorname{rotationMap}}$ will define the properties the relations in $E$ need to have in order to encode rotation maps of $D^2$-regular graphs.  For this we  make sure that the edge colours encode a map, i.e. for any pair of a vertex $x$ and index $i\in \indexSetRotation$ there is only one pair of vertex $y$ and index $j\in \indexSetRotation$ such that $E_{i,j}(x,y)$ holds and that the map is self inverse, i.e. if $E_{i,j}(x,y)$ then $E_{j,i}(y,x)$.
\begin{align*}
&\varphi_{\operatorname{rotationMap}}:= \forall x \forall y \Big(\bigwedge_{i,j\in \indexSetRotation}(E_{i,j}(x,y)\rightarrow E_{j,i}(y,x))\Big)
\land
\\&\forall x \Big(\bigwedge_{i\in \indexSetRotation}\Big(\bigvee_{j\in \indexSetRotation}\big(\exists^{=1}y E_{i,j}(x,y)\land \bigwedge_{\substack{j'\in \indexSetRotation\\ j'\not=j}}\lnot \exists y E_{i,j'}(x,y)\big)\Big)\Big)
\end{align*}

We now define a formula $\varphi_{\operatorname{base}}$ which expresses that the root $r$ of the tree has a self-loop $(r,r)$ in each relation $E_{i,j}$ and that the $D^4$ children of the root form $G_1$. Let $H$ be the $(D^4,D,1/4)$-graph from Definition~\ref{dfn:expanders}. We assume that  $H$ has vertex set $\indexSetH$. We then identify vertex $k\in\indexSetH$ with the element $y$ such that $(x,y)\in F_k^\mathcal{A}$ for the root $x$. Let  $\rot_H:\indexSetH \times[D]\rightarrow \indexSetH\times[D]$ be any rotation map of $H$. Fixing a rotation map for $H$ fixes the rotation map for $H^2$. Recall that $G_1:=H^2$. We can define $G_1$ by a conjunction over all edges of $G_1$.  
\begin{align*}
\varphi_{\operatorname{base}}:=&\forall x  \Big(\varphi_{\operatorname{root}}(x)\rightarrow\Big[\bigwedge_{i,j\in \indexSetRotation}\Big(E_{i,j}(x,x)\land \forall y \Big(x\not= y\rightarrow \big(\lnot E_{i,j}(x,y)\land \lnot E_{i,j}(y,x)\big)\Big)\Big)\land \\& \bigwedge_{\substack{ \rot_{H^2}(k,i)=(k',i')\\k,k'\in \indexSetH\\i,i'\in \indexSetRotation }}\exists y \exists y'\big(F_k(x,y)\land F_{k'}(x,y')\land E_{i,i'}(y,y')\big)\Big]\Big)
\end{align*}

We will now define a formula $\varphi_{\operatorname{recursion}}$ which will ensure that level $m$ of the tree contains $G_m$. Recall that  $G_m:= G_{m-1}^2\zigzag H$. We therefore express that if there is a path of length two between two vertices $x,z$ then for every pair $i,j\in [D]$ there is an edge connecting the corresponding children of $x$ and $z$ according to the definition of the zig-zag product. Here it is important that $x$ and $z$ either both have no children in the underlying tree structure or they both have children. This will also be encoded in the formula.
\begin{align*}
&\varphi_{\operatorname{recursion}}:= \forall x \forall z\bigg[\Big(\lnot \exists y F(x,y)\land \lnot \exists y F(z,y)\Big)\lor
\bigwedge_{\substack{k_1',k_2'\in \indexSetRotation\\\ell_1',\ell_2'\in \indexSetRotation}}\Big(\exists y \big[E_{k_1',\ell_1'}(x,y)\land E_{k_2',\ell_2'}(y,z)\big]\rightarrow \\&
\bigwedge_{\substack{i,j,i',j'\in [D], k,\ell\in \indexSetH\\\rot_H(k,i)=((k_1', k_2'),i')\\ \rot_H((\ell_2', \ell_1'),j)=(\ell,j')}}\exists x'\exists z'\big[ F_k(x,x')\land F_\ell(z,z')\land
E_{(i,j),(j',i')}(x',z')\big]\Big)\bigg]
\end{align*}

We finally let $\varphi_{\zigzag}:=\varphi_{\operatorname{tree}}\land \varphi_{\operatorname{rotationMap}}\land \varphi_{\operatorname{base}}\land \varphi_{\operatorname{recursion}}$.
This concludes defining the formula. Let $d:=2D^2+D^4+1$, which is chosen in such a way to allow for any element of a $\sigma$-structure in $C_d$ to be in $2D^2$ $E$-relations ($G_m$ is $D^2$ regular and every edge of $G_m$ is modelled by two directed edges), to have either $D^4$ $F$-children or $D^4$ $L$-self-loops and to either have one $F$-parent or be in one $R$-self-loop. 

To each model $\mathcal{A}$ of $\varphi_{\zigzag}$ we will associate an undirected graph $U(\mathcal{A})$ with vertex set $A$. For every tuple in each of the relations of $\mathcal{A}$, the graph $U(\mathcal{A})$ will have an edge. We will define $U(\mathcal{A})$ by a rotation map, which extends the rotation map encoded by the relation $E$. For this let $I:=\{0\}\sqcup \indexSetH\sqcup \indexSetRotation$ be an index set. Formally, we define the \emph{underlying graph} $U(\mathcal{A})$ of a model $\mathcal{A}$ of $\varphi_{\zigzag}$ to be the undirected graph with vertex set $A$ given by the rotation map $\rot_{U(\mathcal{A})}: A\times I\rightarrow A\times I$ defined by 
\begin{align*}
\rot_{U(\mathcal{A})}(v,i):=\begin{cases} 
(v,0) & \text{if }i=0\text{ and }(v,v)\in R^\mathcal{A} \\
(w,j) & \text{if }i=0\text{ and }(w,v)\in F_j^\mathcal{A}\\
(w,0) & \text{if }i\in \indexSetH \text{ and }(v,w)\in F_i^\mathcal{A}\\
(v,i) & \text{if }i\in \indexSetH \text{ and }(v,v)\in L_i^\mathcal{A}\\
(w,j) & \text{if }i\in \indexSetRotation \text{ and }(v,w)\in E_{i,j}^\mathcal{A}. 
\end{cases}
\end{align*} 
We can understand this rotation map as labelling the edges incident to a vertex $v$ as follows: $(v,v)\in R^{\mathcal{A}}$ or $(w,v)\in F_k^\mathcal{A}$ respectively is labelled by  $0$, $(v,w)\in F_k^\mathcal{A}$ or $(v,v)\in L_k^\mathcal{A}$ respectively is labelled by $k$ and $(v,w)\in E_{i,j}^\mathcal{A}$ is labelled by $i$.
Note that $U(\mathcal{A})$ is $(D^2+D^4+1)$-regular. We chose the notion of an underlying graph here instead of the Gaifman graph {(see Appendix \ref{sec:hanftheorem})}, 
and it is  more convenient in particular for using results from \cite{ZigZagProductIntroduction}. However the Gaifman graph can be obtained from the underlying graph by ignoring self-loops and multiple edges. We use $\mathcal{A}\models \varphi$ to denote that $\mathcal{A}$ is a model of an FO sentence $\varphi$ (see Appendix \ref{sec:foappendix}) and 
we show the following. 

\begin{theorem}\label{thm:expansionOfModels}
	There is an $\epsilon>0$ such that the class $\{U(\mathcal{A})\mid \mathcal{A}\models \varphi_{\zigzag}\}$ is a family of $\epsilon$-expanders.
\end{theorem}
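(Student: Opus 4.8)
The plan is to show that for every model $\mathcal{A}\models\varphi_{\zigzag}$ the underlying graph $U(\mathcal{A})$ decomposes into connected pieces, each of which is (essentially) one of the expander graphs $G_m$ from Definition~\ref{dfn:expanders} with some trees attached, and then argue that attaching bounded‑depth trees of bounded degree does not destroy constant expansion. First I would analyse the $F$-reduct: using $\varphi_{\operatorname{tree}}$, every finite model, restricted to $F$, is a disjoint union of $D^4$-ary trees of finite (but unbounded) depth, where leaves carry $L$-self-loops and the root carries an $R$-self-loop. Call the vertices at distance $m-1$ from a root the \emph{$m$-th level}. I would then show by induction on $m$, using $\varphi_{\operatorname{base}}$ for $m=1$ and $\varphi_{\operatorname{recursion}}$ for the inductive step, that the $E$-edges among the vertices of level $m$ of a fixed tree, read through the colour-indexed rotation-map encoding, are exactly (an isomorphic copy of) the rotation map of $G_m = G_{m-1}^2\zigzag H$; the self-loops forced at the root by $\varphi_{\operatorname{base}}$ together with $\varphi_{\operatorname{rotationMap}}$ guarantee the $E$-structure is a genuine rotation map of a $D^2$-regular graph on each level. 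This is essentially a bookkeeping argument matching the quantifier-free "edge conjunctions" in the formula to the definitions of squaring and of the zig-zag product; the one subtlety is that $\varphi_{\operatorname{recursion}}$ only \emph{forces} the required zig-zag edges to be present, so I also need the degree bound $d$ together with $D^2$-regularity (already guaranteed on level $m-1$ by induction and $\varphi_{\operatorname{rotationMap}}$) to conclude there are \emph{no extra} $E$-edges, hence level $m$ is exactly $G_m$.

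Next I would establish connectivity: each model is the disjoint union of ``tree-with-levels'' components, and within one component all of level $1,\dots,n$ plus the tree edges form one connected graph. Level $1$ is $G_1=H^2$; by Lemma~\ref{lem:nonBipartitenessConnectedness} and Proposition~\ref{prop:recursiveConstruction} each $G_m$ has $\lambda(G_m)\le 1/2<1$, so by Lemma~\ref{lem:connectedBipartiteEigenvalues} each $G_m$ is connected, and the $F$-edges connect every level-$m$ vertex to its parent on level $m-1$; hence the whole component is connected. (If a model has infinitely many levels there is nothing to prove about expansion of that single model, since we only care about families; but for the statement it suffices that every \emph{finite} model is a finite disjoint union of such components — I should note that the formula does force finiteness of the branching but not of depth, so I treat an arbitrary finite model, which then has finite depth automatically.)

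The core estimate is then a Cheeger-type bound on $U(\mathcal{A})$. Fix a component, with levels $G_1,\dots,G_n$ and attaching trees. Take $S\subseteq$ the vertex set with $|S|\le |V|/2$ and bound $|\langle S,\overline S\rangle|$ from below. The plan is: (i) by Theorem~\ref{thm:boundExpansionRatioInTermsOfLambda} and Proposition~\ref{prop:recursiveConstruction}, each level graph $G_m$ has edge-expansion at least $h(G_m)\ge D^2(1-1/2)/2 = D^2/4$; (ii) the number of vertices grows geometrically with $m$ (level $m$ has $D^{4m}$ vertices while level $m-1$ has $D^{4(m-1)}$), so a constant fraction of the component's vertices lie on the last level, and more generally a constant fraction lie on any fixed tail of levels — this lets me ``charge'' cut-edges to the large levels; (iii) the tree edges between consecutive levels provide, for any vertex $v$ on level $m-1$, exactly $D^4$ children on level $m$, so if $S$ separates a level-$(m-1)$ set from its children we again get many crossing $F$-edges. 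Combining, I would argue that if $S$ is roughly balanced overall then it is roughly balanced on the (constant-fraction) union of the larger levels, where the $G_m$-expansion gives $\Omega(D^2)$ crossing edges per unbounded-degree vertex; and if $S$ is extremely imbalanced on the larger levels (say nearly disjoint from them), then to still have $|S|\ge$ roughly half of $|V|$ it must contain most of the small levels while $\overline S$ contains most of the large levels, forcing $\Omega(|S|)$ crossing \emph{tree} edges. Either way $h(S)\ge \epsilon$ for a fixed $\epsilon>0$ depending only on $D$, proving $U(\mathcal{A})$ is an $\epsilon$-expander uniformly over all models, which is exactly the claim.

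The main obstacle I anticipate is step (iii) of the last paragraph: combining the level-wise expansion of the $G_m$'s with the tree structure into a \emph{single} global isoperimetric constant. The clean way is a ``contraction/quotient'' argument — contract each level $G_m$ to a single super-vertex, observe the quotient is a path $P_n$ with $D^{4m}$-weighted nodes and $D^{4(m-1)}\cdot D^4=D^{4m}$ parallel tree-edges between consecutive nodes, note this weighted path has expansion bounded below by a constant because of the geometric weights, and then ``lift'' a sparse cut of $U(\mathcal{A})$ to either a sparse cut inside some $G_m$ (contradicting $h(G_m)\ge D^2/4$) or a sparse cut of the weighted path (contradicting its constant expansion). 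Making the lifting lossless up to constants — in particular handling cuts that are simultaneously unbalanced within several levels — is the delicate point and will need the geometric growth of level sizes crucially; I expect this to be the technically heaviest part of the argument, whereas the identification of level $m$ with $G_m$ from the formula, though lengthy, is routine.
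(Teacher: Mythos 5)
The central gap is connectivity. You assert that every finite model, restricted to $F$, is a disjoint union of $D^4$-ary trees, and your expansion estimate is then run ``per component.'' This is wrong on two counts. First, $\varphi_{\operatorname{tree}}$ forces \emph{exactly one} root (the vertex carrying the $R$-self-loop), and every other vertex has exactly one $F$-parent; a connected $F$-component not containing the root therefore has the unique-parent structure everywhere and, being finite, must contain a directed $F$-cycle (this is exactly Claim~\ref{claim:containsCycle}), so such a component is not a tree at all. Second, and more fundamentally, if even one extra component existed, then $U(\mathcal A)$ would be disconnected, hence would admit a cut of cost zero, and the theorem would simply be false. Ruling out rootless components is the content of Lemma~\ref{lem:connected}, which is the hardest step of the paper's argument: it combines Lemma~\ref{lem:recursion} (children of a connected $E$-component again form a zig-zag product) with a growth estimate $m_{i+1}\ge m_i+1$ for distances to the forced $F$-cycle, and derives a contradiction with the cycle's periodicity $S_\ell=S_0$. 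Your proposal contains no substitute for this, and your remark that ``the formula forces finiteness of the branching but not of depth, so I treat an arbitrary finite model, which then has finite depth automatically'' does not touch the problem of rootless components.

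Granting connectivity, your level bookkeeping identifying the $E$-graph on level $m$ with $G_m$ is essentially Lemma~\ref{lem:recursion} together with Lemma~\ref{lem:exactFormOfModels} and is roughly right, though the paper also needs connectivity of the squares of the level graphs via Lemma~\ref{lem:nonBipartitenessConnectedness}, a step you omit. Your plan for the global isoperimetric bound is a genuine alternative to the paper's. The paper avoids any quotient/lifting scheme: it lets $m'$ be the largest level with $|S_{m'}|>|T_{m'}|/2$, then directly sums the $\tfrac{D^2}{4}|S_m|$ contributions on the levels $m>m'$, the $\tfrac{D^2}{4}|T_{m'}\setminus S_{m'}|$ contribution within level $m'$, and the $\bigl(D^4|S_{m'}|-|S_{m'+1}|\bigr)$ crossing $F$-edges to level $m'+1$, closing with the geometric inequality of Claim~\ref{claim:relationOfSizesOfLevels}. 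Your contraction-to-a-weighted-path and lift-back argument is heavier and, as you warn yourself, must handle cuts simultaneously unbalanced across many levels; the paper's direct case split is the cleaner route, and the geometric level-growth inequality it uses is exactly the tool your lifting step would need anyway.
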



In the rest of this section, we give the proof of Theorem \ref{thm:expansionOfModels}.
Let $\mathcal{A}$ be a model of $\varphi_{\zigzag}$. 
Let  $\mathcal{A}|_F:=(A,(F_k^\mathcal{A})_{k\in \indexSetH})$ be an $\{(F_{k})_{k\in \indexSetH}\}$-structure. Recall that we denote the Gaifman graph of $\mathcal{A}|_F$ by $G(\mathcal{A}|_F)$. 
Let $\mathcal{A}|_E$ be the $\{(E_{i,j})_{i,j\in \indexSetRotation}\}$-structure $(A,(E_{i,j}^\mathcal{A})_{i,j\in \indexSetRotation})$.  We further define the \emph{underlying graph} $U(\mathcal{A}|_E)$ of $\mathcal{A}|_E$ as the undirected graph specified by the rotation map  $\rot_{U(\mathcal{A}|_E)}$ defined by $\rot_{U(\mathcal{A}|_E)}(v,i):=(w,j)$ if $(v,w)\in E_{i,j}^\mathcal{A}$. This is well defined as $\mathcal{A}\models \varphi_{\operatorname{rotationMap}}$.
We use the substructures $G(\mathcal{A}|_F)$ and $U(\mathcal{A}|_E)$ to express the structural properties of models of $\varphi_{\zigzag}$. More precisely we want to prove that $G(\mathcal{A}|_F)$ is a rooted complete tree and $U(\mathcal{A}|_E)$ is the disjoint union of the expanders $G_1,\dots,G_{n}$ for some $n\in \mathbb{N}$ (Lemma~\ref{lem:exactFormOfModels}). To prove this we use two technical lemmas (Lemma~\ref{lem:recursion} and Lemma~\ref{lem:connected}). Lemma~\ref{lem:recursion} intuitively shows that the children in $G(\mathcal{A}|_F)$ of each connected part of $U(\mathcal{A}|_E)$  form the  zig-zag product with $H$ of the square of the connected part. Lemma~\ref{lem:connected} shows that $G(\mathcal{A}|_F)$ is connected. To prove Theorem \ref{thm:expansionOfModels} we use that a tree with an expander on each level has good expansion. Loosely speaking, this is true because cutting the tree `horizontally' takes many edge deletions and for cutting the tree `vertically' we  cut many expanders. 
We define isomorphism for undirected graphs with parallel edges and self-loops in the usual way (see Appendix \ref{app:undirectedGraphs}), and we use $G_1\cong G_2$ to denote that $G_1$ is isomorphic to $G_2$.

\begin{lemma}\label{lem:recursion}
	Let $\mathcal{A}$ be a model of $\varphi_{\zigzag}$ and assume $S$ is the set of all vertices belonging to a  connected component  
	of $(U(\mathcal{A}|_E))^2$ not containing the root and let $S':=\{w\in A\mid (v,w)\in F^\mathcal{A}, v\in S\}$.  
	If $S'\not=\emptyset$ then  $U(\mathcal{A}|_E)[S']$ is a connected component of $U(\mathcal{A}|_E)$ and  $U(\mathcal{A}|_E)[S']\cong((U(\mathcal{A}|_E))^2[S])\zigzag H$. 
\end{lemma}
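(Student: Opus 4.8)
The plan is to unpack the three subformulas of $\varphi_{\zigzag}$ that govern the relevant relations and translate them into combinatorial statements about $U(\mathcal{A}|_E)$ and $G(\mathcal{A}|_F)$. First I would fix a connected component $S$ of $(U(\mathcal{A}|_E))^2$ not containing the root and set $S' := \{w \mid (v,w) \in F^{\mathcal A}, v \in S\}$, assuming $S' \neq \emptyset$. The first observation is that, by $\varphi_{\operatorname{tree}}$, every vertex of $\mathcal A$ either has exactly one $F_k$-child for each $k \in \indexSetH$ (and no $L$-self-loops), or has all $D^4$ of its $L$-self-loops and no $F$-children; these are the ``internal'' and ``leaf'' vertices respectively. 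I would show that all vertices in $S$ are of the same kind: the ``either/or'' in $\varphi_{\operatorname{recursion}}$ forces that whenever there is an $E$-path of length two between $x$ and $z$, either both have no $F$-children or both do; since $S$ is connected in $(U(\mathcal{A}|_E))^2$, propagating this equivalence along length-two walks shows $S$ is homogeneous. Since $S' \neq \emptyset$, all of $S$ is internal, so $|S'| = D^4 |S|$ and the map sending $v \in S$ to its $D^4$ children is well defined and injective (distinct parents give disjoint child sets, again by $\varphi_{\operatorname{tree}}$: each non-root vertex has a unique $F$-parent).

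Next I would identify the graph structure on $S'$. For each $v \in S$, label its children by $\indexSetH$ via the relations $F_k$; this realizes a vertex-disjoint union of $|S|$ copies of the vertex set of $H$, one ``cloud'' $C_v$ per $v \in S$. I would then check that $U(\mathcal{A}|_E)[S']$ has no $E$-edges leaving $S'$: by $\varphi_{\operatorname{rotationMap}}$ the relation $E$ encodes a genuine (involutive) rotation map of a $D^2$-regular graph on all of $A$, and any $E$-edge incident to a child $x' \in C_v$ with $v \in S$ must, when we trace the recursion formula backwards, have its other endpoint be a child of some $z$ reachable from $v$ by a length-two $E$-walk — hence $z \in S$ and the endpoint lies in $S'$. (I should double-check that every $E$-edge inside $S'$ actually arises from an instance of $\varphi_{\operatorname{recursion}}$ rather than being ``extra''; here the degree bound $d = 2D^2 + D^4 + 1$ is essential — it leaves room for exactly $2D^2$ directed $E$-tuples per vertex, which is precisely what the $D^2$-regularity coming from the zig-zag product with $H$ (which is $D^2$-regular since $H$ is $D$-regular) demands, so no spurious edges fit.) Combining these, $U(\mathcal{A}|_E)[S']$ is a connected component of $U(\mathcal{A}|_E)$ provided it is connected, which will follow once the isomorphism below is established, since $(U(\mathcal{A}|_E))^2[S]$ is connected (it is a single component of the square) and the zig-zag product of a connected graph with a connected $H$ is connected by Lemma \ref{lem:connectedBipartiteEigenvalues} and Theorem \ref{thm:expansionOfZigZag}(1), noting $\lambda(H) \le 1/4 < 1$ and that $(U(\mathcal{A}|_E))^2[S]$ is connected with $\lambda < 1$ by Lemma \ref{lem:nonBipartitenessConnectedness}.

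For the isomorphism $U(\mathcal{A}|_E)[S'] \cong ((U(\mathcal{A}|_E))^2[S]) \zigzag H$ I would exhibit the map explicitly: a vertex of the zig-zag product is a pair $(v, k)$ with $v \in S$ and $k \in \indexSetH$ (recall $H$ has vertex set $\indexSetH$ and the square $(U(\mathcal{A}|_E))^2[S]$ has vertex set $S$), and I send it to the child $x'$ with $(v, x') \in F_k^{\mathcal A}$. This is a bijection $S \times \indexSetH \to S'$ by the previous paragraph. It remains to match rotation maps: I would write out $\rot_{(U(\mathcal{A}|_E))^2[S] \,\zigzag\, H}((v,k),(i,j))$ using the definition of the zig-zag product (three steps: $\rot_H(k,i) = (k',i')$, then a step in $(U(\mathcal{A}|_E))^2[S]$, then $\rot_H(\ell',j) = (\ell, j')$) and observe that a step in the square is two $E$-steps, i.e. $E_{k_1',\ell_1'}(v,y) \wedge E_{k_2',\ell_2'}(y,z)$ with $k' = (k_1',k_2')$ and $\ell' = (\ell_2', \ell_1')$ — this is exactly the pairing of indices appearing in the big conjunction of $\varphi_{\operatorname{recursion}}$. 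The formula then asserts precisely $E_{(i,j),(j',i')}(x', z')$ where $x', z'$ are the $F_k$- and $F_\ell$-children of $v$ and $z$; and by the rotation-map semantics of $E$, this says $\rot_{U(\mathcal{A}|_E)}(x', (i,j)) = (z', (j',i'))$. So the formula $\varphi_{\operatorname{recursion}}$ guarantees every edge of the zig-zag product is present in $U(\mathcal{A}|_E)[S']$, and the matching degree count ($D^2$ on both sides) together with the fact that $E$ is an honest rotation map forces equality, yielding the isomorphism. The main obstacle I anticipate is the bookkeeping in this last step — making sure the index conventions in $\varphi_{\operatorname{recursion}}$ (the transposition of coordinates in $k'$ versus $\ell'$, and the order $(i,j)$ versus $(j',i')$) line up exactly with the definition of $\rot_{G_1 \zigzag G_2}$ and that no edge is double-counted — rather than any conceptual difficulty; everything else is a direct reading of the formulas.
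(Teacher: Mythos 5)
Your proposal is correct and takes essentially the same approach as the paper's proof: the bijection $(v,k)\mapsto$ the $F_k$-child of $v$, propagation of the ``has $F$-children'' property via connectivity, verification that $\varphi_{\operatorname{recursion}}$ mirrors the zig-zag rotation map, and a final degree-counting argument (using $D^2$-regularity of both the zig-zag product and $U(\mathcal{A}|_E)$ forced by $\varphi_{\operatorname{rotationMap}}$) to upgrade the injective edge map to an isomorphism and to conclude $S'$ is a connected component. The only notable organizational difference is that you propagate homogeneity of $S$ directly along edges of $(U(\mathcal{A}|_E))^2[S]$, whereas the paper routes through a path in the connected zig-zag product $((U(\mathcal{A}|_E))^2[S])\zigzag H$; and your ``no $E$-edges leave $S'$'' step should be folded into the degree-counting step rather than argued beforehand, which you already flag yourself.
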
 
We use connected components of $(U(\mathcal{A}|_E))^2$, as the square  of a connected component of $U(\mathcal{A}|_E)$ may not be connected, in which case the zig-zag product with $H$ of the square of the connected component cannot be connected. 
\begin{proof}[Proof of Lemma \ref{lem:recursion}] Assume that $S'\not=\emptyset$. We first show that $U(\mathcal{A}|_E)[S']\cong((U(\mathcal{A}|_E))^2[S])\zigzag H$.  
	For this we use the following two claims. 
	\begin{claim}\label{claim:edgeInducesPathOfLenghTwoOnParents}
		If $\rot_{(U(\mathcal{A}|_E))^2[S]\zigzag H}((u,k),(i,j))=((w,\ell),(j',i'))$ for some $u,w\in S$, $k,\ell\in \indexSetH$, $i,j,i',j'\in [D]$  then there is $v\in S$ such that $(u,v)\in E_{k_1',\ell_1'}^\mathcal{A}$ and $(v,w)\in E_{k_2',\ell_2'}^\mathcal{A}$ where $\rot_{H}(k,i)=((k_1', k_2'),i')$ and $\rot_{H}((\ell_2', \ell_1'),j)=(\ell,j')$. 
	\end{claim}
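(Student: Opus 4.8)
The plan is to prove Claim~\ref{claim:edgeInducesPathOfLenghTwoOnParents} by unwinding, in order, the definition of the zig-zag product, the definition of the square of a graph, and the definition of the rotation map $\rot_{U(\mathcal{A}|_E)}$ (which is a well-defined function precisely because $\mathcal{A}\models\varphi_{\operatorname{rotationMap}}$). Note that neither $\varphi_{\operatorname{base}}$ nor $\varphi_{\operatorname{recursion}}$ is needed here: the statement only records how the combinatorial construction is mirrored by the relations $E_{i,j}$.

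Concretely, suppose $\rot_{(U(\mathcal{A}|_E))^2[S]\zigzag H}((u,k),(i,j))=((w,\ell),(j',i'))$. Applying the definition of $\zigzag$ with $G_1=(U(\mathcal{A}|_E))^2[S]$ and $G_2=H$ (recall $V(H)=\indexSetH$), there are $k',\ell'\in\indexSetH$ such that $\rot_H(k,i)=(k',i')$, $\rot_{(U(\mathcal{A}|_E))^2[S]}(u,k')=(w,\ell')$, and $\rot_H(\ell',j)=(\ell,j')$; writing $k'=(k_1',k_2')$ and $\ell'=(\ell_2',\ell_1')$ yields precisely the indices named in the statement. Since $S$ is a union of connected components of $(U(\mathcal{A}|_E))^2$, the rotation map of $(U(\mathcal{A}|_E))^2[S]$ is the restriction of $\rot_{(U(\mathcal{A}|_E))^2}$, so the definition of the square applied to $\rot_{(U(\mathcal{A}|_E))^2}(u,(k_1',k_2'))=(w,(\ell_2',\ell_1'))$ produces a vertex $v$ with $\rot_{U(\mathcal{A}|_E)}(u,k_1')=(v,\ell_1')$ and $\rot_{U(\mathcal{A}|_E)}(v,k_2')=(w,\ell_2')$. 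Unwinding the definition of $\rot_{U(\mathcal{A}|_E)}$, these two equalities say exactly $(u,v)\in E_{k_1',\ell_1'}^{\mathcal{A}}$ and $(v,w)\in E_{k_2',\ell_2'}^{\mathcal{A}}$, as required.

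The one point that is not a pure definitional chase is that $v\in S$. The vertex $v$ lies on the length-two walk $u\to v\to w$ in $U(\mathcal{A}|_E)$ with $u,w\in S$, and $S$ is a connected component of $(U(\mathcal{A}|_E))^2$, so it suffices to show that $v$ is joined to $u$ inside $(U(\mathcal{A}|_E))^2$. This holds as soon as the connected component of $U(\mathcal{A}|_E)$ containing $S$ is non-bipartite, for then its square is connected and must coincide with $S$. I expect this to be the only genuinely non-mechanical step; it will follow from the fact that the graphs appearing on the levels of a model of $\varphi_{\zigzag}$ are copies of the $G_m$ with $\lambda(G_m)\le 1/2<1$ (Proposition~\ref{prop:recursiveConstruction}), hence connected and non-bipartite by Lemma~\ref{lem:connectedBipartiteEigenvalues} — a structural fact that the surrounding argument for Lemma~\ref{lem:recursion} and Lemma~\ref{lem:exactFormOfModels} must establish in any case.
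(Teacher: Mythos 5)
Your definitional chain (zig-zag product $\to$ square $\to$ $\rot_{U(\mathcal{A}|_E)}$ $\to$ the relations $E_{i,j}$, with well-definedness coming from $\varphi_{\operatorname{rotationMap}}$) is exactly the paper's proof of this claim, and it is correct. In fact the paper's proof consists of nothing but that chain: it produces some $v\in A$ with $(u,v)\in E_{k_1',\ell_1'}^{\mathcal{A}}$ and $(v,w)\in E_{k_2',\ell_2'}^{\mathcal{A}}$ and stops there, never arguing that $v\in S$.

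The part you add to handle $v\in S$ is the problematic one. Your route — the component of $U(\mathcal{A}|_E)$ containing $S$ is non-bipartite because the levels are copies of the $G_m$ with $\lambda(G_m)\le 1/2$ — is circular: that structural description of the levels is Lemma~\ref{lem:exactFormOfModels}, which is proved via Lemma~\ref{lem:recursion}, whose proof invokes this very claim. At the point where the claim is used, only $\varphi_{\operatorname{rotationMap}}$ is in play, and a $D^2$-regular $E$-structure could a priori have a bipartite component $T$; in that case $S$ would be one side of the bipartition of $T$ and the intermediate vertex $v$ would lie on the other side, so $v\notin S$. So ``$v\in S$'' is not derivable here (and is arguably an overstatement in the claim as written). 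The saving grace is that membership of $v$ in $S$ is never used downstream: Claim~\ref{claim:pathOfLenghTwoCausesNoneOrTwoChildren} and the path construction in the proof of Lemma~\ref{lem:recursion} only need $v\in A$ with the two $E$-tuples. The clean fix is to weaken the conclusion to $v\in A$ rather than to try to prove $v\in S$.
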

	\begin{proof}
	By the assumption that 
		$\rot_{(U(\mathcal{A}|_E))^2[S]\zigzag H}((u,k),(i,j))=((w,\ell),(j',i'))$ and the definition of the zig-zag product, we have that $\rot_{(U(\mathcal{A}|_E))^2[S]}(u,(k_1', k_2'))=(w,(\ell_2', \ell_1'))$ for   $\rot_{H}(k,i)=((k_1', k_2'),i')$ and $\rot_{H}((\ell_2', \ell_1'),j)=(\ell,j')$. 
		
		Since $\rot_{(U(\mathcal{A}|_E))^2[S]}$ is equal to $\rot_{(U(\mathcal{A}|_E))^2}$ restricted to elements of the set $S$, we have that $\rot_{(U(\mathcal{A}|_E))^2}(u,(k_1', k_2'))=(w,(\ell_2', \ell_1'))$. Then by definition of squaring $\rot_{(U(\mathcal{A}|_E))^2}(u,(k_1', k_2'))=(w,(\ell_2', \ell_1'))$ implies that there is $v$ such that $\rot_{U(\mathcal{A}|_E)}(u,k_1')=(v,\ell_1')$  and $\rot_{U(\mathcal{A}|_E)}(v,k_2')=(w,\ell_2')$.
	\end{proof}
	
	\begin{claim}\label{claim:pathOfLenghTwoCausesNoneOrTwoChildren}
		If  $(u,v)\in E_{k_1',\ell_1'}^\mathcal{A}$ and $(v,w)\in E_{k_2',\ell_2'}^\mathcal{A}$ for $u,v,w\in A$, $k_1',k_2',\ell_1',\ell_2'\in \indexSetH$ and there is $u'\in A$ with $(u,u')\in F^\mathcal{A}$ then there is $w'\in A$ such that $(w,w')\in F^\mathcal{A}$. Furthermore for any $i,i',j,j'\in [D]$ there are $\tilde{u},\tilde{w}\in A$, $k,\ell\in \indexSetH$ such that $(\tilde{u},\tilde{w})\in E_{(i,j),(j'i')}^\mathcal{A}$ for $(u,\tilde{u})\in F_{k}^\mathcal{A}$ and $(w,\tilde{w})\in F_{\ell}^\mathcal{A}$ where $\rot_{H}(k,i)=((k_1', k_2'),i')$ and $\rot_{H}((\ell_2', \ell_1'),j)=(\ell,j')$.
	\end{claim}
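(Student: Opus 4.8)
The plan is to read the statement essentially straight off the conjunct $\varphi_{\operatorname{recursion}}$ of $\varphi_{\zigzag}$, of which $\mathcal{A}$ is a model by hypothesis. First I would instantiate the outer $\forall x\forall z$ of $\varphi_{\operatorname{recursion}}$ at $x:=u$ and $z:=w$. This yields that the disjunction $\big(\lnot \exists y\, F(u,y)\land \lnot \exists y\, F(w,y)\big)\ \lor\ \bigwedge_{k_1',k_2',\ell_1',\ell_2'\in\indexSetRotation}\big(\exists y[E_{k_1',\ell_1'}(u,y)\land E_{k_2',\ell_2'}(y,w)]\rightarrow(\dots)\big)$ holds in $\mathcal{A}$. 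By assumption there is $u'$ with $(u,u')\in F^{\mathcal{A}}$, so $\mathcal{A}\models\exists y\,F(u,y)$ and the left disjunct fails; hence the conjunction of implications holds.

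Now I would feed in the given indices: the hypotheses $(u,v)\in E_{k_1',\ell_1'}^{\mathcal{A}}$ and $(v,w)\in E_{k_2',\ell_2'}^{\mathcal{A}}$ make $y:=v$ a witness for the antecedent of the implication indexed by exactly these four indices, so its consequent holds. That consequent says precisely that for every $i,j,i',j'\in[D]$ and $k,\ell\in\indexSetH$ with $\rot_H(k,i)=((k_1',k_2'),i')$ and $\rot_H((\ell_2',\ell_1'),j)=(\ell,j')$ there exist $x',z'\in A$ with $F_k(u,x')$, $F_\ell(w,z')$ and $E_{(i,j),(j',i')}(x',z')$. Taking $\tilde u:=x'$ and $\tilde w:=z'$ gives the ``furthermore'' part verbatim. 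For the first assertion it suffices to exhibit one admissible tuple $(i,j,i',j',k,\ell)$: since $\rot_H$ is a bijection on $\indexSetH\times[D]$, for any $i'\in[D]$ the pair $\rot_H^{-1}((k_1',k_2'),i')$ is a legitimate choice of $(k,i)$, and for any $j\in[D]$ the pair $\rot_H((\ell_2',\ell_1'),j)$ is a legitimate choice of $(\ell,j')$; fixing such witnesses, $F_\ell(w,z')$ gives $(w,z')\in F^{\mathcal{A}}$, so $w':=z'$ works.

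I do not expect a real obstacle here, as the claim is by design the semantic unfolding of $\varphi_{\operatorname{recursion}}$; the only point needing a line of care is exactly the non-vacuousness of the inner conjunction for the chosen $k_1',k_2',\ell_1',\ell_2'$ (so that the existentials producing $w'$, $\tilde u$, and $\tilde w$ genuinely fire), which is what the bijectivity of $\rot_H$ supplies. A minor bookkeeping remark I would make is that ``for any $i,i',j,j'\in[D]$'' in the statement is to be understood as ranging over those $i,i',j,j'$ compatible with the displayed $\rot_H$-equations, i.e.\ the indices of the conjuncts actually appearing in the consequent.
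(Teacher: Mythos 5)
Your proof is correct and takes essentially the same route as the paper's: instantiate $\forall x\forall z$ in $\varphi_{\operatorname{recursion}}$ at $(u,w)$, use $(u,u')\in F^{\mathcal{A}}$ to discharge the first disjunct, use $y:=v$ to fire the antecedent for the index tuple $(k_1',\ell_1',k_2',\ell_2')$, read off the consequent, and observe that the inner conjunction is non-empty because $\rot_H$ (being a rotation map of the $D$-regular graph $H$) is a self-inverse bijection on $\indexSetH\times[D]$. The paper phrases the non-vacuousness slightly differently, appealing to $D$-regularity of $H$ to say there is always some $k$ with $\rot_H(k,i)=((k_1',k_2'),i')$, but this is the same observation. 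Your closing bookkeeping remark about how ``for any $i,i',j,j'$'' should be read — ranging only over tuples compatible with the $\rot_H$-equations — is a fair and correct reading of the claim's slightly informal statement.
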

	\begin{proof}
		We only use that $\mathcal{A}\models \varphi_{\operatorname{recursion}}$. Since $\varphi_{\operatorname{recursion}}$ has the form $\forall x\forall z \psi(x,z)$ we know that $\mathcal{A}\models \psi(u,w)$.
		Since $(u,u')\in F^\mathcal{A}$ we have  $\mathcal{A}\not\models \big[ \lnot \exists y F(x,y)\land \lnot \exists y F(z,y)\big](u,w)$. Since  $\mathcal{A}\models \exists y \big[E_{k_1',\ell_1'}(x,y)\land E_{k_2',\ell_2'}(y,z)\big](u,w)$  
		\begin{align*}
		&\mathcal{A}\models \bigwedge_{\substack{i,j,i',j'\in [D], k,\ell\in \indexSetH\\\rot_H(k,i)=((k_1', k_2'),i')\\ \rot_H((\ell_2', \ell_1'),j)=(\ell,j')}}\exists x'\exists z'\big[ F_k(x,x')\land 
		F_\ell(z,z')\land E_{(i,j),(j',i')}(x',z')\big](u,w).
		\end{align*}
		Since $H$ is $D$-regular, for every $k'_1,k'_2 \in [D]^2$ and $i,i' \in [D]$, there is $k \in ([D]^2)^2$ such that $\rot_H(k,i) = ((k'_1,k'_2,i')$ (and the same for $\ell'_1,\ell'_2,j,j'$). Thus, the above conjunction is not empty. This further implies that for any $i,i',j,j'\in [D]$ there are $\tilde{u},\tilde{w}\in A$, $k,\ell\in \indexSetH$ as claimed. In particular there is $w' \in A$ such that $(w,w') \in F^{\mathcal{A}}$. 
	\end{proof}

	We will argue that for every element $w\in S$ there is a $w'\in S'$ such that $(w,w')\in F^\mathcal{A}$.  For this pick any $u'\in S'$. Let $u\in S$ be the element such that  $(u,u')\in F^\mathcal{A}$. By combining Lemma \ref{lem:nonBipartitenessConnectedness} and Theorem \ref{thm:expansionOfZigZag} and Lemma \ref{lem:connectedBipartiteEigenvalues} it follows that $((U(\mathcal{A}|_E))^2[S])\zigzag H$ is connected. Therefore, there is a path $(u'_0,\dots,u'_m)$ in $((U(\mathcal{A}|_E))^2[S])\zigzag H$ from $u'_0=(u,(k_1,k_2))$ to $u'_m=(w,(\ell_1,\ell_2))$ for some $k_1,k_2,\ell_1,\ell_2\in \indexSetRotation$. By Claim \ref{claim:edgeInducesPathOfLenghTwoOnParents} there is a path $(u_0,v_0,u_1,v_1,\dots u_{m-1},v_{m-1},u_m)$ in $U(\mathcal{A}|_E)$ from $u_0=u$ to $u_m=w$. By inductively using Claim \ref{claim:pathOfLenghTwoCausesNoneOrTwoChildren} on the path we find $w'$ such that $(w,w')\in F^\mathcal{A}$.

	Combining this with $\mathcal{A}\models \varphi_{\operatorname{tree}}$ implies that the map $f:S\times \indexSetH \rightarrow S'$, given by $f(v,k)=u$ if $(v,u)\in F_{k}^\mathcal{A}$, is well defined. Furthermore, by Claim~\ref{claim:edgeInducesPathOfLenghTwoOnParents} and Claim \ref{claim:pathOfLenghTwoCausesNoneOrTwoChildren}, we have that if $\rot_{(U(\mathcal{A}|_E))^2[S]\zigzag H}((u,k),(i,j))=((w,\ell),(j',i'))$ then \[\rot_{(U(\mathcal{A}|_E))[S']}(f((u,k)),(i,j))=(f((w,\ell)),(j',i')).\] This proves that  $f$ maps each edge in $((U(\mathcal{A}|_E))^2[S])\zigzag H$ injectively to  an edge in $U(\mathcal{A}|_E)[S']$. Then the map $f$ together with the corresponding edge map is an isomorphism from $((U(\mathcal{A}|_E))^2[S])\zigzag H$ to $U(\mathcal{A}|_E)$ as both are $D^2$-regular. 
	
	Moreover, $U(\mathcal{A}|_E)[S']\cong((U(\mathcal{A}|_E))^2[S])\zigzag H$ implies that  $U(\mathcal{A}|_E)[S']$ is connected and $D^2$-regular. Since $\mathcal{A}\models \varphi_{\operatorname{rotationMap}}$ enforces that $U(\mathcal{A}|_E)$ is $D^2$-regular, no vertex $v\in S'$ can have  neighbours which are not in $S'$ and therefore $U(\mathcal{A}|_E)[S']$ is a connected component of $U(\mathcal{A}|_E)$.
\end{proof}

\begin{lemma}\label{lem:connected}
	Let $\mathcal{A}\in C_d$ be a model  of $\varphi_{\zigzag}$. Then $G(\mathcal{A}|_F)$  is connected.
\end{lemma}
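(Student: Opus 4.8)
\textit{Plan.} Fix the root $r$, i.e.\ the unique vertex with $\varphi_{\operatorname{root}}(r)$, which exists and is unique by $\varphi_{\operatorname{tree}}$. By $\varphi_{\operatorname{tree}}$ every vertex other than $r$ has a unique $F$-parent, and following parent pointers is an edge of $G(\mathcal{A}|_F)$. So it suffices to show that for every $v\in A$ the $F$-parent sequence $v=v_0,v_1,v_2,\dots$ (where $v_{i+1}$ is the $F$-parent of $v_i$ as long as $v_i\neq r$) reaches $r$: this gives a walk in $G(\mathcal{A}|_F)$ from $v$ to $r$, and since $v$ is arbitrary, $G(\mathcal{A}|_F)$ is connected. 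Since $A$ is finite, the only obstruction is an $F$-parent cycle $w_0,w_1,\dots,w_{p-1}$ (indices mod $p$), with $w_{i+1}$ the $F$-parent of $w_i$ and no $w_i$ equal to $r$ (each $w_i$ has a parent). The entire content of the proof is to rule this out.

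Assume such a cycle exists. First I would note that each $w_i$ is a non-leaf of the $F$-structure, since it has the $F$-child $w_{i-1}$, so the non-leaf clause of $\varphi_{\operatorname{tree}}$ gives $w_i$ exactly $D^4$ pairwise distinct $F$-children (distinctness because a child $y_k$ lies in $F_k$ only and is $\neq w_i$, and because each vertex has a unique parent). Let $C_i$ and $S_i$ be the connected component of $w_i$ in $U(\mathcal{A}|_E)$ and in $(U(\mathcal{A}|_E))^2$, respectively. Since $\varphi_{\operatorname{base}}$ forces every $E$-edge incident to $r$ to be a self-loop, $r$ forms a singleton $E$-component, so $r\notin C_i\supseteq S_i$; in particular $S_i$ is a connected component of $(U(\mathcal{A}|_E))^2$ not containing $r$. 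Using the propagation argument of Claim~\ref{claim:pathOfLenghTwoCausesNoneOrTwoChildren} (which only uses $\varphi_{\operatorname{recursion}}$) together with connectedness of $S_i$, every vertex of $S_i$ is a non-leaf, hence the set $S_i'$ of $F$-children of $S_i$ satisfies $|S_i'|=|S_i|\cdot D^4$. Now Lemma~\ref{lem:recursion} applies to $S_i$ (it is a square-component avoiding $r$, and $S_i'\ni w_{i-1}\neq\emptyset$): it yields that $U(\mathcal{A}|_E)[S_i']$ is a connected component of $U(\mathcal{A}|_E)$. Since $w_{i-1}\in S_i'$, this component is exactly $C_{i-1}$, so $|C_{i-1}|=|S_i'|=|S_i|\cdot D^4$.

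It then remains to compare sizes around the cycle. As $\varphi_{\operatorname{rotationMap}}$ makes $U(\mathcal{A}|_E)$ everywhere $D^2$-regular, $C_i$ is a connected $D^2$-regular graph, so its square has either one component (equal to all of $C_i$) or two components which, by regularity, are its two equal-size colour classes; in both cases $|S_i|\geq |C_i|/2$. Combining, $|C_{i-1}|=|S_i|\cdot D^4\geq \tfrac12 D^4\,|C_i|>|C_i|$, where the last inequality uses that $D$ is large ($D^4>2$ suffices). Iterating around the cycle gives $|C_0|>|C_1|>\dots>|C_{p-1}|>|C_0|$, a contradiction. Hence there is no $F$-parent cycle, every $F$-parent sequence reaches $r$, and $G(\mathcal{A}|_F)$ is connected.

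I expect the delicate points to be bookkeeping rather than conceptual: checking that $r$ really sits in a singleton $E$-component (so that Lemma~\ref{lem:recursion} is legitimately applicable to each $S_i$), that the $D^4$ $F$-children of a non-leaf are genuinely distinct so that $|S_i'|=|S_i|\cdot D^4$, and that the possible bipartiteness of $C_i$ costs only the harmless factor $\tfrac12$, absorbed by $D^4>2$. The crux is the idea itself: converting ``no $F$-parent cycle'' into a strictly decreasing chain of $E$-component sizes by feeding the cycle through the zig-zag recursion of Lemma~\ref{lem:recursion}.
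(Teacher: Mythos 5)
Your proof is correct and takes a genuinely different route from the paper's, though it shares the key ingredients. Both arguments reduce to ruling out an $F$-parent cycle $w_0,\dots,w_{p-1}$ (the paper packages this as Claim~\ref{claim:containsCycle}, together with uniqueness of paths to the cycle, which it later needs), and both feed that cycle through Lemma~\ref{lem:recursion} to obtain a contradiction. Where you diverge is the contradiction mechanism. The paper works forward from one vertex $c_0$ of the cycle: it sets $S_0$ to be the $E$-component of $c_0$, iterates $S_i$ as the set of $F$-children of $S_{i-1}$, proves $U(\mathcal{A}|_E)[S_i]=(U(\mathcal{A}|_E)[S_{i-1}])^2\zigzag H$ inductively together with $\lambda(U(\mathcal{A}|_E)[S_i])<1$ (which sidesteps bipartiteness), and then tracks the maximum tree-distance $m_i$ of vertices in $S_i\cap V$ to the cycle, obtaining $m_i\geq i+m_0$ while $m_\ell=m_0$. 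You instead compare sizes locally around each $w_i$: you identify $C_{i-1}$ with the component induced on the $F$-children of the $E^2$-component $S_i$ of $w_i$ via Lemma~\ref{lem:recursion}, derive $|C_{i-1}|=D^4|S_i|\geq\tfrac12 D^4|C_i|>|C_i|$, and close the cycle for a contradiction. Your route is arguably cleaner: it needs neither the unique-path part of Claim~\ref{claim:containsCycle} nor the inductive $\lambda<1$ bookkeeping, at the small cost of the explicit regular-bipartite-halves bound $|S_i|\geq|C_i|/2$ and the (harmless) hypothesis $D^4>2$. One minor streamlining: the isomorphism in Lemma~\ref{lem:recursion} already yields $|S_i'|=D^4|S_i|$ directly, since the zig-zag product lives on $S_i\times([D]^2)^2$, so your detour through Claim~\ref{claim:pathOfLenghTwoCausesNoneOrTwoChildren} to establish that all of $S_i$ is non-leaf is not strictly needed for the size count.
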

\begin{proof}
	Assume that this is false and $G(\mathcal{A}|_F)$ has more than one connected component. Since  $\mathcal{A}\models \varphi_{\operatorname{tree}}$  there is exactly one element $v$ such that $\mathcal{A}\models \varphi_{\operatorname{root}}(v)$. Therefore we can pick $G'$ to be a connected component of $G(\mathcal{A}|_F)$ which does not contain $v$. Let $V$ be the set of vertices of $G'$. 
For the next claim we should have in mind that $(\mathcal{A}|_F)[V]$ can be understood as a directed graph in which every vertex has in-degree $1$ and the corresponding undirected graph $G'$ is connected. Hence $(\mathcal{A}|_F)[V]$ must consist of a set of disjoint directed trees whose roots form a directed cycle. Consequently $G'$ has the structure as given in the following claim.

	\begin{claim}\label{claim:containsCycle}
		$G'$ contains a cycle $(c_0,\dots,c_{\ell-1})$ and for every vertex $v$ of $G'$ there is exactly one path $(p_0,\dots,p_m)$ in $G'$ with $p_0=v$, $p_m$ on the cycle and $p_i$ not on the cycle for all $i\in [m]$.
	\end{claim}
	\begin{proof}
		Let $v_0$ be any vertex in $G'$ and let $S_0=\{v_0\}$. We will now recursively define  $v_i$  to be the vertex of $G'$ such that $(v_i,v_{i-1})\in F^\mathcal{A}$. Such a vertex always exists by the choice of $G'$ (i.e. that the root is not in $G'$) and the fact that $\mathcal{A} \models \varphi_{\operatorname{tree}}$. Furthermore, such a vertex is unique as $\mathcal{A}\models \varphi_{\operatorname{tree}}$. We also let $S_i:=S_{i-1}\cup \{v_i\}$. Since $A$ is finite the chain $S_0\subseteq S_1\subseteq \dots \subseteq S_i\subseteq \dots$ must become stationary at some point. Let $i\in \mathbb{N}$ be the minimum index such that $S_{i-1}=S_i$ and let $j< i$ be such that $v_i=v_j$. Then $(v_i,v_{i-1},\dots,v_{j+1},v_j)$ is a cycle in $G'$ as by construction $(v_k,v_{k-1})\in F^\mathcal{A}$ which implies that $\{v_k,v_{k-1}\}$ is an edge in the Gaifman graph $G(\mathcal{A}|_F)$.
		Let $C=\{c_0,\dots,c_{\ell-1}\}$ be the vertices of the cycle. 
		Since $G'$ is connected a path that satisfies the property as described in the assertion of the claim always exists. So let us argue that such a path is unique. Assume  there are two different such path  $(p_0,\dots,p_m)$ and $(p'_0,\dots,p'_{m'})$ and assume that $p_m=c_i$ and $p'_{m'}=c_j$. Let $k\leq \min\{m,m'\}$ be the minimum index such that $p_k\not=p'_k$. Such an index must exist as the paths are different and as $p_0=p'_0=v$ we also know that $k\geq 1$. Since $\mathcal{A}\models \varphi_{\operatorname{tree}}$ for every vertex $w$ of $G'$ there can only be one vertex $w'$ of $G'$ such that $(w',w)\in F^\mathcal{A}$. As $p_{m-1}\notin C$ and $(c_{(i-1)\,\,\mod\,\, \ell},p_m)\in F^\mathcal{A}$ this means that $(p_{m},p_{m-1})\in F^\mathcal{A}$. Applying the argument inductively we get that $(p_k,p_{k-1})\in F^\mathcal{A}$. The same argument works for the path $(p'_0,\dots, p'_{m'})$ and therefore $(p'_k,p'_{k-1})\in F^\mathcal{A}$. By the choice of $k$ we know that $p_{k-1}=p'_{k-1}$ and $p_k\not=p'_k$ which contradicts $\mathcal{A}\models \varphi_{\operatorname{tree}}$.
	\end{proof}	
	
	Let $S_0$ be the vertex set of the connected component of $U(\mathcal{A}|_E)$ with $c_0\in S_0$. Note  that $S_0$ might not be contained in $G'$.  
	
	We now recursively define the infinite sequence of sets $S_i:=\{w\in A\mid (v,w)\in F^\mathcal{A}, v\in S_{i-1}\}$ for every $i\in\Npos$.  Let $m_i:=\max_{v\in S_i\cap V}\min_{j\in \{0,\dots,\ell-1\}}\{\dist_{G'}(c_j,v)\}$ and let $v_i\in S_i\cap V$ be a vertex of distance $m_i$ from $C$ in $G'$. Note here that $m_i$ is well defined as $c_{i\,\, \mod \,\, \ell}\in S_i$. 
	
	\begin{claim}\label{claim:characterisationOfG|_E[S_i]}
		$U(\mathcal{A}|_E)[S_{i}]=(U(\mathcal{A}|_E)[S_{i-1}])^2\zigzag H$.
		
	\end{claim}
	\begin{proof}
		We  show  the stronger statement that $U(\mathcal{A}|_E)[S_i]$ is a connected component of $U(\mathcal{A}|_E)$ and $(U(\mathcal{A}|_E)[S_{i}])^2\zigzag H=$ $U(\mathcal{A}|_E)[S_{i+1}]$ and  $\lambda(U(\mathcal{A}|_E)[S_i])<1$  for $i\in \mathbb{N}$ by induction.
		
		$U(\mathcal{A}|_E)[S_0]$ is a connected component of $U(\mathcal{A}|_E)$ by choice of $S_0$.	
		Let $\tilde{S}:=\{w\in A\mid (w,v)\in F^\mathcal{A}, v\in S_0 \}$. 
		 
		We now argue that $(U(\mathcal{A}|_E))^2[\tilde{S}]$ is a connected component of $(U(\mathcal{A}|_E))^2$. Assuming the contrary, every connected component of  $(U(\mathcal{A}|_E))^2$ either contains vertices from  both $\tilde{S}$ and $A\setminus \tilde{S}$, or $(U(\mathcal{A}|_E))^2[\tilde{S}]$ splits into more than one connected component. Let $S'$ be the vertices of a connected component as in the first case. Then $|S'|>1$ and hence $S'$ can not contain the root as the root is not in any $E$-relation. Hence by Lemma~\ref{lem:recursion} we get a connected component of $U(\mathcal{A}|_E)$ on the children of $S'$ both containing vertices from $S_0$ and from $A\setminus S_0$ which  contradicts $S_0$ being a connected component of $U(\mathcal{A}|_E)$. Now let $S'$ be a connected component as in the second case, and pick $S'$ such that it does not contain the root. 
		Then by Lemma~\ref{lem:recursion} $S_0$ must have a non-empty intersection with at least two connected components of $U(\mathcal{A}|_E)$ which is a contradiction. 
		
		Thus, by Lemma \ref{lem:nonBipartitenessConnectedness}  $\lambda((U(\mathcal{A}|_E))^2[\tilde{S}])<1$. But by Lemma \ref{lem:recursion} $U(\mathcal{A}|_E)[S_0]=((U(\mathcal{A}|_E))^2[\tilde{S}])\zigzag H$. Then Theorem \ref{thm:expansionOfZigZag} and  $\lambda(H)<1$ ensure that $\lambda(U(\mathcal{A}|_E)[S_0])<1$.

For $i>1$, by induction it holds that $\lambda(U(\mathcal{A}|_E)[S_{i-1}])<1$, which, together with Lemma~\ref{lem:expansionOfSquaring} and Lemma~\ref{lem:connectedBipartiteEigenvalues}, implies that $(U(\mathcal{A}|_E)[S_{i-1}])^2$ is a connected component\footnote{We remark that the statement that $(U(\mathcal{A}|_E)[S_{i-1}])^2$ is a connected component does not directly follow from the fact that $U(\mathcal{A}|_E)[S_{i-1}]$ is a connected component of $U(\mathcal{A}|_E)$, as the square of a connected bipartite graph is not necessarily connected.} of $(U(\mathcal{A}|_E))^2$ and that $(U(\mathcal{A}|_E))^2[S_{i-1}]=(U(\mathcal{A}|_E)[S_{i-1}])^2$. Since $c_{i\,\,\mod\,\, \ell}\in S_i$, by Lemma \ref{lem:recursion}, we have that $U(\mathcal{A}|_E)[S_i]$ is a connected component of $U(\mathcal{A}|_E)$ and $U(\mathcal{A}|_E)[S_i]= (U(\mathcal{A}|_E)[S_{i-1}])^2\zigzag H$.  Furthermore this proves $\lambda(U(\mathcal{A}|_E)[S_i])<1$ using Lemma \ref{lem:expansionOfSquaring} and Theorem \ref{thm:expansionOfZigZag}.
	\end{proof}
	\begin{claim}
		For every $v\in S_i$ there is $w\in V$ such that $(v,w)\in F^\mathcal{A}$.
	\end{claim}
	\begin{proof}
By Claim \ref{claim:characterisationOfG|_E[S_i]} $U(\mathcal{A}|_E)[S_{i+1}]=(U(\mathcal{A}|_E)[S_i])^2\zigzag H$. This means that by definition of squaring and the zig-zag product we know that $|S_{i+1}|=D^4\cdot |S_i|$. But because in addition $\mathcal{A}\models \varphi_{\operatorname{tree}}$  we know that every element $v\in S_i$ will contribute to no more then $D^4$ elements to $S_{i+1}$. This means by construction of $S_{i+1}$ that for every element in $S_i$ there must be $w\in V$ such that $(v,w)\in F^\mathcal{A}$. 
	\end{proof}
	Therefore there is $w_i\in V$ such that $(v_i,w_i)\in F^\mathcal{A}$. Let $(u_0,\dots,u_{m_i})$ be the path in $G'$ from $u_0=v_i$ to $u_{m_i}\in C$. Note that it is impossible that $w_i = u_1$. This is true as for the path $(u_0,...,u_{m_i})$, we have that $(u_{j+1},u_{j})\in F^\mathcal{A}$ for all $j\in [m_i]$. Furthermore, since $v_i=u_0\not=u_1$, assuming that $w_i=u_1$ would imply $(v_i,u_1),(u_2,u_1)\in F^\mathcal{A}$,  which contradicts $\mathcal{A}\models \varphi_{\operatorname{tree}}$.
		 Then $(w_i,u_0,\dots,u_{m_i})$ is a path in $G'$ from $w_i$ to $C$. 
		Since $w_i\in S_{i+1}$ by construction, Claim~\ref{claim:containsCycle}  implies that $m_{i+1}\geq m_i+1$. Therefore $m_i\geq i+m_0$ inductively. But this yields a contradiction, because $\ell+m_0\leq m_\ell=m_0$ and the length of the cycle $\ell>0$. See Figure~\ref{fig:illustrationOfCycleFreenessOfModelsOfFi} for an illustration. Therefore $G(\mathcal{A}|_F)$  must be connected. 
\end{proof}	

\begin{figure}
	\centering
	\begin{tikzpicture}
	\tikzstyle{ns1}=[line width=0.7]
	\tikzstyle{ns2}=[line width=1.2]
	\def \radius {2}
	\def \rad {4.5}
	\def \margin {2.7} 
	\begin{scope}
	\clip[postaction={fill=white,fill opacity=0.2}] (254:1.6)..controls(248:2.60)and(283:3.1)..(258:4.3)..controls(250:4.76)..(246:4.07)..controls(244:3.955)..(247:3.38)..controls(250:3.035)and(228:2.69)..(245:1.5)..controls(250:1.4)..(254:1.6);				
	\foreach \x in {-7.1,-7,...,15.2}%
	\draw[C3!20](\x, -5)--+(12,14.4);
	\end{scope}
	\draw[ns1,C3!50] (254:1.6)..controls(248:2.60)and(283:3.1)..(258:4.3)..controls(250:4.76)..(246:4.07)..controls(244:3.955)..(247:3.38)..controls(250:3.035)and(228:2.69)..(245:1.5)..controls(250:1.4)..(254:1.6);	
	\begin{scope}
	\clip[postaction={fill=white,fill opacity=0.2}] (253:1.6)..controls(250:2.27)and(270:2.54)..(255:2.9)..controls(250:3.08)..(246:2.81)..controls(245:2.765)..(247:2.54)..controls(250:2.405)and(230:2.27)..(246:1.5)..controls(250:1.4)..(253:1.6);			
	\foreach \x in {-7.1,-7,...,15.2}%
	\draw[C3!30](\x, -5)--+(12,14.4);
	\end{scope}
	\draw[ns1,C3] (253:1.6)..controls(250:2.27)and(270:2.54)..(255:2.9)..controls(250:3.08)..(246:2.81)..controls(245:2.765)..(247:2.54)..controls(250:2.405)and(230:2.27)..(246:1.5)..controls(250:1.4)..(253:1.6);
	\begin{scope}
	\clip[postaction={fill=white,fill opacity=0.2}] (293:1.6)..controls(290:2.36)and(310:2.72)..(295:3.12)..controls(290:3.344)..(286:3.08)..controls(285:3.02)..(287:2.72)..controls(290:2.54)and(270:2.36)..(286:1.5)..controls(290:1.4)..(293:1.6);			
	\foreach \x in {-7.1,-7,...,15.2}%
	\draw[C3!30](\x, -5)--+(12,14.4);
	\end{scope}
	\draw[ns1,C3] (293:1.6)..controls(290:2.36)and(310:2.72)..(295:3.12)..controls(290:3.344)..(286:3.08)..controls(285:3.02)..(287:2.72)..controls(290:2.54)and(270:2.36)..(286:1.5)..controls(290:1.4)..(293:1.6);
	\begin{scope}
	\clip[postaction={fill=white,fill opacity=0.2}] (333:1.6)..controls(330:2.45)and(350:2.9)..(335:3.5)..controls(330:3.8)..(326:3.35)..controls(325:3.275)..(327:2.9)..controls(330:2.675)and(310:2.45)..(326:1.5)..controls(330:1.4)..(333:1.6);				
	\foreach \x in {-7.1,-7,...,15.2}%
	\draw[C3!30](\x, -5)--+(12,14.4);
	\end{scope}
	\draw[ns1,C3] (333:1.6)..controls(330:2.45)and(350:2.9)..(335:3.5)..controls(330:3.8)..(326:3.35)..controls(325:3.275)..(327:2.9)..controls(330:2.675)and(310:2.45)..(326:1.5)..controls(330:1.4)..(333:1.6);	
	\begin{scope}
	\clip[postaction={fill=white,fill opacity=0.2}] (214:1.6)..controls(210:2.6)and(232:3.2)..(216:4)..controls(210:4.2)..(206:3.8)..controls(204:3.7)..(207:3.2)..controls(210:2.9)and(188:2.6)..(205:1.5)..controls(210:1.4)..(214:1.6);			
	\foreach \x in {-7.1,-7,...,15.2}%
	\draw[C3!30](\x, -5)--+(12,14.4);
	\end{scope}
	\draw[ns1,C3] (214:1.6)..controls(210:2.6)and(232:3.2)..(216:4)..controls(210:4.2)..(206:3.8)..controls(204:3.7)..(207:3.2)..controls(210:2.9)and(188:2.6)..(205:1.5)..controls(210:1.4)..(214:1.6);
	\node[draw,circle,fill=black,inner sep=0pt, minimum width=5pt] (1) at (210:\radius) {};
	\node[draw,circle,fill=black,inner sep=0pt, minimum width=5pt] (2) at (250:\radius) {};
	\node[draw,circle,fill=black,inner sep=0pt, minimum width=5pt] (3) at (290:\radius) {};
	\node[draw,circle,fill=black,inner sep=0pt, minimum width=5pt] (4) at (330:\radius) {};
	\draw[->, >=stealth',ns1,C2] ({210+\margin}:\radius) arc ({210+\margin}:{250-\margin}:\radius);
	\draw[->, >=stealth',ns1,C2] ({250+\margin}:\radius) arc ({250+\margin}:{290-\margin}:\radius);
	\draw[->, >=stealth',ns1,C2] ({290+\margin}:\radius) arc ({290+\margin}:{330-\margin}:\radius);
	\draw[->, >=stealth',ns1,C2] ({330+\margin}:\radius) arc ({330+\margin}:{370-\margin}:\radius);
	\draw[dashed,->, >=stealth',ns1,C2] ({-30+\margin}:\radius) arc ({-30+\margin}:{210-\margin}:\radius);
	\draw[->, >=stealth',ns1,C2] (250:1.43)--(290:1.45);
	\draw[->, >=stealth',ns1,C2] (250:1.43)--(285:1.52);
	\draw[->, >=stealth',ns1,C2] (250:1.43)--(283:1.61);
	\draw[loosely dotted,ns2,C2] (268:1.6)--(272:2.8);
	\draw[->, >=stealth',ns1,C2] (251:3.02)--(289:3.27);
	\draw[->, >=stealth',ns1,C2] (251:3.02)--(287:3.12);
	\draw[->, >=stealth',ns1,C2] (251:3.02)--(285.2:2.97);
	\draw[->, >=stealth',ns1,C2] (290:1.43)--(330:1.45);
	\draw[->, >=stealth',ns1,C2] (290:1.43)--(325:1.52);
	\draw[->, >=stealth',ns1,C2] (290:1.43)--(323:1.61);
	\draw[loosely dotted,ns2,C2] (308:1.65)--(312:3);
	\draw[->, >=stealth',ns1,C2] (290.5:3.28)--(330:3.69);
	\draw[->, >=stealth',ns1,C2] (290.5:3.28)--(327.3:3.49);
	\draw[->, >=stealth',ns1,C2] (290.5:3.28)--(325.8:3.34);
	\draw[ns1,C2] (330:1.43)--(355:1.45);
	\draw[ns1,C2] (330:1.43)--(352:1.52);
	\draw[ns1,C2] (330:1.43)--(349:1.59);
	\draw[ns1,C2] (330.5:3.7)--(341:3.86);
	\draw[ns1,C2] (330.5:3.7)--(338.3:3.75);
	\draw[ns1,C2] (330.5:3.7)--(335.8:3.65);
	\draw[->, >=stealth',ns1,C2] (210:1.43)--(250:1.45);
	\draw[->, >=stealth',ns1,C2] (210:1.43)--(245:1.52);
	\draw[->, >=stealth',ns1,C2] (210:1.43)--(243:1.61);
	\draw[loosely dotted,ns2,C2] (228:1.65)--(232:3.8);
	\draw[->, >=stealth',ns1,C2] (211:4.13)--(251:4.6);
	\draw[->, >=stealth',ns1,C2] (211:4.13)--(248:4.35);
	\draw[->, >=stealth',ns1,C2] (211:4.13)--(246.5:4.15);
	\draw[->, >=stealth',ns1,C2] (185:1.43)--(208:1.43);
	\draw[->, >=stealth',ns1,C2] (185:1.52)--(205:1.53);
	\draw[->, >=stealth',ns1,C2] (185:1.61)--(202:1.63);
	\draw[->, >=stealth',ns1,C2] (200:4.13)--(210:4.13);
	\draw[->, >=stealth',ns1,C2] (200:4.03)--(208:4);
	\draw[->, >=stealth',ns1,C2] (200:3.93)--(206.5:3.87);
	
	\node[minimum height=10pt,inner sep=0,font=\small] at (0:0) {$C$};
	\node[minimum height=10pt,inner sep=0,font=\small] at (248:\radius-0.2) {$c_{0}$};
	\node[minimum height=10pt,inner sep=0,font=\small] at (252:\radius+0.6) {$S_{0}$};	
	\node[minimum height=10pt,inner sep=0,font=\small] at (292:\radius+0.9) {$S_{1}$};
	\node[minimum height=10pt,inner sep=0,font=\small] at (332:\radius+1.2) {$S_{2}$};
	\node[minimum height=10pt,inner sep=0,font=\small] at (212:\radius+1.6) {$S_{\ell-1}$};
	\node[minimum height=10pt,inner sep=0,font=\small] at (254:\radius+2) {$S_{\ell}=S_0$};
	\end{tikzpicture}
	\caption{Illustration of the proof of Lemma \ref{lem:connected}.}\label{fig:illustrationOfCycleFreenessOfModelsOfFi}
\end{figure}
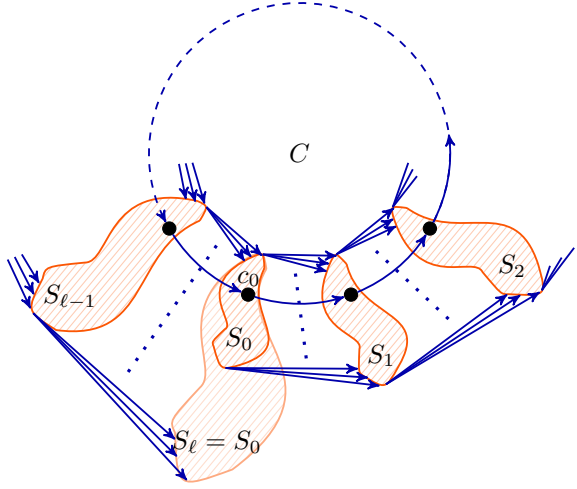
%

\begin{lemma}\label{lem:exactFormOfModels}
	Let $\mathcal{A}\in C_d$ be a (finite) model of $\varphi_{\zigzag}$. Then $|A|=\sum_{m=0}^{n}D^{4m}$ for some $n\in \mathbb{N}$, $G(\mathcal{A}|_F)$ is a  $D^4$-ary complete rooted tree, where the root is the unique element $v\in A$ for which $\mathcal{A}\models \varphi_{\operatorname{root}}(v)$, and $U(\mathcal{A}|_E)[T_m]\cong G_m$ where $G_m$ is defined in Definition~\ref{dfn:expanders} and  $T_m$ is the set of vertices of distance $m$ to $v$ in the tree $G(\mathcal{A}|_F)$ for any $m\in \{1,\dots,n\}$. Furthermore for every $n\in \mathbb{N}$ there is a model of $\varphi_{\zigzag}$ of size $\sum_{m=0}^{n}D^{4m}$.
\end{lemma}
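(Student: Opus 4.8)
The plan is to first nail down the $F$-structure of an arbitrary model, then identify the $E$-structure level by level by an induction driven by Lemmas~\ref{lem:recursion} and~\ref{lem:connected}, and finally exhibit a model of every claimed size. From $\varphi_{\operatorname{tree}}$ I would read off that a model $\mathcal{A}$ has a unique element $v$ with $\mathcal{A}\models\varphi_{\operatorname{root}}(v)$, that every other element has exactly one $F$-parent, and that every element is either a \emph{leaf} (no $F$-child, and $L_k$-self-loops for every $k$) or \emph{internal} with exactly one $F_k$-child for each $k\in\indexSetH$; the $D^4$ children of an internal element are pairwise distinct, since $y_k=y_{k'}$ for $k\ne k'$ would violate the conjunct $\neg F_{k'}(x,y_k)$. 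Combining this with Lemma~\ref{lem:connected} (connectedness of $G(\mathcal{A}|_F)$): a connected graph in which a single vertex has no $F$-parent while every other vertex has exactly one has $|A|-1$ edges, hence is a rooted tree with root $v$ and all $F$-edges directed away from $v$. I write $T_m$ for the set of elements at tree-distance $m$ from $v$, so $T_0=\{v\}$, and if every element of $T_m$ is internal then $|T_{m+1}|=D^4\,|T_m|$; if $v$ is a leaf then $n=0$ and $|A|=1$.

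The core is an induction on $m\ge 1$ establishing: either $T_m=\emptyset$, or $U(\mathcal{A}|_E)[T_m]$ is a connected component of $U(\mathcal{A}|_E)$ isomorphic to $G_m$ \emph{via an isomorphism respecting the rotation maps} (with $G_m$ carrying the rotation map fixed in Definition~\ref{dfn:expanders}), and all elements of $T_m$ are leaves or all are internal. For the base case I would argue that if $v$ is internal then, under the identification $k\mapsto y_k$ of its children with $V(H)=\indexSetH$, $\varphi_{\operatorname{base}}$ forces $\rot_{U(\mathcal{A}|_E)}(y_k,i)=(y_{k'},i')$ precisely when $\rot_{H^2}(k,i)=(k',i')$; since $U(\mathcal{A}|_E)$ is $D^2$-regular (by $\varphi_{\operatorname{rotationMap}}$, which makes its rotation map well defined) and $G_1=H^2$ is connected and $D^2$-regular, $U(\mathcal{A}|_E)[T_1]$ is a connected component equal, as a rotation-map graph, to $G_1$. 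For the inductive step: since each $G_m$ has $\lambda(G_m)\le 1/2<1$ it is connected and non-bipartite (Lemma~\ref{lem:connectedBipartiteEigenvalues}), so any two elements of $T_m$ are joined by an even-length walk in $U(\mathcal{A}|_E)[T_m]$, and iterating Claim~\ref{claim:pathOfLenghTwoCausesNoneOrTwoChildren} along such a walk shows ``has an $F$-child'' is constant on $T_m$. If $T_m$ is internal then it is closed under $U(\mathcal{A}|_E)$-adjacency, so $(U(\mathcal{A}|_E))^2[T_m]=(U(\mathcal{A}|_E)[T_m])^2$ is connected, hence $T_m$ is exactly a connected component of $(U(\mathcal{A}|_E))^2$ not containing $v$; Lemma~\ref{lem:recursion} with $S=T_m$ (and $S'=T_{m+1}\ne\emptyset$) then yields that $U(\mathcal{A}|_E)[T_{m+1}]$ is a connected component of $U(\mathcal{A}|_E)$ with $U(\mathcal{A}|_E)[T_{m+1}]\cong(U(\mathcal{A}|_E))^2[T_m]\zigzag H\cong G_m^2\zigzag H=G_{m+1}$ as rotation-map graphs (using the inductive isomorphism on $T_m$ and functoriality of squaring and $\zigzag$), and Proposition~\ref{prop:recursiveConstruction} gives $\lambda(G_{m+1})\le 1/2<1$, so the induction runs on.

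Finiteness of $A$ then forces a largest $n\ge 0$ with $T_n\ne\emptyset$; $T_n$ consists of leaves, $A=\bigcup_{m=0}^n T_m$ with $|T_m|=D^{4m}$, so $|A|=\sum_{m=0}^n D^{4m}$, $G(\mathcal{A}|_F)$ is the complete $D^4$-ary rooted tree of depth $n$ with root $v$, and $U(\mathcal{A}|_E)[T_m]\cong G_m$ for $1\le m\le n$. For the final assertion I would, given $n$, take the complete $D^4$-ary tree of depth $n$ on strings over $[D^4]$ of length at most $n$, put $F_k$ from each internal string $s$ to $s\cdot k$, add $R$ on the root and $L_k$-self-loops on the leaves, and identify level $m\in\{1,\dots,n\}$ with $V(G_m)$ so that the children of $u\in V(G_{m-1}^2)$ are the vertices $(u,k)\in V(G_m^2\zigzag H)=V(G_{m+1})$, with $E_{i,j}$ on level $m$ encoding $\rot_{G_m}$; then $\varphi_{\operatorname{tree}}$ and $\varphi_{\operatorname{rotationMap}}$ hold by construction, $\varphi_{\operatorname{base}}$ holds because level $1$ is $H^2$, $\varphi_{\operatorname{recursion}}$ holds because the defining property of the zig-zag product is exactly its quantifier-free matrix, and each element lies in at most $2D^2+D^4+1=d$ tuples, so $\mathcal{A}\in C_d$.

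I expect the main obstacle to be the bookkeeping in the inductive step: one must keep $T_m$ simultaneously a full connected component of $U(\mathcal{A}|_E)$ \emph{and} of $(U(\mathcal{A}|_E))^2$, carry a rotation-map (not merely abstract) isomorphism so Lemma~\ref{lem:recursion} can be reapplied at level $m+1$, and derive the constancy of ``has an $F$-child'' on each level from $\varphi_{\operatorname{recursion}}$ — this last step genuinely using that the $G_m$ are non-bipartite, without which $\varphi_{\operatorname{recursion}}$ would not force all leaves to lie at a common depth.
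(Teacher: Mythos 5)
Your proof is correct and follows the paper's own argument: establish the rooted tree structure via Lemma~\ref{lem:connected} and $\varphi_{\operatorname{tree}}$, prove the base case $U(\mathcal{A}|_E)[T_1]\cong G_1$ from $\varphi_{\operatorname{base}}$, and induct on levels via $\lambda(G_m)<1$, Lemma~\ref{lem:connectedBipartiteEigenvalues}, Lemma~\ref{lem:expansionOfSquaring} and Lemma~\ref{lem:recursion}. You supply some bookkeeping the paper leaves implicit (the level-wise constancy of ``has an $F$-child'' via Claim~\ref{claim:pathOfLenghTwoCausesNoneOrTwoChildren} and the explicit construction witnessing the final ``furthermore'' clause, which the paper's proof does not actually write out), but the underlying structure is the same.
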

\begin{proof}
	Lemma  \ref{lem:connected} combined with $\mathcal{A}\models \varphi_{\operatorname{tree}}$ proves that $G(\mathcal{A}|_F)$ is a rooted  tree. Let $n$ be the greatest distance of any vertex in $G(\mathcal{A}|_F)$ to the root and let $T_m$ be the vertices of distance  $m$ to the root for $m\leq n$. Then $U(\mathcal{A}|_E)[T_1]\cong G_1$ because $\mathcal{A}\models \varphi_{\operatorname{base}}$. Now assume towards an inductive proof that $U(\mathcal{A}|_E)[T_m]\cong G_m$ for some fixed $m\in \Npos$. Since $\lambda(G_m)<1$   by Lemma~\ref{lem:expansionOfSquaring} and Lemma~\ref{lem:connectedBipartiteEigenvalues} we get that $(U(\mathcal{A}|_E))^2[T_m]$  is a connected component of $(U(\mathcal{A}|_E))^2$. Hence by Lemma \ref{lem:recursion} we get that $U(\mathcal{A}|_E)[T_{m+1}]\cong G_{m+1}$. Since $G_m$ has $D^{4m}$ vertices this also proves that $\mathcal{A}$ has $\sum_{m=0}^{n}D^{4m}$ vertices.
\end{proof}

Now we are ready to finish the proof of Theorem \ref{thm:expansionOfModels}.
\begin{proof}[Proof of Theorem \ref{thm:expansionOfModels}]
	We will prove that for $\epsilon={D^2}/{12}$ the claimed is true.
	Let $\mathcal{A}$ be the model of $\varphi_{\zigzag}$ of size $\sum_{m=0}^{n}D^{4m}$ and $S\subseteq A$ with  $|S|\leq(\sum_{m=0}^{n}D^{4m})/{2}$. 
	Let $T_m$ be the vertices of distance $m$ to the root of the tree $G(\mathcal{A}|_F)$ and let $S_m:=T_m\cap S$.
	
	We can assume that $S>1$ as every vertex has degree at least $\epsilon$.
	Let us first assume that $|S_m|\leq {D^{4m}}/{2}$ for all $m\in [n]$. Then because  $G_m$ is a ${D^2}/{4}$-expander (this follows directly from Theorem \ref{thm:boundExpansionRatioInTermsOfLambda} as $\lambda(G_m)\leq {1}/{2}$ by Proposition~\ref{prop:recursiveConstruction}) and $U(\mathcal{A}|_E)[T_m]\cong G_m$ 
	we know that 
	\begin{align*}
	|\langle S, \overline{S}\rangle_{U(\mathcal{A})} |\geq \sum_{m=1}^{n}\frac{D^2}{4} |S_m|\geq \frac{D^2}{12}\sum_{m=0}^{n}|S_m|=\frac{D^2}{12}|S|.
	\end{align*} 
	Now assume the opposite and choose $m'$ to be the largest index such that 
	\begin{align}\label{eq:choiceOfm'}
		|S_{m'}|>\frac{|T_{m'}|}{2}=\frac{D^{4m'}}{2}.
	\end{align}  
	We will use the following claim.
	\begin{claim}\label{claim:relationOfSizesOfLevels}
		$ \sum_{m=0}^{\tilde{m}-1}|T_{m}|\leq \frac{1}{2}|T_{\tilde{m}}|$ for all $\tilde{m}\leq n$.
	\end{claim}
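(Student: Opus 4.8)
The plan is to reduce the claim to an elementary geometric-series estimate. First I would invoke Lemma~\ref{lem:exactFormOfModels}, which states that $G(\mathcal{A}|_F)$ is a complete $D^4$-ary rooted tree; consequently the level set $T_m$ of vertices at distance $m$ from the root has exactly $|T_m| = (D^4)^m = D^{4m}$ elements for every $m\le n$. With this identification the inequality to be proven becomes
\[
\sum_{m=0}^{\tilde m-1} D^{4m} \;\le\; \frac{1}{2}\, D^{4\tilde m}.
\]

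Next I would handle the base case $\tilde m = 0$ separately: there the left-hand side is an empty sum, so the inequality reads $0\le \tfrac12$ and holds trivially. For $\tilde m\ge 1$ I would rewrite the left-hand side as a finite geometric sum and bound it crudely,
\[
\sum_{m=0}^{\tilde m-1} D^{4m} \;=\; \frac{D^{4\tilde m}-1}{D^4-1} \;\le\; \frac{D^{4\tilde m}}{D^4-1},
\]
and then use that $D$ is a sufficiently large prime power, so in particular $D\ge 2$ and hence $D^4-1\ge 2$. This immediately gives $\dfrac{D^{4\tilde m}}{D^4-1}\le \dfrac{1}{2}D^{4\tilde m}$, which is exactly the desired bound.

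I do not expect any genuine obstacle here; the only points needing care are (i) correctly reading off $|T_m| = D^{4m}$ from the complete $D^4$-ary tree structure established in Lemma~\ref{lem:exactFormOfModels}, and (ii) observing that the degree parameter $D$ is large enough that $D^4-1\ge 2$, which is precisely what makes the geometric tail $\sum_{m<\tilde m}|T_m|$ at most half of $|T_{\tilde m}|$. Once these two observations are recorded, the argument is two or three lines.
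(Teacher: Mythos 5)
Your proof is correct and is in substance the same as the paper's, differing only in presentation. The paper argues by induction on $\tilde m$: from $\sum_{m\le\tilde m-2}|T_m|\le\frac12|T_{\tilde m-1}|$ it deduces $\sum_{m\le\tilde m-1}|T_m|\le\frac32|T_{\tilde m-1}|\le\frac12|T_{\tilde m}|$, using $|T_{\tilde m}|=D^4|T_{\tilde m-1}|\ge 3|T_{\tilde m-1}|$. You instead read off $|T_m|=D^{4m}$ from Lemma~\ref{lem:exactFormOfModels} and sum the geometric series in closed form, bounding $\frac{D^{4\tilde m}-1}{D^4-1}\le\frac12 D^{4\tilde m}$. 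Both hinge on the identical inequality $D^4\ge 3$, and both correctly use the fact (established before this claim is invoked) that $G(\mathcal A|_F)$ is a complete $D^4$-ary tree so $|T_m|=D^{4m}$; your treatment of the empty-sum base case $\tilde m=0$ is a harmless extra.
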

	\begin{proof}
		Inductively, we argue that $ \sum_{m=0}^{\tilde{m}-1}|T_m|= \sum_{m=0}^{\tilde{m}-2}|T_m|+|T_{\tilde{m}-1}|\leq \frac{1}{2}(3|T_{\tilde{m}-1}|) \leq \frac{1}{2}|T_{\tilde{m}}|$.
	\end{proof}
	Claim \ref{claim:relationOfSizesOfLevels} implies that $\frac{3}{4}\cdot |T_{n}|\geq \frac{1}{2}|T_{n}|+\frac{1}{2}\sum_{m=0}^{n-1}|T_m| =  \frac{1}{2}|A| \geq|S|\geq |S_{n}|$. In the case that $m'=n$, using that $G_n$ is a ${D^2}/{4}$-expander we get 
	\begin{align*}
	|\langle S,\overline{S}\rangle_{U(\mathcal{A})}|\geq \frac{D^2}{4}(|T_{n}|-|S_{n}|)\geq \frac{D^2}{16}|T_{n}|\geq \frac{D^2}{12}|S|.
	\end{align*}
	Assume now that $m'<n$.  Since $S$ is the disjoint union of all $S_m$ we know that the set $\langle S,\overline{S}\rangle_{U(\mathcal{A})}$ contains the disjoint sets $\langle S_m,T_m\setminus S_m  \rangle_{U(\mathcal{A})}$, $\langle T_{m'}\setminus S_{m'},T_{m'} \rangle_{U(\mathcal{A})}$ and  $\langle S_{m'},T_{m'+1}\setminus S_{m'+1} \rangle_{U(\mathcal{A})} $ for all $m'<m\leq n$. Since every vertex in $T_{m'}$ has $D^4$ neighbours in $T_{m'+1}$ and on the other hand every vertex in $T_{m'+1}$ has one neighbour in $T_{m'}$ we know that $|\langle S_{m'},T_{m'+1}\setminus S_{m'+1} \rangle_{U(\mathcal{A})}|=|\langle S_{m'},T_{m'+1} \rangle_{U(\mathcal{A})} |-|\langle S_{m'},S_{m'+1} \rangle_{U(\mathcal{A})} | \geq D^4|S_{m'}|-|S_{m'+1}|\geq D^4(|S_{m'}|-{D^{4m'}}/{2})$. Since additionally  $ {|T_{m'}|}/{2}\geq |T_{m'}\setminus S_{m'}|=D^{4m'}-|S_{m'}|$ and $G_m$ is an ${D^2}/{4}$-expander for every $m$ we get 
	\begin{align*}
	&|\langle S,\overline{S}\rangle_{U(\mathcal{A})}|\geq \sum_{m>m'}\frac{D^2}{4}|S_m|+\frac{D^2}{4}|T_{m'}\setminus S_{m'}|+D^4(|S_{m'}|-\frac{D^{4m'}}{2})\\
	&=\frac{D^2}{4}\sum_{m>m'}|S_m|+\Big(D^4-\frac{D^2}{2}\Big)|S_{m'}|-\Big(D^4-\frac{D^2}{2}\Big)\frac{D^{4m'}}{2}+\frac{D^2}{4}|S_{m'}|\\
	&\stackrel{\text{Equation }\ref{eq:choiceOfm'}}{\geq} \frac{D^2}{4}\sum_{m>m'}|S_m|+\frac{D^2}{8}|S_{m'}|+\frac{D^2}{8}\Big(\frac{|T_{m'}|}{2}\Big)\\
	&\stackrel{\text{Claim }\ref{claim:relationOfSizesOfLevels} }{\geq} \frac{D^2}{4}\sum_{m>m'}|S_m|+\frac{D^2}{8}|S_{m'}|+\frac{D^2}{8}\sum_{m<m'}|T_m|\stackrel{|T_{m}|\geq |S_{m}|}{\geq} \frac{D^2}{12}|S|.
	\end{align*}
\end{proof}

\section{On the non-testability of a $\Pi_2$-property}\label{sec:FOnontestability} 
In this section we that there exists an FO property on relational structures in $\Pi_2$ that is not testable. To do so, we first prove that the property $P_{\varphi_{\zigzag}}$ defined by the formula $\varphi_{\zigzag}$ in Section \ref{sec: definitionFormula} is not testable. Later we prove that $\varphi_{\zigzag}$ is in $\Pi_2$. Finally, we extends our non-testability result to simple graphs.


\paragraph{Non-testability of $P_{\varphi_{\zigzag}}$.}
Recall that $r$-types are the isomorphism class of $r$-balls and that restricted to the class $C_d$ there are finitely many $r$-types. Let $\tau_1,\dots,\tau_t$ be a list of all $r$-types of bounded degree $d$. We let $\rho_{\mathcal{A},r}$ be the $r$-type distribution of  $\mathcal{A}$, \ie   \[\rho_{\mathcal{A},r}(X)_:=\frac{\sum_{\tau\in X}|\{a\in A\mid \mathcal{N}_r^\mathcal{A}(a)\in \tau\}|}{|A|}\] for any $X\subseteq \{\tau_1,\dots,\tau_t\}$. For two $\sigma$-structures $\mathcal{A}$ and $\mathcal{B}$ we define the sampling distance of depth $r$ as $\delta_{\odot}^r(\mathcal{A},\mathcal{B}):=\sup_{X\subseteq \{\tau_1,\dots,\tau_t\}}|\rho_{\mathcal{A},r}(X)-\rho_{\mathcal{B},r}(X)|$. Note that $\delta_{\odot}^r(\mathcal{A},\mathcal{B})$ is just the total variance distance between $\rho_{\mathcal{A},r},\rho_{\mathcal{B},r}$, and it holds that  $\delta_{\odot}^r(\mathcal{A},\mathcal{B})=\frac12\sum_{i=1}^{t}|\rho_{\mathcal{A},r}(\{\tau_i\})-\rho_{\mathcal{B},r}(\{\tau_i\})|$. Then the sampling distance of $\mathcal{A}$ and $\mathcal{B}$ is defined as $\delta_\odot (\mathcal{A},\mathcal{B}):=\sum_{r=0}^{\infty}\frac{1}{2^r}\cdot\delta_\odot^r(\mathcal{A},\mathcal{B})$.

The following theorem was proven for simple graphs and easily extends to $\sigma$-structures. 

\begin{theorem}[\cite{LovaszBook2012}]\label{thm:approximatingNeighbourhoodDistributionBySmallGraph} 
	For every $\lambda>0$ there is a positive integer $n_0$ such that for every $\sigma$-structure $\mathcal{A}\in C_d$ there is a $\sigma$-structure $\mathcal{H}\in C_d$ such that $|H|\leq n_0$ and $\delta_{\odot}(\mathcal{A},\mathcal{H})\leq \lambda$.
\end{theorem}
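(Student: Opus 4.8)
The plan is to avoid trying to build $\mathcal{H}$ out of $\mathcal{A}$ — which is hopeless in general, since an expander admits no partition into small pieces with few crossing edges — and instead to argue by compactness, which is essentially Alon's argument behind the cited result. First I would truncate the series defining $\delta_\odot$: since each $\delta_\odot^r(\cdot,\cdot)\le 1$ (it is a total variation distance), for every positive integer $R$ one has $\delta_\odot(\mathcal{A},\mathcal{B})\le\sum_{r=0}^{R}2^{-r}\delta_\odot^r(\mathcal{A},\mathcal{B})+2^{-R}$. Hence, fixing $R=R(\lambda)$ with $2^{-R}\le\lambda/2$, it suffices to produce, for every $\mathcal{A}\in C_d$, a structure $\mathcal{H}\in C_d$ with $|H|\le n_0$ and $\sum_{r=0}^{R}2^{-r}\delta_\odot^r(\mathcal{A},\mathcal{H})\le\lambda/2$.

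The key point is that the tuple of $r$-type distributions for $r\le R$ ranges over a compact set. Recall (as noted just before the theorem) that for each $r$ the collection $T_r$ of $r$-types of bounded degree $d$ over $\sigma$ is finite, so $\rho_{\mathcal{A},r}$ is a point of the finite-dimensional probability simplex $\Delta_r$ over $T_r$. Consider the map $\Phi\colon C_d\to Z:=\prod_{r=0}^{R}\Delta_r$ given by $\Phi(\mathcal{A}):=(\rho_{\mathcal{A},0},\dots,\rho_{\mathcal{A},R})$; the set $Z$ is closed and bounded in a finite-dimensional real vector space, hence compact. Equip $Z$ with the metric $D\big((p_r)_r,(q_r)_r\big):=\sum_{r=0}^{R}2^{-r}\cdot\frac{1}{2}\sum_{\tau\in T_r}|p_r(\tau)-q_r(\tau)|$. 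Since $\delta_\odot^r(\mathcal{A},\mathcal{H})$ is precisely the total variation distance between $\rho_{\mathcal{A},r}$ and $\rho_{\mathcal{H},r}$ (viewed as distributions on $T_r$), we obtain $D(\Phi(\mathcal{A}),\Phi(\mathcal{H}))=\sum_{r=0}^{R}2^{-r}\delta_\odot^r(\mathcal{A},\mathcal{H})$.

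Now I would invoke total boundedness. The space $Z$ is compact, hence totally bounded, and therefore so is its subset $\Phi(C_d)$; in particular $\Phi(C_d)$ admits a finite $(\lambda/2)$-net consisting of points of $\Phi(C_d)$ itself, say $\Phi(\mathcal{A}_1),\dots,\Phi(\mathcal{A}_k)$ with $\mathcal{A}_1,\dots,\mathcal{A}_k\in C_d$. Set $n_0:=\max_{1\le j\le k}|A_j|$, which depends only on $\lambda$ (and $d$, $\sigma$). Given any $\mathcal{A}\in C_d$, choose $j$ with $D(\Phi(\mathcal{A}),\Phi(\mathcal{A}_j))\le\lambda/2$ and put $\mathcal{H}:=\mathcal{A}_j$; then $\mathcal{H}\in C_d$, $|H|\le n_0$, and $\delta_\odot(\mathcal{A},\mathcal{H})\le D(\Phi(\mathcal{A}),\Phi(\mathcal{H}))+2^{-R}\le\lambda/2+\lambda/2=\lambda$, completing the argument.

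The resulting $n_0$ is not explicit — it comes out of compactness rather than a computation — but the statement only asserts its existence, and this is in any case the natural state of affairs. The passage from simple graphs to $\sigma$-structures costs nothing: the only ingredient used beyond the graph case is that the number of $r$-types remains finite for bounded-degree structures over a fixed finite signature, which the paper already relies on. The ``main obstacle'' here is really conceptual — one must resist the temptation to extract $\mathcal{H}$ from $\mathcal{A}$ locally, and instead observe that a single finite family of bounded-size structures automatically forms a $\lambda$-net for all of $C_d$, simply because the space of bounded-depth neighbourhood statistics is compact.
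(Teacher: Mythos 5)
Your compactness argument is correct and is essentially the proof behind the result as cited — the paper itself only cites Lovász (Proposition~19.10, attributed to Alon) and observes that the graph case extends to bounded-degree $\sigma$-structures, and the standard proof there is exactly the one you give: truncate $\delta_\odot$ at depth $R$, observe that the depth-$\le R$ neighbourhood statistics live in a compact finite-dimensional product of simplices, extract a finite net with representatives in $\Phi(C_d)$, and take $n_0$ to be the largest representative. Your remark that the passage from graphs to $\sigma$-structures costs nothing because the number of $r$-types remains finite is precisely the observation the paper leans on when it writes that the theorem ``easily extends.''
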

We make use of the following definition of local properties. 

\begin{definition}[\cite{AdlerH18}]
	Let $\epsilon\in (0,1]$. A property $P\subseteq C_d$ is $\epsilon$-local on $C_d$ if there are numbers $r:=r(\epsilon)\in \mathbb{N}$, $\lambda:=\lambda(\epsilon) >0$ and $n_0:=n_0(\epsilon)\in \mathbb{N}$ such that for any $\sigma$-structure $\mathcal{A}\in P$ and $\mathcal{B}\in C_d$ both on $n\geq n_0$ vertices, if $\sum_{i=1}^{t}|\rho_{\mathcal{A},r}(\{\tau_i\})-\rho_{\mathcal{B},r}(\{\tau_i\})|<\lambda $  then $\mathcal{B}$ is $\epsilon$-close to $P$, where $\tau_1,\dots,\tau_t$ is a list of all $r$-types of bounded degree $d$.

	The property $P$ is local on $C_d$ if it is $\epsilon$-local on $C_d$ for every $\epsilon\in (0,1]$. 
\end{definition}
The following theorem relating testable properties and local properties was proven in \cite{AdlerH18}
\begin{theorem}[\cite{AdlerH18}]\label{thm:Locality} 
	For every property $P\in C_d$, $P$ is testable if and only if $P$ is local on $C_d$.
\end{theorem}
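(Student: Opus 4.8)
The plan is to prove the two implications separately, in both cases going through the ``canonical tester'' reduction for bounded-degree structures of \cite{CzumajPS16,goldreich2011proximity}: any constant-query $\epsilon$-tester can be replaced, with the same acceptance guarantees up to constants, by one that picks a constant number $s$ of elements of the input uniformly at random, explores the $r$-ball around each for some constant radius $r$, and then decides based only on the resulting tuple of $r$-types. Throughout, $t=t(r,d)$ denotes the constant number of $r$-types of degree at most $d$, and I abbreviate $\|\rho-\rho'\|_1:=\sum_{i=1}^t|\rho(\{\tau_i\})-\rho'(\{\tau_i\})|$.

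\emph{Locality implies testability.} Fix $\epsilon$ and let $r,\lambda,n_0$ be the parameters from $\epsilon$-locality. The tester reads $n$; if $n<n_0$ it queries the whole structure (constantly many queries, since $n_0,d$ are constants) and decides membership in $P$ exactly. If $n\ge n_0$ it samples a suitable constant number $s=s(t,\lambda)$ of elements uniformly, explores their $r$-balls (at most $d^{r+1}$ queries each), and forms the empirical $r$-type distribution $\hat\rho$, which by Chernoff and a union bound over the $t$ types satisfies $\|\hat\rho-\rho_{\mathcal A,r}\|_1\le\lambda/3$ with probability $\ge2/3$. It accepts iff some $\mathcal B\in P$ with $|B|=n$ has $\|\hat\rho-\rho_{\mathcal B,r}\|_1\le\lambda/2$; this decision need not be computable, which is irrelevant for query complexity. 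If $\mathcal A\in P$, then $\mathcal A$ itself witnesses acceptance. If $\mathcal A$ is $\epsilon$-far from $P$, then by the contrapositive of $\epsilon$-locality every $\mathcal B\in P$ of size $n$ satisfies $\|\rho_{\mathcal A,r}-\rho_{\mathcal B,r}\|_1\ge\lambda$, so on the good event $\|\hat\rho-\rho_{\mathcal B,r}\|_1>\lambda/2$ and the tester rejects. All quantities depend only on $\epsilon$, hence $P$ is testable.

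\emph{Testability implies locality.} Fix $\epsilon$, take an $\epsilon$-tester for $P$, and pass to its canonical version $T'$ with parameters $s=s(\epsilon)$, $r=r(\epsilon)$. The key point is that the acceptance probability $\mathrm{acc}(\mathcal C)$ of $T'$ on an $n$-element input $\mathcal C$ is a fixed function of the distribution of an $s$-sample drawn \emph{without} replacement from $\rho_{\mathcal C,r}$; this sampling distribution differs from the $s$-fold product $\rho_{\mathcal C,r}^{\otimes s}$ by at most $O(s^2/n)$ in total variation, and since $\|\mu^{\otimes s}-\nu^{\otimes s}\|_{\mathrm{TV}}\le s\,\|\mu-\nu\|_{\mathrm{TV}}$, for all large enough $n$ the value $\mathrm{acc}(\mathcal C)$ is, up to an additive error tending to $0$ with $n$, a Lipschitz function of $\rho_{\mathcal C,r}$. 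Hence there exist $\lambda=\lambda(\epsilon)>0$ and $n_0=n_0(\epsilon)\ge s$ such that whenever $\mathcal A,\mathcal B\in C_d$ have $n\ge n_0$ elements and $\|\rho_{\mathcal A,r}-\rho_{\mathcal B,r}\|_1<\lambda$, we get $|\mathrm{acc}(\mathcal A)-\mathrm{acc}(\mathcal B)|<1/3$. Now if $\mathcal A\in P$ then $\mathrm{acc}(\mathcal A)\ge2/3$, so $\mathrm{acc}(\mathcal B)>1/3$; but if $\mathcal B$ were $\epsilon$-far from $P$, then $T'$ would reject it with probability $\ge2/3$, i.e.\ $\mathrm{acc}(\mathcal B)\le1/3$, a contradiction. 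Thus $\mathcal B$ is $\epsilon$-close to $P$, which is precisely $\epsilon$-locality; as $\epsilon$ was arbitrary, $P$ is local.

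\emph{Main obstacle.} The real work is entirely in the second direction: passing from ``constant-query testable'' to ``behaviour determined, up to vanishing error, by the $r$-type distribution''. Extracting the canonical tester and quantifying the discrepancy between sampling without replacement and the product distribution — while also accommodating two-sided error and testers that may depend on $n$ — is where all the quantitative content lives; the first direction and the derivations from the Lipschitz estimate are bookkeeping. One must also keep track that the finiteness of the set of $r$-types on $C_d$ and the bound $d^{r+1}$ on the size of an $r$-ball make every count a genuine constant, independent of $n$.
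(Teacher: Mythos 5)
The paper does not prove this theorem; it is imported verbatim from \cite{AdlerH18}, which (as the paper itself notes) establishes it via the canonical testers of \cite{CzumajPS16,goldreich2011proximity}. Your proposal follows exactly that route and both directions are essentially correct: the first direction is the standard non-uniform tester that accepts iff the empirical $r$-type distribution is $\lambda/2$-close to that of some member of $P$ of the same size, and the second extracts locality from the Lipschitz dependence of the canonical tester's acceptance probability on the type distribution. One small imprecision: the acceptance probability of the canonical tester is a function of the isomorphism type of the \emph{union} of the sampled $r$-balls, which is not determined by the tuple of individual $r$-types when balls overlap; this is repaired by the same $O(s^2 d^{2r+1}/n)$ union bound you already invoke for the with/without-replacement discrepancy, since for $n\ge n_0$ the sampled balls are pairwise disjoint except with vanishing probability.
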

We let $P_{\zigzag}:=P_{\varphi_{\zigzag}}$ for the formula $\varphi_{\zigzag}$ from Section \ref{sec: definitionFormula}. We also let $\sigma$ and $d$ be as defined in Section~$\ref{sec: definitionFormula}$.

\begin{theorem}\label{thm:nonTestabilityForStructures} 
	$P_{\zigzag}$ is not testable on $C_d$.
\end{theorem}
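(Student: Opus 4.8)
The plan is to use Theorem~\ref{thm:Locality}: it suffices to show that $P_{\zigzag}$ is \emph{not local} on $C_d$. Concretely, I would fix a small $\epsilon>0$ (any $\epsilon$ below the expansion constant from Theorem~\ref{thm:expansionOfModels} divided by a suitable factor will do) and then, for every choice of parameters $r\in\mathbb N$, $\lambda>0$, $n_0\in\mathbb N$, produce a pair $(\mathcal A,\mathcal B)$ of $\sigma$-structures on the same number $n\ge n_0$ of vertices such that $\mathcal A\in P_{\zigzag}$, the $r$-type distributions of $\mathcal A$ and $\mathcal B$ differ by less than $\lambda$, yet $\mathcal B$ is $\epsilon$-far from $P_{\zigzag}$. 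This directly negates $\epsilon$-locality and hence, by Theorem~\ref{thm:Locality}, testability.

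\textbf{Construction of the witness pair.} Take $\mathcal A$ to be a model of $\varphi_{\zigzag}$ of size $\sum_{m=0}^{N}D^{4m}$ for $N$ large (these exist by Lemma~\ref{lem:exactFormOfModels}). By Theorem~\ref{thm:approximatingNeighbourhoodDistributionBySmallGraph}, applied with a small threshold, there is a $\sigma$-structure $\mathcal H\in C_d$ of bounded size $|H|\le n_0'$ (independent of $\mathcal A$) with $\delta_\odot(\mathcal A,\mathcal H)$ tiny, in particular $\delta_\odot^r(\mathcal A,\mathcal H)<\lambda/4$ for the given $r$. Now let $\mathcal B$ be the disjoint union of $q$ copies of $\mathcal H$, where $q$ is chosen so that $q\cdot|H|$ equals $|A|$ (I would first pad: blow up $\mathcal A$ and $\mathcal H$ slightly, or choose $N$ so that $|A|$ is divisible by $|H|$ — a routine divisibility fix, absorbing at most a negligible fraction of vertices into isolated gadgets that do not affect the neighbourhood statistics by more than $\lambda/4$). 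Taking disjoint copies does not change the $r$-type distribution, so $\rho_{\mathcal B,r}=\rho_{\mathcal H,r}$, whence $\delta_\odot^r(\mathcal A,\mathcal B)=\delta_\odot^r(\mathcal A,\mathcal H)<\lambda/2<\lambda$; in terms of the $\ell_1$ quantity in the definition of locality, $\sum_i|\rho_{\mathcal A,r}(\{\tau_i\})-\rho_{\mathcal B,r}(\{\tau_i\})|=2\delta_\odot^r(\mathcal A,\mathcal B)<\lambda$ after choosing the approximation threshold appropriately.

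\textbf{Why $\mathcal B$ is $\epsilon$-far from $P_{\zigzag}$.} Here I invoke the expander property. By Theorem~\ref{thm:expansionOfModels}, every structure $\mathcal C\models\varphi_{\zigzag}$ has $U(\mathcal C)$ an $\epsilon_0$-expander for a fixed $\epsilon_0>0$; in particular the underlying graph of any model is connected and has no sparse cuts. The structure $\mathcal B$, being a disjoint union of $q$ bounded-size pieces with $q\to\infty$, has a balanced cut (split the copies into two halves) crossed by $0$ edges, hence expansion $0$, and more to the point: to turn $\mathcal B$ into any model $\mathcal C$ of $\varphi_{\zigzag}$ of the same size one must add enough tuples to make $U$ an $\epsilon_0$-expander starting from a graph that is the disjoint union of $\Omega(n)$ components. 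Standard counting (each of the $q$ components must be connected to the rest, and an expander on $n$ vertices requires $\Omega(n)$ edges crossing balanced cuts, while $U(\mathcal B)$ has none) shows that $\Omega(n)$ tuple modifications are needed; choosing $\epsilon$ small enough relative to $\epsilon_0$ and $d$ makes this exceed $\epsilon d n$. Thus $\dist(\mathcal B,\mathcal C)>\epsilon d n$ for every model $\mathcal C$, i.e.\ $\mathcal B$ is $\epsilon$-far from $P_{\zigzag}$.

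\textbf{Main obstacle.} The delicate point is the last step: quantifying, cleanly and with the right constant, that ``recovering expansion from $\Omega(n)$ small components costs a linear number of modifications,'' and making the bookkeeping of the divisibility/padding argument consistent so that $\mathcal A$ and $\mathcal B$ genuinely have equal size while the neighbourhood-statistics perturbation stays below $\lambda$. I would isolate the expansion estimate as the heart of the proof — it is exactly the place where Theorem~\ref{thm:expansionOfModels} is used — and treat the size-matching as a technical lemma (choose $N$, possibly enlarge $\mathcal H$ by harmless isolated-loop gadgets to a divisor-friendly size, and bound the statistical cost of the padding by $\lambda/4$).
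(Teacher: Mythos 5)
Your proposal is correct and takes essentially the same route as the paper: it reduces non-testability to non-locality via Theorem~\ref{thm:Locality}, builds $\mathcal B$ as many disjoint copies of a small structure $\mathcal H$ obtained from Theorem~\ref{thm:approximatingNeighbourhoodDistributionBySmallGraph} (with the leftover $n\bmod|H|$ elements as isolated vertices, exactly your ``padding''), and argues farness by comparing an empty balanced cut of $\mathcal B$ against the linear edge-boundary forced by the expansion constant of Theorem~\ref{thm:expansionOfModels}. The ``main obstacle'' you isolate is precisely what the paper handles: it fixes a renaming of $B$ minimizing $\sum_{\tilde R}|\tilde R^{\mathcal A}\triangle\tilde R^{\mathcal B}|$, chooses a most-balanced partition $S\sqcup S'$ of $B$ with no crossing $\mathcal B$-tuples (balance up to $m$ by the small-component property), and lower-bounds the symmetric difference by $|\langle S,S'\rangle_{U(\mathcal A)}|\ge|S|\,h(\mathcal A)\ge\frac{n}{4}\cdot\frac{D^2}{12}$, which gives the explicit $\epsilon=1/(144D^2)$.
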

\begin{proof}
	
	We prove non-locality for $P_{\zigzag}$ and  get non-testability with Theorem \ref{thm:Locality}. 
	Let $\epsilon:={1}/(144D^2)$ and let $r\in \mathbb{N}$, $\lambda >0$ and $n_0\in \mathbb{N}$ be arbitrary. We set $\lambda':={\lambda}/( t 2^{r+1})$, where $\tau_1,\dots,\tau_t$ are all $r$-types of bounded degree $d$, and let $n_0'$ be the positive integer from Theorem \ref{thm:approximatingNeighbourhoodDistributionBySmallGraph} corresponding to   $\lambda'$. We now pick $n\in \mathbb{N}$ such that $n=\sum_{i=0}^{k}D^{4i}$ for some $k\in \mathbb{N}$, $n\geq 4n_0$ and $n\geq 4({n_0'}/{\lambda})$. Let $\mathcal{A}\in C_d$ be a model of $\varphi_{\zigzag}$ on $n$ vertices. By Theorem \ref{thm:approximatingNeighbourhoodDistributionBySmallGraph} there is a structure $\mathcal{H}\in C_d$ on $m\leq n_0'$ vertices such that $\delta_{\odot}(\mathcal{A},\mathcal{H})\leq \lambda$. Let $\mathcal{B}$ be the structure consisting of $\lfloor{n}/{m}\rfloor$ copies of $\mathcal{H}$ and $n\mod m$ isolated vertices. Note that we picked $\mathcal{B}$ such that $|A|=|B|$.

	We will first argue that $\mathcal{B}$ is in fact $\epsilon$-far from having the property $P_{\zigzag}$. 
	First we rename the elements from $B$ in such a way that $A=B$ and the number $\sum_{\tilde{R}\in \sigma}|\tilde{R}^\mathcal{A}\Delta \tilde{R}^\mathcal{B}|$ of edge modifications to turn $\mathcal{A}$ and $\mathcal{B}$ into the same structure is minimal. Pick a partition $A=B=S\sqcup S'$  in such a way  that $S\times S'\cap \tilde{R}^\mathcal{B}=\emptyset$, $S'\times S\cap \tilde{R}^\mathcal{B}=\emptyset$ for any $\tilde{R}\in \sigma$ and $||S|-|S'||$  minimal among all such partitions. Assume that $|S|\leq |S'|$. Since the connected components in $\mathcal{B}$ are of size $\leq m$ we know that $||S|-|S'||\leq m$ because otherwise we can get a partition $B=T\sqcup T'$ with $||T|-|T'||<||S|-|S'||$ by picking a connected component of $\mathcal{B}$ whose elements are contained in $S'$ and moving them from $S'$  to $S$. Since $|S|\leq |S'|$ and $m\leq {n}/{4}$ we know that $ {n}/{4}\leq |S|\leq {n}/{2}$.
	This implies that
	\begin{align*}
	\sum_{\tilde{R}\in \sigma}|\tilde{R}^\mathcal{A}\Delta \tilde{R}^\mathcal{B}|&\geq|\langle S,S'\rangle_{U(\mathcal{A})}|\stackrel{\text{Def }\ref{def:expansionRatio}}{\geq}|S|\cdot  h(\mathcal{A})\\&\stackrel{\text{Thm }\ref{thm:expansionOfModels}}{\geq}\frac{n}{4}\cdot\frac{D^2}{12}=\frac{1}{48} D^2n\geq \frac{1}{144 D^2}dn.
	\end{align*}
	Therefore $\mathcal{B}$ is $\epsilon$-far from 
	being in $P_{\zigzag}$.
	
	But the neighbourhood distributions of $\mathcal{A}$ and $\mathcal{B}$ are similar as the following shows, proving that $P_{\zigzag}$ is not local.

	\begin{align*}
	&\sum_{i=1}^{t}|\rho_{\mathcal{A},r}(\{\tau_i\})-\rho_{\mathcal{B},r}(\{\tau_i\})|\\
	&=\sum_{i=1}^{t}\Big|\rho_{\mathcal{A},r}(\{\tau_i\})
	-\frac{n\mod m}{n}\cdot \rho_{K_1,r}(\{\tau_i\})-\Big\lfloor\frac{n}{m}\Big\rfloor \cdot\frac{m}{n}\cdot\rho_{\mathcal{H},r}(\{\tau_i\})\Big|
	%
	\\&\leq\sum_{i=1}^{t}\Big|\rho_{\mathcal{A},r}(\{\tau_i\})-\rho_{\mathcal{H},r}(\{\tau_i\})\Big|+\sum_{i=1}^{t}\Big|\frac{n\mod m}{n}\cdot \rho_{K_1,r}(\{\tau_i\})\Big|\\&+\sum_{i=1}^{t}\Big|\rho_{\mathcal{H},r}(\{\tau_i\})-\Big\lfloor\frac{n}{m}\Big\rfloor \cdot\frac{m}{n}\cdot\rho_{\mathcal{H},r}(\{\tau_i\})\Big|\\
	&\leq\sum_{i=1}^{t}\Big|\rho_{\mathcal{A},r}(\{\tau_i\})-\rho_{\mathcal{H},r}(\{\tau_i\})\Big|+\frac{2m}{n}\\
	&\leq t\cdot\sup_{X\subseteq \mathcal{B}_r}\abs{\rho_{\mathcal{A},r}(X)-\rho_{\mathcal{H},r}(X)}+\frac{2m}{n}\\
	&\leq t\cdot 2^r\cdot \delta_{\odot}(\mathcal{A},\mathcal{H})+\frac{2m}{n}\leq \frac{\lambda}{2}+\frac{\lambda}{2}=\lambda.
	\end{align*}
	The last inequality holds by choice of $\lambda'$ and Theorem \ref{thm:approximatingNeighbourhoodDistributionBySmallGraph}.
\end{proof}

\paragraph{Every FO property on degree-regular structures is in $\Pi_2$.}
We first give the following definition. 
\begin{definition}\label{ex:delta2}
	A Hanf sentence $\exists ^{\geq m} x\, \phi_{\tau}(x)$
	is short for
	\[\exists x_1\ldots x_m  \big(\bigwedge_{1\leq i,j\leq m, i\neq j} x_i\neq x_j\wedge\bigwedge_{1\leq i\leq m} \phi_{\tau}(x_i)\big),\]
	and $\phi_{\tau}(x_i)$ can be expressed by an $\exists^*\forall$-formula where the existential
	quantifiers ensure the existence of the desired $r$-neighbourhood with all tuples in relations / not in relations as required by $\tau$, 
	and the universal quantifier is used to express that there are no other elements in
	the $r$-neighbourhood of $x_i$. 
\end{definition}
Note that by definition, any Hanf sentence is in~$\Sigma_2$. We now show the following lemma. 

\begin{lemma}\label{lem:d-regHNF}	
	Let $d\in \mathbb N$ and let $\phi$ be an FO sentence.
	If every model of $\varphi$ is $d$-regular, then $\varphi$ is $d$-equivalent to a $\Pi_2$ sentence.
\end{lemma}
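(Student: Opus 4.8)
The plan is to use Hanf's theorem to put $\varphi$ into Hanf Normal Form (HNF) on the class $C_d$ of $d$-bounded-degree structures, and then argue that, under the extra hypothesis that every model of $\varphi$ is $d$-regular, the resulting Boolean combination of Hanf sentences can be rewritten as a $\Pi_2$ sentence. Recall that by Hanf's theorem $\varphi$ is $d$-equivalent to a Boolean combination $\psi$ of Hanf sentences of the form $\exists^{\geq m}x\,\phi_\tau(x)$, where $\tau$ ranges over $r$-types of bounded degree $d$ for a suitable radius $r$. By Definition~\ref{ex:delta2} each Hanf sentence lies in $\Sigma_2$, and its negation $\neg\exists^{\geq m}x\,\phi_\tau(x)$ lies in $\Pi_2$. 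The obstacle is that a Boolean combination of $\Sigma_2$ and $\Pi_2$ sentences is in general neither $\Sigma_2$ nor $\Pi_2$; we need the $d$-regularity hypothesis to collapse it.

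\textbf{Key step: eliminating positive Hanf sentences using $d$-regularity.} First I would observe that, since every model of $\varphi$ is $d$-regular, for each $r$ only finitely many $r$-types $\tau$ can occur at all in a model of $\varphi$ (namely the $d$-regular ones, i.e.\ those in which the centre and every element has full degree $d$), and crucially, on a $d$-regular structure \emph{every} element realises \emph{some} $d$-regular $r$-type. Let $\tau_1,\dots,\tau_s$ be the list of all $d$-regular $r$-types. Then a positive Hanf sentence ``there are at least $m$ elements of type $\tau_i$'' is, on $d$-regular structures of size $n$, equivalent to ``there are at most $n-m$ elements whose type lies in $\{\tau_1,\dots,\tau_s\}\setminus\{\tau_i\}$'' — but $n$ is not available to a first-order sentence. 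The cleaner route is: a positive Hanf sentence $\exists^{\geq m}x\,\phi_{\tau_i}(x)$ asserting a \emph{lower} bound on the count of $\tau_i$ is equivalent, on $d$-regular structures (which have no isolated vertices and all of whose $r$-types are $d$-regular), to the disjunction over all the \emph{ways the other counts could be bounded}; this still does not obviously give $\Pi_2$. So instead I would handle the Boolean combination directly as follows.

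\textbf{Main argument.} Write $\psi$ in disjunctive normal form over the atoms $H_{i}:=\exists^{\geq m_i}x\,\phi_{\tau_i}(x)$; so $\psi \equiv \bigvee_{j} C_j$ where each $C_j$ is a conjunction of atoms $H_i$ and negated atoms $\neg H_i$. On $d$-regular structures I claim each clause $C_j$ is $d$-equivalent to a $\Pi_2$ sentence. The negated atoms $\neg H_i$ are already $\Pi_2$ (negated Hanf sentences, via Definition~\ref{ex:delta2}). For the positive atoms $H_i = \exists^{\geq m_i}x\,\phi_{\tau_i}(x)$: on a $d$-regular structure, every vertex has one of the $d$-regular $r$-types $\tau_1,\dots,\tau_s$, so $\sum_{k=1}^s \#\tau_k = n$ exactly. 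Hence, writing $\bar\imath$ for the complement, $\#\tau_i \geq m_i$ is equivalent to: it is \emph{not} the case that $\sum_{k\neq i}\#\tau_k \geq n-m_i+1$, which again needs $n$. The honest fix, which I expect the paper uses, is that \emph{a disjunction of positive Hanf sentences can be $\Pi_2$-free* only in special cases*}, so one instead keeps $\varphi$ itself and appeals to the fact that HNF sentences built from Hanf sentences that are all ``lower-bound'' type are $\Sigma_2$, while ``exact-count'' and ``upper-bound'' pieces are $\Pi_2$; $d$-regularity is what guarantees that no genuine \emph{lower} bound on a count is needed, because a lower bound $\#\tau_i\geq m_i$ together with the fixed type-palette converts into upper bounds $\#\tau_k \le n-m_i$ on the others only with $n$. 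I would therefore pursue the following concrete plan: put $\varphi$ in HNF, push negations inward so that $\psi$ is a positive Boolean combination of sentences of the form $\exists^{\geq m}x\,\phi_\tau(x)$ and $\neg\exists^{\geq m}x\,\phi_\tau(x)\equiv \exists^{<m}x\,\phi_\tau(x)$, note that $\exists^{<m}x\,\phi_\tau(x)$ is a $\forall^*$-sentence over an $\exists^*\forall$-matrix hence $\Pi_2$, and then use $d$-regularity to rewrite each \emph{positive} Hanf sentence $\exists^{\geq m_i}x\,\phi_{\tau_i}(x)$ as $\bigwedge_{\text{valid thresholds}} \neg\bigl(\text{too few of }\tau_i\bigr)$ — i.e.\ as a conjunction of ``$\exists^{<m'}x\,\phi_{\tau_k}(x)$'' statements for the other types, which is $\Pi_2$; finally distribute to get a $\Pi_2$ sentence overall, using that $\Pi_2$ is closed under finite conjunction and disjunction up to the $\exists^*\forall^*$-prefix normal form.

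\textbf{The hard part} will be making precise the claim that positive Hanf sentences become $\Pi_2$ under $d$-regularity without a count of the universe: the rewriting must express ``$\#\tau_i\geq m_i$'' purely locally, and this works because on a $d$-regular structure the set of occurring $r$-types is fixed and finite, so asserting that type $\tau_i$ occurs at least $m_i$ times is equivalent to a disjunction, over all finite combinations of counts of the \emph{other} types that leave room, of statements each of which is a finite conjunction of ``type $\tau_k$ occurs at most $N_k$ times'' — and such bounded-count statements, being of the shape $\exists^{<N_k+1}x\,\phi_{\tau_k}(x)$, are $\forall^*$-quantified over an $\exists^*\forall$ matrix, hence $\Pi_2$; the disjunction is infinite a priori but can be bounded because all but finitely many of these combinations are inconsistent on degree-$d$ structures, or alternatively one argues semantically that $\varphi$ already bounds the relevant counts on its own models and invokes compactness / the finite-model reformulation. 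I would spend the bulk of the write-up nailing exactly this reduction, and keep the HNF invocation and the closure-of-$\Pi_2$-under-Boolean-operations-up-to-prenexing steps brief.
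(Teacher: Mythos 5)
Your framework — invoke Hanf's theorem, put the HNF sentence in DNF over Hanf literals, observe that negated Hanf sentences are already $\Pi_2$, then handle the positive Hanf sentences using $d$-regularity — matches the paper's. But you never actually close the gap: your attempted rewriting of a positive Hanf sentence $\exists^{\geq m}x\,\phi_\tau(x)$ into a Boolean combination of \emph{upper} bounds on the counts of other types hinges on the identity $\sum_k\#\tau_k=n$, and, as you repeatedly (and correctly) observe, this requires access to $n$, which is not expressible in FO. Your ``disjunction over all finite combinations of counts'' is not FO either: it is an infinite disjunction, and your suggestion that ``all but finitely many combinations are inconsistent'' is false in general (for a fixed $d$-regular type palette, arbitrarily many distributions of counts over types are realisable as $n$ grows), while your alternative appeal to compactness has no purchase here because we are in finite model theory. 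So the proposal is stuck.

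The step you are missing is much more local and does not involve counting across types at all. For an $r$-type $\tau$, the formula $\phi_\tau(x)$ decomposes (as in the paper's Definition of a Hanf sentence) into an $\exists^*$-part $\phi'_\tau(x)$ asserting the presence of the elements and tuples required by $\tau$, conjoined with a single $\forall$-part asserting there are \emph{no further} elements in the $r$-neighbourhood of $x$. Now suppose $\tau$ is $d$-regular in the sense that in an $r$-ball of type $\tau$ every element at distance $<r$ from the centre already has degree exactly $d$. Then on a structure $\mathcal{A}\in C_d$, if $a$ has an \emph{induced sub-neighbourhood} of type $\tau$ (i.e.\ $\mathcal{A}\models\phi'_\tau(a)$), the elements at distance $<r$ from $a$ already carry $d$ tuples each and cannot be in any further tuples, so no extra elements can be attached: the universal conjunct is automatically satisfied, and $\mathcal{A}\models\phi_\tau(a)$. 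Hence $\exists^{\geq m}x\,\phi_\tau(x)\equiv_d\exists^{\geq m}x\,\phi'_\tau(x)$, which is a purely existential ($\Sigma_1$) sentence and therefore trivially in $\Pi_2$. Combined with the fact that negated Hanf sentences are $\Pi_2$ and that $\Pi_2$ is closed under $\wedge,\vee$ up to prenexing, this finishes each DNF clause. The remaining bookkeeping — that one may take the HNF/DNF formula to use only $d$-regular types in its positive Hanf literals — follows because any clause with a positive non-$d$-regular Hanf sentence is unsatisfiable on $C_d$ while preserving $d$-equivalence with $\phi$ (such a clause would force a model of $\phi$ containing a non-$d$-regular type, contradicting the hypothesis that all models of $\phi$ are $d$-regular), so such clauses can simply be dropped.
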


\begin{proof}
	Before we begin, let us define an $r$-type 
	$\tau$ to be \emph{$d$-regular}, if for all structures $\mathcal A$ and all elements 
	$a\in A$ of $r$-type $\tau$, every $b\in A$ with 
	$\dist(a,b)<r$ has $\deg_{\mathcal A}(b)=d$.
	
	We first prove the following claim.	
	\begin{claim}\label{claim:Pi2}
		Let $d\in \mathbb N$, let $\phi$ be an FO sentence, and let $\psi$ be in HNF with 
		$\psi\equiv_d\phi$ such that $\psi$ is in DNF, where the literals
		are Hanf sentences or negated Hanf sentences. Furthermore, assume that the neighbourhood types in all (positive) Hanf sentences of $\psi$ are $d$-regular. Then $\phi$ is $d$-equivalent to a sentence in $\Pi_2$.
	\end{claim}
	
	\begin{proof}
		Assume $\psi$ is of the form $\exists^{\geq m} x\, \phi_{\tau}(x)$, where $\tau$ is
		$d$-regular.
		As in Definition~\ref{ex:delta2}, we may assume $\phi_{\tau}(x_i)$ is an 
		$\exists^*\forall$-formula, which is a conjunction of an
		$\exists^*$-formula $\phi'_{\tau}(x_i)$ (expressing that
		$x$ has an `induced sub-neighbourhood' of type $\tau$) and a 
		universal formula saying that
		there are no further elements in the neighbourhood.
		We now have that $\psi\equiv_d\exists^{\geq m} x \,\phi'_{\tau}(x)$. To 
		see this, let
		$\mathcal A\models \exists^{\geq m} x \phi'_{\tau}(x)$ and 
		$\deg(\mathcal A)\leq d$. Then 
		$\mathcal A\models \exists^{\geq m} x \phi_{\tau}(x)$ because $\tau$ is 
		$d$-regular. The converse is obvious.
		
		If $\psi$ is of the form $\neg \exists^{\geq m} x\, \phi_{\tau}(x)$, where
		$\phi_{\tau}(x_i)$ is an	$\exists^*\forall$-formula, then 
		$\neg \exists^{\geq m} x\, \phi_{\tau}(x)$ is equivalent to a formula in $\Pi_2$.
		Since $\Pi_2$ is closed under disjunction and conjunction, this proves the claim.
	\end{proof}
	Now the proof follows from Claim~\ref{claim:Pi2}, because if $\phi$ only has $d$-regular models, then by Hanf's Theorem there is a formula $\psi\equiv \phi$ satisfying the assumptions of the claim.
\end{proof}

\paragraph{Existence of a non-testable $\Pi_2$-property.}
With Lemma \ref{lem:d-regHNF} and Theorem \ref{thm:nonTestabilityForStructures}, we are ready to prove the following theorem.

\begin{theorem}\label{thm:pi2}
	There are degree bounds $d\in \mathbb{N}$ such that there exists a property on $C_{d}$ definable by a formula in $\Pi_2$ that is not testable.
\end{theorem}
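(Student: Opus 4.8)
The plan is to read the theorem off from the three ingredients just developed: the non-testability of $P_{\zigzag}$ (Theorem~\ref{thm:nonTestabilityForStructures}), the membership of degree-regular FO properties in $\Pi_2$ (Lemma~\ref{lem:d-regHNF}), and the explicit description of the models of $\varphi_{\zigzag}$ (Lemma~\ref{lem:exactFormOfModels}). Take $D$ to be a sufficiently large prime power so that the expander $H$ of Definition~\ref{dfn:expanders} exists, and let $d:=2D^2+D^4+1$ be the degree bound fixed in Section~\ref{sec: definitionFormula}; this $d$ will be the witnessing degree bound.

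The one point needing an argument is that every (finite) model $\mathcal{A}$ of $\varphi_{\zigzag}$ is $d$-regular. By Lemma~\ref{lem:exactFormOfModels}, $G(\mathcal{A}|_F)$ is a complete $D^4$-ary rooted tree with levels $T_0,\dots,T_n$, $U(\mathcal{A}|_E)[T_m]\cong G_m$ for $1\le m\le n$, and the auxiliary relations $R$ and $\{L_k\}_{k\in\indexSetH}$ carry exactly the prescribed self-loops. Counting, for each $a\in A$, the tuples of $\mathcal{A}$ containing $a$ then gives $\deg_{\mathcal A}(a)=d$ in every case: an internal tree vertex has $D^4$ tuples from its $F$-children, one from its $F$-parent, and---lying on some $T_m\cong G_m$, a $D^2$-regular graph each of whose edges is encoded by two directed $E$-tuples---exactly $2D^2$ tuples in the relations $E_{i,j}$; a leaf has $D^4$ tuples from the self-loops in $\{L_k\}_{k\in\indexSetH}$ (forced by $\varphi_{\operatorname{tree}}$) in place of $F$-children, one from its $F$-parent, and $2D^2$ from $E$; and the root has $D^4$ tuples from its $F$-children, one from the self-loop in $R$ in place of an $F$-parent, together with the $E$-self-loops prescribed by $\varphi_{\operatorname{base}}$, again totalling $d$. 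This is exactly the counting that the relations $R$, $\{L_k\}_{k\in\indexSetH}$ and the self-loop clauses were engineered to make balance out.

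With $d$-regularity in hand, Lemma~\ref{lem:d-regHNF} applies to $\varphi:=\varphi_{\zigzag}$, producing a sentence $\psi\in\Pi_2$ with $\psi\equiv_d\varphi_{\zigzag}$; hence $P_\psi=\{\mathcal A\in C_d:\mathcal A\models\psi\}=\{\mathcal A\in C_d:\mathcal A\models\varphi_{\zigzag}\}=P_{\zigzag}$. By Theorem~\ref{thm:nonTestabilityForStructures}, $P_{\zigzag}$ is not testable on $C_d$. Thus $P_\psi$ is a $\Pi_2$-definable property on $C_d$ that is not testable, which establishes the theorem for this value of $d$.

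The main (and essentially only) obstacle is the $d$-regularity verification: one must be careful that the formula pins the degree of the root and of the leaves to \emph{exactly} $d$ rather than merely $\le d$, since Lemma~\ref{lem:d-regHNF} needs genuine regularity. Everything else is an immediate application of the two black boxes Lemma~\ref{lem:d-regHNF} and Theorem~\ref{thm:nonTestabilityForStructures}, and the resulting degree bound $d=2D^2+D^4+1$ is an explicit constant depending only on the prime power $D$.
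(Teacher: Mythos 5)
Your proposal is correct and follows essentially the same route as the paper's own proof of Theorem~\ref{thm:pi2}: fix $d=2D^2+D^4+1$ for a large prime power $D$, observe that all models of $\varphi_{\zigzag}$ are $d$-regular, invoke Lemma~\ref{lem:d-regHNF} to obtain a $d$-equivalent $\Pi_2$ sentence, and appeal to Theorem~\ref{thm:nonTestabilityForStructures} for non-testability. The only difference is that you spell out the degree-counting behind the $d$-regularity claim (tracking $F$-parent/children, $R$- and $L_k$-self-loops, and $E$-tuples separately for the root, internal vertices, and leaves), whereas the paper simply asserts that the models are $d$-regular ``by construction''; this is a helpful elaboration rather than a different argument.
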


%
\begin{proof}
	Pick $d=2D^2+D^4+1$ for any large prime power $D$. Then using the construction from \cite{ZigZagProductIntroduction} we can find a $(D^4,D,1/4)$-graph $H$. By Theorem~\ref{thm:nonTestabilityForStructures}, using this base expander $H$ for the construction of the formula $\varphi_{\zigzag}$ we get a property which is not testable on $C_d$. Since all models of $\varphi_{\zigzag}$ are $d$-regular by construction, Lemma \ref{lem:d-regHNF} gives us that $\varphi_{\zigzag}$ is $d$-equivalent to a formula in $\Pi_2$.
\end{proof}

\subsection{Extension to simple (undirected) graphs}
By our previous argument, to show the existence of a non-testable $\Pi_2$-property for simple graphs, \ie undirected graphs without parallel edges and without self-loops, it suffices to construct a non-testable FO graph property of degree regular graphs. 
To do so, we carefully translate the edge-coloured directed graphs of our previous 
example in Section \ref{sec: definitionFormula} to simple graphs. 
We encode  
$\sigma$-structures by representing each type of directed edge by a constant size graph gadget, maintaining the degree regularity. We then translate the formula $\varphi_{\zigzag}$  into a formula $\psi_{\zigzag}$. We obtain a class of simple expanders, that is defined by an FO sentence, and obtain the analogous Theorem.
\begin{theorem}\label{thm:simpleDelta2}
	There exists $d\in \mathbb{N}$ and an FO property of simple graphs of bounded degree $d$ that is not testable. 
\end{theorem}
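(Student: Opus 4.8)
The idea is to reuse everything we have built for relational structures and only pay the price of a fixed-size gadget replacement. Concretely, I would fix the degree bound $d' := 2D^2 + D^4 + 1$ of the relational construction and design, for each relation symbol in $\sigma = \{\{E_{i,j}\}, \{F_k\}, R, \{L_k\}\}$, a small connected bipartite \emph{colour gadget}: a constant-size simple graph with two distinguished ``port'' vertices of low degree, where the gadget for relation $\tilde R$ is chosen so that (i) no two distinct gadgets are isomorphic as rooted-at-the-ports graphs, so that the simple graph remembers which directed coloured edge each gadget encodes, (ii) each gadget is itself internally degree-regular except at the ports, and the port degrees are chosen so that after gluing the overall graph becomes $D^{\star}$-regular for one fixed $D^{\star}$, and (iii) for the asymmetric relations $E_{i,j}$ versus $E_{j,i}$ (and likewise the parent/child asymmetry of $F_k$) the gadget is \emph{oriented}, i.e.\ the two ports are distinguishable inside the gadget, so a directed edge $(u,v)\in\tilde R^{\mathcal A}$ becomes a gadget whose ``tail port'' is identified with $u$ and ``head port'' with $v$. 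Replacing every tuple of $\mathcal A$ by a fresh copy of the appropriate gadget yields a simple graph $\Gamma(\mathcal A)$; pad the original vertices with fixed pendant gadgets so that \emph{every} vertex, original or internal, has the same degree $D^{\star}$. This $\Gamma$ is the simple-graph analogue of $U(\mathcal A|_E)$, and by construction $\Gamma(\mathcal A)$ is connected iff $U(\mathcal A)$ is, and the expansion of $\Gamma(\mathcal A)$ is within a constant factor of that of $U(\mathcal A)$ (each edge cut in $U(\mathcal A)$ corresponds to at least one, and at most a constant number, of edges cut in $\Gamma(\mathcal A)$, since gadgets have bounded size).

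\textbf{Key steps, in order.} First I would carry out the gadget design and verify properties (i)--(iii) above; since gadgets are of constant size this is a finite check, and the non-isomorphism requirement is met simply by using gadgets of pairwise distinct sizes (or distinct small rigid graphs with marked ports). Second, I would write an FO sentence $\psi_{\zigzag}$ over the signature of simple graphs that says: ``the graph decomposes into copies of the allowed gadgets glued at ports, and the relational structure $\mathcal A$ obtained by contracting each gadget back to a coloured directed edge satisfies $\varphi_{\zigzag}$.'' The contraction is FO-definable because each gadget has bounded radius, so ``$x$ and $y$ are the two ports of an $E_{i,j}$-gadget'' is expressible by a first-order formula $\theta_{i,j}(x,y)$ quantifying only over the constantly many internal vertices of a potential gadget; then $\psi_{\zigzag}$ is obtained from $\varphi_{\zigzag}$ by relativising quantifiers to ``port vertices'' (those not internal to any gadget) and replacing each atom $E_{i,j}(x,y)$ by $\theta_{i,j}(x,y)$, and similarly for $F_k, R, L_k$, conjoined with a sentence asserting that the whole graph really is a legal gluing of gadgets and is $D^{\star}$-regular. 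Third, I would prove the correspondence: finite models of $\psi_{\zigzag}$ are exactly the graphs $\Gamma(\mathcal A)$ for $\mathcal A\models\varphi_{\zigzag}$ (up to isomorphism), using the rigidity of the gadgets so that a model cannot ``cheat'' by gluing gadgets in unintended ways. Fourth, I would invoke Theorem~\ref{thm:expansionOfModels} together with the constant-factor comparison of expansion to conclude that $\{\Gamma(\mathcal A)\mid \mathcal A\models\varphi_{\zigzag}\}$ is a family of $\epsilon'$-expanders for some $\epsilon'>0$. Fifth, I would run the exact argument of Theorem~\ref{thm:nonTestabilityForStructures} verbatim in the simple-graph world — Theorem~\ref{thm:approximatingNeighbourhoodDistributionBySmallGraph} and Theorem~\ref{thm:Locality} are stated for simple graphs anyway — to conclude that $P_{\psi_{\zigzag}}$ is not local, hence not testable. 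Since all models of $\psi_{\zigzag}$ are $D^{\star}$-regular by construction, Lemma~\ref{lem:d-regHNF} is not even needed for the bare statement of Theorem~\ref{thm:simpleDelta2}, though it would upgrade the conclusion to a $\Pi_2$ sentence.

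\textbf{Main obstacle.} The delicate point is property (ii) combined with rigidity: one must simultaneously make the gadgets internally degree-regular, achieve a \emph{single} global degree $D^{\star}$ after gluing (note the original vertices have wildly different ``coloured-edge profiles'' depending on whether they are the root, an internal tree node, or a leaf, so the padding has to absorb this variation, exactly as the term $2D^2+D^4+1$ was engineered in the relational construction), and keep the gadgets rigid enough that the formula $\theta_{i,j}$ genuinely pins down a unique gadget and cannot be spoofed by a patchwork of pieces of other gadgets or by accidental coincidences in a large graph. Getting all three at once is the real work; everything downstream (expansion comparison, the non-locality computation) is then a routine transcription of Section~\ref{sec:FOnontestability}. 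A secondary but purely bookkeeping nuisance is that the gluing sentence must also forbid ``dangling'' partial gadgets and enforce that the contracted structure has the right degree bound $d'$ so that $\varphi_{\zigzag}$ is being evaluated on a structure in $C_{d'}$; this is handled by a universal first-order clause counting, for each port vertex, how many gadgets of each type are attached.
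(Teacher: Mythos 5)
Your proposal follows essentially the same route as the paper: replace each directed coloured edge of a model of $\varphi_{\zigzag}$ by a constant-size, rigid, degree-preserving gadget, show the gadgets are FO-recognisable (the paper uses paths $P^d_{\ell,p}(u_0,v_\ell)$ where the position $p$ of a distinguished block $H^d$ encodes the colour, and uses ``not contained in a triangle'' to FO-define original vertices), relativise $\varphi_{\zigzag}$ to original vertices with the gadget-detecting formulas to obtain $\psi_{\zigzag}$, and transfer expansion by a constant-factor comparison. The only divergence is cosmetic: for the final step you rerun the sampling-distance/locality argument of Theorem~\ref{thm:nonTestabilityForStructures}, whereas the paper instead invokes the result of Fichtenberger et al.\ that a testable infinite property must contain an infinite hyperfinite subproperty (Lemma~\ref{lem:FPS19}); the paper explicitly notes both conclusions are available, so this is a matter of taste rather than a genuine difference.
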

In the rest of this section, we prove the above theorem. 

\paragraph{Construction of a family of graphs.} Let $d$ be as defined in Section \ref{sec: definitionFormula}. 
Let $G^d(u,v)$ be the graph with vertex set $\{u,v,u_0,\dots,u_{d-2}\}$ and edge set $\{\{w,u_i\},\{v,u_i\},\{u_i,u_j\}\mid i,j\in[d-2],i\not=j\}$. Let $H^d(u,v)$
be the graph with vertex set $\big\{u,v,u_i,u_j',v_i,v_j'\mid i\in \big[\big\lfloor \frac{d-1}{2}\big\rfloor\big],j\in \big[\big\lceil \frac{d-1}{2}\big\rceil\big]\big\}$ and edge set 
\begin{align*}
&\Big\{\{u,u_i\},\{v,v_i\},\{u_i,v_i\}\mid i\in \Big[\Big\lfloor \frac{d-1}{2}\Big\rfloor\Big]\Big\}\cup\\
&\Big\{\{u,u_j'\},\{v,v_j'\},\{u_j',v_j'\}\}\mid j\in \Big[\Big\lceil \frac{d-1}{2}\Big\rceil\Big]\Big\}\cup\\
&\Big\{\{u_i,u_k\},\{v_i,v_k\}\mid i, k\in \Big[\Big\lfloor \frac{d-1}{2}\Big\rfloor\Big],i\not= k\Big\}\cup\\
&\Big\{\{u'_j,u'_k\},\{v'_j,v'_k\}\mid j,k \in \Big[\Big\lceil \frac{d-1}{2}\Big\rceil \Big],j\not=k\Big\}\cup\\
&\Big\{\{u_i,v'_j\},\{u'_j,v_i\}\mid i\in \Big[\Big\lfloor \frac{d-1}{2}\Big\rfloor\Big],j \in \Big[\Big\lceil \frac{d-1}{2}\Big\rceil\Big]\}
\end{align*}
Finally, for every $\ell\in \mathbb{N}$ and $0\leq p\leq \ell$, let $P^d_{\ell,p}(u_0,v_\ell)$ be the graph consisting of $\ell$ copies $G^d(u_0,v_0), \dots, G^d(u_{p-1},v_{p-1}), G^d(u_{p+1},v_{p+1})$, $\dots, G^d(u_\ell,v_\ell)$, one copy $H^d(u_p,v_p)$ and additional edges $\{v_i,u_{i+1}\}$ for each $i\in [\ell]$. Note that $P^d_{\ell,p}(u_0,v_\ell)$ has $\ell\cdot (d+1)+2d$ vertices, the vertices $u_0$ and $v_\ell$ have degree $d-1$ and every other vertex has degree $d$, see Figure \ref{fig:path} for an example.  \newcount\mycount
\begin{figure*}
	\centering
	\scalebox{.7}{
		\begin{tikzpicture}
		\tikzstyle{ns1}=[line width=0.7]
		\tikzstyle{ns2}=[line width=1.2]
		\def \radius {1}
		\def \rad {2}
		\def \shiftOne {(5,0)}
		\def \shiftTwo {(10,0)}
		\def \shiftThree {(10,2)}
		\def \shiftFour {(11,2)}
		\def \shiftFive {(10,-2)}
		\def \shiftSix {(11,-2)}
		\def \shiftSeven {(11,0)}
		\def \shiftEight {(16,0)}
		\def \margin {2.7}
		\node[draw,circle,fill=black,inner sep=0pt, minimum width=5pt] (0) at (180:\rad) {};
		\node[draw,circle,fill=black,inner sep=0pt, minimum width=5pt] (1) at (0:\radius) {};
		\node[draw,circle,fill=black,inner sep=0pt, minimum width=5pt] (2) at (72:\radius) {};
		\node[draw,circle,fill=black,inner sep=0pt, minimum width=5pt] (3) at (144:\radius) {};
		\node[draw,circle,fill=black,inner sep=0pt, minimum width=5pt] (4) at (216:\radius) {};
		\node[draw,circle,fill=black,inner sep=0pt, minimum width=5pt] (5) at (288:\radius) {};
		\node[draw,circle,fill=black,inner sep=0pt, minimum width=5pt] (6) at (0:\rad) {};
		\node[draw,circle,fill=black,inner sep=0pt, minimum width=5pt] [shift={\shiftOne}](7) at (180:\rad) {};
		\node[draw,circle,fill=black,inner sep=0pt, minimum width=5pt] [shift={\shiftOne}](8) at (0:\radius) {};
		\node[draw,circle,fill=black,inner sep=0pt, minimum width=5pt] [shift={\shiftOne}](9) at (72:\radius) {};
		\node[draw,circle,fill=black,inner sep=0pt, minimum width=5pt] [shift={\shiftOne}](10) at (144:\radius) {};
		\node[draw,circle,fill=black,inner sep=0pt, minimum width=5pt] [shift={\shiftOne}](11) at (216:\radius) {};
		\node[draw,circle,fill=black,inner sep=0pt, minimum width=5pt] [shift={\shiftOne}](12) at (288:\radius) {};
		\node[draw,circle,fill=black,inner sep=0pt, minimum width=5pt] [shift={\shiftOne}](13) at (0:\rad) {};
		\node[draw,circle,fill=black,inner sep=0pt, minimum width=5pt] [shift={\shiftTwo}](14) at (180:\rad) {};
		\node[draw,circle,fill=black,inner sep=0pt, minimum width=5pt] [shift={\shiftThree}](15) at (216:\radius) {};
		\node[draw,circle,fill=black,inner sep=0pt, minimum width=5pt] [shift={\shiftThree}](16) at (144:\radius) {};
		\node[draw,circle,fill=black,inner sep=0pt, minimum width=5pt] [shift={\shiftFour}](17) at (0:\radius) {};
		\node[draw,circle,fill=black,inner sep=0pt, minimum width=5pt] [shift={\shiftFour}](18) at (72:\radius) {};
		\node[draw,circle,fill=black,inner sep=0pt, minimum width=5pt] [shift={\shiftFour}](19) at (288:\radius) {};
		\node[draw,circle,fill=black,inner sep=0pt, minimum width=5pt] [shift={\shiftFive}](20) at (180:\radius) {};
		\node[draw,circle,fill=black,inner sep=0pt, minimum width=5pt] [shift={\shiftFive}](21) at (252:\radius) {};
		\node[draw,circle,fill=black,inner sep=0pt, minimum width=5pt] [shift={\shiftFive}](22) at (108:\radius) {};
		\node[draw,circle,fill=black,inner sep=0pt, minimum width=5pt] [shift={\shiftSix}](23) at (324:\radius) {};
		\node[draw,circle,fill=black,inner sep=0pt, minimum width=5pt] [shift={\shiftSix}](24) at (396:\radius) {};
		\node[draw,circle,fill=black,inner sep=0pt, minimum width=5pt] [shift={\shiftSeven}](25) at (0:\rad) {};
		\node[draw,circle,fill=black,inner sep=0pt, minimum width=5pt] [shift={\shiftEight}](26) at (180:\rad) {};
		\node[draw,circle,fill=black,inner sep=0pt, minimum width=5pt] [shift={\shiftEight}](27) at (0:\radius) {};
		\node[draw,circle,fill=black,inner sep=0pt, minimum width=5pt] [shift={\shiftEight}](28) at (72:\radius) {};
		\node[draw,circle,fill=black,inner sep=0pt, minimum width=5pt] [shift={\shiftEight}](29) at (144:\radius) {};
		\node[draw,circle,fill=black,inner sep=0pt, minimum width=5pt] [shift={\shiftEight}](30) at (216:\radius) {};
		\node[draw,circle,fill=black,inner sep=0pt, minimum width=5pt] [shift={\shiftEight}](31) at (288:\radius) {};
		\node[draw,circle,fill=black,inner sep=0pt, minimum width=5pt] [shift={\shiftEight}](32) at (0:\rad) {};
		
		\draw[ns1,C2] (0)--(1);
		\draw[ns1,C2] (0)..controls(-1,1)..(2);
		\draw[ns1,C2] (0)--(3);
		\draw[ns1,C2] (0)--(4);
		\draw[ns1,C2] (0)..controls(-1,-1)..(5);
		\foreach \x in {1,...,5} {
			\foreach \y in {1,...,5}{
				\draw[ns1,C2] (\x)--(\y);
		}}
		\draw[ns1,C2] (6)--(1);
		\draw[ns1,C2] (6)--(2);
		\draw[ns1,C2] (6)..controls(0.7,1.7)..(3);
		\draw[ns1,C2] (6)..controls(0.7,-1.7)..(4);
		\draw[ns1,C2] (6)--(5);
		\draw[ns1,C2] (6)--(7);
		\draw[ns1,C2] (7)--(8);
		\draw[ns1,C2] (7)..controls(4,1)..(9);
		\draw[ns1,C2] (7)--(10);
		\draw[ns1,C2] (7)--(11);
		\draw[ns1,C2] (7)..controls(4,-1)..(12);
		\foreach \x in {8,...,12} {
			\foreach \y in {8,...,12}{
				\draw[ns1,C2] (\x)--(\y);
		}}
		\draw[ns1,C2] (13)--(8);
		\draw[ns1,C2] (13)--(9);
		\draw[ns1,C2] (13)..controls(5.7,1.7)..(10);
		\draw[ns1,C2] (13)..controls(5.7,-1.7)..(11);
		\draw[ns1,C2] (13)--(12);
		\draw[ns1,C2] (13)--(14);
		\draw[ns1,C2] (14)--(15);
		\draw[ns1,C2] (14)--(16);
		\draw[ns1,C2] (14)--(20);
		\draw[ns1,C2] (14)--(22);
		\draw[ns1,C2] (14)..controls(8.3,-2.5)..(21);
		\foreach \x in {15,...,19} {
			\foreach \y in {15,...,19}{
				\draw[ns1,C2] (\x)--(\y);
		}}
		\draw[ns1,C2] (15)--(23);
		\draw[ns1,C2] (16)--(24);
		\draw[ns1,C2] (18)--(22);
		\draw[ns1,C2] (19)--(21);
		\draw[ns1,C2] (17)..controls(12,0)and(9,0)..(20);
		\foreach \x in {20,...,24} {
			\foreach \y in {20,...,24}{
				\draw[ns1,C2] (\x)--(\y);
		}}
		\draw[ns1,C2] (25)--(23);
		\draw[ns1,C2] (25)--(24);
		\draw[ns1,C2] (25)--(17);
		\draw[ns1,C2] (25)--(19);
		\draw[ns1,C2] (25)..controls(12.7,2.5)..(18);
		\draw[ns1,C2] (25)--(26);
		\draw[ns1,C2] (26)--(27);
		\draw[ns1,C2] (26)..controls(15,1)..(28);
		\draw[ns1,C2] (26)--(29);
		\draw[ns1,C2] (26)--(30);
		\draw[ns1,C2] (26)..controls(15,-1)..(31);
		\foreach \x in {27,...,31} {
			\foreach \y in {27,...,31}{
				\draw[ns1,C2] (\x)--(\y);
		}}
		\draw[ns1,C2] (32)--(27);
		\draw[ns1,C2] (32)--(28);
		\draw[ns1,C2] (32)..controls(16.7,1.7)..(29);
		\draw[ns1,C2] (32)..controls(16.7,-1.7)..(30);
		\draw[ns1,C2] (32)--(31);

		\node[minimum height=10pt,inner sep=0,font=\small] at (-2.1,0.3) {$u_{0}$};
		
		\node[minimum height=10pt,inner sep=0,font=\small] at (18.1,0.3) {$v_{3}$};
		\end{tikzpicture} }
	\caption{Illustration of $P^6_{3,2}(u_0,v_3)$.}\label{fig:path}
\end{figure*}
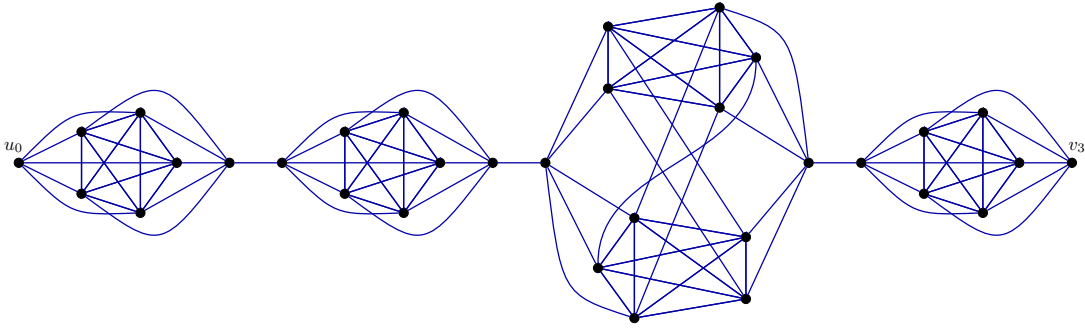

Let $\mathcal{A}\in P_{\zigzag}$ and let $\ell=2\cdot(3 D^4+1)$. We obtain an undirected graph $G=(V,E)$ from $\mathcal{A}$ using the following steps. 
\begin{enumerate}
	\item For every $i_0,i_1,i_2,i_3\in [D]$ we define $p=\sum_{k=0}^{3} i_k\cdot D^k$ and  replace every edge $(x,y)\in E_{(i_0,i_1),(i_2,i_3)}^\mathcal{A}$  by $P^d_{\ell,p}(u_0,v_\ell)$ and additional edges $\{x,u_0\}$ and $\{v_\ell,y\}$. 	Here all vertices of $P^d_{\ell,p}(u_0,v_\ell)$ are pairwise distinct and new, and we call them \emph{auxiliary vertices}. 
	Call this gadget graph an \emph{$E_{(i_0,i_1),(i_2,i_3)}$-arrow with end-vertices $x$ and $y$}.
	\label{Eedge}
	\item For every $i_0,i_1,i_2,i_3\in [D]$ we define $p=D^4+\sum_{k=0}^{3} i_k\cdot D^k$ and  replace every edge $(x,y)\in F_{((i_0,i_1),(i_2,i_3))}^\mathcal{A}$  by $P^d_{\ell,p}(u_0,v_\ell)$ and additional edges $\{x,u_0\}$ and $\{v_\ell,y\}$. 	Here all vertices of $P^d_{\ell,p}(u_0,v_\ell)$ are pairwise distinct and new, and we call them \emph{auxiliary vertices}. 
	Call this gadget graph an \emph{$F_{((i_0,i_1),(i_2,i_3))}$-arrow with end-vertices $x$ and $y$}.
	\label{Fedge}
	\item For every $i_0,i_1,i_2,i_3\in [D]$ we define $p=2D^4+\sum_{k=0}^{3} i_k\cdot D^k$ and  replace every edge $(x,y)\in L_{((i_0,i_1),(i_2,i_3))}^\mathcal{A}$  by $P^d_{\ell,p}(u_0,v_\ell)$ and additional edges $\{x,u_0\}$ and $\{v_\ell,y\}$. 	Here all vertices of $P^d_{\ell,p}(u_0,v_\ell)$ are pairwise distinct and new, and we call them \emph{auxiliary vertices}. 
	Call this gadget graph an \emph{$L_{((i_0,i_1),(i_2,i_3))}$-arrow with end-vertices $x$ and $y$}.
	\label{Ledge}
	\item We define $p=3D^4$ and  replace every edge $(x,y)\in R^\mathcal{A}$  by $P^d_{\ell,p}(u_0,v_\ell)$ and additional edges $\{x,u_0\}$ and $\{v_\ell,y\}$. 	Here all vertices of $P^d_{\ell,p}(u_0,v_\ell)$ are pairwise distinct and new, and we call them \emph{auxiliary vertices}. 
	Call this gadget graph an \emph{$R$-arrow with end-vertices $x$ and $y$}.
	\label{Redge}
	
\end{enumerate}
All vertices, that are not \emph{auxiliary}, are called \emph{original vertices}. Note that from the location $p$ of the gadget $H^d(v_0,v_\ell)$ uniquely encodes the colour of the original directed coloured edge. Also note that each arrow defined above has a direction as the gadget $H^d(v_0,v_\ell)$ is always located in the first half of the path $P^d_{\ell,p}(u_0,v_\ell)$.
The following is easy to observe from the construction.
\begin{fact}\label{rem:originalverts}
	For every $x\in V$, 
	$x$ is an original vertex iff $x$ is contained in no triangle.
\end{fact}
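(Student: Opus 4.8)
The plan is to argue both implications directly from the explicit description of the gadgets, without invoking the formula $\varphi_{\zigzag}$. First I would record the relevant local structure of the building blocks: in $G^d(u,v)$ the internal vertices induce a clique of size $d-1$ and both $u$ and $v$ are adjacent to all of them (while $u\not\sim v$); in $H^d(u,v)$ the four sets of internal vertices $\{u_i\}$, $\{u'_j\}$, $\{v_i\}$, $\{v'_j\}$ each induce a clique, $u$ is adjacent to every $u_i$ and every $u'_j$, and $v$ is adjacent to every $v_i$ and every $v'_j$; and in a path gadget $P^d_{\ell,p}(u_0,v_\ell)$ the vertex sets of its $\ell+1$ constituent copies of $G^d$ and $H^d$ are pairwise disjoint apart from the chained endpoints, with the only edges joining two distinct constituent copies being the connectors $\{v_i,u_{i+1}\}$. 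I would also use that $d=2D^2+D^4+1$ is large enough that all the cliques above have size at least three, and that $\ell=2(3D^4+1)\geq 1$.

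For the implication that an original vertex lies in no triangle, I would observe that in $G$ every edge at an original vertex $x$ is an attaching edge $\{x,u_0\}$ or $\{x,v_\ell\}$ of some arrow incident to $x$ in $\mathcal{A}$, so every neighbour of $x$ is the first vertex $u_0$ or the last vertex $v_\ell$ of the path $P^d_{\ell,p}$ of such an arrow, and is therefore auxiliary. Since distinct arrows receive pairwise distinct, fresh vertex sets and no edge of $G$ joins vertices of two distinct arrows, two neighbours of $x$ belonging to distinct arrows are non-adjacent. Two neighbours of $x$ belonging to the same arrow can occur only when the corresponding tuple of $\mathcal{A}$ is a self-loop (possible for tuples in $R^\mathcal{A}$, in some $L_k^\mathcal{A}$, or in an $E_{i,j}^\mathcal{A}$ at the root), and then the two neighbours are $u_0$ and $v_\ell$ of one copy of $P^d_{\ell,p}$; but the only neighbours of $u_0$ in $G$ besides $x$ lie in the $0$-th constituent gadget, the only neighbours of $v_\ell$ besides $x$ lie in the $\ell$-th one, and since $\ell\geq 1$ these two gadgets are vertex-disjoint, so $u_0\not\sim v_\ell$. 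Hence no two neighbours of $x$ are adjacent and $x$ is in no triangle.

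For the converse, that every auxiliary vertex lies in a triangle, I would note that each auxiliary vertex belongs to some constituent copy of $G^d$ or $H^d$ of some arrow — this covers the chained endpoints $u_i,v_i$ as well, since $u_i$ (resp.\ $v_i$) is the first (resp.\ second) distinguished vertex of the $i$-th gadget — and that a triangle contained in a gadget is a triangle of $G$. It then suffices to check that in $G^d(u,v)$ and in $H^d(u,v)$ every vertex lies in a triangle: in $G^d$ the triangles $u,w_1,w_2$, $v,w_1,w_2$ and $w_1,w_2,w_3$ on internal vertices $w_i$ witness this for $u$, for $v$ and for each internal vertex, and in $H^d$ the triangles $u,u_0,u_1$, $v,v_0,v_1$ together with the analogous triangles inside the $u_i$-, $u'_j$-, $v_i$- and $v'_j$-cliques witness it for every vertex; all of these exist because $d$ is large. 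The only points requiring a little care are this size bound on $d$ (so that the internal cliques are nontrivial) and the bookkeeping for self-loop arrows in the first implication; given the chosen parameters both are routine, and I do not anticipate a real obstacle.
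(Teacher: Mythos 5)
Your proof is correct, and it is exactly the direct gadget-by-gadget verification that the paper leaves implicit: the paper states this as a Fact that is ``easy to observe from the construction'' and gives no proof. Your treatment of the two potentially delicate points --- that self-loop arrows do not make $u_0$ and $v_\ell$ adjacent (since $\ell\geq 1$ and the only inter-gadget edges are the connectors), and that $d$ is large enough for the internal cliques of $G^d$ and $H^d$ to supply triangles through every vertex, including the distinguished endpoints --- is precisely the bookkeeping the authors are waving at.
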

We let $\delta(x)$ be a formula in the language of 
undirected graphs, saying `$x$ is an original vertex', which is easy to do by Fact~\ref{rem:originalverts}. We further let $\beta(x)$ be a formula saying `$x$ is an internal vertex of either an $E_{i,j}$-arrow, or an $F_k$-arrow, or an $L_k$-arrow, or an $R$-arrow for any $i,j\in \indexSetRotation$, $k\in \indexSetH$'. Here an `internal vertex' of an arrow refers to any vertex on this arrow except the two endpoints.
We now translate the formula $\varphi_{\zigzag}$ into a formula $\psi_{\zigzag}$
in the language of undirected graphs using the following first-order formulas $\alpha^E_{i,j}$, $\alpha^F_{k}$, $\alpha^L_{k}$ and $\alpha^R$. 
Let $\alpha^E_{i,j}(x,y)$ 
say `$x$ and $y$ are the end-vertices of an induced $E_{i,j}$-arrow' for $i,j\in \indexSetRotation$, similarly, let $\alpha^F_{k}(x,y)$ 
say `$x$ and $y$ are the end-vertices of an induced $F_k$-arrow' for $k\in \indexSetH$. Furthermore let $\alpha^L_{k}(x,y)$ 
say `$x$ and $y$ are the end-vertices of an induced $L_k$-arrow' for $k\in \indexSetH$ and $\alpha^R(x,y)$ 
say `$x$ and $y$ are the end-vertices of an induced $R$-arrow' . Given $\varphi_{\zigzag}$, formula $\psi_{\zigzag}$ is obtained as follows. In $\varphi_{\zigzag}$ we replace each
expression $E_{i,j}(x,y)$ by $\alpha^E_{i,j}(x,y)$, each $F_k(x,y)$ by $\alpha^F_k(x,y)$, each $L_k(x,y)$ by $\alpha^L_k(x,y)$ and each $R(x,y)$ by $\alpha^R(x,y)$. In addition, we relativise all quantifiers to the
original vertices (replacing every expression of the form $\exists x \,\chi$ by $\exists x\,(\delta(x)\wedge \chi)$ and every expression of the form $\forall x\, \chi$ by $\forall x\,(\delta(x)\rightarrow \chi)$). Let us call the resulting formula $\psi_{\zigzag}'$. Then we set $\psi_{\zigzag}$ to be the conjunction of the formula $\psi_{\zigzag}'$ and the formula $\forall x (\lnot \delta(x)\rightarrow \beta(x))$.
Let $\mathcal{P}_\psi:=\{G\in C_{d}\mid G\models \psi_{\zigzag}\}$. 
In the following, we show that $\mathcal{P}_{\psi} $ is a family of expanders, which directly implies non-testability of $\mathcal{P}_{\psi}$ (by Lemma \ref{lem:FPS19} below). We remark that one could also prove the non-testability of $\mathcal{P}_\psi$ by showing that the aforementioned transformation (from $\sigma$-structures to simple graphs) is a (local) reduction that preserves the testability of properties. 
\begin{lemma}\label{lemma:undirected_expander}
	The models of $\psi_{\zigzag}$ is a family of $\xi$-expanders, for some constant $\xi>0$.
\end{lemma}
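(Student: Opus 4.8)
The plan is to reduce the expansion of a model $G$ of $\psi_{\zigzag}$ to the expansion of the underlying graph $U(\mathcal{A})$ of the corresponding $\sigma$-structure $\mathcal{A}$, which is known to be an $\epsilon$-expander by Theorem~\ref{thm:expansionOfModels}. First I would establish a structural correspondence: from the definitions of $\delta(x)$, $\beta(x)$, and the arrow-formulas $\alpha^E_{i,j},\alpha^F_k,\alpha^L_k,\alpha^R$, together with Fact~\ref{rem:originalverts}, argue that every model $G$ of $\psi_{\zigzag}$ is exactly of the form obtained from some $\mathcal{A}\models\varphi_{\zigzag}$ by the gadget-replacement construction (Steps \ref{Eedge}--\ref{Redge}). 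The key points are that original vertices (those in no triangle) carry the structure of $\mathcal{A}$, that the conjunct $\forall x(\lnot\delta(x)\rightarrow\beta(x))$ forces every non-original vertex to lie on exactly one arrow, and that the location of the single $H^d$-gadget inside each $P^d_{\ell,p}$ uniquely decodes both the relation symbol and the index tuple, so $\mathcal{A}$ is well-defined and satisfies $\varphi_{\zigzag}$ because $\psi'_{\zigzag}$ is a faithful relativised translation. I would then record that $G$ has $|A| + (\text{number of tuples in }\mathcal{A})\cdot(\ell(d+1)+2d-2)$ vertices, and since $\mathcal{A}$ is $d$-regular (Lemma~\ref{lem:exactFormOfModels}), the number of tuples is $\Theta(|A|)$, so $|V(G)| = \Theta(|A|)$ with constants depending only on $d$.

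Next I would prove the expansion bound directly using Definition~\ref{def:expansionRatio}. Let $S\subseteq V(G)$ with $|S|\leq |V(G)|/2$. Split $S$ into $S_{\mathrm{orig}}$, the original vertices in $S$, and $S_{\mathrm{aux}}$, the auxiliary vertices in $S$. I would consider two regimes. If a constant fraction of $S$ (say at least half) consists of auxiliary vertices lying on arrows that are not entirely contained in $S$ (together with their two endpoints), then each such arrow already contributes at least one edge to $\langle S,\overline{S}\rangle_G$; more carefully, since each arrow has only $\ell(d+1)+2d$ vertices, a bounded-size path-like gadget, any arrow with at least one but not all of its internal vertices in $S$, or with an internal vertex in $S$ but an endpoint outside, contributes a crossing edge, and an arrow entirely inside $S$ but with an endpoint outside also contributes one — so we get $|\langle S,\overline S\rangle_G|\geq \Omega(|S_{\mathrm{aux}}|/(\ell d))$ unless most auxiliary vertices of $S$ lie on arrows whose closure (internal vertices plus both endpoints) is contained in $S$. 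In the remaining regime, "closed" arrows inside $S$ have both endpoints (original vertices) in $S$, so set $S'$ to be $S_{\mathrm{orig}}$ together with the endpoints of all such closed arrows; then $S'$ is a set of original vertices with $|S'|\geq \Omega(|S|/(\ell d))$, and I would check $|S'|\leq |A|/2$ (possibly after replacing $S$ by $\overline S$). Applying Theorem~\ref{thm:expansionOfModels} to $U(\mathcal{A})$, the set $S'$ has $\Omega(|S'|)$ crossing edges in $U(\mathcal{A})$; each such crossing edge of $U(\mathcal{A})$ corresponds to an arrow of $G$ with one endpoint in $S'$ and one outside, hence with a vertex in $S$ and a vertex outside $S$, contributing at least one edge to $\langle S,\overline S\rangle_G$ (the first edge $\{x,u_0\}$ or $\{v_\ell,y\}$ leaving $S'$). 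Combining the two regimes gives $h(S)\geq\xi$ for a constant $\xi>0$ depending only on $d$ (and on $\epsilon$ from Theorem~\ref{thm:expansionOfModels}), which is the claim.

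The main obstacle I anticipate is the careful bookkeeping in the case analysis: an auxiliary vertex in $S$ can "hide" far from the original-vertex skeleton, so I must ensure that having many auxiliary vertices of $S$ without a commensurate boundary forces many whole arrows (closure included) inside $S$, which is exactly what lets me pass to a large set $S'$ of original vertices. The gadgets $P^d_{\ell,p}$ are themselves connected graphs of bounded size $\ell(d+1)+2d = O(d)$ (with $\ell = 2(3D^4+1)$ a fixed constant), so any proper nonempty subset of a single arrow's vertex set already incurs a boundary edge; this localizes the analysis arrow-by-arrow and is the reason the constant loss is only polynomial in $d$. A secondary technical point is the $|S'|\leq|A|/2$ normalization — handled by symmetry, since $h$ is defined via a min over sets of size at most half, and $|V(G)|/2$ corresponds (up to the $\Theta$ constants) to $|A|/2$ after the skeleton extraction; if the extracted $S'$ exceeds $|A|/2$ one applies the argument to $A\setminus S'$ and notes that crossing edges are the same. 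Finally, non-testability of $\mathcal{P}_\psi$ follows from Lemma~\ref{lem:FPS19} (the lemma referenced immediately after the statement), exactly as $P_{\zigzag}$'s non-testability followed from Theorem~\ref{thm:expansionOfModels} via Theorem~\ref{thm:Locality}.
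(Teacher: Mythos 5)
Your proposal follows the same broad strategy as the paper's proof: split $S$ into original and auxiliary vertices, and either count arrows that straddle the cut directly, or pass to the original-vertex skeleton and invoke Theorem~\ref{thm:expansionOfModels}. The structural-correspondence part and regime~1 are fine, but there is a genuine gap in regime~2.

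The problem is your claim that ``$|V(G)|/2$ corresponds (up to $\Theta$ constants) to $|A|/2$ after the skeleton extraction''. This is false: auxiliary vertices make up a $1-\Theta(1/(dc_D))$ fraction of $V(G)$, so a set $S$ with $|S|\leq |V(G)|/2$ can contain \emph{all} of the original vertices. Concretely, take $S$ to be $A$ together with the full vertex sets of roughly half the arrows (the exact number making $|S|=|V(G)|/2$). Then every auxiliary vertex of $S$ lies on an arrow entirely contained in $S$, so regime~1 does not fire --- the arrows that do straddle the cut have their endpoints in $S$ but contribute zero vertices to $S_{\mathrm{aux}}$, so they are invisible to your $\Omega(|S_{\mathrm{aux}}|/(\ell d))$ count. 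You are therefore in regime~2 with $S'=S_{\mathrm{orig}}=A$ and $A\setminus S'=\emptyset$, so Theorem~\ref{thm:expansionOfModels} yields only the trivial bound $0$ on $|\langle S',A\setminus S'\rangle_{U(\mathcal A)}|$, whether applied to $S'$ or to $A\setminus S'$; ``crossing edges are the same'' does not help because the expansion bound is in terms of the smaller side. The paper closes this gap with a third case (its case $|S_{\operatorname{original}}|>\tfrac{2}{dc_D}|S|$): at least $\tfrac{d}{2}|S_{\operatorname{original}}|$ arrows have an endpoint in $S_{\operatorname{original}}$, and since $|S_{\operatorname{auxiliary}}|<\tfrac{dc_D-2}{2}|S_{\operatorname{original}}|$, at most $\tfrac{dc_D-2}{2c_D}|S_{\operatorname{original}}|$ of those arrows can be entirely inside $S$; the remaining at least $\tfrac{1}{c_D}|S_{\operatorname{original}}|$ each contribute a crossing edge, independently of the expansion of $U(\mathcal A)$. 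You need such an argument; the vague instruction to ``replace $S$ by $\overline S$'' would only work if you redid the whole two-regime analysis on $\overline S$ and separately verified that $\overline S$ then lands in regime~1 with a bound comparable to $|S|$, a step that is absent from your sketch.
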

\begin{proof}
	Let $G=(V,E)$ be a model of $\psi_{\zigzag}$ and let $\mathcal{A}$ be the corresponding model of $\varphi_{\zigzag}$. Let $S\subset V$ such that $|S|\leq \frac{|V|}{2}$. Let $V_{\operatorname{original}}\sqcup V_{\operatorname{auxiliary}}=V$ be the partition of $V$ into original and auxiliary vertices. Let $S_{\operatorname{original}}:=V_{\operatorname{original}}\cap S$ and $ S_{\operatorname{auxiliary}}:=V_{\operatorname{auxiliary}}\cap S$.
	
	First note that by the above definitions every directed coloured edge in $\mathcal{A}$ corresponds to a constant number $c_D:=2\cdot(3 D^4+1)\cdot ((d+1)+2d)$ of  auxiliary vertices in $V_{\operatorname{auxiliary}}$, where $d=2D^2+D^4+1$. 
	
	Assume $|S_{\operatorname{original}}|> \frac{2}{dc_D} \cdot|S|$. 
	Then there are  $|S|-|S_{\operatorname{original}}|<\frac{dc_D-2}{2}\cdot|S_{\operatorname{original}}|$ vertices in $S_{\operatorname{auxiliary}}$. Hence at least $\frac{d}{2}\cdot|S_{\operatorname{original}}|-\frac{dc_D-2}{2c_D}\cdot|S_{\operatorname{original}}|$ of the arrows have at least one vertex that is not in $S$ and  therefore 
	\begin{align*}
	\langle S,V\setminus S \rangle_G &\geq  \frac{d}{2}\cdot|S_{\operatorname{original}}|-\frac{dc_D-2}{2c_D}\cdot|S_{\operatorname{original}}|\\&=\frac{1}{c_D}\cdot |S_{\operatorname{original}}|\geq \frac{2}{dc_D^2}\cdot|S|.
	\end{align*}
	
	Assume $\frac{1}{2dc_D}|S|< |S_{\operatorname{original}}| \leq  \frac{2}{dc_D}\cdot|S|$. Let $\epsilon= \frac{D^2}{12}$ as defined in the proof of Theorem \ref{thm:expansionOfModels}. 
	Since each edge in the underlying graph $U(\mathcal{A})$ corresponds to exactly one arrow in $G$ we get that $\langle S,V\setminus S \rangle_G\geq \langle S_{\operatorname{original}},V_{\operatorname{original}}\setminus S_{\operatorname{original}} \rangle_{U(\mathcal{A})}$.  Since $\mathcal{A}$ is $d$-regular and every edge gets replaced by $c_D$ auxiliary vertices we get $|V|=(1+\frac{dc_D}{2})|A|$. Then
	\begin{displaymath}
	|S_{\operatorname{original}}|\leq \frac{2}{dc_D}\cdot |S|\leq \frac{1}{dc_D}\cdot |V|= \frac{2+dc_D}{2dc_D}|A|
	\end{displaymath}  
	and $|A\setminus S_{\operatorname{original}}|\geq (\frac{2dc_D}{2+dc_D}-1)|S_{\operatorname{original}}|$.
	Then from Theorem $\ref{thm:expansionOfModels}$ we directly get 
	\begin{align*}
	\langle S,V\setminus S \rangle_G&\geq \langle S_{\operatorname{original}},V_{\operatorname{original}}\setminus S_{\operatorname{original}} \rangle_{U(\mathcal{A})}\\&=\epsilon \min \{|S_{\operatorname{original}}|,|A\setminus S_{\operatorname{original}}|\}\\
	&\geq \epsilon\cdot \frac{1}{2dc_D}\cdot\frac{dc_D-2}{2+dc_D}\cdot|S|.
	\end{align*}
	
	Now assume $|S_{\operatorname{original}}|\leq  \frac{1}{2dc_D}\cdot |S|$. Therefore there are $|S|-|S_{\operatorname{original}}|\geq |S|-\frac{1}{2dc_D}\cdot |S|$ in $S_{\operatorname{auxiliary}}$. Of these at least $\frac{2dc_D-1}{2dc_D}\cdot |S|-|S_{\operatorname{original}}|c_D\geq \frac{2dc_D-1-c_D}{2dc_D}|S|$ vertices in $S_{\operatorname{auxiliary}}$ that are not in a connected component with any element from $S_{\operatorname{original}}$ in the graph $G[S]$. Since any connected component of $G[S]$ with no vertices in $S_{\operatorname{original}}$ contains at most $c_d$ vertices, we get that 
	\begin{align*}
	\langle S,V\setminus S \rangle_G \geq \frac{2dc_D-c_D-1}{2dc_D^2}|S|.
	\end{align*}
	By setting $\xi=\min\{\frac{2dc_D-c_D-1}{2dc_D^2},\epsilon \cdot\frac{1}{2dc_D}\cdot\frac{dc_D-2}{2+dc_D},\frac{2}{dc_D^2}\}>0$ 
	we proved the claimed.
\end{proof}
One can then prove that the property $P_{\psi_{\zigzag}}$ is not testable by using analogous arguments as in the proof of Theorem~\ref{thm:nonTestabilityForStructures}. In the following, we present a slightly different proof
using a result from \cite{fichtenberger2019every}. 
We first introduce a definition of ``hyperfinite graphs''.
\begin{definition}\label{def:hyperfinite}
	Let $\varepsilon\in (0,1]$ and $k\geq 1$. A graph $G$ with maximum degree bounded by $d$ is called \emph{$(\varepsilon,k)$-hyperfinite} if one can remove at most $\varepsilon d |V|$ edges from $G$ so that each connected component of the resulting graph has at most $k$ vertices. For a function $\rho: (0,1]\rightarrow\mathbb{N}^+$, a graph $G$ is called \emph{$\rho$-hyperfinite} if $G$ is $(\varepsilon,\rho(\varepsilon))$-hyperfinite for every $\varepsilon>0$. A set (or property) $\Pi$ of graphs is called \emph{$\rho$-hyperfinite} if every graph in $\Pi$ is $\rho$-hyperfinite. A set (or property) $\Pi$ of graphs is called \emph{hyperfinite} if it is $\rho$-hyperfinite for some function $\rho$. 
\end{definition}

Now we recall that a graph property is a set of graphs that is invariant under graph isomorphism. A subproperty of a property $P$ is a subset of graphs in $P$ that is also invariant under graph isomorphism.
\begin{lemma}[Corollary 1.1 in \cite{fichtenberger2019every}]\label{lem:FPS19}
	Let $C_d$ be the class of graphs of bounded maximum degree $d$. Let $P\subseteq C_d$ be a property that does not contain an infinite hyperfinite subproperty, and let $P'\subseteq C_d$ be arbitrary property such that $P\cap P'$ is an infinite set. Then $P\cap P'$ is not testable.    
\end{lemma}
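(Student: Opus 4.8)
The plan is to obtain this statement as a direct corollary of the main structural theorem of Fichtenberger, Peng and Sohler~\cite{fichtenberger2019every}: in the bounded-degree model, every testable property is \emph{either} finite \emph{or} contains an infinite hyperfinite subproperty (in the sense of Definition~\ref{def:hyperfinite}). I would argue by contraposition. Suppose, for contradiction, that $P\cap P'$ is testable. Note first that $P\cap P'$ is itself a property of $C_d$, since it is the intersection of two isomorphism-closed classes and is therefore isomorphism-closed. By hypothesis $P\cap P'$ is an infinite set, so the ``finite'' alternative of the cited theorem is ruled out, and hence $P\cap P'$ must contain an infinite hyperfinite subproperty $Q$.

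Next I would observe that $Q\subseteq P\cap P'\subseteq P$, and that $Q$ is isomorphism-closed and hyperfinite by the choice of $Q$; hence $Q$ is an infinite hyperfinite subproperty of $P$. This contradicts the assumption that $P$ contains no infinite hyperfinite subproperty. Consequently $P\cap P'$ cannot be testable, which is exactly the claim.

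I do not expect a genuine obstacle in this argument; the only point needing a word of justification is the passage from ``$Q$ is a subproperty of $P\cap P'$'' to ``$Q$ is a subproperty of $P$'', which is immediate since a subproperty is just an isomorphism-closed subset and $Q\subseteq P$. A fully self-contained proof would instead require reproving the theorem of~\cite{fichtenberger2019every}, whose argument combines the reduction of any constant-query tester to a canonical neighbourhood-sampling tester (as in~\cite{CzumajPS16,goldreich2011proximity}) with a compactness argument over the space of $r$-neighbourhood-type distributions to extract a hyperfinite subfamily from an infinite testable property; for the purposes of the present paper it suffices to invoke~\cite{fichtenberger2019every} as a black box.
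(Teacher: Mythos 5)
Your derivation is correct and matches the paper's treatment: the paper states this lemma as Corollary~1.1 of~\cite{fichtenberger2019every} without reproving it, and your contrapositive argument (an infinite testable $P\cap P'$ would contain an infinite hyperfinite subproperty $Q$, which, being a subset of $P$, contradicts the hypothesis on $P$) is exactly the intended one-line deduction from the main theorem of that reference.
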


Now we are ready to finish the proof of Theorem \ref{thm:simpleDelta2}.
\begin{proof}[Proof of Theorem \ref{thm:simpleDelta2}]
	We show that the property $P_{\psi_{\zigzag}}$ does not contain an infinite hyperfinite subproperty. If this is true, then by applying Lemma \ref{lem:FPS19} with $P=P_{\psi_{\zigzag}}$ and $P'=C_d$, we have that $P_{\psi_{\zigzag}}$ is not testable. This will then finish the proof of Theorem \ref{thm:simpleDelta2}.
	
	Suppose towards contradiction that $P_{\psi_{\zigzag}}$ contains an infinite hyperfinite subproperty. That is, there exists an infinite subset $Q\subseteq P_{\psi_{\zigzag}}$ and a function $\rho : (0, 1] \rightarrow \mathbb{N}$ such that $Q$ is $(\varepsilon, \rho(\varepsilon))$-hyperfinite for every $\varepsilon > 0$. 	
	That is, for any graph $G=(V,E)\in Q$, for any $\varepsilon>0$, we can remove $\varepsilon d |V|$ edges from $G$ so that each connected component of the resulting graph has at most $\rho(\varepsilon)$ vertices. Now let $\varepsilon$ be an arbitrarily small constant such that $\rho(\varepsilon)\ll |V|$ and that $\varepsilon\leq \frac{\xi}{100d}$, where $\xi$ is the constant from Lemma \ref{lemma:undirected_expander}. Let $V_1,V_2,\dots$ be a vertex partitioning of $V$ such that $|V_i|\leq \rho(\varepsilon)$ and the number of edges crossing different parts is at most $\varepsilon d |V|$. Let $S$ be a vertex subset that is a union of the first $j$ parts $V_1,\cdots,V_j$ such that $|\cup_{i\leq j-1} V_i|< \frac{|V|}{3}$ and $|\cup_{i\leq j} V_i|\geq\frac{|V|}{3}$. Note that such a set always exists as $|V_i|\leq \rho(\varepsilon)\ll |V|$ and furthermore, $|S|=|\cup_{i\leq j} V_i|<\frac{|V|}{2}$. Now on one hand, $|\langle S, \bar{S}\rangle|$ is at most the number of edges crossing different parts and thus at most $\varepsilon d |V|$. On the other hand, since $G\in P_{\psi_{\zigzag}}$, $G$ is a $\xi$-expander for some constant $\xi$ from Lemma \ref{lemma:undirected_expander}. Thus, $|\langle S, \bar{S}\rangle|\geq \xi \frac{|V|}{3} > \varepsilon d |V|$, which is a contradiction by our setting of $\varepsilon$. Therefore,  $P_{\psi_{\zigzag}}$ does not contain an infinite hyperfinite subproperty. This finishes the proof of the theorem. 
\end{proof}


\section{On the testability of all $\Sigma_2$-properties}\label{sec:testableSigma2}
In this section we let $\sigma=\{R_1,\dots,R_m\}$ be any relational signature and $C_d$ the set of $\sigma$-structures of bounded degree $d$. We prove the following.
\begin{theorem}\label{thm:sigma2}
Every first-order property defined by a $\sigma$-sentence in $\Sigma_2$ is testable in the bounded-degree model.
\end{theorem}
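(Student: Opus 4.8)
The plan is to reduce testing a $\Sigma_2$-property to testing a finite union of $\forall^*$-properties, using that $\forall^*$-properties are testable (testing absence of finitely many induced substructures) and that testable properties are closed under finite union. So let $\varphi = \exists x_1\cdots\exists x_k\,\psi(x_1,\dots,x_k)$ with $\psi$ being $\Pi_1$, i.e. of the form $\forall y_1\cdots\forall y_\ell\,\chi(\bar x,\bar y)$ with $\chi$ quantifier-free. Fix a witness assignment $\bar a = (a_1,\dots,a_k)$ to the existential variables. The key structural observation is that in a bounded-degree structure, the quantifier-free formula $\chi(\bar a, \bar y)$ can only ``see'' the elements $\bar y$ through (i) equalities/inequalities among the $y_i$ and the $a_j$, and (ii) which tuples of $R_1^\mathcal{A},\dots,R_m^\mathcal{A}$ hold among the named elements. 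Since $\chi$ is quantifier-free of bounded width, there is a constant $R$ (depending only on $\varphi$ and $\sigma$) such that whether $\mathcal{A}\models\psi(\bar a)$ depends only on the isomorphism type of the substructure of $\mathcal{A}$ induced on the $R$-ball around $\{a_1,\dots,a_k\}$: outside that ball, every tuple touching some $y_i$ that is ``far'' from all $a_j$ contributes in a way that can be made uniform. Concretely I would enumerate the finitely many possible ``witness configurations'' $W$ — an isomorphism type of a bounded-size structure on $\le k + (\text{const})$ elements together with a distinguished $k$-tuple — for which planting $W$ makes $\psi$ hold for that tuple while respecting the degree bound $d$.

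Next I would define, for each such valid witness configuration $W_i$, the property
\[
Q_i := \{\mathcal{B}\in C_d \mid \mathcal{B}\ \text{contains no ``bad'' induced substructure for}\ W_i\},
\]
where the ``bad'' substructures are exactly those finitely many bounded-size induced substructures whose presence would prevent us from planting $W_i$ somewhere (e.g. they force a degree violation everywhere, or they are themselves forbidden by $\chi$ in a location we cannot repair). Each $Q_i$ is a $\forall^*$-property and hence testable by the argument sketched in the introduction (testing induced-substructure-freeness as in \cite{GoldreichRon2002}). I would then let $P_i$ be the intersection of $P_\varphi$ with the class of structures that are ``compatible with witness type $W_i$'', and argue two things: first, $P_\varphi = \bigcup_i P_i$ (every model of $\varphi$ has some witness tuple, whose induced $R$-ball type is one of the $W_i$); second, $P_i$ and $Q_i$ are \emph{indistinguishable} in the sense of the paper's techniques paragraph — any $\mathcal{A}\models P_i$ is $O(1/n)$-close to $Q_i$ (delete the $O(1)$ tuples of the planted witness if needed — actually a model of $P_i$ is already in $Q_i$), and conversely any large $\mathcal{A}\in Q_i$ is $\varepsilon$-close to $P_i$: since $\mathcal{A}$ has no bad substructure, there is room to plant $W_i$ by modifying only $O(1)$ tuples, after which $\psi$ holds at the planted tuple and $\mathcal{A}$ becomes a model of $\varphi$; the degree bound is preserved because we chose $W_i$ to respect it and we plant in a region of low enough degree, which exists because a bad substructure would otherwise appear. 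Because the symmetric distance between $P_i$ and $Q_i$ is $O(1/n)\le \varepsilon$ for $n$ large, an $\varepsilon/2$-tester for $Q_i$ (handling small $n$ by brute force) is an $\varepsilon$-tester for $P_i$; then $\bigcup_i P_i = P_\varphi$ is testable by closure under finite union \cite{goldreich2017introduction}.

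The main obstacle I anticipate is the interaction between the existentially quantified witnesses and the universally quantified variables: planting the substructure $W_i$ that satisfies $\psi$ at the witness tuple must not create a new violation of $\forall\bar y\,\chi$ elsewhere, and must not break the degree bound. This is exactly the ``chain reaction'' issue the paper warns about. The resolution should exploit that the degree bound $d$ limits how much structure the universal part can impose: a violating tuple for $\forall\bar y\,\chi$ after planting would have to involve the newly added elements, hence lie within distance $R$ of the planted region, so only $O(1)$ potential new violations arise and each can be neutralized by adding $O(1)$ further tuples (still within a bounded region) — and the fact that $\mathcal{A}\in Q_i$ guarantees the ambient structure does not obstruct this local repair. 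Making precise the finite list of configurations $W_i$ and forbidden substructures, and verifying the local-repair argument respects $d$, is where the real work lies; everything else is the routine ``plant constant size / brute force when small / union'' scheme.
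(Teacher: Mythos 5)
Your overall architecture is the same as the paper's: decompose $P_\varphi$ into a finite union over witness types, show each piece is indistinguishable (plant / delete a constant-size substructure) from a $\Pi_1$-definable property, and conclude via testability of $\Pi_1$ and closure of testability under finite union. However, two points need attention. First, the claim that whether $\mathcal{A}\models\forall\bar y\,\chi(\bar a,\bar y)$ ``depends only on the isomorphism type of the $R$-ball around $\{a_1,\dots,a_k\}$'' is false: the universal quantifiers range over all of $A$, so $\psi(\bar a)$ imposes a global $\Pi_1$-constraint (e.g.\ it can forbid certain induced $\ell$-element substructures anywhere in the structure), not just a constraint near the witness. What is true, and what the paper isolates as a separate claim, is that the \emph{interaction} part is bounded: since each $a_i$ has degree at most $d$, at most $k\cdot d$ tuples $\bar b\in A^\ell$ can satisfy only clauses that force a relational tuple containing both an element of $\bar a$ and an element of $\bar b$. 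Your $Q_i$ must therefore be the global $\Pi_1$-constraint (``every $\ell$-tuple induces one of the good types''), not merely ``nothing obstructs planting $W_i$''. Relatedly, your parenthetical ``actually a model of $P_i$ is already in $Q_i$'' is wrong — the constantly many tuples interacting with the witness may violate $Q_i$ and must be deleted — though your first instinct (delete $O(1)$ tuples) is the correct repair.

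The more serious gap is in the converse (planting) direction. Your proposed resolution of the chain-reaction problem — after planting $W_i$, ``each new violation can be neutralized by adding $O(1)$ further tuples'' — is exactly the trap the paper warns about: each such repair can create new violations of $\forall\bar y\,\chi$, and nothing in your argument bounds the cascade or shows it terminates within $O(1)$ modifications while respecting the degree bound. The paper avoids any post-planting repair altogether by building the right condition into the definition of the good types: a pair $(j,\mathcal{H})$ is admitted into the $\Pi_1$ property only if its positive interaction part is empty \emph{and} the disjoint union of the witness structure $\mathcal{M}$ with (the non-interacting part of) $\mathcal{H}$ is itself a model of $\varphi$. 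With that condition, planting $\mathcal{M}$ on a set of freshly isolated elements in any large model of the $\Pi_1$ property immediately yields a model of $\varphi$ — tuples far from the plant satisfy an empty-interaction clause, and mixed tuples are handled by substituting far-away elements and invoking the disjoint-union condition. This admissibility condition is the missing idea in your proposal; without it, the indistinguishability claim does not go through as written.
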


We adapt the notion of indistinguishability of~\cite{alon2000efficient} from the dense model to the bounded degree model.

\begin{definition}\label{def:indistinguishability}
	Two properties $P,Q\subseteq C_d$ are called \emph{indistinguishable} if for every $\epsilon \in (0,1)$ there exists $N=N(\epsilon)$ such that for every structure $\mathcal{A}\in P$ with $|A|>N$ there is a structure $\tilde{\mathcal{A}}\in Q$ with the same universe, that is $\epsilon$-close to $\mathcal{A}$; and for every $\mathcal{B}\in Q$ with $|B|>N$ there is a structure $\tilde{\mathcal{B}}\in P$ with the same universe, that is $\epsilon$-close to $\mathcal{B}$.
\end{definition}
The following lemma follows from the definitions, and is similar to~\cite{alon2000efficient}, though we make use of the canonical testers for bounded degree graphs (\cite{CzumajPS16,goldreich2011proximity}). 
 
\begin{lemma}
	If $P,Q\subseteq C_d$ are indistinguishable properties, then $P$ is testable on $C_d$ if and only if $Q$ is testable on $C_d$.
\end{lemma}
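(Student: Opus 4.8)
The plan is to unwind the definitions of testability and indistinguishability and show that any $\epsilon$-tester for $Q$ can be converted into an $\epsilon'$-tester for $P$ (for a suitable $\epsilon'$ depending on $\epsilon$), with the roles of $P$ and $Q$ symmetric. First I would recall the structure of a tester: by the results on canonical testers for bounded-degree graphs and structures \cite{CzumajPS16,goldreich2011proximity} (as used in \cite{AdlerH18}), any constant-query $\epsilon$-tester can be taken to be \emph{canonical}, meaning it samples a constant number of elements uniformly at random, explores their local neighbourhoods up to some constant radius, and accepts or rejects as a function of the isomorphism type of the sample seen. Equivalently, via Theorem~\ref{thm:Locality}, testability is the same as locality, so it suffices to prove that $P$ is local on $C_d$ if and only if $Q$ is local on $C_d$.

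I would use the locality formulation, as it is cleanest here. Suppose $Q$ is local on $C_d$; I want to show $P$ is local. Fix $\epsilon \in (0,1]$. Choose $\epsilon_1 := \epsilon/2$, let $r := r(\epsilon_1)$, $\lambda := \lambda(\epsilon_1)$, $n_0 := n_0(\epsilon_1)$ be the locality parameters witnessing that $Q$ is $\epsilon_1$-local, and let $N := N(\epsilon/2)$ be the constant from the indistinguishability of $P$ and $Q$ applied with proximity parameter $\epsilon/2$. Now take any $\mathcal{A} \in P$ and $\mathcal{B} \in C_d$, both on $n \geq \max\{n_0, N\}$ vertices, with $\sum_i |\rho_{\mathcal{A},r}(\{\tau_i\}) - \rho_{\mathcal{B},r}(\{\tau_i\})| < \lambda$. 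By indistinguishability there is $\tilde{\mathcal{A}} \in Q$ with the same universe that is $(\epsilon/2)$-close to $\mathcal{A}$. The key step is a Lipschitz-type estimate: since $\tilde{\mathcal{A}}$ is obtained from $\mathcal{A}$ by modifying at most $(\epsilon/2) d n$ tuples, and each tuple modification changes the $r$-type of only a bounded number of elements (those within distance $r$ of an element of the tuple, of which there are at most a constant $c = c(d,r)$), we get $\sum_i |\rho_{\mathcal{A},r}(\{\tau_i\}) - \rho_{\tilde{\mathcal{A}},r}(\{\tau_i\})| \leq c \cdot \epsilon d / 2 =: \lambda''$. By choosing the indistinguishability proximity parameter small enough (and possibly re-choosing $\lambda$ in terms of $\epsilon$, $d$, $r$) we can ensure $\lambda'' + \lambda < \lambda(\epsilon_1)$ — so I would set things up in the right order: first pick $\epsilon_1 = \epsilon/2$ and its locality parameters, then choose the indistinguishability parameter $\epsilon'' \leq \min\{\epsilon/2, \lambda(\epsilon_1)/(2cd)\}$, and finally declare $\lambda(\epsilon) := \lambda(\epsilon_1)/2$. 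Then the triangle inequality gives $\sum_i |\rho_{\tilde{\mathcal{A}},r}(\{\tau_i\}) - \rho_{\mathcal{B},r}(\{\tau_i\})| < \lambda(\epsilon_1)$, so by $\epsilon_1$-locality of $Q$, $\mathcal{B}$ is $\epsilon_1$-close to some $\mathcal{C} \in Q$. A second application of indistinguishability (on $\mathcal{C}$, if $|C| = n > N$) yields $\tilde{\mathcal{C}} \in P$ that is $(\epsilon/2)$-close to $\mathcal{C}$, hence $\mathcal{B}$ is $(\epsilon_1 + \epsilon/2) = \epsilon$-close to $\tilde{\mathcal{C}} \in P$. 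This shows $P$ is $\epsilon$-local, and since $\epsilon$ was arbitrary, $P$ is local, hence testable by Theorem~\ref{thm:Locality}. By symmetry of Definition~\ref{def:indistinguishability}, the converse direction (if $P$ testable then $Q$ testable) is identical with the roles exchanged.

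The main obstacle is the bookkeeping of constants and the order in which parameters are chosen: one must pick the target proximity $\epsilon$, peel off a factor (say $\epsilon/2$) for the indistinguishability perturbation, and then make the indistinguishability parameter small enough relative to the \emph{locality} threshold $\lambda$ of the other property — which is only revealed after fixing that intermediate proximity $\epsilon_1$. There is also a mild technical point that indistinguishability only provides approximating structures for universes of size exceeding $N(\epsilon)$, so all statements must be read for $n$ large enough, which is harmless for locality (which is itself an "eventually" statement with its own $n_0$) — one just takes the maximum of the relevant thresholds. A final routine point is the Lipschitz estimate bounding the change in the $r$-type distribution by the number of modified tuples times a constant $c(d,r)$; this is standard and uses only that in a degree-$d$ structure the $r$-ball around any element has bounded size, so a single tuple modification affects the $r$-type of at most $c(d,r)$ elements. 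Everything else is definition-chasing.
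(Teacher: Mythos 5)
Your proof is correct, but it takes a genuinely different route from the paper's. The paper works directly with the canonical-tester formulation: it takes the $(\epsilon/2)$-tester for $P$, amplifies its success probability to $5/6$ by repetition, and argues that because this tester only queries a constant number of $r$-balls, the probability that it ever touches one of the $\mathcal O(n/(C d^{C+2}))$ modified tuples distinguishing $\mathcal B\in Q$ from its $P$-approximant $\tilde{\mathcal B}$ is at most $1/6$; a union bound then gives a $2/3$-tester for $Q$. You instead pass through the locality characterisation of testability (Theorem~\ref{thm:Locality}) and show that locality is preserved under indistinguishability, via a Lipschitz estimate: modifying $O(\epsilon'' d n)$ tuples in a degree-$d$ structure perturbs the $r$-type distribution by at most $2c(d,r)\epsilon'' d$ in $\ell_1$. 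Your approach avoids the amplification step and the explicit probability-of-collision calculation; the paper's avoids invoking Theorem~\ref{thm:Locality} and is closer to the bare definitions. Both rest, directly or indirectly, on the canonical-tester results of \cite{CzumajPS16,goldreich2011proximity}.

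Two minor bookkeeping slips in your write-up (neither affects correctness of the argument): in the Lipschitz step you wrote ``modifying at most $(\epsilon/2)dn$ tuples'' where it should be $\epsilon'' dn$ since the first indistinguishability application is made at parameter $\epsilon''$, not $\epsilon/2$; and the constant in the Lipschitz bound should carry an extra factor of $2$ (each element whose $r$-type changes moves $1/n$ of mass out of one type and into another, contributing $2/n$ to the $\ell_1$ distance), so the correct requirement is $\epsilon''\lesssim\lambda(\epsilon_1)/(4c d)$ rather than $\lambda(\epsilon_1)/(2cd)$. As you say, the order of parameter choices matters: $\lambda(\epsilon_1)$ for $Q$ must be fixed before $\epsilon''$ can be chosen small enough to beat it. You also correctly flag that the two applications of indistinguishability may be at different proximity parameters ($\epsilon''$ and $\epsilon/2$), so $N$ should be taken as the larger of the two resulting thresholds.
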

\begin{proof}
We show that if $P$ is testable, then $Q$ is also testable. The other direction follows by the same argument. Let $\epsilon>0$. Since $P$ is testable, there exists an $\frac{\epsilon}{2}$-tester for $P$ with success probability at least $\frac23$. Furthermore, we can assume that the tester (called canonical tester) behaves as follows (see \cite{CzumajPS16,goldreich2011proximity}): it first uniformly samples a constant number of elements, then explores the union of $r$-balls around all sampled elements for some constant $r>0$, and makes a deterministic decision whether to accept, based on an isomorphic copy of the explored substructure. Let  $C=C(\frac{\epsilon}{2},d)$ denote the number of queries the tester made on the input structure. By repeating this tester and taking the majority, we can have a tester $T$ with $c_1\cdot C$ queries and success probability at least $\frac{5}{6}$ for some integer $c_1>0$.

	Let $N$ be a number such that if a structure $\mathcal{B}$ with $n>N$ elements satisfies $Q$, then there exists a  $\tilde{\mathcal{B}}\in P$ with the same universe such that $\dist(\mathcal{B},\tilde{\mathcal{B}})\leq \min\{\frac{\epsilon}{2},\frac{1}{c_2 C\cdot d^{C+2}}\}dn$ for some large constant $c_2>0$. Now we give an $\epsilon$-tester for $Q$. If the input structure $\mathcal{B}$ has size at most $N$, we can query the whole input to decide if it satisfies $Q$ or not. If its size is larger than $N$, then we use the aforementioned $\frac{\epsilon}{2}$-tester for $P$ with success probability at least $\frac{5}{6}$. If $\mathcal{B}$ satisfies $Q$, then there exists $\tilde{\mathcal{B}}\in P$ that differs from $\mathcal{B}$ in no more than ${1}/(c_2 C\cdot d^{C+2}) dn$ places. Since the algorithm samples at most $c_1\cdot C$ elements and queries the $r$-balls around all these sampled elements, for $r< C$, we have that with probability at least $1-\frac{1}{6}$, the algorithm does not query any part where $\mathcal{B}$ and $\tilde{\mathcal{B}}$ differ, and thus its output is correct with probability at least $\frac{5}{6}-\frac{1}{6}=\frac23$. If $\mathcal{B}$ is $\epsilon$-far from satisfying $Q$ then it is $\frac{\epsilon}{2}$-far from satisfying $P$ and with probability at least $\frac{5}{6}>\frac23$, the algorithm will reject $\mathcal{B}$. Thus $Q$ is also testable.
\end{proof}

\paragraph{High-level idea of proof of Theorem~\ref{thm:sigma2}.} 
Let $\varphi\in \Sigma_2$. We prove that the property defined by $\varphi$ can be written as the union of properties, each of which is defined by another formula $\varphi'$ in $\Sigma_2$ where the structure induced by the existentially quantified variables is a fixed structure $\mathcal{M}$ (see Claim \ref{claim:JM}). With some further simplification of $\varphi'$, we obtain a formula $\varphi''$ in $\Sigma_2$ which expresses that the structure has to have $\mathcal{M}$ as an induced substructure (see Claim \ref{claim:non-iso}) and every set of elements of fixed size $\ell$ has to induce some structure from a set of structures $\mathfrak{B}$, and -- depending on the structure from $\mathfrak{B}$ -- 
there might be some connections to the elements of $\mathcal{M}$. We then define a formula $\psi$ in $\Pi_1$ such that the property defined by $\psi$ is indistinguishable from the property defined by $\varphi''$ in the sense that we can transform any structure satisfying  $\psi$, into a structure satisfying $\varphi''$ by modifying no more then a small fraction of the tuples and vice versa (see Claim  \ref{claim:indistinguishable}). The intuition behind this is that every structure satisfying $\varphi''$ can be made to satisfy $\psi$ by removing the structure $\mathcal{M}$ while on the other hand for every structure which satisfies $\psi$ we can plant the structure $\mathcal{M}$ to make it satisfy $\varphi''$. Since it is a priori unclear how the existential and universal quantified variables interact, we have to define $\psi$ very carefully. Here it is important to note that the existence  of occurrences of structures in $\mathfrak{B}$ forcing an interaction with $\mathcal{M}$ is limited because of the degree bound (see Claim \ref{claim:notManyTuples}). Thus such structures can not be allowed to occur for models of $\psi$, as here the number of occurrences can not be limited in any way. Since properties defined by a formula in $\Pi_1$ are testable, this implies with the indistinguishability of $\psi$ and $\varphi''$ that the property defined by $\varphi''$ is testable. Furthermore  by the fact that testable properties are closed under union \cite{goldreich2017introduction}, 
we reach the conclusion that any property defined by a formula in $\Sigma_2$ is testable.

Especially we will not directly give a tester for the property $P_\varphi$  but decompose $\varphi$ into simpler cases. However, every simplification of $\varphi$ used is computable, and the proof below yields a construction of an $\epsilon$-tester for $P_\varphi$ for every $\epsilon\in (0,1)$ and every $\varphi\in \Sigma_2$.\\

For the full proof of Theorem~\ref{thm:sigma2}, we use the following definition.
\begin{definition}
Let $\mathcal{A}$ be a $\sigma$-structure with $A=\{a_1,\dots,a_t\}$. Let $\overline{z}=(z_1,\dots,z_t)$ be a tuple of variables.  
	Then we define $\iota^\mathcal{A}(\overline{z})$ as follows.
\begin{align*}
	\iota^{\mathcal{A}}(\overline{z}):=
	&\bigwedge_{R\in\sigma}\Bigg(\bigwedge_{\big(a_{i_1},\dots,a_{i_{\ar(R)}}\big)\in R^\mathcal{A}}R\big(z_{i_1},\dots,z_{i_{\ar(R)}}\big)
	\land 
	\bigwedge_{\big(a_{i_1},\dots,a_{i_{\ar(R)}}\big)\in A^{\ar(R)}\setminus R^\mathcal{A}}\neg R\big(z_{i_1},\dots,z_{i_{\ar(R)}}\big)\Bigg)	\\
	&\land
	\bigwedge_{\substack{i,j\in[t]\\i\neq j}}(\neg z_i=z_j).
\end{align*}
	
\end{definition}
Note that 
	for every $\sigma$-structure $\mathcal{A}'$ and $\overline{a}'=(a_1',\dots,a_t')\in (A')^t$ we have that $\mathcal{A}'\models \iota^\mathcal{A}(\overline{a}')$ if and only if $a_i\mapsto a_i'$, $i\in \{1,\dots,t\}$ is an isomorphism from $\mathcal{A}$ to $\mathcal{A}'[\{a_1',\dots,a_t'\}]$. In particular, if $\mathcal{A}'\models \iota^\mathcal{A}(\overline{a}')$, then $\{a_1',\dots,a_t'\}$ induces a substructure isomorphic to $\mathcal A$ in $\mathcal A'$.

\begin{proof}[Proof of Theorem~\ref{thm:sigma2}]
Let $\varphi$ be any $\sigma$-sentence in  $\Sigma_2$. Therefore we can assume that $\varphi$ is of the form $\varphi=\exists \overline{x} \,\forall\overline{y} \,\chi(\overline{x},\overline{y})$ where $\overline{x}=(x_1,\dots,x_k)$ is a tuple of $k\in \mathbb{N}$ variables, $\overline{y}=y_1,\dots,y_\ell$ is a tuple of $\ell\in \mathbb{N}$ variables and $\chi(\overline{x},\overline{y})$ is a quantifier free formula. We can further assume 
that $\chi(\overline{x},\overline{y})$ is in disjunctive normal form, and that \begin{eqnarray}
\varphi=\exists \overline{x}\,\forall \overline{y}\bigvee_{i\in I}\Big(\alpha^i(\overline{x})\land \beta^i(\overline{y})\land \operatorname{pos}^i(\overline{x},\overline{y})\land \operatorname{neg}^i(\overline{x},\overline{y})\Big),\label{eqn:phi_1}
\end{eqnarray}
 where $\alpha^i(\overline{x})$ is a conjunction of literals only containing variables from $\overline{x}$, $\beta^i(\overline{y})$ is a conjunction of literals only containing  variables in $\overline{y}$, $\operatorname{neg}^i(\overline{x},\overline{y})$ is a conjunction of negated atomic formulas containing both variables from $\overline{x}$ and $\overline{y}$ and $\operatorname{pos}^i(\overline{x},\overline{y})$ is a conjunction of atomic formulas containing both variables from $\overline{x}$ and $\overline{y}$. 
	Now note that if an expression `$x_j=y_{j'}$' appears in a conjunctive clause, then we can replace every occurrence of $y_{j'}$ by $x_j$ in that clause, which will result in an equivalent formula.

	We now write the formula $\varphi$ given in (\ref{eqn:phi_1}) as a disjunction over all possible structures in $C_d$ the existentially quantified variables could enforce. Since the elements realising the existentially quantified variables will have a certain structure, it is natural to decompose the formula in this way.

	Let $\mathfrak{M}\subseteq C_d$ be a set of models of $\varphi$, such that every model $\mathcal{A}\in C_d$ of $\varphi$ contains an isomorphic copy of some $\mathcal{M}\in \mathfrak{M}$ as an induced substructure, and $\mathfrak{M}$ is minimal with this property.

	\begin{claim}\label{claim:aboutM}
		Every $\mathcal{M}\in \mathfrak{M}$ has at most $k$ elements.
	\end{claim}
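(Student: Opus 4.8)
The plan is to show that \emph{every} model of $\varphi$ already contains an induced sub-model of $\varphi$ on at most $k$ elements, and then to invoke the minimality of $\mathfrak{M}$ to rule out any member of $\mathfrak{M}$ of size exceeding $k$.

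First I would record the standard fact that truth of a quantifier-free formula is absolute between a structure and its induced substructures: if $\mathcal{B}$ is the substructure of $\mathcal{A}$ induced on a subset $B\subseteq A$, $\chi(\overline z)$ is quantifier free, and $\overline b$ is a tuple of elements of $B$, then $\mathcal{A}\models\chi(\overline b)$ iff $\mathcal{B}\models\chi(\overline b)$. This is a one-line induction on $\chi$, using $R^{\mathcal B}=R^{\mathcal A}\cap B^{\ar(R)}$ for each $R\in\sigma$ and that equality is absolute. Now let $\mathcal{A}\in C_d$ be any model of $\varphi$, written as in~(\ref{eqn:phi_1}) in the form $\exists\overline x\,\forall\overline y\,\chi(\overline x,\overline y)$ with $\overline x=(x_1,\dots,x_k)$. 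Fix witnesses $a_1,\dots,a_k\in A$ with $\mathcal{A}\models\forall\overline y\,\chi(\overline a,\overline y)$, and set $\mathcal{A}':=\mathcal{A}[\{a_1,\dots,a_k\}]$, so $|A'|\le k$. Since $\mathcal{A}'$ is an induced substructure, $\deg(\mathcal{A}')\le\deg(\mathcal{A})\le d$, hence $\mathcal{A}'\in C_d$. For every tuple $\overline b$ over $A'$ we have $\mathcal{A}\models\chi(\overline a,\overline b)$ (as $\overline b$ is in particular a tuple over $A$), so by the preservation fact $\mathcal{A}'\models\chi(\overline a,\overline b)$; therefore $\mathcal{A}'\models\forall\overline y\,\chi(\overline a,\overline y)$ and thus $\mathcal{A}'\models\varphi$.

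Finally I would argue by contradiction. Suppose some $\mathcal{M}\in\mathfrak{M}$ has more than $k$ elements. Since $\mathcal{M}\models\varphi$, the previous paragraph produces an induced sub-model $\mathcal{M}'$ of $\mathcal{M}$ with $\mathcal{M}'\models\varphi$, $\mathcal{M}'\in C_d$, and $|M'|\le k<|M|$. By the defining property of $\mathfrak{M}$ applied to the model $\mathcal{M}'$, some $\mathcal{N}\in\mathfrak{M}$ embeds as an induced substructure into $\mathcal{M}'$; then $|N|\le|M'|\le k<|M|$, so $\mathcal{N}\neq\mathcal{M}$. But now $\mathfrak{M}\setminus\{\mathcal{M}\}$ still has the covering property: any model of $\varphi$ containing an induced copy of $\mathcal{M}$ also contains an induced copy of $\mathcal{M}'$ (since $\mathcal{M}'$ is an induced substructure of $\mathcal{M}$), hence of $\mathcal{N}\in\mathfrak{M}\setminus\{\mathcal{M}\}$; models covered by other members of $\mathfrak{M}$ are unaffected. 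This contradicts the minimality of $\mathfrak{M}$, so every $\mathcal{M}\in\mathfrak{M}$ has at most $k$ elements.

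The argument is essentially direct; there is no real obstacle. The only point needing a little care is the combination of the quantifier-free absoluteness fact with the observation that restricting a model to the witnesses of $\overline x$ already yields a model of the whole $\Sigma_2$-sentence — intuitively because after restriction the universal block $\forall\overline y$ ranges over a smaller universe, which can only make $\forall\overline y\,\chi$ easier to satisfy.
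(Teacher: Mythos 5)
Your proof is correct and follows essentially the same route as the paper's: pass to the induced substructure on the witnesses of $\overline x$, observe this remains a model of $\varphi$ (the universal block ranges over a smaller universe and quantifier-free formulas are absolute under induced substructures), and then use the covering/minimality property of $\mathfrak{M}$ to derive a contradiction. You are a bit more explicit than the paper about why the resulting $\mathcal{N}\in\mathfrak{M}$ makes $\mathcal{M}$ redundant for the covering property, but the underlying argument is the same.
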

	\begin{proof}
		Assume there is $\mathcal{M}\in \mathfrak{M}$ with $|M|>k$. Since every structure in $\mathfrak{M}$ is a model of $\varphi$ there must be a tuple $\overline{a}=(a_1,\dots,a_k)\in M^k$ such that $\mathcal{M}\models \forall \overline{y}\bigvee_{i\in I}\Big(\alpha^i(\overline{a})\land \beta^i(\overline{y})\land \operatorname{pos}^i(\overline{a},\overline{y})\land \operatorname{neg}^i(\overline{a},\overline{y})\Big)$. This implies that for every tuple $\overline{b}\in M^\ell$ we have $\mathcal{M}\models \bigvee_{i\in I}\Big(\alpha^i(\overline{a})\land \beta^i(\overline{b})\land \operatorname{pos}^i(\overline{a},\overline{b})\land \operatorname{neg}^i(\overline{a},\overline{b})\Big)$. Furthermore, since $\{a_1,\dots,a_k\}^\ell\subseteq M^\ell$ we have that $\mathcal{M}[\{a_1,\dots,a_k\}]\models \forall \overline{y}\bigvee_{i\in I}\Big(\alpha^i(\overline{a})\land \beta^i(\overline{y})\land \operatorname{pos}^i(\overline{a},\overline{y})\land \operatorname{neg}^i(\overline{a},\overline{y})\Big)$. This means that $\mathcal{M}[\{a_1,\dots,a_k\}]\models \varphi$. 
		Hence by definition, $\mathfrak{M}$ contains an induced substructure
		$\mathcal{M}'$ of $\mathcal{M}[\{a_1,\dots,a_k\}]$. But then
		$\mathcal{M}'$ is an induced substructure of $\mathcal{M}$ with strictly
		fewer elements than $\mathcal{M}$, a contradiction to the definition of
		$\mathfrak{M}$. 
	\end{proof}
	Therefore $\mathfrak{M}$ is finite.		
		For $\mathcal{M}\in\mathfrak{M}$ let $J_\mathcal{M}:=\{j\in I\mid \mathcal{M}\models \alpha^j(\overline{m})\text{ for some }\overline{m}\in M^{\ell}\}\subseteq I$.
	\begin{claim}\label{claim:JM}
We have		$\varphi \equiv_d \bigvee_{\mathcal{M}\in \mathfrak{M}}\Big(\exists \overline{x}\forall\overline{y} \Big[\iota^\mathcal{M}(\overline{x})\land \bigvee_{j\in J_\mathcal{M}}\Big( \beta^j(\overline{y})\land \operatorname{pos}^j(\overline{x},\overline{y})\land \operatorname{neg}^j(\overline{x},\overline{y})\Big) \Big] \Big)$.
	\end{claim}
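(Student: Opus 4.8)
The plan is to prove the $d$-equivalence by establishing both implications, each time by transporting a witness for the existential block $\overline x$. For readability write $\chi_i(\overline x,\overline y):=\alpha^i(\overline x)\wedge\beta^i(\overline y)\wedge\operatorname{pos}^i(\overline x,\overline y)\wedge\operatorname{neg}^i(\overline x,\overline y)$ and $\chi_j^-(\overline x,\overline y):=\beta^j(\overline y)\wedge\operatorname{pos}^j(\overline x,\overline y)\wedge\operatorname{neg}^j(\overline x,\overline y)$, so that $\varphi=\exists\overline x\,\forall\overline y\,\bigvee_{i\in I}\chi_i$ and the $\mathcal M$-disjunct of the right-hand side is $\exists\overline x\,\forall\overline y\,[\iota^{\mathcal M}(\overline x)\wedge\bigvee_{j\in J_{\mathcal M}}\chi_j^-]$. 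For each $\mathcal M\in\mathfrak M$ I fix an enumeration $M=\{m_1,\dots,m_k\}$ coming from a witness tuple $\overline b^{\mathcal M}$ for $\varphi$ in $\mathcal M$, and read $\iota^{\mathcal M}(\overline x)$ relative to it (so $\mathcal A\models\iota^{\mathcal M}(\overline c)$ iff $m_s\mapsto c_s$ is an isomorphism of $\mathcal M$ onto $\mathcal A[\{c_1,\dots,c_k\}]$); then $j\in J_{\mathcal M}$ says exactly that $\iota^{\mathcal M}(\overline x)$ entails $\alpha^j(\overline x)$. It is harmless to first split $\varphi$ according to the equality type of $\overline x$ and substitute, so that within each piece the existential variables are realized only by pairwise distinct elements, whence $|M|=k$ and $\iota^{\mathcal M}$ pins down the whole tuple; since $\Sigma_2$ and the target form are closed under finite disjunction, this loses nothing.

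\textbf{RHS $\Rightarrow\varphi$.} Fix $\mathcal M\in\mathfrak M$ and $\mathcal A\in C_d$ satisfying the $\mathcal M$-disjunct, with witness $\overline c$ for $\exists\overline x$. I claim $\overline c$ witnesses $\varphi$. For any $\overline d$ the inner disjunction yields some $j\in J_{\mathcal M}$ with $\mathcal A\models\chi_j^-(\overline c,\overline d)$; since $j\in J_{\mathcal M}$ and $\mathcal A\models\iota^{\mathcal M}(\overline c)$, also $\mathcal A\models\alpha^j(\overline c)$, hence $\mathcal A\models\chi_j(\overline c,\overline d)$. As $\overline d$ was arbitrary, $\mathcal A\models\forall\overline y\bigvee_{i\in I}\chi_i(\overline c,\overline y)$, i.e.\ $\mathcal A\models\varphi$.

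\textbf{$\varphi\Rightarrow$ RHS.} Let $\mathcal A\in C_d$ with witness $\overline a=(a_1,\dots,a_k)$ for $\varphi$, and set $A_0:=\{a_1,\dots,a_k\}$. Since $A_0^\ell\subseteq A^\ell$ and the body is quantifier free, $\mathcal A[A_0]\models\forall\overline y\bigvee_i\chi_i(\overline a,\overline y)$, so $\mathcal A[A_0]\models\varphi$, and $\mathcal A[A_0]\in C_d$. By the defining property of $\mathfrak M$, some $\mathcal M\in\mathfrak M$ embeds as an induced substructure of $\mathcal A[A_0]$ via $g\colon M\hookrightarrow A_0$. Put $\overline c:=(g(m_1),\dots,g(m_k))$; then $\mathcal A\models\iota^{\mathcal M}(\overline c)$, and $g$ carries $\mathcal M\models\forall\overline y\bigvee_i\chi_i(\overline b^{\mathcal M},\overline y)$ to $\mathcal A[A_0]\models\forall\overline y\bigvee_i\chi_i(\overline c,\overline y)$, hence (again by $A_0^\ell\subseteq A^\ell$) to $\mathcal A\models\forall\overline y\bigvee_i\chi_i(\overline c,\overline y)$. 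For each $\overline d$ the index $i$ chosen here satisfies $\mathcal A\models\alpha^i(\overline c)$, so $i\in J_{\mathcal M}$ by the setup, and $\mathcal A\models\chi_i^-(\overline c,\overline d)$; thus $\mathcal A\models\forall\overline y[\iota^{\mathcal M}(\overline c)\wedge\bigvee_{j\in J_{\mathcal M}}\chi_j^-(\overline c,\overline y)]$, so $\mathcal A$ satisfies the $\mathcal M$-disjunct.

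I expect the only real difficulty to be the bookkeeping in the last step: the existential witness $\overline a$ of $\varphi$ in $\mathcal A$, the witness $\overline b^{\mathcal M}$ of $\varphi$ in the embedded copy of $\mathcal M$, and the labelling imposed by $\iota^{\mathcal M}$ must all be made to agree, and one must handle the equality type of $\overline x$ and the possibility $|M|<k$, both absorbed by the preliminary case split. Once the conventions are pinned down, each implication reduces to transporting a witness plus the observation ``$j\in J_{\mathcal M}\iff\iota^{\mathcal M}(\overline x)\models\alpha^j(\overline x)$''.
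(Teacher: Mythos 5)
Your reverse direction (RHS $\Rightarrow\varphi$) matches the paper's, and your explicit reading of $J_{\mathcal M}$ via the fixed enumeration of $\mathcal M$ makes the needed entailment $\iota^{\mathcal M}(\overline x)\models\alpha^j(\overline x)$ precise; that part is fine.

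The forward direction, however, has a genuine gap and also diverges from the paper's route. You establish $\mathcal A[A_0]\models\forall\overline y\bigvee_i\chi_i(\overline c,\overline y)$, where the universal quantifier ranges over $A_0^\ell$, and then assert ``hence (again by $A_0^\ell\subseteq A^\ell$) $\mathcal A\models\forall\overline y\bigvee_i\chi_i(\overline c,\overline y)$''. This inference runs in the wrong direction: $A_0^\ell\subseteq A^\ell$ gives you that a universal statement that holds over all of $A^\ell$ continues to hold when restricted to $A_0^\ell$, which is exactly how you used it two lines earlier; it cannot upgrade a universal over the small domain to one over the large domain. (Take $A=\{1,2,3\}$, $A_0=\{1\}$, $\psi(y):=y=1$: then $\mathcal A[A_0]\models\forall y\,\psi$ but $\mathcal A\not\models\forall y\,\psi$.) The only universally quantified fact you actually have about the whole universe is $\mathcal A\models\forall\overline y\,\chi(\overline a,\overline y)$ for the original witness $\overline a$, and since $\chi$ is not permutation-invariant in $\overline x$, this does not transfer to $\overline c$ even when $\overline c$ is a rearrangement of $\overline a$. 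So the subsequent step ``for each $\overline d$ the index $i$ chosen here satisfies $\mathcal A\models\alpha^i(\overline c)$'' has no foundation.

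The paper avoids this by a different mechanism: it re-chooses the witness $\overline a$ to be set-minimal (no tuple over a proper subset of $\{a_1,\dots,a_k\}$ also witnesses $\forall\overline y\,\chi$), which forces $\mathcal A[\{a_1,\dots,a_k\}]$ to be isomorphic to some $\mathcal M\in\mathfrak M$ rather than merely containing one, and then keeps $\overline a$ itself as the existential witness for the RHS. With $\overline a$ retained, the fact $\mathcal A\models\forall\overline y\,\chi(\overline a,\overline y)$ --- over all of $A^\ell$ --- is already in hand, and one only has to check that each realised disjunct $i$ lies in $J_{\mathcal M}$, which follows since $\mathcal A\models\alpha^i(\overline a)$ and $\{a_1,\dots,a_k\}$ carries the isomorphism type of $\mathcal M$. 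Your approach of switching to $\overline c=g(\overline b^{\mathcal M})$ throws away the one global universal statement you had; to make it work you would have to show $\overline c$ itself satisfies $\forall\overline y$ over $A^\ell$, which in general you cannot. The preliminary case split on equality types and the point that $|M|=k$ are fine observations, but they do not close this hole.
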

	\begin{proof}
Let $\mathcal{A}\in C_d$ be a model of $\varphi$. Then there is a tuple $\overline{a}=(a_1,\dots,a_k)\in A^k$ such that $\mathcal{A}\models \forall y \chi (\overline{a},\overline{y})$. Since $\{a_1,\dots,a_k\}^\ell\subseteq A^\ell$ this implies that $\mathcal{A}[\{a_1,\dots,a_k\}]\models   \forall y \chi (\overline{a},\overline{y})$ and hence $\mathcal{A}[\{a_1,\dots,a_k\}]\models \varphi$.
		In addition, we may assume that we picked $\overline{a}$ in such a way that for any tuple $\overline{a}'=(a_1',\dots,a_k')\in \{a_1,\dots,a_k\}^k$  
	 with $\{a_1',\dots,a_k'\}\subsetneq \{a_1,\dots,a_k\}$ we have that $\mathcal{A}\not\models \forall \overline{y} \chi(\overline{a}',\overline{y})$. (The reason is that if for some tuple $\overline{a}'$ this is not the case then we just replace $\overline{a}$ by $\overline{a}'$ and so on until this property holds). Hence $\mathcal{A}[\{a_1,\dots,a_k\}]$ cannot have a proper induced substructure in $\mathfrak{M}$, and it follows that there is $\mathcal{M}\in \mathfrak{M}$ such that $\mathcal{M}\cong \mathcal{A}[\{a_1,\dots,a_k\}]$. By choice of $J_{\mathcal{M}}$ we get $\mathcal{A}\models \forall\overline{y} \Big[\iota^\mathcal{M}(\overline{a})\land \bigvee_{j\in J_\mathcal{M}}\Big( \beta^j(\overline{y})\land \operatorname{pos}^j(\overline{a},\overline{y})\land \operatorname{neg}^j(\overline{a},\overline{y})\Big) \Big]$ and hence \[\mathcal{A}\models \bigvee_{\mathcal{M}\in \mathfrak{M}}\Big(\exists \overline{x}\forall\overline{y} \Big[\iota^\mathcal{M}(\overline{x})\land \bigvee_{j\in J_\mathcal{M}}\Big( \beta^j(\overline{y})\land \operatorname{pos}^j(\overline{x},\overline{y})\land \operatorname{neg}^j(\overline{x},\overline{y})\Big) \Big] \Big).\]
		
		To prove  the other direction, we now let the structure $\mathcal{A}\in C_d$ be a model of the formula \newline$\bigvee_{\mathcal{M}\in \mathfrak{M}}\Big(\exists \overline{x}\forall\overline{y} \Big[\iota^\mathcal{M}(\overline{x})\land \bigvee_{j\in J_\mathcal{M}}\Big( \beta^j(\overline{y})\land \operatorname{pos}^j(\overline{x},\overline{y})\land \operatorname{neg}^j(\overline{x},\overline{y})\Big) \Big] \Big)$. Consequently there is $\mathcal{M}\in \mathfrak{M}$ and $\overline{a}\in A^k$ such that $\mathcal{A}\models \forall\overline{y} \Big[\iota^\mathcal{M}(\overline{a})\land \bigvee_{j\in J_\mathcal{M}}\Big( \beta^j(\overline{y})\land \operatorname{pos}^j(\overline{a},\overline{y})\land \operatorname{neg}^j(\overline{a},\overline{y})\Big) \Big]$. By choice of $J_\mathcal{M}$ this implies $\mathcal{A} \models \forall \overline{y}\bigvee_{j\in J_\mathcal{M}}\Big(\alpha^j(\overline{a})\land \beta^j(\overline{y})\land \operatorname{pos}^j(\overline{a},\overline{y})\land \operatorname{neg}^j(\overline{a},\overline{y})\Big)$ and hence $\mathcal{A}\models \varphi$.		
	\end{proof}
	
	Since the union of finitely many testable properties is testable (see e.g.~\cite{goldreich2017introduction}),  it is sufficient to show that the property $P_\varphi$ where $\varphi$ is of the form 	
	\begin{eqnarray}
	\varphi =\exists \overline{x}\forall \overline{y}\chi(\overline{x},\overline{y}), \text{ where } \chi(\overline{x},\overline{y})=\Big[\iota^\mathcal{M}(\overline{x})\land \bigvee_{j\in J_\mathcal{M}}\Big(\beta^j(\overline{y})\land \operatorname{pos}^j(\overline{x},\overline{y})\land \operatorname{neg}^j(\overline{x},\overline{y})\Big)\Big], \label{eqn:phi_2}
\end{eqnarray} 
for some $\mathcal{M}\in \mathfrak{M}$ is testable. 
	In the following, we will enforce that for every conjunctive clause of the big disjunction of $\chi$, the universally quantified variables induce a specific substructure.

For $j\in J_\mathcal{M}$ let $\mathfrak{H}_j\subseteq C_d$ be a maximal set of pairwise non-isomorphic structures $\mathcal{H}$ such that   $\mathcal{H}\models \beta^j(\overline{b})$ for some $\overline{b}=(b_1,\dots,b_\ell)\in H^\ell$ with $\{b_1,\dots,b_\ell\}=H$. 	
	
	\begin{claim}\label{claim:non-iso}
We have $\varphi \equiv_d \exists \overline{x}\forall\overline{y} \Big[\iota^\mathcal{M}(\overline{x})\land \bigvee_{{\mathcal{H}\in \mathfrak{H}_j,}\atop{j\in J_\mathcal{M}}} \Big(\iota^\mathcal{H}(\overline{y})\land \operatorname{pos}^j(\overline{x},\overline{y})\land \operatorname{neg}^j(\overline{x},\overline{y})\Big)\Big].$ 
	\end{claim}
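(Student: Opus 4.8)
The plan is to reduce Claim~\ref{claim:non-iso} to a single pointwise statement about the quantifier-free parts, and then prove that statement by a standard ``complete the quantifier-free type'' argument together with the maximality of the sets $\mathfrak{H}_j$. Concretely, both formulas have the shape $\exists\overline{x}\,\forall\overline{y}\,[\iota^{\mathcal M}(\overline{x})\wedge D(\overline{x},\overline{y})]$ with the same outer prefix and the same conjunct $\iota^{\mathcal M}(\overline{x})$, so it suffices to show that the two big disjunctions $D$ agree on every structure $\mathcal A\in C_d$ and all assignments $\overline{a},\overline{b}$ to $\overline{x},\overline{y}$. Since the conjuncts $\operatorname{pos}^j(\overline{x},\overline{y})$ and $\operatorname{neg}^j(\overline{x},\overline{y})$ occur verbatim in both disjunctions, grouped by the same index $j$, this in turn reduces to the core equivalence
\[
\beta^j(\overline{y})\;\equiv_d\;\bigvee_{\mathcal H\in\mathfrak H_j}\iota^{\mathcal H}(\overline{y})\qquad\text{for each }j\in J_{\mathcal M}.
\]

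For the core equivalence I would argue both directions by unwinding the isomorphism characterization of $\iota^{\mathcal H}$. For ``$\Leftarrow$'': if $\mathcal A\models\iota^{\mathcal H}(\overline{b})$ for some $\mathcal H\in\mathfrak H_j$, then the assignment $\overline{b}$ induces in $\mathcal A$ a copy of $\mathcal H$ respecting the enumeration of $H$ used to write $\iota^{\mathcal H}$; since that enumeration may be taken to be one witnessing $\mathcal H\models\beta^j(\cdot)$ (such an enumeration exists by the definition of $\mathfrak H_j$), and since $\beta^j$ is a quantifier-free formula over $\overline{y}$ only, isomorphism-invariance yields $\mathcal A\models\beta^j(\overline{b})$. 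For ``$\Rightarrow$'': if $\mathcal A\models\beta^j(\overline{b})$, then the induced substructure $\mathcal A[\{b_1,\dots,b_\ell\}]$ also satisfies $\beta^j(\overline{b})$ (again because $\beta^j$ mentions only $\overline{y}$), and, equipped with the tuple $\overline{b}$ enumerating its universe, it is exactly an instance of the labelled structures counted in the definition of $\mathfrak H_j$; by maximality of $\mathfrak H_j$ it is isomorphic (as such a labelled structure) to some $\mathcal H\in\mathfrak H_j$, and the witnessing isomorphism certifies $\mathcal A\models\iota^{\mathcal H}(\overline{b})$. The restriction to $C_d$ is precisely what makes this work and is all that is needed: every induced substructure of a structure in $C_d$ lies in $C_d$, hence is represented in $\mathfrak H_j\subseteq C_d$, so no relevant type is missing.

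The point that I expect to require the most care is the phrase ``as a labelled structure'' above. Writing $\iota^{\mathcal H}(\overline{y})$ presupposes a fixed enumeration of the universe of $\mathcal H$, so for the two directions to match up one should view $\mathfrak H_j$ as a maximal family of pairwise non-isomorphic \emph{enumerated} structures satisfying $\beta^j$ via the given enumeration, and correspondingly handle assignments $\overline{b}$ whose entries are not pairwise distinct (which forces $|H|<\ell$ for the associated $\mathcal H$). The cleanest bookkeeping is to first observe that, since $\forall\overline{y}$ quantifies over all $\ell$-tuples, one may split each disjunct of~(\ref{eqn:phi_2}) according to the equality type of $\overline{y}$; after this normalization each relevant $\iota^{\mathcal H}(\overline{y})$ is simply a complete quantifier-free $\ell$-type of $\sigma$ extending $\beta^j$, the set of such types consistent with $\beta^j$ is exactly a set of representatives of $\mathfrak H_j$, and the equivalence $\beta^j(\overline{y})\equiv\bigvee_{\mathcal H\in\mathfrak H_j}\iota^{\mathcal H}(\overline{y})$ becomes the elementary fact that a satisfiable conjunction of literals over finitely many variables is logically equivalent to the disjunction of the complete quantifier-free types extending it.
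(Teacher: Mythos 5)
Your approach is essentially the same as the paper's: reduce the claim to the pointwise equivalence $\beta^j(\overline{y})\equiv_d\bigvee_{\mathcal H\in\mathfrak H_j}\iota^{\mathcal H}(\overline{y})$ and then argue each direction by passing through the induced substructure $\mathcal A[\{b_1,\dots,b_\ell\}]$. The paper does not isolate the core equivalence as a separate lemma, but its forward and backward arguments are exactly your two directions instantiated pointwise, so there is no substantive difference in the route taken.

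The one place where your write-up adds real value is the final paragraph. The paper defines $\mathfrak H_j$ as a maximal family of \emph{pairwise non-isomorphic structures} admitting some enumeration of their universe satisfying $\beta^j$, and then freely writes $\iota^{\mathcal H}(\overline{y})$, which implicitly fixes one enumeration per $\mathcal H$. Taken literally this is not quite enough: a single isomorphism class $\mathcal H$ can satisfy $\beta^j$ under several non-conjugate enumerations, giving inequivalent formulas $\iota^{\mathcal H,\overline{b}}$, and then a structure $\mathcal A$ with $\mathcal A\models\beta^j(\overline{b})$ and $\mathcal A[\{b_1,\dots,b_\ell\}]\cong\mathcal H$ need not satisfy $\iota^{\mathcal H}(\overline{b})$ for the \emph{chosen} enumeration. (Take $\sigma=\{E\}$, $\ell=3$, $\beta^j=E(y_1,y_2)$, and $\mathcal H$ a directed $3$-path; the two witnessing enumerations produce distinct $\iota^{\mathcal H}$.) Your reinterpretation of $\mathfrak H_j$ as a maximal family of \emph{enumerated} structures (equivalently, the set of complete quantifier-free $\ell$-types over $\sigma$ consistent with $\beta^j$ and realizable in $C_d$, including equality types with repeated entries) is the right fix, and it also resolves the arity mismatch you flag when $|H|<\ell$, since the paper's $\iota^{\mathcal A}(\overline{z})$ is only defined for tuples whose length equals $|A|$. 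Under that reading both directions go through exactly as you argue, and the use of the degree bound is, as you say, just to ensure that every relevant type is represented in $\mathfrak H_j\subseteq C_d$.
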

	\begin{proof}
		Let $\mathcal{A}\in C_d$ and $\overline{a}=(a_1,\dots,a_k)\in A^k$. First assume that $\mathcal{A}\models \forall \overline{y}\chi(\overline{a},\overline{y})$. Hence for any tuple $\overline{b}\in A^\ell$ there is an index $j\in J_\mathcal{M}$ such that $\mathcal{A}\models \beta^j(\overline{b})\land \operatorname{pos}^j(\overline{a},\overline{b})\land \operatorname{neg}^j(\overline{a},\overline{b})$. 
		Then $\mathcal{A}\models \beta^j(\overline{b})$ implies that $\mathcal{A}[\{b_1,\dots,b_\ell\}]\cong \mathcal{H}$ for some $\mathcal{H}\in \mathfrak{H}_j$. Hence $\mathcal{A}\models \iota^\mathcal{H}(\overline{b})$ and $\mathcal{A}\models  \Big[\iota^\mathcal{M}(\overline{a})\land \bigvee_{{\mathcal{H}\in \mathfrak{H}_j,}\atop{j\in J_\mathcal{M}}} \Big(\iota^\mathcal{H}(\overline{b})\land \operatorname{pos}^j(\overline{a},\overline{b})\land \operatorname{neg}^j(\overline{a},\overline{b})\Big)\Big]$.
		
		For the other direction, we let $\mathcal{A}\models \forall\overline{y} \Big[\iota^\mathcal{M}(\overline{a})\land \bigvee_{{\mathcal{H}\in \mathfrak{H}_j,}\atop{j\in J_\mathcal{M}}} \Big(\iota^\mathcal{H}(\overline{y})\land \operatorname{pos}^j(\overline{a},\overline{y})\land \operatorname{neg}^j(\overline{a},\overline{y})\Big)\Big]$. Then for every tuple $\overline{b}\in A^\ell$ there is an index $j\in J_\mathcal{M}$ and $\mathcal{H}\in \mathfrak{H}_j$ such that $\mathcal{H}\models \iota^\mathcal{H}(\overline{b})\land \operatorname{pos}^j(\overline{a},\overline{b})\land \operatorname{neg}^j(\overline{a},\overline{b})$. Therefore $\mathcal{A}[\{b_1,\dots,b_\ell\}]\cong \mathcal{H}$ and we know that $\mathcal{A}\models \beta^j(\overline{b})$. Therefore $\mathcal{A}\models \beta^j(\overline{b})\land \operatorname{pos}^j(\overline{a},\overline{b})\land \operatorname{neg}^j(\overline{a},\overline{b})$ and since this is true for any $\overline{b}\in A^\ell$ we get $\mathcal{A}\models \varphi$.
	\end{proof}

Thus, it suffices to assume that 
\begin{eqnarray}
\varphi =\exists \overline{x}\forall\overline{y} \chi(\overline{x},\overline{y}), \text{ where  }\chi(\overline{x},\overline{y}):=\Big[\iota^\mathcal{M}(\overline{x})\land \bigvee_{{\mathcal{H}\in \mathfrak{H}_j,}\atop{j\in J_\mathcal{M}}} \Big(\iota^\mathcal{H}(\overline{y})\land \operatorname{pos}^j(\overline{x},\overline{y})\land \operatorname{neg}^j(\overline{x},\overline{y})\Big)\Big]\label{eqn:phi_3}
\end{eqnarray}
for some $\mathcal{M}\in \mathfrak{M}$. 
	
Next we will define a universally quantified formula $\psi$  and show that $P_\varphi$ is indistinguishable from the property $P_\psi$. To do so we will need the two claims below.
Intuitively, Claim~\ref{claim:notManyTuples} says that models of $\varphi$ of bounded degree
do not have many `interactions'
between existential and universal variables -- only a constant number of tuples in relations
combine both types of variables.
Note that for a structure $\mathcal{A}$ and tuples $\overline{a}\in A^k$, $\overline{b}=(b_1,\dots,b_\ell)\in A^\ell$ the condition $\mathcal{A}\models \iota^\mathcal{H}(\overline{b})\land \operatorname{pos}^j(\overline{a},\overline{b})\land \operatorname{neg}^j(\overline{a},\overline{b})$ can force an element of $\overline{b}$ to be in a tuple (of a relation of $\mathcal A$) with an element of $\overline{a}$, even if $\operatorname{pos}^j(\overline{x},\overline{y})$ only contains literals of the form $x_i=y_{i'}$. (For example,  
it may be the case that for some tuple $\overline{b}'\in \{b_1,\dots,b_\ell\}^\ell$, every clause $\iota^\mathcal{H'}(\overline{y})\land \operatorname{pos}^{j'}(\overline{x},\overline{y})\land \operatorname{neg}^{j'}(\overline{x},\overline{y})$ for which $\mathcal{A}\models \iota^\mathcal{H'}(\overline{b}')\land \operatorname{pos}^{j'}(\overline{a},\overline{b}')\land \operatorname{neg}^{j'}(\overline{a},\overline{b}')$ enforces a tuple to contain some element of $\overline{b}'$ and some element of $\overline{a}$.) We will now define a set $J$ to pick out the clauses that do not enforce a tuple to contain both an element from $\overline{a}$ and $\overline{b}$. Note that we still allow elements from $\overline{b}$ to be amongst the elements in $\overline{a}$. In Claim~\ref{claim:notManyTuples} we show that for every $\mathcal{A}\in C_d$, $\overline{a}\in A^k$ for which $\mathcal{A}\models \forall \overline{y} \chi(\overline{a},\overline{y})$ there are only a constant number of tuples $\overline{b}\in A^\ell$ that only satisfy clauses which enforce a tuple to contain both an element from $\overline{a}$ and from $\overline{b}$.

	Let $j\in \mathcal{M}$, $\mathcal{H}\in \mathfrak{H}_j$ and $\overline{h}=(h_1,\dots,h_\ell)\in H^\ell$ such that $\mathcal{H}\models \iota^\mathcal{H}(\overline{h})$. We define the set $P_{j,\mathcal{H}}:=\{h_i\mid i\in \{1,\dots,\ell\}, \operatorname{pos}^j(\overline{x},\overline{y})\text{ does not contain }y_i=x_{i'} \text{ for any }i'\in \{1,\dots,k\}\}.$
	Now we let $J\subseteq J_\mathcal{M}\times C_d$ be the set of pairs $(j,\mathcal{H})$, with $\mathcal{H}\in \mathfrak{H}_j$ with the following two properties. Firstly $\operatorname{pos}^j(\overline{x},\overline{y})$ only contains literals of the form $x_{i'}=y_{i}$ for some $i\in\{1,\dots,\ell\}$, $i'\in \{1,\dots,k\}$. Secondly  the disjoint union $\mathcal{M}\sqcup \mathcal{H}[P_{j,\mathcal{H}}]\models \varphi$. $J$ now precisely specifies the clauses that can be satisfied by a structure $\mathcal{A}$ and tuple $\overline{a}\in A^k$ and $\overline{b}\in A^\ell$ where $\mathcal{A}$ does not contain any tuples both containing elements from $\overline{a}$ and $\overline{b}$.

	\begin{claim}\label{claim:notManyTuples}
		Let $\mathcal{A}\in C_d$ and $\overline{a}=(a_1,\dots,a_k)\in A^k$. If  $\mathcal{A}\models \forall \overline{y}\,\chi(\overline{a},\overline{y})$ then there are at most $k \cdot d$ tuples $\overline{b}\in A^\ell $ such that $\mathcal{A}\not\models \bigvee_{(j,\mathcal{H})\in J}(\iota^\mathcal{H}(\overline{b})\land \operatorname{pos}^j(\overline{a},\overline{b})\land \operatorname{neg}^j(\overline{a},\overline{b}))$.
	\end{claim}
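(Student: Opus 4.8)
The plan is to show that any \emph{bad} tuple $\overline{b}$ --- one witnessing $\mathcal{A}\not\models\bigvee_{(j,\mathcal{H})\in J}\big(\iota^\mathcal{H}(\overline{b})\wedge\operatorname{pos}^j(\overline{a},\overline{b})\wedge\operatorname{neg}^j(\overline{a},\overline{b})\big)$ --- must \emph{interact} with $\overline{a}$: some coordinate of $\overline{b}$ either coincides with a coordinate of $\overline{a}$, or lies in a common tuple of some relation of $\mathcal{A}$ with a coordinate of $\overline{a}$. Since $\overline{a}$ has only $k$ coordinates and $\deg(\mathcal{A})\le d$, the coordinates of $\overline{a}$ together occur in at most $k\cdot d$ tuples of $\mathcal{A}$, and it is this boundedness of the ``interaction zone'' around $\overline{a}$ that will cap the number of bad $\overline{b}$.

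First I would fix $\mathcal{A}\in C_d$ and $\overline{a}\in A^k$ with $\mathcal{A}\models\forall\overline{y}\,\chi(\overline{a},\overline{y})$, with $\chi$ of the form \eqref{eqn:phi_3}. For a bad $\overline{b}$ the assumption gives $\mathcal{A}\models\chi(\overline{a},\overline{b})$, so there are $j\in J_\mathcal{M}$ and $\mathcal{H}\in\mathfrak{H}_j$ with $\mathcal{A}\models\iota^\mathcal{H}(\overline{b})\wedge\operatorname{pos}^j(\overline{a},\overline{b})\wedge\operatorname{neg}^j(\overline{a},\overline{b})$, and badness forces $(j,\mathcal{H})\notin J$. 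Recall that $\iota^\mathcal{H}(\overline{b})$ makes the $b_i$ pairwise distinct and makes $h_i\mapsto b_i$ an isomorphism $\mathcal{H}\to\mathcal{A}[\{b_1,\dots,b_\ell\}]$. By the definition of $J$, either (I) $\operatorname{pos}^j(\overline{x},\overline{y})$ contains a literal that is not of the form $x_{i'}=y_i$ --- necessarily then a relation atom $R(\cdots)$ containing at least one variable from $\overline{x}$ and at least one from $\overline{y}$, since every literal of $\operatorname{pos}^j$ mixes the two kinds --- or (II) all literals of $\operatorname{pos}^j$ have the form $x_{i'}=y_i$, but $\mathcal{M}\sqcup\mathcal{H}[P_{j,\mathcal{H}}]\not\models\varphi$.

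The structural heart of the proof is to derive the interaction in both cases. Case (I) is immediate: substituting $\overline{a},\overline{b}$ into the atom $R(\cdots)$ and using $\mathcal{A}\models\operatorname{pos}^j(\overline{a},\overline{b})$ exhibits a tuple of $R^\mathcal{A}$ containing a coordinate of $\overline{a}$ and a coordinate of $\overline{b}$. Case (II) is where I would do most of the work. Suppose towards a contradiction that no coordinate $b_i$ with $h_i\in P_{j,\mathcal{H}}$ equals a coordinate of $\overline{a}$, and that no such $b_i$ lies in a common tuple of $\mathcal{A}$ with a coordinate of $\overline{a}$. Then $\{a_1,\dots,a_k\}\cup\{b_i: h_i\in P_{j,\mathcal{H}}\}$ is, both set-theoretically and with respect to every relation, the disjoint union of $\{a_1,\dots,a_k\}$ (which induces $\mathcal{M}$ by $\iota^\mathcal{M}(\overline{a})$) and $\{b_i: h_i\in P_{j,\mathcal{H}}\}$ (which induces $\mathcal{H}[P_{j,\mathcal{H}}]$ by $\iota^\mathcal{H}(\overline{b})$), so it induces in $\mathcal{A}$ a copy of $\mathcal{M}\sqcup\mathcal{H}[P_{j,\mathcal{H}}]$. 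Since $\chi$ is quantifier free, it is absolute between $\mathcal{A}$ and any induced substructure for tuples from that substructure; hence $\mathcal{A}\models\forall\overline{y}\,\chi(\overline{a},\overline{y})$ already restricts to $\forall\overline{y}\,\chi(\overline{a},\overline{y})$ on this induced copy, so the copy satisfies $\exists\overline{x}\forall\overline{y}\,\chi=\varphi$, i.e.\ $\mathcal{M}\sqcup\mathcal{H}[P_{j,\mathcal{H}}]\models\varphi$, contradicting (II). Thus in every case $\overline{b}$ interacts with $\overline{a}$ through some coordinate; note that the set $P_{j,\mathcal{H}}$ was defined precisely so that the coordinates relevant here are exactly the ``unforced'' ones (those not already identified by $\operatorname{pos}^j$ with a coordinate of $\overline{a}$).

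Finally I would carry out the counting: by the degree bound each of $a_1,\dots,a_k$ occurs in at most $d$ tuples of $\mathcal{A}$, so $\overline{a}$ meets at most $k\cdot d$ tuples in total; each bad $\overline{b}$ is charged either to one of these tuples (via a coordinate of $\overline{b}$ it shares with $\overline{a}$) or to a coordinate of $\overline{b}$ equal to some $a_{i'}$, and keeping track of which coordinate of $\overline{b}$ is pinned and how it sits inside the charged tuple (or inside $\overline{a}$) yields the bound $k\cdot d$. The hard part will be Case (II): one must isolate the correct ``unforced part'' $P_{j,\mathcal{H}}$ of $\overline{b}$ and see that keeping exactly this part disjoint from $\overline{a}$ suffices to run the induced-substructure/preservation argument --- the interplay between $\operatorname{pos}^j$, the equalities it may contain, and the set $P_{j,\mathcal{H}}$ is the delicate point --- while the bookkeeping needed to turn the qualitative ``interacts with $\overline{a}$'' into the exact constant $k\cdot d$ is the other place that requires care.
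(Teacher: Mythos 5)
Your proposal follows the same overall strategy as the paper: show that every bad $\overline{b}$ must ``interact'' with $\overline{a}$, and then appeal to the degree bound on the coordinates of $\overline{a}$. Your Case~I/Case~II split correctly tracks the two ways a pair $(j,\mathcal{H})$ can fail to lie in $J$, and your Case~II argument (if the unforced coordinates are both disjoint from $\overline{a}$ and tuple-disjoint, then $\{a_1,\dots,a_k\}\cup\{b_i:h_i\in P_{j,\mathcal{H}}\}$ induces $\mathcal{M}\sqcup\mathcal{H}[P_{j,\mathcal{H}}]$, and quantifier-free absoluteness of $\chi$ forces $\mathcal{M}\sqcup\mathcal{H}[P_{j,\mathcal{H}}]\models\varphi$) is precisely the content that the paper compresses into the single assertion ``every $\overline{b}\in B$ adds at least one to $\sum_{i=1}^k\deg_\mathcal{A}(a_i)$''. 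So the qualitative part of your argument is a faithful, and considerably more explicit, reconstruction of the paper's.

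The gap is in the quantitative step, which you flag but do not resolve. Showing that each bad $\overline{b}$ meets $\overline{a}$ (via a shared tuple, or, in your Case~II, possibly only via a coincidence $b_i=a_{i'}$) produces a \emph{charge} for each bad $\overline{b}$, but it does not yield $|B|\le k\cdot d$ unless the charging map is injective into the set of at most $k\cdot d$ tuples incident to $\overline{a}$. Your scheme is not injective as described: two bad tuples $\overline{b},\overline{b}'$ can share exactly the same interacting coordinate (say $b_1=b_1'$ sits in the same relation tuple with some $a_{i'}$) while differing arbitrarily in the remaining $\ell-1$ coordinates, so a single charge can be paid by many bad tuples. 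Moreover, a pure coincidence $b_i=a_{i'}$ (your Case~II, first alternative) does not correspond to any tuple counted in $\sum_i\deg_\mathcal{A}(a_i)$, so you would additionally have to either rule that sub-case out (e.g.\ by passing to a smaller set $Q\subseteq P_{j,\mathcal{H}}$ and rerunning the absoluteness argument) or charge it elsewhere. Note, for what it is worth, that the paper's own proof is exactly as terse at this point --- it asserts the budget argument in one sentence without justifying why distinct bad $\overline{b}$'s consume distinct units of the degree budget --- so you have reproduced the paper's reasoning faithfully, including the place where it is under-argued; but as a self-contained proof of the stated bound, the counting step is a genuine gap that the proposal leaves open.
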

	\begin{proof}
		Since	$\mathcal{A}\models \forall \overline{y}\,\chi(\overline{a},\overline{y})$, it holds that $\mathcal{A}\models \forall \overline{y} \bigvee_{{\mathcal{H}\in \mathfrak{H}_j,}\atop{j\in J_\mathcal{M}}} \Big(\iota^\mathcal{H}(\overline{y})\land \operatorname{pos}^j(\overline{a},\overline{y})\land \operatorname{neg}^j(\overline{a},\overline{y})\Big)$ by Equation(\ref{eqn:phi_3}). Now let 
		$B:=\{\overline{b}\in A^\ell\mid \mathcal{A}\not\models \bigvee_{(j,\mathcal{H})\in J}(\iota^\mathcal{H}(\overline{b})\land \operatorname{pos}^j(\overline{a},\overline{b})\land \operatorname{neg}^j(\overline{a},\overline{b}))\}\subseteq A^\ell$. Then every $\overline{b}\in B$ adds at least one to $\sum_{i=1}^{k}\deg_\mathcal{A}(a_i)$. Since $\mathcal{A}\in C_d$ implies that $\sum_{i=1}^{k}\deg_\mathcal{A}(a_i)\leq k\cdot d$ we get that $|B|\leq k \cdot d$.
	\end{proof}	
	\begin{claim}\label{claim:propertiesOfUniversalProperties}
		Let $\psi$ be a formula of the form $\psi =  \forall \overline{z} \bigvee_{i\in I}c^i(\overline{z})$ where $\overline{z}=(z_1,\dots,z_t)$ is a tuple of variables and $c^i$ is a conjunction of literals. Let $\mathcal{A}\in C_d $ with $|A|> d\cdot \ar(\sigma)\cdot t$ 
		and let $b\in A$ be an arbitrary element. Let $\mathcal{A}\models \psi$ and let $\mathcal{A}'$ be obtained from $\mathcal{A}$ by `isolating' $b$, i.\,e.\ by deleting all tuples containing $b$ from $R^\mathcal{A}$ for every $R\in \sigma$. Then  $\mathcal{A}'\models\psi$. 
	\end{claim}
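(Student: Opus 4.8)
The plan is to check the universal sentence $\psi$ on $\mathcal{A}'$ one tuple at a time. Since $\mathcal{A}'$ has the same universe as $\mathcal{A}$, it suffices to show that for every $\overline{c}=(c_1,\dots,c_t)\in A^t$ there is an $i\in I$ with $\mathcal{A}'\models c^i(\overline{c})$. The key point is that whether a conjunction of literals $c^i$ holds at a tuple depends only on the quantifier-free type of that tuple, i.e.\ on which of its entries coincide and, for each $R\in\sigma$ and each choice of coordinates, whether the corresponding sub-tuple lies in $R$. So the idea is, given $\overline{c}$, to exhibit a tuple $\overline{c}^{\,*}\in A^t$ whose quantifier-free type \emph{in $\mathcal{A}$} equals the quantifier-free type of $\overline{c}$ \emph{in $\mathcal{A}'$}; then $\mathcal{A}\models\psi$ yields $\mathcal{A}\models c^i(\overline{c}^{\,*})$ for some $i$, and hence $\mathcal{A}'\models c^i(\overline{c})$ for the same $i$.

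If $b$ does not appear among $c_1,\dots,c_t$ one takes $\overline{c}^{\,*}:=\overline{c}$: isolating $b$ removes only tuples that contain $b$, so it changes no atomic fact about elements different from $b$ and no equality. If $b$ does appear, let $P:=\{j\mid c_j=b\}$ and $U:=\{c_j\mid j\notin P\}$, so $|U|<t$. I would then choose a replacement element $b^*\in A$ that does not lie in $\{c_1,\dots,c_t\}$ and that shares no $\mathcal{A}$-tuple with any element of $U$. Such a $b^*$ exists by the size hypothesis: the forbidden candidates form the set consisting of $\{c_1,\dots,c_t\}$ together with all elements sharing an $\mathcal{A}$-tuple with one of them, and since each element lies in at most $d$ tuples, each of arity at most $\ar(\sigma)$, this set has size at most $d\cdot\ar(\sigma)\cdot t<|A|$. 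Let $\overline{c}^{\,*}$ be $\overline{c}$ with every occurrence of $b$ replaced by $b^*$. Its quantifier-free type in $\mathcal{A}$ then matches that of $\overline{c}$ in $\mathcal{A}'$: the equality patterns agree because $b^*\notin U$ (equalities being untouched by isolation), and an atom $R(z_{j_1},\dots,z_{j_r})$ holds of $\overline{c}^{\,*}$ in $\mathcal{A}$ iff of $\overline{c}$ in $\mathcal{A}'$ — if no $j_l$ lies in $P$ the relevant sub-tuple avoids $b$ and is unaffected by the isolation, and if some $j_l$ lies in $P$ the sub-tuple is false on the $\mathcal{A}'$ side (it contains the isolated element $b$) and false on the $\mathcal{A}$ side (it would be an $R$-tuple of $\mathcal{A}$ containing $b^*$ with all other entries in $U$, contradicting the choice of $b^*$).

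I expect the crux to be exactly the choice of $b^*$ together with the last sub-case of the type check: one must guarantee that after the substitution no atom accidentally becomes true in $\mathcal{A}$ that was false in $\mathcal{A}'$, and it is precisely here that the degree bound $d$ and the hypothesis $|A|>d\cdot\ar(\sigma)\cdot t$ are used, keeping the forbidden set for $b^*$ of constant size. (One should also be mildly careful with a clause containing a positive atom all of whose variables are identified under $\overline{c}$; this does not create a problem for the formulas $\psi$ to which the claim is applied, where the relevant clauses force the universally quantified elements to be pairwise distinct.) The remaining ingredients — that satisfaction of a conjunction of literals is determined by the quantifier-free type, and the trivial case $b\notin\{c_1,\dots,c_t\}$ — are routine.
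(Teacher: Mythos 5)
Your proof takes the same route as the paper's: for a tuple $\overline{c}$ containing $b$, replace every occurrence of $b$ by an element $b^*$ which shares no $\mathcal{A}$-tuple with the remaining entries, argue that the quantifier-free type of $\overline{c}^*$ in $\mathcal{A}$ matches that of $\overline{c}$ in $\mathcal{A}'$, and transfer the disjunct witnessing $\mathcal{A}\models\psi$. The paper phrases the choice condition as $\dist_{\mathcal{A}}(c_j,b')>1$ for every $j$, but this is equivalent to your \emph{shares no $\mathcal{A}$-tuple with any entry} condition, and the counting that guarantees such a $b^*$ is the same in both.

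Your parenthetical caveat at the end is not a minor aside --- it points at a genuine gap that the paper's proof also slides over, and it should not be waved away. The chosen $b^*$ can still occur in $R^{\mathcal{A}}$-tuples all of whose entries equal $b^*$ (e.g.\ a self-loop $(b^*,b^*)\in R^{\mathcal{A}}$), which are not ruled out by ``shares no tuple with the other entries'' nor by ``distance $>1$ from all $c_j$''. In that case the type-matching fails: for an atom $R(z_{j_1},\ldots,z_{j_r})$ whose sub-tuple of $\overline{c}$ is all $b$, the atom is false on $\overline{c}$ in $\mathcal{A}'$ (since $b$ was isolated) but may be true on $\overline{c}^*$ in $\mathcal{A}$. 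The paper asserts ``$b'$ is an isolated element in $\mathcal{A}[\{a_1',\dots,a_t'\}]$'' as a consequence of the distance condition, but this does not follow for such self-tuples, and in fact the claim as stated is false: with $\sigma=\{R\}$, $R$ binary, universe $\{1,\dots,5\}$, $R^{\mathcal{A}}=\{(a,a)\mid a\in A\}$, so $d=1$, $\ar(\sigma)=2$, and $\psi=\forall z_1\forall z_2\,\bigl(R(z_1,z_1)\lor z_1=z_2\bigr)$, we have $|A|=5>4=d\cdot\ar(\sigma)\cdot t$ and $\mathcal{A}\models\psi$, yet isolating $b=1$ yields $\mathcal{A}'\not\models\psi$ (take $z_1=1,z_2=2$). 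So the claim needs to be restricted --- say, to formulas whose disjuncts contain no positive atom in which every variable can be mapped to the isolated element --- and you are right that the restriction is met where the claim is actually invoked. Spelling out that restriction, rather than deferring to the application, would turn your proof into a correct and complete one.
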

	\begin{proof}
		First note that $\mathcal{A}'\models\bigvee_{i\in I}c^i(\overline{a})$ for any tuple $\overline{a}=(a_1,\dots,a_t)\in (A\setminus \{b\})^t$ as no tuple over the set of elements $\{a_1,\dots,a_t\}$ has been deleted. 
		Let $\overline{a}=(a_1,\dots,a_t)\in A^t$ be a tuple containing $b$. Pick $b'\in A$ such that $\operatorname{dist}_\mathcal{A}(a_j,b')>1$ for every $j\in \{1,\dots,t\}$. Such an element exists as $|A|> d\cdot \ar(R)\cdot t$. Let $\overline{a}'=(a_1',\dots,a_t')$ be the tuple obtained from $\overline{a}$ by replacing any occurrence of $b$ by $b'$. Hence $a_j\mapsto a_j'$ defines an isomorphism from $\mathcal{A}'[\{a_1,\dots,a_t\}]$ to $\mathcal{A}[\{a_1',\dots,a_t'\}]$ since $b$ is an isolated element in $\mathcal{A}'[\{a_1,\dots,a_t\}]$ and $b'$ is an isolated element in $\mathcal{A}[\{a_1',\dots,a_t'\}]$. Since $\mathcal{A}\models \bigvee_{i\in I}c^i(\overline{a}')$, it follows that $\mathcal{A}'\models \bigvee_{i\in I}c^i(\overline{a})$.
	\end{proof}
Let $J'\subseteq J$ be the set of all pairs $(j,\mathcal{H})$ for which $\operatorname{pos}^j(\overline{x},\overline{y})$ is the empty conjunction. $J'$ contains $(j,\mathcal{H})$ for which we want to use $\iota^\mathcal{H}(\overline{y})$ to define the formula $\psi$.
	\begin{claim}\label{claim:indistinguishable}
		The property $P_\varphi$ with $\varphi$ as in (\ref{eqn:phi_3}) is indistinguishable from the property $P_\psi$ where $\psi:=\forall \overline{y} \bigvee_{(j,\mathcal{H})\in J'}\iota^\mathcal{H}(\overline{y})$.  
	\end{claim}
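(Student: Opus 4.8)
The plan is to establish the two directions of indistinguishability separately, using the structural claims already proved.

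\textbf{From $P_\varphi$ to $P_\psi$.} Let $\mathcal{A}\models\varphi$ with $|A|$ large. Fix a tuple $\overline{a}=(a_1,\dots,a_k)$ witnessing $\mathcal{A}\models\forall\overline{y}\,\chi(\overline{a},\overline{y})$. The idea is to delete from $\mathcal{A}$ all tuples that involve any element of $\{a_1,\dots,a_k\}$; call the result $\mathcal{A}'$. This removes at most $k\cdot d$ tuples, so $\mathcal{A}$ and $\mathcal{A}'$ are $\epsilon$-close once $|A|>N(\epsilon):=k\cdot d/(\epsilon d)=k/\epsilon$. It remains to check $\mathcal{A}'\models\psi$, i.e.\ every $\overline{b}\in (A)^\ell$ satisfies some $\iota^\mathcal{H}(\overline{b})$ with $(j,\mathcal{H})\in J'$. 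First I would handle tuples $\overline{b}$ disjoint from $\{a_1,\dots,a_k\}$: by Claim~\ref{claim:notManyTuples}, all but at most $k\cdot d$ such tuples (in $\mathcal{A}$) satisfy some clause indexed by $(j,\mathcal{H})\in J$; for those, since $\operatorname{pos}^j$ only contains equalities $x_{i'}=y_i$ and $\overline{b}$ avoids $\overline{a}$, no such equality can hold, so in fact $\operatorname{pos}^j$ must be empty, i.e.\ $(j,\mathcal{H})\in J'$, and $\iota^\mathcal{H}(\overline{b})$ holds in $\mathcal{A}$; because $\overline{b}$ avoids $\overline{a}$, no tuple on $\{b_1,\dots,b_\ell\}$ was deleted, so $\iota^\mathcal{H}(\overline{b})$ still holds in $\mathcal{A}'$. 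The remaining $\overline{b}$'s — those meeting $\{a_1,\dots,a_k\}$, plus the $\le k\cdot d$ exceptional ones — are handled by a ``isolation/shifting'' argument in the spirit of Claim~\ref{claim:propertiesOfUniversalProperties}: in $\mathcal{A}'$ every $a_i$ is isolated, so replacing occurrences of $a_i$ in $\overline{b}$ (and of the exceptional elements) by fresh far-apart elements yields an isomorphic induced substructure, reducing to the disjoint case. This needs $|A|$ large enough to find the fresh elements, which is absorbed into $N(\epsilon)$.

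\textbf{From $P_\psi$ to $P_\varphi$.} Let $\mathcal{B}\models\psi$ with $|B|$ large. The idea is to \emph{plant} a copy of $\mathcal{M}$: pick $k$ elements $a_1,\dots,a_k\in B$ that are pairwise far apart (distance $>2$ in the Gaifman graph, say), delete all tuples currently touching them, and then add exactly the tuples prescribed by $\iota^\mathcal{M}$ on $(a_1,\dots,a_k)$. This modifies $O(k\cdot d)$ tuples, hence is $\epsilon$-close for $|B|>N(\epsilon)$, and the degree bound is maintained since $\mathcal{M}\in C_d$. Call the result $\tilde{\mathcal{B}}$; we need $\tilde{\mathcal{B}}\models\varphi$ with witness $\overline{a}=(a_1,\dots,a_k)$. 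Clearly $\tilde{\mathcal{B}}\models\iota^\mathcal{M}(\overline{a})$. For the universal part, take any $\overline{b}\in B^\ell$. If $\overline{b}$ avoids $\{a_1,\dots,a_k\}$ and is far from it, then its induced substructure is unchanged from $\mathcal{B}$, so $\iota^\mathcal{H}(\overline{b})$ holds for some $(j,\mathcal{H})\in J'\subseteq J$; since $\operatorname{pos}^j$ is empty, and by the definition of $J$ we have $\mathcal{M}\sqcup\mathcal{H}[P_{j,\mathcal{H}}]\models\varphi$ — actually we directly get the clause $\iota^\mathcal{H}(\overline{b})\wedge\operatorname{pos}^j(\overline{a},\overline{b})\wedge\operatorname{neg}^j(\overline{a},\overline{b})$ satisfied, because $\operatorname{pos}^j$ empty is trivially true and $\operatorname{neg}^j$ holds as $\overline{a}$ and $\overline{b}$ share no tuples (they are far apart and the planted $\mathcal{M}$ only added tuples among the $a_i$'s). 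The case that $\overline{b}$ meets or is near $\{a_1,\dots,a_k\}$ is again dealt with by a shifting argument, replacing near/coinciding elements with fresh far ones to reduce to the previous case; the planted $\mathcal{M}$ occupies only a bounded region so far elements exist when $|B|$ is large.

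\textbf{Main obstacle.} The delicate point is the treatment of tuples $\overline{b}$ that interact with the existential witness $\overline{a}$ — both the ``bad'' tuples from Claim~\ref{claim:notManyTuples} (which only satisfy clauses with nonempty $\operatorname{pos}^j$) and the tuples literally containing some $a_i$. In the $P_\varphi\to P_\psi$ direction, after isolating $\overline{a}$, these tuples no longer satisfy the original ``interacting'' clause, so one must argue via the $J'$-clauses instead; the shifting argument works because isolation makes the relevant substructure coincide with a substructure not touching $\overline{a}$, where a $J'$-clause is already known to hold. Getting the constants in $N(\epsilon)$ right (they depend on $k$, $\ell$, $d$, $\ar(\sigma)$ and the number of exceptional tuples, all constants) and verifying that $\operatorname{neg}^j$ is preserved under the planting/isolation operations is the bulk of the routine work; I expect no conceptual difficulty beyond carefully combining Claims~\ref{claim:notManyTuples} and~\ref{claim:propertiesOfUniversalProperties}.
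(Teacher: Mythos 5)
Your two-step plan (isolate the existential witness to go from $P_\varphi$ to $P_\psi$; plant $\mathcal{M}$ to go from $P_\psi$ to $P_\varphi$) is the same as the paper's, and your $P_\psi\to P_\varphi$ direction essentially matches the paper's argument. However, there is a genuine gap in the $P_\varphi\to P_\psi$ direction.

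You define $\mathcal{A}'$ by deleting only the tuples touching $\{a_1,\dots,a_k\}$. Consider an exceptional tuple $\overline{b}$ from Claim~\ref{claim:notManyTuples} that is \emph{disjoint} from $\{a_1,\dots,a_k\}$ as a set of elements. Such tuples exist: $\overline{b}$ becomes exceptional whenever the only satisfiable clauses $\iota^\mathcal{H}(\overline{b})\wedge\operatorname{pos}^j(\overline{a},\overline{b})\wedge\operatorname{neg}^j(\overline{a},\overline{b})$ have $\operatorname{pos}^j$ containing a relational atom such as $R(x_{i'},y_i)$ (so the tuple shares a relation with $\overline{a}$ without sharing elements). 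Your argument that ``$\operatorname{pos}^j$ must be empty because $\overline{b}$ avoids $\overline{a}$'' only rules out equalities $x_{i'}=y_i$; it does not rule out relational atoms, and such a $j$ is precisely outside $J$. For these $\overline{b}$, after isolating $\overline{a}$ the induced substructure $\mathcal{A}'[\{b_1,\dots,b_\ell\}]=\mathcal{A}[\{b_1,\dots,b_\ell\}]$ is unchanged, and there is no guarantee that it is $\iota^\mathcal{H}$ for some $(j,\mathcal{H})\in J'$. The shifting argument you invoke (``replacing occurrences of $a_i$ \emph{and of the exceptional elements} by fresh far-apart elements yields an isomorphic induced substructure'') fails here because the exceptional elements are \emph{not} isolated in your $\mathcal{A}'$ — replacing a non-isolated element by a fresh distant one changes the induced substructure. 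The paper's proof avoids this by isolating a larger set: it deletes all tuples touching $C=\{a_1,\dots,a_k\}\cup\bigcup_{\overline{b}\in B}\{b_1,\dots,b_\ell\}$, where $B$ is the (constant-size) set of exceptional tuples. Then every $\overline{b}\in(A\setminus C)^\ell$ is automatically non-exceptional \emph{and} disjoint from $\overline{a}$ in $\mathcal{A}$, which is what makes the $J'$-clause available, and the shifting argument for tuples touching $C$ is legitimate because \emph{all} elements of $C$ are isolated in $\mathcal{A}'$. Incorporating the exceptional elements into the isolated set is the missing ingredient in your proposal.
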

	\begin{proof} Let $\epsilon>0$ and  $N(\epsilon)=N:= \frac{k\cdot \ell^2\cdot d\cdot \ar(R)}{\epsilon}$ and $\mathcal{A}\in C_d$ be  any structure with $|A|>N$. 
		
		First assume that $\mathcal{A}\models \varphi$. The strategy is to isolate any element $b$  which is contained in a tuple $\overline{b}\in A^\ell$ such that  $\mathcal{A}\not \models \bigvee_{(j,\mathcal{H})\in J'}\iota^\mathcal{H}(\overline{b})$ by deleting all tuples containing $b$. This will result in a structure which is  $\epsilon$-close to $\mathcal{A}$ and a model of $\psi$.
		
		Let $\overline{a}\in A^k$ be a tuple such that $\mathcal{A}\models \forall \overline{y}\chi(\overline{a},\overline{y})$. Let $B\subseteq A^\ell$ be the set of tuples $\overline{b}\in A^\ell $ such that $\mathcal{A}\not\models \bigvee_{(j,\mathcal{H})\in J}(\iota^\mathcal{H}(\overline{b})\land \operatorname{pos}^j(\overline{a},\overline{b})\land \operatorname{neg}^j(\overline{a},\overline{b}))$.  Then $|B|\leq \ell\cdot d\cdot \ar(R)$ by Claim \ref{claim:notManyTuples}.  Hence the structure $\mathcal{A}'$ obtained from $\mathcal{A}$ by deleting all tuples containing an element of  $C:=\{a_1,\dots,a_k,b_1,\dots,b_\ell\mid(b_1,\dots,b_\ell)\in B\}$ is $\epsilon$-close to $\mathcal{A}$. 
		Since $\mathcal{A}\models \forall \overline{y}\chi(\overline{a},\overline{y})$ implies $\mathcal{A}\models \forall \overline{y}\bigvee_{{\mathcal{H}\in \mathfrak{H}_j,}\atop{j\in J_\mathcal{M}}} \iota^\mathcal{H}(\overline{y})$ by Claim \ref{claim:propertiesOfUniversalProperties} we know that $\mathcal{A}'\models  \forall \overline{y}\bigvee_{{\mathcal{H}\in \mathfrak{H}_j,}\atop{j\in J_\mathcal{M}}} \iota^\mathcal{H}(\overline{y})$. 
		For any tuple $\overline{b}=(b_1,\dots,b_\ell)\in (A\setminus C)^\ell$ we have by definition of $J'$ that $\mathcal{A}\models \iota^\mathcal{H}(\overline{b})$ for some $(j,\mathcal{H})\in J'$. Furthermore $\mathcal{A}[\{b_1,\dots,b_\ell\}]=\mathcal{A}'[\{b_1,\dots,b_\ell\}]$ and hence $\mathcal{A}'\models \bigvee_{(j,\mathcal{H})\in J'}\iota^\mathcal{H}(\overline{b})$.
		Let $\overline{b}=(b_1,\dots,b_\ell)\in A^\ell$ be any tuple containing element from $C$ and let $c_1,\dots,c_t\in C$ be those elements.  
		Pick $t$ elements $c_1',\dots,c_{t}'\in A\setminus C$ such that $\operatorname{dist}_\mathcal{A}(a_i,c_{i'}')>1$ and $\operatorname{dist}_\mathcal{A}(c_{i'}',b_{i})>1$ for suitable $i,i'$. This is possible as $|A|> (k+2\ell)\cdot d\cdot \ar(R)$ which guarantees the existence of $k+2\ell$ elements of pairwise distance $1$.
		Let $\overline{b}'=(b_1',\dots,b_\ell')$ be the vector obtained from $\overline{b}$ by replacing $c_{i}$  with $c'_i$. Since $\overline{b}'\in A^\ell$ there must be $j'$, $\mathcal{H'}\in \mathfrak{H}_j$ such that  $\mathcal{A}\models \iota^\mathcal{H'}(\overline{b}')\land \operatorname{pos}^{j'}(\overline{a},\overline{b}')\land \operatorname{neg}^{j'}(\overline{a},\overline{b}')$. By choice of $c_1',\dots,c_{t}'$ we have that $\operatorname{pos}_{j'}(\overline{x},\overline{y})$ must be the empty conjunction and hence $(j',\mathcal{H'})\in J'$. 
		Since additionally $b_{i}\mapsto b_{i}'$ defines an isomorphism of $\mathcal{A}[\{b_1',\dots,b_\ell'\}]$ and $ \mathcal{A}'[\{b_1,\dots,b_\ell\}]$ this implies that $\mathcal{A}'\models  \bigvee_{(j,\mathcal{H})\in J'}\iota^\mathcal{H}(\overline{b})$ for all $\overline{b}\in A^\ell$ and hence $\mathcal{A}'\models \psi$.\\

		Now we prove the other direction. Let $\mathcal{A}\models \psi$ with $|A|>N$. The idea here is to plant the structure $\mathcal{M}$ somewhere in $\mathcal{A}$. While this takes less then an $\epsilon$ fraction of edge modifications the resulting structure will be a model of $\varphi$.
		
		Take any set $B\subseteq A$ of $|M|$ elements. Let $\mathcal{A}'$ be the structure obtained from $\mathcal{A}$ by deleting all edges incident to any element contained in $B$. Let $\mathcal{A}''$ be the structure obtained from $\mathcal{A}'$ by adding all tuples such that the structure induced by $B$ is isomorphic to $\mathcal{M}$. This takes no more then $2\ell\cdot d\cdot \ar(R)<\epsilon\cdot d \cdot |A|$ edge modifications Let $\overline{a}\in B^k$ be such that $\mathcal{A}\models \iota^\mathcal{M}(\overline{a})$.  By Claim \ref{claim:propertiesOfUniversalProperties} we get $\mathcal{A}'\models \psi$. Therefore pick any tuple $\overline{b}=(b_1,\dots,b_\ell)\in (A\setminus B)^\ell$. Since by construction we have that all $b_i$'s are of distance at least one from $\overline{a}$ we have that  $\mathcal{A}''\models \bigvee_{(j,\mathcal{H})\in J'}(\iota^\mathcal{H}(\overline{b})\land \operatorname{neg}^j(\overline{a},\overline{b}))$. By choice of $\mathcal{M}$ we also know that $\mathcal{A}''\models  \bigvee_{{\mathcal{H}\in \mathfrak{H}_j,}\atop{j\in J_\mathcal{M}}}\Big(\iota^\mathcal{H}(\overline{b})\land \operatorname{pos}^j(\overline{a},\overline{b})\land \operatorname{neg}^j(\overline{a},\overline{b})\Big)$ for all $\overline{b}\in B^\ell$. 
		Therefore pick $\overline{b}=(b_1,\dots,b_\ell)$ containing both elements from $B$ and from $A\setminus B$. Now pick a tuple $\overline{b}'=(b_1',\dots,b_\ell')\in (A\setminus B)^\ell$ that equals $\overline{b}$ in all positions containing an element from $A\setminus B$. As noted before there is $(j,\mathcal{H})\in J'$ such that $\mathcal{A}''\models (\iota^\mathcal{H}(\overline{b}')\land \operatorname{neg}^j(\overline{a},\overline{b}'))$. By the definition of $J,J'$ this means that  $\mathcal{A}''[\{a_1,\dots, a_k,b'_1\dots b_\ell'\}]\models\varphi$. Since $\overline{b}\in \{a_1,\dots, a_k,b'_1\dots b_\ell'\}^\ell$  this implies $\mathcal{A}''[\{a_1,\dots, a_k,b'_1\dots b_\ell'\}]\models \bigvee_{{\mathcal{H}\in \mathfrak{H}_j,}\atop{j\in J_\mathcal{M}}}\Big(\iota^\mathcal{H}(\overline{b})\land \operatorname{pos}^j(\overline{a},\overline{b})\land \operatorname{neg}^j(\overline{a},\overline{b})\Big)$. Then $\mathcal{A}''\models \bigvee_{{\mathcal{H}\in \mathfrak{H}_j,}\atop{j\in J_\mathcal{M}}}\Big(\iota^\mathcal{H}(\overline{b})\land \operatorname{pos}^j(\overline{a},\overline{b})\land \operatorname{neg}^j(\overline{a},\overline{b})\Big)$ and hence $\mathcal{A}''\models \varphi$.
	\end{proof}
	Since $\psi \in \Pi_1$ we have that $P_\psi$ is testable, and hence $P_\varphi$ is testable by Claim \ref{claim:indistinguishable}.
\end{proof}

\section{Testing properties of neighbourhoods}
\label{sec:freeness}
In this section we only consider simple graphs, \ie undirected graphs without self-loops and without parallel edges, and  for any  $d\in \mathbb{N}$ let $C_d$  be the class of simple graphs of bounded degree $d$.
We view simple graphs as structures over the signature $\sigma_{\operatorname{graph}}:=\{E\}$, where $E$ encodes a binary, symmetric and irreflexive relation. This allows transferring the notions from Section~\ref{sec:preliminaries} to graphs. 




Let $r\geq 1$ and let $\tau$ be an $r$-type and let $\varphi_{\tau}(x)$ be a FO formula saying that $x$ has $r$-type $\tau$. 
We say that a graph $G$ is \emph{$\tau$-neighbourhood regular}, if $G\models \forall x\varphi_{\tau}(x)$. We say that a graph $G$ is \emph{$\tau$-neighbourhood free}, if $G \models \lnot \exists x \varphi_{\tau}(x)$. Let $\tau_1,\dots,\tau_t$ be a list of all $r$-types in $C_d$. If $F\subseteq \{\tau_1,\dots,\tau_t\}$ we say that $G$ is $F$-free, if $G$ is $\tau$-neighbourhood free for all $\tau\in F$. 

Observe that both  $\tau$-neighbourhood-freeness and $\tau$-neighbourhood regularity can be defined by formulas in $\Pi_2$ for any neighbourhood type $\tau$. Hence the next Lemma shows that there exist neighbourhood properties that are in $\Pi_2$, but not in $\Sigma_2$.
\begin{lemma}\label{lem:existencesigma2}
There exist $1$-types $\tau,\tau'$ such that neither $\tau$-neighbourhood freeness nor $\tau'$-neighbourhood regularity can be defined by a formula in $\Sigma_2$. 
\end{lemma}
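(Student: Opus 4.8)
The plan is to exhibit concrete $1$-types and prove non-definability in $\Sigma_2$ by a compactness-style argument: a $\Sigma_2$ sentence that holds in a structure continues to hold in any structure that contains it as an induced substructure "plus extra stuff far away", because an $\exists^*\forall^*$ sentence is preserved under extensions that do not disturb the local neighbourhoods of the witnesses. More precisely, I would use the standard fact that $\Sigma_2$-sentences (and in fact all $\exists^*\forall^*$ sentences) are preserved under a suitable notion of "adding an induced substructure disjointly". So the strategy is: assume a $\Sigma_2$ formula $\psi$ defines, say, $\tau$-neighbourhood-freeness on $C_d$; pick a graph $G_0$ which \emph{does} contain a vertex of type $\tau$ but such that $G_0$ embeds into arbitrarily large $\tau$-free graphs $G_N$ as an induced subgraph with all of $G_0$ at large pairwise distance in $G_N$ from any other chosen structure; then derive a contradiction by locating the existential witnesses of $\psi$ in $G_N$ and "pulling them back" into a bounded-degree extension of $G_0$ that still satisfies $\psi$ but is not $\tau$-free.

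Concretely, for $\tau$-neighbourhood-freeness I would choose $\tau$ to be a $1$-type that is \emph{not} "closable", e.g.\ the $1$-type of a vertex of degree exactly $1$ (when $d \ge 3$, say), or more robustly a $1$-type forcing an odd local constraint that cannot be removed by only changing the star of that vertex without creating a new bad vertex. The point of Lemma~\ref{lem:existencesigma2} is only existence, so I have freedom. The cleanest route: take $\tau$ to be the $1$-type "degree $0$" (isolated vertex). Then $\tau$-neighbourhood-freeness says "there is no isolated vertex", i.e.\ $\forall x\,\exists y\, E(x,y)$ — this is a genuine $\Pi_2 \setminus \Sigma_2$ statement. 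To prove it is not in $\Sigma_2$: suppose $\psi = \exists \bar x\, \forall \bar y\, \chi(\bar x,\bar y) \equiv_d$ "no isolated vertex". Take $G_N$ = perfect matching on $2N$ vertices (no isolated vertex, so $G_N \models \psi$), fix the existential witnesses $\bar a$, and consider the finite set $W$ of vertices within distance $|\bar y|$ of $\bar a$ together with $\bar a$ itself; it has bounded size independent of $N$. For $N$ large, add one fresh isolated vertex $v$ to $G_N' := G_N \cup \{v\}$ far from $W$; by the universal part, since every $\ell$-tuple $\bar b$ of $G_N'$ either lies in $G_N$ (so $\chi(\bar a,\bar b)$ holds by monotonicity/isolation arguments) or involves $v$ (which is isolated, hence its induced neighbourhood-with-$\bar a$ agrees with that of some far-away vertex of $G_N$ of degree $1$ — wait, there are none), I need to be careful. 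The matching has no degree-$0$ vertex but also no vertex whose removal-of-edges looks like $v$. So I instead take $G_N$ to be $N$ disjoint triangles (again no isolated vertex, and now I can delete edges), or simply invoke Claim~\ref{claim:propertiesOfUniversalProperties}: "isolating a single element far from the witnesses preserves the universal part". That claim is exactly the engine needed: starting from $G_N \models \psi$, isolate one vertex to get $G_N' \models \psi$ with an isolated vertex, contradicting that $\psi$ defines $\tau$-freeness.

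For $\tau'$-neighbourhood regularity the argument is dual. I would take $\tau'$ to be, again, a convenient $1$-type — say degree $0$ — so that "$\tau'$-neighbourhood regularity" means "every vertex is isolated", i.e.\ the graph has no edges. Suppose $\psi' = \exists \bar x\, \forall \bar y\, \chi'(\bar x,\bar y)\in\Sigma_2$ defines "edgeless" on $C_d$. The edgeless graph $E_n$ on $n$ vertices satisfies $\psi'$; fix witnesses $\bar a$ in $E_n$ for large $n$. Now add a single edge between two fresh vertices $u,w$ far from $\bar a$ (equivalently, take $E_{n-2} \sqcup K_2$). Every $\ell$-tuple of the new graph either avoids $\{u,w\}$ — then its induced structure together with $\bar a$ is identical to a tuple in $E_n$, so $\chi'$ holds — or meets $\{u,w\}$, but since $u,w$ are far from $\bar a$ I can use Claim~\ref{claim:propertiesOfUniversalProperties} (adding/"de-isolating" an element far from the witnesses) in the symmetric direction, or just argue directly that the disjoint union $\mathcal{A} \sqcup K_2$ still models $\psi'$ because $\psi'$'s witnesses and all their local neighbourhoods are untouched. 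Hence $E_{n-2}\sqcup K_2 \models \psi'$ but it has an edge, contradiction. The main obstacle is making the "far-away modification preserves $\psi$" step airtight: one must check that after the modification, \emph{every} $\ell$-tuple $\bar y$ still satisfies the quantifier-free matrix, including tuples that mix witness-adjacent vertices with the modified region; this is handled exactly as in Claim~\ref{claim:propertiesOfUniversalProperties} by replacing the modified-region coordinates with fresh far-away coordinates that are isomorphic as an induced substructure, using that $n$ is large enough relative to $k+\ell$ and $d$ to find enough vertices at pairwise distance $>1$. I would therefore phrase the proof as two short applications of (a mild variant of) Claim~\ref{claim:propertiesOfUniversalProperties} together with the observation that the two candidate properties are visibly violated after a single local edge modification.
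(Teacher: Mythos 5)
Your argument for $\tau$-freeness is a valid alternative to the paper's: the paper proves a transfer claim from $C_n$ to $P_{n-1}$ (Claim~\ref{claim:sigma2DistinguishingCycleFromPath}) and then observes that $C_n$ is $\tau$-free while $P_{n-1}$ is not, whereas you take $\tau$ to be the degree-$0$ type and invoke the isolation mechanism of Claim~\ref{claim:propertiesOfUniversalProperties} (suitably extended to fix the existential witnesses as parameters, which you correctly note is straightforward). That half works.

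The second half, on $\tau'$-neighbourhood regularity, does not. Your $\tau'$ is again the degree-$0$ type, so $\tau'$-regularity means ``every vertex is isolated'', i.e.\ $\forall x\,\forall y\,\neg E(x,y)$. That is a $\Pi_1$-sentence, and $\Pi_1\subseteq\Sigma_2$ (take $k=0$ existential quantifiers in the recursive definition of $\Sigma_2$). So for this $\tau'$ the property \emph{is} in $\Sigma_2$, the lemma is false, and there is no contradiction to be derived. Relatedly, the step you lean on is not a legitimate mirror image of Claim~\ref{claim:propertiesOfUniversalProperties}: universal first-order formulas are preserved under \emph{deleting} tuples, because deletion cannot create an induced substructure that was absent before, whereas \emph{adding} an edge to $E_n$ creates an induced edge, and there is no isomorphic copy of it among the far-away isolated vertices to which the replacement trick could redirect the universal quantifiers. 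Indeed $E_{n-2}\sqcup K_2$ falsifies $\forall x\,\forall y\,\neg E(x,y)$ outright, as it must. The fix stays within your framework: choose any $1$-type $\tau'$ whose centre has positive degree (the paper uses the degree-$2$, non-adjacent-neighbours type; degree $1$ would also do), take arbitrarily large $\tau'$-regular graphs (e.g.\ $C_n$, or a perfect matching), and rerun the \emph{same} isolation argument you used for freeness: isolating a vertex $b$ far from the existential witnesses preserves a $\Sigma_2$ sentence, yet now $b$ has $1$-type degree-$0\neq\tau'$, so the result is not $\tau'$-regular.
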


Note that the above lemma implies that we cannot simply invoke the testers for testing $\Sigma_2$ properties from Theorem \ref{thm:sigma2} to test these two properties.


\begin{proof}[Proof of Lemma \ref{lem:existencesigma2}]
	For $n\in \mathbb{N}$, let $C_n$ be the cycle graph with vertex set $[n]:=\{0,1,\dots,n-1\}$. 
	Let $P_{n-1}$ be the path graph with vertex set $[n-1]$. We first show the following claim. 
	
	 \begin{claim}\label{claim:sigma2DistinguishingCycleFromPath}
Let $\varphi= \exists \overline{x}\forall \overline{y} \chi(\overline{x},\overline{y})$ where $\overline{x}=(x_1,\dots,x_k)$, $\overline{y}=(y_1,\dots,y_\ell)$ are tuples of variables and $\chi(\overline{x},\overline{y})$ is a quantifier free formula.  If $C_n\models \varphi$ then  $P_{n-1}\models \varphi$ for any $n>k$.
	\end{claim}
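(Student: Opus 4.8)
The plan is to show that whenever the cycle $C_n$ satisfies an $\exists^*\forall^*$-sentence $\varphi = \exists\overline{x}\,\forall\overline{y}\,\chi(\overline{x},\overline{y})$ (with $|\overline{x}|=k$), then the path $P_{n-1}$ satisfies it as well, provided $n>k$. The key observation is that the path $P_{n-1}$ is obtained from $C_n$ by deleting one vertex (say vertex $n-1$), or equivalently by removing one edge and one vertex — more conveniently, $P_{n-1}$ is an \emph{induced subgraph} of $C_n$ on the vertex set $\{0,1,\dots,n-2\}$. So the strategy is: take a witness tuple $\overline{a}=(a_1,\dots,a_k)\in [n]^k$ for the existential quantifiers in $C_n$, and use the fact that $n>k$ to find, by a shift/rotation of the cycle, an automorphic copy of $\overline{a}$ that avoids a chosen vertex; then argue that the induced-subgraph relation is enough to transfer the $\forall^*$-part.

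First I would make precise that $C_n$ is vertex-transitive: for any $j$, the map $i\mapsto (i+j)\bmod n$ is an automorphism of $C_n$. Since $n>k$, the $k$ values $a_1,\dots,a_k$ cannot cover all $n$ vertices, so there is some vertex $v\in[n]$ with $v\notin\{a_1,\dots,a_k\}$. Applying the rotation that sends $v$ to $n-1$, we obtain a tuple $\overline{a}'=(a_1',\dots,a_k')$ with $C_n\models\forall\overline{y}\,\chi(\overline{a}',\overline{y})$ and with $n-1\notin\{a_1',\dots,a_k'\}$, i.e.\ $\overline{a}'\in [n-1]^k$. Now I want to conclude $P_{n-1}\models\forall\overline{y}\,\chi(\overline{a}',\overline{y})$. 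For this, fix any $\overline{b}\in[n-1]^\ell$. Since $C_n\models\forall\overline{y}\,\chi(\overline{a}',\overline{y})$, in particular $C_n\models\chi(\overline{a}',\overline{b})$. The point is that $\chi$ is quantifier-free over the signature $\{E\}$ with equality, so its truth at $(\overline{a}',\overline{b})$ depends only on the equality type of the tuple and on which pairs of its entries are adjacent — and because $P_{n-1}$ is the \emph{induced} subgraph of $C_n$ on $[n-1]$, and all entries of $\overline{a}',\overline{b}$ lie in $[n-1]$, the adjacency relation and equality relation among these entries are identical in $P_{n-1}$ and in $C_n$. Hence $P_{n-1}\models\chi(\overline{a}',\overline{b})$ as well. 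Since $\overline{b}\in[n-1]^\ell$ was arbitrary, $P_{n-1}\models\forall\overline{y}\,\chi(\overline{a}',\overline{y})$, and therefore $P_{n-1}\models\varphi$.

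The only genuinely delicate point is making sure the induced-subgraph argument is airtight: one must verify that every edge of $C_n$ between two vertices of $[n-1]$ is also an edge of $P_{n-1}$ and conversely, which is immediate from the definitions ($P_{n-1}$ has edges $\{i,i+1\}$ for $i\in\{0,\dots,n-3\}$, exactly the edges of $C_n$ not incident to vertex $n-1$). So in fact there is no real obstacle here; the ``hard part'' is purely bookkeeping, namely phrasing the preservation of quantifier-free formulas under the passage to an induced substructure, which is a completely standard fact (quantifier-free formulas are preserved under both substructures and their converses when restricting to the substructure's universe). I would state this as a one-line remark rather than proving it. The reason the hypothesis $n>k$ is needed is exactly to guarantee the existence of an unused vertex so that, after rotating, the existential witnesses avoid the deleted vertex; without room to rotate, a witness tuple might be forced to ``use'' the vertex that disappears. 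This also foreshadows how the lemma will be applied: $C_n$ and $P_{n-1}$ differ in the $1$-types present (the cycle is $2$-regular, the path has two degree-$1$ endpoints), so a $\Sigma_2$ sentence cannot separate ``no vertex has the degree-$1$ type'' (true on $C_n$, false on $P_{n-1}$) — giving the desired $\tau$ with $\tau$-neighbourhood-freeness not in $\Sigma_2$, and symmetrically for regularity.
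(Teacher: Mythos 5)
Your proof is correct and matches the paper's argument in all essentials: both rely on rotating the existential witnesses (using $n>k$ to find a vertex not among them) so that they land in $[n-1]$, together with the fact that $P_{n-1}$ is exactly the induced subgraph of $C_n$ on $[n-1]$, so that quantifier-free formulas transfer between the two structures on tuples from $[n-1]$. The only difference is presentational — the paper argues by contradiction, pulling a falsifying tuple for $P_{n-1}$ back into $C_n$ via the inverse rotation, whereas you push the universal quantification forward directly — and your direct version in fact needs only the stated hypothesis $n>k$.
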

	\begin{proof}
		
Assume on the contrary that for some $n>\max\{k,\ell\}$, it holds that $C_n \models \varphi$, while $P_{n-1}\not\models \varphi$. Since $C_n\models \varphi$ there are $k$ vertices $v_1,\dots,v_k$ in $C_n$ such that $C_n\models \forall \overline{y} \chi((v_1,\dots,v_k),\overline{y})$. Since $n>k$, there exists at least one vertex $i\in [n]$ that is not amongst $v_1,\dots,v_k$. Let $v_j':= (v_j+n-1-i)\mod n$ be a vertex of $P_{n-1}$. Since $P_{n-1}\not\models \varphi$ and $v_j'\in [n-1]$, we have that $P_{n-1}\not\models \forall \overline{y}\chi((v_1'\dots,v_k'),\overline{y})$. Hence there must be vertices $w_1',\dots,w_\ell'$ in $P_{n-1}$ such that $P_{n-1}\not\models \chi((v_1',\dots,v_k'),(w_1',\dots,w_\ell'))$. Now let $w_j:=(w_j'+i+1)\mod n$. Then $v_j\mapsto v_j'$ and $w_j\mapsto w_j'$ defines an isomorphism from $C_n[\{v_1,\dots,v_k,w_1,\dots,w_\ell\}]$ and $P_{n-1}[\{v_1',\dots,v_k',w_1',\dots,w_\ell'\}]$. Hence $C_n\not \models \chi((v_1,\dots,v_k),(w_1,\dots,w_\ell))$ which contradicts that $C_n\models \varphi$.
	\end{proof}
	
Now we let $\tau$ be the $1$-neighbourhood type saying that the center vertex $x$ has exactly one neighbour. Let $\tau'$ be the $1$-neighbourhood type saying that the center vertex has two non-adjacent vertices. Since $C_n$ is $\tau$-neighbourhood free and $\tau'$-neighbourhood regular, while $P_{n-1}$ is neither, the statement of the lemma follows from Claim~\ref{claim:sigma2DistinguishingCycleFromPath}. 
\end{proof}

Now we state our main algorithmic results in this section. The first result shows that if $\tau$ is an $r$-type with degree smaller than the degree bound of the class of graphs, then the $\tau$-neighbourhood-freeness is testable.
\begin{theorem}\label{thm:dNeighbourhoodFreeness}
	Let $\tau$ be an $r$-type, where $r\geq 1$. If $\tau \subseteq C_{d'}$ and $d'<d$, then $\tau$-neighbourhood freeness is uniformly testable on the class $C_{d}$ with constant running time.
\end{theorem}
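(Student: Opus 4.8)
The plan is to reduce the theorem to a single combinatorial statement and then run the obvious tester. The combinatorial statement is: if $G\in C_d$ is $\varepsilon$-far from $\tau$-neighbourhood free, then a constant fraction (depending only on $\varepsilon,d,r$) of the vertices of $G$ have $r$-type $\tau$. Granting this, the $\varepsilon$-tester samples $m=\Theta_{\varepsilon,d,r}(1)$ vertices uniformly at random, explores the $r$-ball around each (using $O(d^r)$ neighbour queries per vertex) and rejects iff some sampled vertex has $r$-type $\tau$. Completeness holds with one-sided error since a $\tau$-free graph has no vertex of type $\tau$; soundness follows from the combinatorial statement and a union bound. The tester is independent of $n$ and runs in constant time, giving uniform testability with constant running time.

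So the real work is the combinatorial statement, which I would prove by contraposition: if only few vertices of $G$ have $r$-type $\tau$, then $G$ can be made $\tau$-neighbourhood free by few edge modifications. Let $B$ be the set of vertices of $r$-type $\tau$ and suppose $|B|\le \tfrac{\varepsilon}{c(d,r)}n$ for a suitable constant $c(d,r)$. The role of the hypothesis $\tau\subseteq C_{d'}$ with $d'<d$ is that the centre of a $\tau$-ball — in fact every vertex of a $\tau$-ball — has degree at most $d'<d$, so each $x\in B$ has $\deg_G(x)\le d'<d$ and there is room to add edges at $x$. For each $x\in B$ I would add $d'+1-\deg_G(x)\le d$ new edges joining $x$ to ``fresh'' targets $z$ of degree at most $d-1$ chosen far from $B$ and from each other (so the degree bound is preserved), raising $\deg_G(x)$ to $d'+1>d'$. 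Since $|B|$ is small, the union of the $2r$-balls around $B$ and around the already chosen targets does not cover $V$, so such targets exist; the total number of modified tuples is $O(d|B|)\le \varepsilon d n$, as required.

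It then remains to show that after these modifications no vertex has $r$-type $\tau$, and this is where the `chain reaction' must be ruled out — the main obstacle. The conceptual crux is a \emph{barrier} observation: if the $r$-ball of a vertex $w$ contains a vertex $u$ with $\dist(w,u)\le r-1$ and $\deg_G(u)>d'$, then this ball (which then contains all of $u$'s incident edges) cannot be isomorphic to a $\tau$-ball, because $\tau\subseteq C_{d'}$. Since the targets are pairwise far apart and far from $B$, the effect of the bumps can be analysed one at a time, and one argues: (i) every $x\in B$ now has centre-degree $d'+1\ne\deg_\tau(\text{centre})$, hence is no longer of type $\tau$; (ii) if the $r$-ball of a vertex $w$ is unchanged by all modifications, then $w\notin B$ forces it not to be of type $\tau$; and (iii) if the $r$-ball of some $w$ \emph{is} changed, then — using the spacing to exclude the borderline case where both endpoints of a new edge lie exactly on the boundary of $w$'s ball — some modified vertex (of degree $>d'$ after the change) lies at distance $\le r-1$ from $w$, so the barrier observation shows the new ball is not of type $\tau$. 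The delicate points are the distance bookkeeping in the modified graph (added edges can create shortcuts) and the selection of targets when low-degree vertices are scarce; I expect step (iii) and the target-selection to be the fiddly parts, with the barrier observation being the idea that makes the whole argument work.
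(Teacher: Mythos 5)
Your high-level strategy matches the paper's: reduce Theorem~\ref{thm:dNeighbourhoodFreeness} to the statement that $\epsilon$-farness from $\tau$-neighbourhood-freeness forces a constant fraction of vertices of type $\tau$ (this is exactly the role of Lemma~\ref{lemma:farImpliesManyCounterexamples-forbidden$r$-ball-oneDegreeMissing}, plugged into the tester framework of Lemma~\ref{lemma:testerFramework}), and prove that statement by contraposition via local edge modifications that raise degrees past the $\tau$-barrier. Your ``barrier observation'' is the same local obstruction the paper exploits. However, your specific modification scheme has a genuine gap in step~(iii) that the barrier observation alone does not repair.

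The gap: you add edges from each $x\in B$ to targets $z$ of degree at most $d-1$, so after modification $\deg_{G'}(x)=d'+1$ but $\deg_{G'}(z)=\deg_G(z)+1$ can still be $\le d'$. Now take $w$ whose $r$-ball changes because it contains a new edge $\{x,z\}$. The spacing argument rules out both $x$ and $z$ sitting at distance exactly $r$ from $w$, so one endpoint is at distance $\le r-1$ — but if that endpoint is $z$ with $\deg_{G'}(z)\le d'$, the barrier observation is silent: a vertex of degree $\le d'$ at distance $\le r-1$ is perfectly compatible with $\tau\subseteq C_{d'}$. Hence the assertion ``some modified vertex of degree $>d'$ lies at distance $\le r-1$ from $w$'' does not follow, and $w$ could still be of type $\tau$ in $G'$. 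The paper avoids this precisely by \emph{pairing up} vertices of the \emph{same} degree $\tilde d$ (where $\tilde d$ is the maximal degree appearing in $\mathcal N_{r-1}$ of a $\tau$-ball): it selects $v_1,\dots,v_k$ of degree $\tilde d$ hitting all $\tau$-balls and far-away targets $v_1',\dots,v_k'$ \emph{also} of degree $\tilde d$, and adds edges $\{v_i,v_i'\}$, so that \emph{both} endpoints of every new edge end up with the forbidden degree $\tilde d+1$; then it is irrelevant which endpoint lands inside the $(r-1)$-ball of $w$. Moreover, the paper handles separately the case in which there are not enough far-apart degree-$\tilde d$ vertices to serve as partners (it then kills all degree-$\tilde d$ vertices by a local degree-preserving surgery); your scheme has no analogue of this second case, which you only gesture at under ``target selection when low-degree vertices are scarce.'' To repair your argument you would need to guarantee that the target's post-modification degree also exceeds $d'$ (e.g., by insisting on targets of degree exactly $d'$, or by pairing as the paper does), together with a fallback for when such targets do not exist — at which point you have essentially reconstructed the paper's two-case lemma.
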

The second result shows if $\tau$ is a $1$-type, then $\tau$-neighbourhood-freeness is testable.
\begin{theorem}\label{thm:1NeighbourhoodFreeness}
	For every $1$-type $\tau$, $\tau$-neighbourhood freeness is uniformly testable on the class $C_{d}$ with constant time.
\end{theorem}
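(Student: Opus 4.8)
The plan is to reduce to the case where the center vertex of $\tau$ has degree exactly $d$, and then to show that every graph that is $\varepsilon$-far from $\tau$-neighbourhood-freeness contains a linear fraction of vertices of $1$-type $\tau$. The tester is then immediate: sample $\lceil (\ln 3)/\varepsilon\rceil$ vertices uniformly at random, explore the $1$-ball of each (this costs $O(d^2)$ queries per vertex: query all neighbours, then query each neighbour's neighbours to determine the adjacencies inside the ball), compute the $1$-type of each sampled vertex, and reject iff some sampled vertex has type $\tau$. If $d=0$ the statement is trivial ($C_0$ consists of edgeless graphs and $\tau$-freeness then just says $V=\emptyset$, which is decided from the input $n$), so assume $d\ge 1$. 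Let $d'$ be the degree of the center vertex of $\tau$. Every vertex of $1$-type $\tau$ has degree $d'$, and the $1$-ball of $\tau$ lies in $C_{d'}$ (the center has ball-degree $d'$ and each neighbour has ball-degree at most $1+(d'-1)=d'$). Hence if $d'<d$, the claim is exactly Theorem~\ref{thm:dNeighbourhoodFreeness}, and it remains to treat $d'=d$.

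So fix $\varepsilon\in(0,1]$ and $G=(V,E)\in C_d$ with $n:=|V|$, and set $S:=\{x\in V : x\text{ has }1\text{-type }\tau\}$; we prove that if $G$ is $\varepsilon$-far from $\tau$-neighbourhood-freeness then $|S|>\varepsilon n$. The point that avoids the ``chain reaction'' described in the introduction is that for $d'=d$ one can fix all of $S$ simultaneously by a single monotone operation: let $G'$ be obtained from $G$ by deleting every edge of $G$ incident to a vertex of $S$. This deletes at most $\sum_{x\in S}\deg_G(x)=d\,|S|$ edges, so $\dist(G,G')\le d\,|S|$; in particular if $|S|\le \varepsilon n$ then $G$ is $\varepsilon$-close to $G'$. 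Since $G'\in C_d$ (degrees only drop), it suffices to check $G'$ is $\tau$-neighbourhood-free.

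To verify this, take any $v\in V$ and distinguish three cases. If $v\in S$, then $\deg_{G'}(v)=0\ne d$, so $v$ does not have $1$-type $\tau$ in $G'$. If $v\notin S$ but $v$ has a neighbour in $S$, then at least one edge at $v$ was deleted, so $\deg_{G'}(v)\le \deg_G(v)-1\le d-1<d$, and again $v$ cannot have type $\tau$ in $G'$ — this is precisely where the hypothesis $d'=d$ is used. Finally, if $v\notin S$ and $v$ has no neighbour in $S$, then $N_{G'}(v)=N_G(v)$ and no edge with both endpoints in $N_G(v)$ was deleted (such an endpoint would lie in $S\cap N_G(v)$, contradicting the case assumption), so the $1$-ball of $v$ is unchanged; as $v\notin S$, it does not have type $\tau$. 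Hence $G'$ is $\tau$-neighbourhood-free, which proves the contrapositive: $G$ $\varepsilon$-far implies $|S|>\varepsilon n$.

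Given this, correctness of the tester follows: if $G$ is $\tau$-neighbourhood-free it never sees a vertex of type $\tau$ and always accepts (one-sided error); if $G$ is $\varepsilon$-far then a uniformly random vertex has $1$-type $\tau$ with probability $>\varepsilon$, so $\lceil (\ln 3)/\varepsilon\rceil$ independent samples hit such a vertex with probability at least $2/3$. The query complexity is $O(d^2/\varepsilon)$ and the running time is constant, and the tester does not depend on $n$, so $\tau$-neighbourhood-freeness is uniformly testable on $C_d$ with constant running time. The only genuine subtlety here is the reduction to $d'=d$ together with the observation that deleting all edges incident to $S$ is ``local and monotone'' enough never to create a new forbidden neighbourhood; this monotone fix fails when $d'<d$ (deleting edges can then drop a high-degree vertex down to degree exactly $d'$), which is why that regime is handled separately by Theorem~\ref{thm:dNeighbourhoodFreeness}.
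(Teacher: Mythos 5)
Your proposal is correct and takes essentially the same approach as the paper: the case split on whether the centre of $\tau$ has degree $d$ or degree $d'<d$ mirrors the paper's split between Lemma~\ref{lemma:farImpliesManyCounterexamples-forbidden$r$-ball-OnlyDegree$d$} and Lemma~\ref{lemma:farImpliesManyCounterexamples-forbidden$r$-ball-oneDegreeMissing} (combined in Lemma~\ref{lemma:farImpliesManyCounterexamples-forbidden1-ball}), and your isolation argument for the $d'=d$ case is exactly the paper's proof of Lemma~\ref{lemma:farImpliesManyCounterexamples-forbidden$r$-ball-OnlyDegree$d$} specialised to $r=1$, while the $d'<d$ case is legitimately discharged by citing Theorem~\ref{thm:dNeighbourhoodFreeness}, which the paper proves from that same lemma. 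The only cosmetic difference is that you describe the sampling tester directly instead of invoking the generic framework of Lemma~\ref{lemma:testerFramework}.
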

The third result says that $\tau$-neighbourhood regularity is testable for every $1$-type $\tau$ consisting of cliques, which only overlap in the centre vertex. 
\begin{theorem}\label{thm:neighbourhoodRegularity}
	Let $\tau$ be a $1$-type such that  vertex $a$ having $1$-type $\tau$ in $B$  implies that $B\setminus \{a\}$ is a union of disjoint cliques for every $1$-ball $B$ with centre $a$. Then $\tau$-neighbourhood regularity is uniformly testable on $C_{d}$ in 
	constant time. 
\end{theorem}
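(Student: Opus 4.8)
The plan is to give a one-sided error tester that uniformly samples a constant number of vertices, inspects the $1$-ball of each (which costs $O(d^2)$ neighbour queries per vertex), and accepts iff every sampled vertex has $1$-type $\tau$. Acceptance is certain when $G$ is $\tau$-neighbourhood regular, so one-sidedness and constant query complexity are immediate, and everything reduces to the following quantitative statement, which I would also phrase as "$\tau$-neighbourhood regularity is \emph{local} on $C_d$ with radius $r=1$" and feed into Theorem~\ref{thm:Locality}: there is $\gamma=\gamma(\varepsilon,d)>0$ such that if fewer than $\gamma n$ vertices of $G\in C_d$ have $1$-type different from $\tau$, then $G$ is $\varepsilon$-close to being $\tau$-neighbourhood regular. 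Given this, sampling $O(1/\gamma)$ vertices rejects every $\varepsilon$-far graph with probability $\ge 2/3$ (for $n$ below the resulting threshold the tester reads all of $G$). One must also handle sizes $n$ for which no $\tau$-neighbourhood-regular graph on $n$ vertices exists: every graph of such size is $\varepsilon$-far from the property, and since non-existence is a computable condition on $n$ and the clique sizes of $\tau$, the tester rejects such inputs outright (equivalently this case is absorbed by the vacuous clauses in the definition of locality).

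The first ingredient is a clean description of $\tau$-neighbourhood-regular graphs that exploits the clique hypothesis on $\tau$. Writing the cliques of a realisation of $\tau$ as having sizes $c_1\le\dots\le c_s$, the degree is forced to be $d':=\sum_i c_i$. I would prove that $G\in C_d$ is $\tau$-neighbourhood regular if and only if (i) $G$ is $d'$-regular, (ii) every inclusion-maximal clique of $G$ has size in $\{c_1+1,\dots,c_s+1\}$, (iii) any two distinct maximal cliques share at most one vertex, and (iv) each vertex lies in exactly $|\{i:c_i=j\}|$ maximal cliques of size $j+1$ for every $j$. Call such a $G$ a \emph{$\tau$-system}. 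The forward direction is routine; conversely, (iv) forces $v$ to lie in $s$ maximal cliques of the right sizes, (iii) makes them pairwise meet only in $v$, (i) shows they exhaust $N(v)$, and a short triangle argument shows there are no edges between them, so $N(v)$ induces $\bigsqcup_i K_{c_i}$. Thus the property is a partial-linear-space type condition; it is in particular \emph{not} a disjoint union of finitely many fixed gadgets (for $s=2$, $c_1=c_2=1$ it is exactly "disjoint union of cycles of length $\ge 4$"), so a genuine correction argument is needed.

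For the correction, let $B$ be the bad set, $|B|<\gamma n$, fix a large constant $R=R(d)$, put $U:=\{v:\dist(v,B)\le R-1\}$ (so $|U|\le d^{R}|B|$), and delete all edges incident to $U$, at cost $O(d^{R+1}|B|)$. One checks that every vertex $v$ with $\dist(v,B)\ge R+1$ still has $1$-type $\tau$ and that all its maximal cliques survive intact and lie outside $U$ (both $N(v)$ and the edges inside $N(v)$ stay at distance $\ge R$ from $B$). Hence the surviving maximal cliques avoiding $U$ already form a consistent partial $\tau$-system that is fully correct on the "core" $\{v:\dist(v,B)\ge R+1\}$; the only incompleteness is at the $\le d^{R}|B|$ fringe vertices at distance exactly $R$, whose outward cliques (shared with core vertices) are intact but whose inward cliques lost members, together with the $\le d^{R}|B|$ now-isolated vertices of $U$ — in all a set $W$ with $|W|\le C_d|B|$. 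It then remains to add $O(C_d|B|)$ edges, touching only vertices of $W$ plus a constant number of complete cliques pulled out of the deep core, so as to extend this partial $\tau$-system to a full $\tau$-system on $V$: refill each damaged slot of a fringe vertex and place each isolated vertex of $U$ into its $s$ required cliques, always realising new cliques on fresh, mutually far-apart vertices of $W$ and never altering an already-intact clique. Choosing $\gamma\le \varepsilon d/(KC_d)$ for the appropriate constant $K$ keeps the total number of modifications below $\varepsilon dn$, and since the deep core is never re-touched no "chain reaction" of new violations arises.

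The main obstacle is this final completion step. The delicate points are: (a) the distance-$R$ fringe vertices sit simultaneously in intact cliques that must be \emph{preserved} (to protect the core) and in damaged cliques that must be \emph{repaired}, so the repair has to be strictly conservative and decompose clique by clique — which is precisely where the hypothesis "$\tau$ minus its centre is a disjoint union of cliques" is used essentially, since it makes the local patches independent and non-propagating; (b) the global count of slots to be filled need not by itself be completable to a valid $\tau$-system, which I would fix by breaking and re-forming a constant number of cliques borrowed from the deep core (legitimate exactly because a $\tau$-system on $n$ vertices exists, the other case being rejected outright); and (c) one must check that $W$, although of size $o(n)$, is large enough and sufficiently spread out to host all new cliques disjointly. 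This is the clique-neighbourhood analogue of the subtle local-to-global distance argument the paper flags for degree-regularity, and it is the part requiring the most care.
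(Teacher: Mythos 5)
Your framing (one-sided tester, Lemma~\ref{lemma:testerFramework}-style reduction to ``few bad vertices $\Rightarrow$ $\varepsilon$-close'', and a residue-class set $M$ derived from the divisibility of clique counts) matches the paper's, and your structural description of $\tau$-neighbourhood-regular graphs as ``$\tau$-systems'' is correct and essentially the lens the paper uses via $\maxcl^G(v,i)$. The gap is in the correction step, and you have in fact flagged it yourself in points~(a)--(c) without closing it. After isolating the radius-$R$ neighbourhood of the bad set $B$, you are left with (i) fringe vertices missing some cliques, (ii) isolated vertices needing $s$ cliques each, and you assert you can ``refill each damaged slot\ldots always realising new cliques on fresh, mutually far-apart vertices of $W$''. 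But every new clique you build to fix one vertex changes the $\maxcl$-count of every other vertex placed in it, and the counts must come out exactly right simultaneously for \emph{all} vertices of $W$ and for every clique size $i$. This is not a local patching problem; it is a global balancing problem on a multi-set of per-vertex, per-size demands subject to the degree bound $d$. You acknowledge in~(b) that the demands ``need not by itself be completable'' and say you would ``break and re-form a constant number of deep-core cliques'', and in~(c) that $W$ must be ``large enough and sufficiently spread out''; these are precisely the two claims a proof must establish, and the proposal leaves them as stated intentions.

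For comparison, the paper resolves exactly this by a four-stage construction. After deleting edges in overlapping and wrong-sized cliques (your steps, more surgically), it performs an explicit \emph{pairwise swap} ($G^{(2)}\to G^{(3)}$): for each pair $(v,v')$ with $\maxcl$ too low at $v$ and too high at $v'$ for some size $i$, it routes a clique through a distant vertex $w$ of bad type so that exactly one count goes up, one goes down, and no third vertex's count changes. This turns the profile into one that is uniformly $\geq$ (or uniformly $\leq$) the target for each $i$. Only then ($G^{(3)}\to G^{(4)}$) does the divisibility condition $n\notin M$ get used, and it is used to take $i$ vertices at a time and shift each by exactly one maximal $i$-clique via a far-away ``donor'' clique, again with an explicit bound on how many edges each step costs and a check that the degree bound is respected. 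Your sketch has neither the rebalancing mechanism nor the ``move in groups of $i$'' argument, and these are what makes the accounting close --- so as written the proposal has a genuine gap at the step you yourself identify as the main obstacle.
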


By previous discussions, the above theorems imply that there are formulas in $\Pi_2\setminus\Sigma_2$ which are testable.

\subsection{Proofs of Theorem \ref{thm:dNeighbourhoodFreeness}, \ref{thm:1NeighbourhoodFreeness} and \ref{thm:neighbourhoodRegularity}}
We use the algorithm $\sampler_{r,s}$, that, given access to a graph $G\in C_{d}$,
samples a set $S$ of $s$ vertices of $G$ uniformly and independently and explores their $r$-balls. The algorithm returns
the \emph{distribution vector} $\bar v$ of length $t$ of the $r$-types of this sample, \ie $\bar v_i=|\{v\in S\mid \mathcal{N}_r^G(v)\in \tau_i\}|/s$.
\begin{lemma}[Lemma 5.1 in~\cite{NewmanSohler2013}]\label{lem:estimate-frequencies}
	Let $\lambda\in (0,1]$, $r\in \mathbb{N}$ and $G\in C_{d}$ with $n$ vertices.
	Let $s\geq ({t^2}/{\lambda^2})\ln(t+40)$.
	Then the vector $\bar v$ returned
	by $\sampler_{r,s}(G)$ satisfies
	$\sum_{i=1}^{t}|\rho_{G,r}(\{\tau_i\})-\bar v_i|  \leq \lambda$  with probability at least $19/20$ 
\end{lemma}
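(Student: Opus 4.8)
The statement to prove, Lemma~\ref{lem:estimate-frequencies}, is a routine concentration bound, and the plan is to treat $\bar v$ as an empirical distribution of $s$ independent samples and argue coordinatewise. First I would fix the enumeration $\tau_1,\dots,\tau_t$ of all $r$-types in $C_d$ and abbreviate $p_i:=\rho_{G,r}(\{\tau_i\})$. Let $v_1,\dots,v_s$ be the vertices drawn by $\sampler_{r,s}(G)$; by construction these are uniform and independent in $V(G)$, so for each fixed $i$ the indicators $X_{i,j}:=\mathbf{1}[\mathcal{N}_r^G(v_j)\in\tau_i]$, $j=1,\dots,s$, are i.i.d.\ Bernoulli random variables, and a single uniform vertex lies in $\tau_i$ with probability exactly $p_i$ by the definition of $\rho_{G,r}$. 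Hence $\mathbb{E}[X_{i,j}]=p_i$ and $\bar v_i=\tfrac1s\sum_{j=1}^s X_{i,j}$ is an unbiased estimator of $p_i$.

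Next I would control each coordinate separately. Applying Hoeffding's inequality to the bounded i.i.d.\ variables $X_{i,1},\dots,X_{i,s}\in\{0,1\}$ gives, for every $i\in\{1,\dots,t\}$,
\[
\Pr\!\left[\,\bigl|\bar v_i-p_i\bigr|>\lambda/t\,\right]\le 2\exp\!\bigl(-2s\lambda^2/t^2\bigr).
\]
A union bound over the $t$ types then yields
\[
\Pr\!\left[\exists\, i:\ \bigl|\bar v_i-p_i\bigr|>\lambda/t\right]\le 2t\exp\!\bigl(-2s\lambda^2/t^2\bigr).
\]
It remains to check that the hypothesis $s\ge (t^2/\lambda^2)\ln(t+40)$ makes this at most $1/20$: under that bound $2s\lambda^2/t^2\ge 2\ln(t+40)=\ln\bigl((t+40)^2\bigr)$, so the right-hand side is at most $2t/(t+40)^2$, and $2t/(t+40)^2\le 1/20$ since $40t\le t^2+80t+1600=(t+40)^2$ for all $t\ge 1$. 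Therefore, with probability at least $19/20$ we have $|\bar v_i-p_i|\le\lambda/t$ simultaneously for all $i$, and summing gives $\sum_{i=1}^t|\rho_{G,r}(\{\tau_i\})-\bar v_i|\le t\cdot(\lambda/t)=\lambda$, which is exactly the claim.

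There is no real obstacle here; the only mild point is matching the constants in the stated sample bound $(t^2/\lambda^2)\ln(t+40)$, for which the elementary inequality $40t\le(t+40)^2$ suffices. I note in passing that one could replace the coordinatewise union bound by an $\ell_1$-deviation inequality for empirical distributions (\eg the bound of Weissman et al.), obtaining a sample complexity only \emph{linear} in $t$; but since the lemma is stated (and used) with the weaker quadratic bound, the Hoeffding-plus-union-bound argument above is entirely adequate.
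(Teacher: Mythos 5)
Your proof is correct and is the standard argument one would give for this kind of sampling bound: indicator random variables for each type, Hoeffding per coordinate with error budget $\lambda/t$, a union bound over the $t$ types, and an elementary check that $2t/(t+40)^2\le 1/20$ for all $t\ge 1$ (which follows from $40t\le(t+40)^2$). The paper does not reprove this lemma; it is imported directly as Lemma 5.1 from Newman and Sohler~\cite{NewmanSohler2013}, and their proof follows essentially the same Hoeffding-plus-union-bound template, so your reconstruction matches the cited source in substance and in the role of the hypothesis $s\ge(t^2/\lambda^2)\ln(t+40)$. Your closing remark that an $\ell_1$-deviation bound would give linear-in-$t$ sample complexity is accurate but, as you say, irrelevant to the lemma as stated and used.
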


The following Lemma provides a framework that will be used in Theorems~\ref{thm:dNeighbourhoodFreeness},\ref{thm:1NeighbourhoodFreeness} and \ref{thm:neighbourhoodRegularity}.
\begin{lemma}\label{lemma:testerFramework}
	Let $F$ be a finite set of $r$-types of bounded maximum degree $d$ and let
	$P\subseteq C_{d}$ be the set of all graphs being $\F$-free. Let
	$M\subseteq N$  be a decidable set such that
	$G=(V,E)\in P$ implies that $|V|\notin M$. Let $f_M:N\rightarrow N$
	be a function such that $M$ can be decided in time $f_M$. Assume for every
	$\epsilon\in (0,1]$ there exist $\lambda:=\lambda(\epsilon) \in (0,1]$ and
	$n_0:=n_0(r,\epsilon)\in \mathbb{N}$  such that every graph $G\in C_{d}$ on $n\geq
	n_0$, $n\notin M$ vertices, which is $\epsilon$-far from $P$, contains
	more than $\lambda n$ elements $v$ with
	$\mathcal{N}_r^G(v)\in \tau\in F$. Then $P$ is uniformly
	testable on $C_{d}$ in time $\mathcal{O}(f_M)$.  
\end{lemma}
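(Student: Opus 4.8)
The plan is to build an $\epsilon$-tester for $P$ that follows the standard neighbourhood-sampling template, using the algorithm $\sampler_{r,s}$ together with the hypothesised combinatorial gap. First I would dispose of the small and ``structurally impossible'' cases: on input $G$ with $n:=|V|$ vertices, decide whether $n\in M$ in time $f_M(n)$; if $n\in M$, then since $G\in P$ implies $|V|\notin M$, we may safely reject. Similarly, if $n<n_0(r,\epsilon)$, the whole graph can be read in constant time (it has at most $d\cdot n_0/2$ edges, a constant) and we decide membership in $P$ exactly by checking whether any vertex has an $r$-ball isomorphic to some $\tau\in F$. In both these branches the tester is correct with probability $1$, and the running time is $\mathcal{O}(f_M)$ plus a constant.

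For the main branch, $n\geq n_0$ and $n\notin M$, I would run $\sampler_{r,s}(G)$ with $s:=\lceil (t^2/\lambda^2)\ln(t+40)\rceil$, where $t$ is the number of $r$-types of bounded degree $d$ and $\lambda:=\lambda(\epsilon)$ is the parameter from the hypothesis; this is a constant depending only on $d,r,\epsilon$, and the query complexity is $s\cdot d^{r+1}=\mathcal{O}(1)$. Let $\bar v$ be the returned distribution vector and let $F\subseteq\{\tau_1,\dots,\tau_t\}$ index the forbidden types. The tester accepts if $\sum_{\tau_i\in F}\bar v_i\leq \lambda/2$ and rejects otherwise. Correctness is then a routine application of Lemma~\ref{lem:estimate-frequencies}: with probability at least $19/20$ we have $\sum_{i=1}^{t}|\rho_{G,r}(\{\tau_i\})-\bar v_i|\leq \lambda/2$ (rescaling the lemma's $\lambda$ to $\lambda/2$, which only changes $s$ by a constant factor). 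On this event, if $G\in P$ then $\rho_{G,r}(\{\tau_i\})=0$ for all $\tau_i\in F$, so $\sum_{\tau_i\in F}\bar v_i\leq \lambda/2$ and we accept; if $G$ is $\epsilon$-far from $P$ then by hypothesis more than $\lambda n$ vertices have an $r$-ball of a forbidden type, so $\sum_{\tau_i\in F}\rho_{G,r}(\{\tau_i\})>\lambda$, whence $\sum_{\tau_i\in F}\bar v_i>\lambda/2$ and we reject. Since $19/20>2/3$, the tester meets the definition of an $\epsilon$-tester, and because the same algorithm works for every $n$, $P$ is uniformly testable.

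I do not expect any genuine obstacle here: the lemma is explicitly a ``framework'' statement whose hypotheses are exactly engineered to make the sampling argument go through, so the proof is essentially bookkeeping — choosing $s$ correctly, tracking the constant $f_M$ for the running time, and handling the finitely many small instances exactly. The only mild subtlety worth stating carefully is the role of $M$: it is what lets us reject instances whose size alone certifies non-membership (e.g.\ when $\tau$-freeness forces the vertex count into a forbidden residue class), and it must be folded into the ``far'' analysis so that the hypothesis is only invoked for $n\notin M$, which is precisely how it is phrased. The genuinely hard work is deferred to the three applications (Theorems~\ref{thm:dNeighbourhoodFreeness},~\ref{thm:1NeighbourhoodFreeness},~\ref{thm:neighbourhoodRegularity}), where one must actually establish the linear-gap hypothesis while avoiding the ``chain reaction'' of forced modifications described in the introduction; the present lemma merely packages the easy direction once that gap is in hand.
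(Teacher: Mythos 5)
Your proposal is correct and matches the paper's proof almost exactly: decide membership in $M$, brute-force the finitely many small instances, then run $\sampler_{r,s}$ and threshold the empirical mass on the forbidden types, appealing to the paper's Lemma on $\sampler$ and the hypothesised $\lambda n$ gap. The one cosmetic difference is that the paper sets the sampler tolerance to $\lambda$ and rejects whenever $\sum_{\tau_i\in F}\bar v_i>0$, which makes the one-sided nature of the tester explicit (a graph in $P$ is \emph{always} accepted, with no appeal to the concentration event, since every sampled $r$-ball has a non-forbidden type); your choice of tolerance $\lambda/2$ and threshold $\lambda/2$ is equally valid, and your acceptance branch is in fact also deterministic for the same reason, even though you phrase it as conditional on the good event.
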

\begin{proof}
	Consider the following probabilistic algorithm  $T$, which is given direct access to a graph $G\in C_{d}$ and gets the number of vertices $n$ as input.  Let $s=({t^2}/{\lambda^2})\ln(t+40)$.
	\medskip
	\begin{enumerate}
		\item Reject if $n\in M$.
		\item If $n < n_0$: use a precomputed table to decide exactly if $G\in P$.
		\item Otherwise run $\sampler_{r,s}(G)$ to get $\bar v$ satisfying 	$\sum_{i=1}^{t}|\rho_{G,r}(\{\tau_i\})-\bar v_i|  \leq \lambda$ with probability at least ${19}/{20}$.
		\item Reject $G$ if $\sum_{\tau_i\in F}\bar v_{i}>0$. Accept otherwise.
	\end{enumerate}
	
	The query complexity of $T$ is clearly constant, since $s$ is constant and
	the number of vertices in any $r$-neighbourhood is bounded by $d^{r+1}+1$
	for graphs in $C_{d}$. The running time of the first step is $f_M(n)$ and
	for the other steps it is constant.
	
	To prove that $T$ is an $\epsilon$-tester, first assume that $G\in P$. Then $n\notin M$ and $\mathcal{N}_r^G\in \tau \notin F$ for all vertices $v$ . Hence $\sum_{\tau_i\in F}\bar v_{i}=0$ and $T$ accepts $G$.
	Now consider that $G$ is $\epsilon$-far from $P$. If $n\in M$ then
	$G$ is rejected in the first step. Hence let $n\notin M$, and assume
	$\sum_{i=1}^{t}|\rho_{G,r}(\{\tau_i\})-\bar v_i|  \leq \lambda$, which occurs with
	probability at least ${19}/{20}\geq {2}/{3}$. Then
	\begin{align*}
	&\sum_{\tau_i\in F}\bar v_{i}=\sum_{\tau_i\in F}\rho_{G,r}(\{\tau_i\})-\sum_{\tau_i\in F}\big(\rho_{G,r}(\{\tau_i\})-\bar v_{i}\big)>
	\\& \lambda-\Big|\sum_{\tau_i\in F}\big(\rho_{G,r}(\{\tau_i\})-\bar v_{i}\big)\Big|\geq\lambda-\sum_{\tau_i\in F}\big|\rho_{G,r}(\{\tau_i\})-\bar v_{i} \big|\geq 0,
	\end{align*}
	where the first inequality holds by the assumption that in graphs that are $\epsilon$-far from $P$ there are more then $\lambda n$ vertices of type in $F$ made in Lemma \ref{lemma:testerFramework}. Hence $T$ rejects $G$.
\end{proof}
To illustrate the use of the set $M$ in Lemma~\ref{lemma:testerFramework}, let
$P$ be the property of being $K_4$-neighbourhood regular. Let $G_m$ be the
graph consisting of $m$ disjoint copies of $K_4$ and one isolated vertex. First
note that $G_m$ contains $4m+1$ vertices. Being $K_4$-regular implies that
every vertex has degree $3$. But because every graph contains an even number of
vertices of odd degree, $G_m$ cannot be made $K_4$-neighbourhood regular by
edge modifications. Therefore $G_m$ is $\epsilon$-far from $P$. But for
$m\rightarrow \infty$ the probability of sampling the isolated vertex in $G_m$
tends to $0$ meaning that with high probability the tester with $M=\emptyset$
will accept $G_m$. We will show in Theorem~\ref{thm:neighbourhoodRegularity}
that $P$ is testable if we set $M=N \setminus \{4m\mid m\in N\}$.
\begin{lemma}\label{lemma:farImpliesManyCounterexamples-forbidden$r$-ball-oneDegreeMissing}
	For $r\geq 1$ let  $\tau$ be an $r$-type. Let $B$ be an $r$-ball with constant $a$ of type $\tau$. Let $\tilde{d}< d$ and
	assume that $\mathcal{N}_{r-1}^B{a}$ contains a vertex $b$ with
	$\deg_B(b)=\tilde{d}$ and that $\deg_B(v)\not=\tilde{d}+1$ for all
	vertices $v$ in $\mathcal{N}_{r-1}^B{a}$. Let $\epsilon \in (0,1]$ be fixed,
	$n_0={2d^2}/{\epsilon}$ and $\lambda={\epsilon
		d}/(14(1+d^{2r+1}))$. Then every graph $G\in C_{d}$ on
	$n\geq n_0$ vertices which is $\epsilon$-far from being
	$\tau$-neighbourhood free contains more than $\lambda n$ vertices
	of $r$-type $\tau$.
\end{lemma}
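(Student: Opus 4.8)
## Proof Proposal for Lemma~\ref{lemma:farImpliesManyCounterexamples-forbidden$r$-ball-oneDegreeMissing}

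\textbf{Overall strategy.} The plan is to prove the contrapositive in the form of a ``repair'' argument: if $G$ has \emph{few} vertices of $r$-type $\tau$, then $G$ is $\epsilon$-\emph{close} to being $\tau$-neighbourhood free. Concretely, let $W$ be the set of vertices of $r$-type $\tau$, and suppose $|W| \le \lambda n$. I would show that by performing only $O(d^{2r+1})$ edge modifications \emph{per vertex of $W$} we can destroy every forbidden neighbourhood without creating any new one, so that the total number of modifications is at most $|W| \cdot c\, d^{2r+1} \le \epsilon d n$ for the stated $\lambda$. The role of the hypotheses on $\tilde d$ is precisely to make these local repairs non-interfering: since $B$ contains a vertex $b$ at distance $\le r-1$ from the centre with $\deg_B(b) = \tilde d < d$, we have room to add one edge at $b$; and since \emph{no} vertex within distance $r-1$ of the centre has degree exactly $\tilde d + 1$, bumping $b$'s degree up by one produces a neighbourhood that is provably \emph{not} of type $\tau$ (a vertex of degree $\tilde d + 1$ now sits in the $(r-1)$-ball), and moreover this degree signature cannot accidentally coincide with $\tau$ at any nearby centre either.

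\textbf{Key steps.} First I would set up the counting: for a graph $G$ on $n \ge n_0 = 2d^2/\epsilon$ vertices that is $\epsilon$-far from $\tau$-neighbourhood-freeness, I want to derive a contradiction from $|W| \le \lambda n$. Second, the repair operation: for each $v \in W$, pick a vertex $b_v$ in $\mathcal N_{r-1}^G(v)$ that plays the role of $b$ above (i.e.\ $\deg_G(b_v) = \tilde d$), and attach to $b_v$ a fresh edge to some vertex $z_v$ that is currently ``far'' from all the other repair sites — such a $z_v$ exists because the union of all $r$-balls around $W$ and around previously used targets has size at most $|W|(d^{r+1}+1) + |W| \le n - 1$ once we check that $\lambda(d^{r+1}+2) < 1$ (this is where $n \ge n_0$ and the precise $\lambda$ enter). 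Third, I would verify the two correctness claims: (a) after adding the edge $\{b_v, z_v\}$, the vertex $v$ no longer has $r$-type $\tau$, because some vertex within distance $r-1$ of $v$ now has degree $\tilde d + 1$, which is impossible in a type-$\tau$ ball by hypothesis; (b) the modification does not create a new type-$\tau$ vertex: any vertex $u$ whose $r$-ball changed is within distance $r$ of $b_v$ or $z_v$; for $u$ near $b_v$ the degree $\tilde d + 1$ appears in its $(r-1)$- or $r$-ball in a way incompatible with $\tau$ (again using the ``no vertex of degree $\tilde d+1$'' condition, and that $\deg b_v$ only increased), and $z_v$ was chosen isolated enough that its neighbourhood is a trivial star, not $\tau$ (here $r \ge 1$ is used). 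Finally, I would tally: we make at most $|W|$ edge additions, hence $\dist(G, G') \le |W| \le \lambda n < \epsilon d n / 2$ say, contradicting $\epsilon$-farness; adjusting constants gives the bound $\lambda = \epsilon d/(14(1 + d^{2r+1}))$ with the factor $d^{2r+1}$ absorbing the worst-case count of affected neighbourhoods if one needs more than one edge per vertex of $W$ (e.g.\ if several vertices of $W$ share parts of their $r$-balls, one handles a maximal ``spread-out'' subset greedily and recurses, each round killing a constant fraction).

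\textbf{Main obstacle.} The delicate point — and the reason for the generous constant $14(1+d^{2r+1})$ rather than something tight — is the ``chain reaction'' phenomenon the authors flag in the introduction: repairing the $r$-ball of one vertex $v \in W$ alters the $r$-balls of up to $d^{r+1}$ nearby vertices, some of which might themselves have been in $W$ or, worse, might \emph{become} type-$\tau$ after the edit. The whole point of the hypotheses ($\tilde d < d$ giving slack, and no degree-$(\tilde d+1)$ vertex in the $(r-1)$-ball) is to guarantee that the single edit ``$b_v \to z_v$'' is a \emph{monotone, one-directional} change — it can only push neighbourhoods \emph{away} from type $\tau$, never toward it — so the chain reaction cannot propagate. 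Making this precise requires a careful case analysis of which vertices $u$ within distance $r$ of $\{b_v, z_v\}$ could possibly acquire type $\tau$, and checking in each case that the appearance of a vertex of degree $\tilde d + 1$ at distance $\le r-1$ from $u$, or the star structure around $z_v$, rules it out. I expect this local case analysis — and the bookkeeping to ensure the repair targets $z_v$ stay mutually far apart across all $|W|$ repairs — to be the bulk of the work; everything else is routine counting governed by the choice of $n_0$ and $\lambda$.
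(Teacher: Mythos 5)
Your overall strategy (contrapositive: few type-$\tau$ vertices $\Rightarrow$ $\epsilon$-close, via local repairs) matches the paper, and your observation that the hypotheses on $\tilde d$ make the degree change $\tilde d\to\tilde d+1$ a one-way door is the right key insight. But the repair mechanism you propose has a genuine gap, and the paper's proof closes it by a different construction.

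You add an edge $\{b_v,z_v\}$ where $b_v$ has degree $\tilde d$ and $z_v$ is some vertex far from all repair sites. Two things go wrong at the $z_v$ end. First, you never ensure $\deg_G(z_v)<d$, so the edge insertion may violate the degree bound; ``far from repair sites'' does not control $\deg(z_v)$. Second, even granting a free slot at $z_v$, you claim the new neighbourhood around $z_v$ is ``a trivial star, not $\tau$,'' which presumes $z_v$ is essentially isolated --- false in general. After the insertion $\deg(z_v)$ could be anything in $\{1,\dots,d\}$, and nothing in your hypotheses rules out that a vertex near $z_v$ now has type $\tau$: the ``no vertex of degree $\tilde d+1$ in the $(r-1)$-ball'' condition only bites if the new degree at $z_v$ actually equals $\tilde d+1$, which you have not arranged. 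The paper's fix is precisely to pair degree-$\tilde d$ vertices with \emph{other} degree-$\tilde d$ vertices: it first chooses a small hitting set $\{v_1,\dots,v_k\}$ ($k\le\lambda n$) of degree-$\tilde d$ vertices covering every type-$\tau$ centre's $(r-1)$-ball (guaranteed to exist by the hypothesis on $b$), and then adds matching edges $\{v_i,v_i'\}$ where the $v_i'$ are pairwise-far degree-$\tilde d$ vertices that are also far from the $v_i$. Now \emph{both} endpoints of every new edge have degree exactly $\tilde d+1$ afterwards, and any vertex whose $r$-ball changed is within distance $<r$ of one of them, so it cannot have type $\tau$. This makes the ``one-directional change'' claim actually provable.

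This pairing requires that there exist $\ge\lambda n$ mutually far degree-$\tilde d$ vertices far from the hitting set, which may fail. The paper therefore has a case split you do not: if such vertices are scarce, then \emph{all} degree-$\tilde d$ vertices are few (at most $O(\lambda n\,d^{2r+1})$), and the paper eliminates every degree-$\tilde d$ vertex via degree-preserving local edge swaps, so that afterwards no vertex has degree $\tilde d$ anywhere, hence no type-$\tau$ neighbourhood can exist. This branch is load-bearing and is absent from your sketch; your greedy vertex-by-vertex argument would stall exactly when there are no valid targets. Finally, the paper handles $\tilde d\in\{0,1\}$ separately (parity/handshake obstructions when pairing up low-degree vertices), which your sketch also does not address; for $\tilde d=0$, $d=1$ one even needs to exclude odd $n$.
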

\begin{proof}We proceed by contraposition. Assume $G\in
	C_{d}$ is a graph on $n\geq n_0$ vertices containing no more than $\lambda n
	$ vertices $v$ of $r$-type $\tau$. 
	
	\textbf{Case $\tilde{d}=0$, $d=1$.} In this case the tester additionally rejects if $n$ is odd. When $n$ is even we add an edge between any pair of vertices of degree $0$, obtaining a graph $G'$ which is $\lambda n\leq \epsilon dn$ close to $G$. 
	
	\textbf{Case $\tilde{d}=0$, $d>1$.} Then we add one edge to every pair of vertices of degree $0$. If there is only one vertex $v$ of degree $0$ left, we  add an edge from $v$ to any other vertex of degree $<d$. If there is no such vertex then there must be vertex $u$ contained in two edges and we replace one edge $\{u,w\}$ by $\{v,w\}$. We obtain $G'$ which is $2\lambda n\leq \epsilon dn$ close to $G$.
	
	\textbf{Case $\tilde{d}=1$.} We  add edges between pairs of degree $1$ vertices. If there are two left, connected by an edge, we delete that edge. If there is only one vertex $v$ of degree $1$ left, then there is another vertex $u$ of odd degree. By removing an edge $\{u,w\}$ and adding $\{v,w\}$ we get that $\deg_G(v),\deg{w}\geq 1$. We obtain $G'$ which is $2\lambda n\leq \epsilon dn$ close to $G$.
	
	\textbf{Case $\tilde{d}\geq 2$.} Let us pick a set
	$\{v_1,\dots,v_k\}$ of $k\leq \lambda n$ vertices of degree $\tilde{d}$ such
	that for every vertex $v$ of $r$-type $\tau$ there is an
	index $1\leq i\leq k$ with $v_i\in N_{r-1}^G(v)$. We will
	distinguish the following two cases.
	
	First assume that there are less than $\lambda n$ vertices of degree
	$\tilde{d}$, of pairwise distance greater than $2r$ and of distance greater
	than $2r$ from $\{v_1,\dots,v_k\}$. Then there are less than $2\lambda n
	(1+d^{2r+1})$ vertices of degree $\tilde{d}$ in total. Let $G'$ be a
	graph obtained from $G$ by the following modifications. For every vertex
	$v$ of degree $\tilde{d}$ we pick edges
	$\{v,v_1\},\{v,v_2\},\{w,w'\},\{u,u'\}$ such that $v,w,u$ have pairwise
	distance at least $3$. We delete the  edges $\{v,v_1\},\{v,v_2\}$, $\{w,w'\},\{u,u'\}$ and insert the edges
	$\{v_1,w\},\{v_2,u\},\{w',u'\}$, reducing the degree of $v$ while
	maintaining the degrees of all other vertices. The resulting graph has no
	vertex of degree $\tilde{d}$. Note that if such edges do not exist at any
	point during the iteration the graph contains no more than $2d^3\leq
	\epsilon dn$ edges, and we delete them all resulting in a graph with no
	vertex of degree $\tilde{d}$.
	In total we did no more than $7\cdot
	2\lambda n (1+d^{2r+1})\leq \epsilon dn$ edge modifications which implies that
	$G'$ is $\epsilon$-close to $G$. In addition, $G'$ is
	$\tau$-neighbourhood free, because a neighbourhood
	of type $\tau$ would imply having a vertex of degree
	$\tilde{d}$.
	
	Now assume that there are at least $\lambda n$ vertices of degree
	$\tilde{d}$, of pairwise distance greater than $2r$ and of distance greater
	than $2r$ from $\{v_1,\dots,v_k\}$. Let $\{v_1',\dots,v_k'\}$ be a set of
	vertices of degree $\tilde{d}$ such that $\dist_G(v_i,v_j')> 2r$ for all
	$1\leq i,j\leq k$ and $\dist(v_i',v_j')> 2r$ for all $1\leq i<j\leq k$. Let
	$G'$ be the graph obtained from $G$ by inserting the edges $\{v_i,v_i'\}$.
	First note that this takes no more than $\lambda n \leq \epsilon d n$ edge
	modifications which implies that $G$ is $\epsilon$-close to $G'$.  Further
	assume that $v'$ is a vertex in $G'$ of $r$-type $\tau$.
	By choice of the set $\{v_1,\dots,v_k\}$  we altered the isomorphism type of each vertex of type $\tau$ in $G$. Therefore
	$\mathcal{N}_r^{G'}(v')\not=\mathcal{N}_{r}^G(v')$.
	Therefore $\mathcal{N}_r^{G'}(v')$  contains an
	inserted edge $(v_i,v_i')$. We will first prove that either
	$\dist_{G'}(v',v_i)<r$ or $\dist_{G'}(v',v_i')<r$. Assume that this is not the case. Then there is a path 
	$P=$\linebreak $(v_i=w_{-r},w_{-r+1},\dots,w_{-1},w_0=v',w_1,\dots,w_{r-1},w_r=v_i')$
	such that $w_j\not=v_i$ and $w_j\not=v_i'$ for all $-r<j<r$. Let $-r\leq
	j<r$ be the largest index such that $w_j\in
	\{v_1,\dots,v_k,v_1',\dots,v_k'\}$. Then the path $(w_j,\dots,w_r=v_i')$ is
	a path in $G$ of length $\leq 2r$, which contradicts the choice of  $v_1,\dots,v_k,v_1',\dots,v_k'$. Since
	$\deg_{G'}(v_i)=\deg_{G'}(v_i')=\tilde{d}+1$, this implies that
	$\mathcal{N}_{r-1}^G(v')$ contains a vertex of degree $\tilde{d}+1$,
	which contradicts that $v'$ has $r$-type
	$\tau$. Hence $G'$ is
	$\tau$-neighbourhood free.
\end{proof}
\begin{lemma}\label{lemma:farImpliesManyCounterexamples-forbidden$r$-ball-OnlyDegree$d$}
	For $r\geq 1$ let $\tau$ be an $r$-type. Let $B$ be an $r$-ball with constant $a$ of type $\tau$. Assume $\deg_B(v)=d$ for all vertices $v\in N^{B}_{r-1}(a)$. Let $\epsilon\in (0,1]$ be fixed and let $\lambda=\epsilon$. Then every graph $G\in C_{d}$ on $n\geq 1$ vertices which is $\epsilon$-far from being $\tau$-neighbourhood free contains more than $\lambda n$ vertices of $r$-type $\tau$.
\end{lemma}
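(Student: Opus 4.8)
The plan is to prove the contrapositive: if $G\in C_d$ has at most $\lambda n=\epsilon n$ vertices of $r$-type $\tau$, then $G$ is $\epsilon$-close to being $\tau$-neighbourhood free, contradicting that $G$ is $\epsilon$-far. Let $S$ be the set of vertices of $G$ that have $r$-type $\tau$, so $|S|\le\epsilon n$. The modification I would perform is to \emph{isolate} every vertex of $S$, i.e.\ delete all edges incident to $S$, obtaining a graph $G'$ on the same vertex set. Since $G$ has maximum degree $d\ge 1$, this deletes at most $\sum_{v\in S}\deg_G(v)\le d|S|\le\epsilon d n$ edges, so $\dist(G,G')\le\epsilon d n$ and $G$ is $\epsilon$-close to $G'$. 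It then suffices to show $G'$ is $\tau$-neighbourhood free.

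The easy half is that no vertex of $S$ has $r$-type $\tau$ in $G'$: the centre $a$ of an $r$-ball $B$ of type $\tau$ lies in $N^B_{r-1}(a)$ (distance $0$ from itself, and $r\ge 1$), so by hypothesis $\deg_B(a)=d\ge 1$, whereas every vertex of $S$ is isolated in $G'$. The substantial part — and the step I expect to be the main obstacle, since this is precisely where the hypothesis that \emph{all} vertices of $N^B_{r-1}(a)$ have the \emph{maximal} degree $d$ is used to defeat a ``chain reaction'' — is to show that no vertex outside $S$ acquires type $\tau$ in $G'$.

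To do this I would argue as follows. Suppose $v'\notin S$ has $r$-type $\tau$ in $G'$. For every vertex $x$ with $\dist_{G'}(v',x)\le r-1$, the hypothesis forces $\deg_{G'}(x)=d$ (its degree inside the $r$-ball equals its degree in $G'$, being an interior vertex, and equals $d$); since $G'\subseteq G$ and $\deg_G(x)\le d$, this gives $\deg_G(x)=\deg_{G'}(x)=d$, so $x$ is incident to no deleted edge and $N_G(x)=N_{G'}(x)$. Running BFS from $v'$ in $G$ and in $G'$ then produces identical layers $L_0,\dots,L_r$, because the transition into layer $m{+}1$ for $m\le r-1$ only uses edges at vertices of $L_0\cup\cdots\cup L_m\subseteq\{x:\dist_{G'}(v',x)\le r-1\}$, whose incident edges agree in $G$ and $G'$. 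Hence the vertex set $B$ of the $r$-ball of $v'$ is the same in both graphs. Moreover no deleted edge lies inside $B$: a deleted edge has an endpoint in $S$, and every vertex of $S$ is isolated in $G'$ hence at $G'$-distance $\infty>r$ from $v'$ (note $v'\notin S$), so it is not in $B$. Therefore $\mathcal N_r^{G'}(v')=\mathcal N_r^{G}(v')$ as rooted graphs, so $v'$ already had type $\tau$ in $G$, i.e.\ $v'\in S$, a contradiction. Thus $G'$ is $\tau$-neighbourhood free, completing the proof.
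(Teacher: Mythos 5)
Your proof is correct and takes essentially the same route as the paper's: contrapose, isolate the vertices of $r$-type $\tau$, and exploit the hypothesis that every vertex in the interior of a $\tau$-ball has degree $d$ to show that the $r$-ball of any surviving type-$\tau$ vertex in $G'$ is untouched by the deletions (your BFS layering is a slightly more explicit rendering of the paper's argument, which walks along a path from $v'$ to a nearby vertex of $S$ and extracts a degree drop). The only slip is the phrase ``by hypothesis $\deg_B(a)=d\ge 1$'': the hypothesis gives $\deg_B(a)=d$ but not $d\ge 1$, and your claim that isolated $S$-vertices lose type $\tau$ in $G'$ genuinely fails for $d=0$ (where $\tau$ is the isolated-vertex type and $G'=G$); the paper dispenses with the degenerate case $d=0$ in a one-line remark before implicitly assuming $d\ge 1$, and your write-up should do likewise.
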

\begin{proof}
	If $d=0$, then the Lemma holds. 
	Hence we can assume that $B$ is not just an isolated vertex.
	We proceed by contraposition. Assume $G=(V,E)\in C_{d}$ is a graph on $n\geq 1$
	vertices containing no more than $\lambda n $ vertices $v$ of $r$-type
	$\tau$. Let $G'$ be the graph obtained from $G$ by
	isolating all vertices $v$ of $r$-type $\tau$. First note
	that $G'$ is $\epsilon$-close to $G$ since we did no more than $d\lambda n
	\leq \epsilon dn$ edge modifications. Now assume that $v'$ is a vertex of
	$r$-type $\tau$. Since we isolated all vertices having
	$r$-type $\tau$ we know that
	$\mathcal{N}_{r}^{G'}(v')\not=
	\mathcal{N}_{r}^G(v')$.  Therefore there
	must be a vertex $v$ in $N_{r}^G(v')$ such that
	$v$ has type $\tau$, because otherwise the
	$r$-ball of $v'$ could not witness any of the edge modifications. This means
	that there is a path $(v'=v_0,v_1,\dots,v_{k-1},v_k=v)$ of length $k\leq r$
	in $G$. Now pick the maximum index $i$ such that
	$\dist_{G'}(v',v_i)<\infty$. First observe that because $v=v_k$ is isolated
	in $G'$ we get that $i<k$ and therefore $\dist_{G'}(v',v_i)<r$.  Since
	$\dist_{G'}(v',v_{i+1})=\infty$ by construction and $\{v_i,v_{i+1}\}\in
	E$, we get
	$\deg_{\mathcal{N}_{r}^{G'}(v')}(v_i)=\deg_{G'}(v_i)<\deg_G(v_i)\leq
	d$. Since $\mathcal{N}_{r}^{G'}(v')\in \tau$ this yields a
	contradiction to our previous assumption that all vertices in
	$\mathcal{N}^{B}_{r-1}{a}$ have degree $d$. Hence the graph $G'$ can not
	contain a vertex $v'$ of $r$-type $\tau$ and is therefore
	$\tau$-neighbourhood free. 
\end{proof}
The next Lemma follows from Lemmas~\ref{lemma:farImpliesManyCounterexamples-forbidden$r$-ball-OnlyDegree$d$} and~\ref{lemma:farImpliesManyCounterexamples-forbidden$r$-ball-oneDegreeMissing}
for $r=1$, since the $0$-ball has one vertex.
\begin{lemma}\label{lemma:farImpliesManyCounterexamples-forbidden1-ball}
	Let  $\tau$ be a $1$-type. Let $\epsilon\in (0,1]$ be fixed, $n_0={2d^2}/{\epsilon}$ and $\lambda={\epsilon d}/(14(1+d^{3}))$. Then every graph $G\in C_{d}$ on $n\geq n_0$ vertices which is $\epsilon$-far from being $\tau$-neighbourhood free contains more than $\lambda n$ vertices of $1$-type $\tau$.\qed
\end{lemma}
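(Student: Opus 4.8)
The plan is to obtain this lemma directly from Lemmas~\ref{lemma:farImpliesManyCounterexamples-forbidden$r$-ball-OnlyDegree$d$} and~\ref{lemma:farImpliesManyCounterexamples-forbidden$r$-ball-oneDegreeMissing} by specialising them to $r=1$. First I would fix a $1$-type $\tau$ and a $1$-ball $B$ with centre $a$ of type $\tau$; the key observation is that for $r=1$ the ``$(r-1)$-neighbourhood'' $\mathcal{N}_0^B(a)$ occurring in both lemmas is simply the singleton $\{a\}$. The argument then splits into two cases according to $\deg_B(a)$.

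If $\deg_B(a)=d$, then every vertex of $\mathcal{N}_0^B(a)$ has degree $d$, so Lemma~\ref{lemma:farImpliesManyCounterexamples-forbidden$r$-ball-OnlyDegree$d$} applies with $r=1$ and shows that any $G\in C_d$ on $n\geq 1$ vertices that is $\epsilon$-far from being $\tau$-neighbourhood free has more than $\epsilon n$ vertices of $1$-type $\tau$. Since $14(1+d^3)\geq d$ for all $d$, we have $\epsilon\geq \epsilon d/(14(1+d^3))=\lambda$, and since $n\geq n_0=2d^2/\epsilon\geq 1$ the hypothesis $n\geq 1$ is met; hence $G$ already has more than $\lambda n$ such vertices.

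If $\deg_B(a)=\tilde d<d$, then $a$ is a vertex of $\mathcal{N}_0^B(a)$ of degree $\tilde d$, and the side condition ``$\deg_B(v)\neq \tilde d+1$ for all $v\in\mathcal{N}_0^B(a)$'' holds automatically, because the only such $v$ is $a$ itself and $\tilde d\neq \tilde d+1$. Hence Lemma~\ref{lemma:farImpliesManyCounterexamples-forbidden$r$-ball-oneDegreeMissing} applies with $r=1$; for $r=1$ its constants are precisely $n_0=2d^2/\epsilon$ and $\lambda=\epsilon d/(14(1+d^{2r+1}))=\epsilon d/(14(1+d^3))$, which are exactly those in the statement. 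Combining the two cases proves the lemma.

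There is no genuine obstacle here: the content lives entirely in the two lemmas being invoked, which already carry out the delicate edge-modification surgeries that reduce global distance to the existence of linearly many bad $r$-balls without triggering a ``chain reaction''. The only points requiring care are the two bits of bookkeeping above, namely checking that $\epsilon\geq \lambda$ so that the coarser constant of Lemma~\ref{lemma:farImpliesManyCounterexamples-forbidden$r$-ball-OnlyDegree$d$} still suffices in the degree-$d$ case, and noting that the degree side condition of Lemma~\ref{lemma:farImpliesManyCounterexamples-forbidden$r$-ball-oneDegreeMissing} becomes vacuous at $r=1$ since $\mathcal{N}_0^B(a)=\{a\}$.
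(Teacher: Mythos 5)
Your proposal is correct and is exactly the paper's argument: the paper proves this lemma in one sentence, citing Lemmas \ref{lemma:farImpliesManyCounterexamples-forbidden$r$-ball-OnlyDegree$d$} and \ref{lemma:farImpliesManyCounterexamples-forbidden$r$-ball-oneDegreeMissing} with $r=1$ and the observation that the $0$-ball is a single vertex. Your only addition is spelling out the easy bookkeeping that $\epsilon\geq \lambda$ in the degree-$d$ case and that the side condition $\deg_B(v)\neq\tilde d+1$ is automatic on a singleton, both of which are correct and implicit in the paper's remark.
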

Lemma~\ref{lemma:testerFramework} with $F=\{\tau\}$   and $M=\emptyset$ combined with Lemma~\ref{lemma:farImpliesManyCounterexamples-forbidden$r$-ball-oneDegreeMissing} prove Theorem~\ref{thm:dNeighbourhoodFreeness}. Theorem~\ref{thm:1NeighbourhoodFreeness} follows from Lemma~\ref{lemma:testerFramework} with $\F=\{\tau\}$, or $\F=\emptyset$ in the case that $d$ is not a degree bound for $\tau$, and Lemma~\ref{lemma:farImpliesManyCounterexamples-forbidden1-ball}.

\begin{proof}[Proof of Theorem \ref{thm:neighbourhoodRegularity}] We define $P$ to be the property of being $\tau$-neighbourhood regular and let $\setOfMaxCliques$ be the set of maximal cliques in $G=(V,E)$. Let us define the function $\maxcl^G:V\times\mathbb{N}\rightarrow \mathbb{N}$ where $\maxcl^G(v,i):=|\{K\in \setOfMaxCliques\mid |K|=i,v\in K\}|$  is the number of maximal $i$-cliques containing $v$.
	\begin{claim}\label{claim:DefiningMFor1NieghbourhoodRegularity}
		If $G=(V,E)\in P$ then $\maxcl^B(a,i)\cdot n\equiv 0\mod i$. 
	\end{claim}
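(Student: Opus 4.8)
The plan is to prove Claim~\ref{claim:DefiningMFor1NieghbourhoodRegularity} by double counting incidences between vertices of $G$ and maximal $i$-cliques of $G$. Since $G$ is $\tau$-neighbourhood regular, for every $v\in V$ the $1$-ball $\mathcal N_1^G(v)$ is isomorphic to $B$ via an isomorphism sending $v$ to $a$; in particular, by the hypothesis on $\tau$, the open neighbourhood $\mathcal N_1^G(v)\setminus\{v\}$ is a disjoint union of cliques, and every vertex of $\mathcal N_1^G(v)$ distinct from $v$ is adjacent to $v$.

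The structural heart of the argument is to identify the maximal cliques of $G$. First I would show that a set $Q$ with $|Q|\ge 2$ is a maximal clique of $G$ if and only if $Q=\{v\}\cup K$ for some $v\in V$ and some connected component $K$ of $\mathcal N_1^G(v)\setminus\{v\}$. For the ``only if'' direction, fix $v\in Q$; then $Q\subseteq \mathcal N_1^G(v)$ and $Q\setminus\{v\}$ is a clique inside the disjoint-union-of-cliques $\mathcal N_1^G(v)\setminus\{v\}$, hence contained in a single component clique $K$, and since $v$ is adjacent to all of $K$ the set $\{v\}\cup K$ is a clique containing $Q$, so maximality of $Q$ forces $Q=\{v\}\cup K$. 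For the ``if'' direction, any vertex adjacent to all of $\{v\}\cup K$ lies in $\mathcal N_1^G(v)\setminus\{v\}$ and is adjacent to all of $K$, hence lies in the component $K$, so $\{v\}\cup K$ is maximal in $G$. As a consequence, for every $v\in V$ the maximal $i$-cliques of $G$ through $v$ are exactly the sets $\{v\}\cup K$ with $K$ a component of $\mathcal N_1^G(v)\setminus\{v\}$ of size $i-1$, and since $\mathcal N_1^G(v)\cong B$ there are precisely $\maxcl^B(a,i)$ of them; that is, $\maxcl^G(v,i)=\maxcl^B(a,i)$ for all $v\in V$.

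Now let $N_i$ denote the number of pairs $(v,Q)$ with $Q$ a maximal clique of $G$ of size $i$ and $v\in Q$. Summing over vertices gives $N_i=\sum_{v\in V}\maxcl^G(v,i)=n\cdot\maxcl^B(a,i)$; summing over cliques gives $N_i=\sum_{Q\in\setOfMaxCliques,\,|Q|=i}|Q|=i\cdot\bigl|\{Q\in\setOfMaxCliques\mid |Q|=i\}\bigr|$. Equating the two yields $n\cdot\maxcl^B(a,i)=i\cdot\bigl|\{Q\in\setOfMaxCliques\mid |Q|=i\}\bigr|$, so $i$ divides $\maxcl^B(a,i)\cdot n$, i.e.\ $\maxcl^B(a,i)\cdot n\equiv 0\mod i$, as claimed; the case $i=1$ is immediate since every integer is divisible by $1$. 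The only step requiring real care is the characterisation of maximal cliques above — making sure every maximal clique is described exactly once and none is missed — and this is precisely where the assumption that neighbourhoods are disjoint unions of cliques is essential (it is what makes the extension of a neighbourhood-clique to a component clique unique). Once that is in place, the double counting is entirely routine.
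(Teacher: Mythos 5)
Your proof is correct and follows essentially the same route as the paper: the identity $\maxcl^B(a,i)\cdot n=\sum_{v\in V}\maxcl^G(v,i)=i\cdot|\{K\in\setOfMaxCliques\mid |K|=i\}|$ via double counting vertex--clique incidences. The only difference is that you carefully justify $\maxcl^G(v,i)=\maxcl^B(a,i)$ by characterising the maximal cliques through $v$, a step the paper asserts directly from $\mathcal N_1^G(v)\in\tau$.
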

	\begin{proof}
		First note that $G\in P$ implies that $\mathcal{N}_1^G(v)\in \tau$ for all $v\in V$. Then $\maxcl^B(a,i)=\maxcl^G(v,i)$ for all $v\in V$ and 
		$\maxcl^B(a,i)\cdot n  =\sum_{v\in V}\maxcl^G(v,i) =|\{K\in\setOfMaxCliques\mid |K|=i \}|\cdot i\equiv 0 \mod i$.
	\end{proof}
	Let $M:=\{n\in \mathbb{N}\mid \text{there is }1\leq i \leq d \text{ such that }\maxcl^B(a,i)\cdot n\not\equiv 0 \mod i\}$. 
	Note that deciding whether $n\in M$ 
	only requires standard arithmetic operations.
	\begin{claim}\label{claim:farImpliesManyCounterexamples-negihbourhoodRegularity}
		For $\epsilon\in (0,1]$ let $\lambda={\epsilon}/(20d^6) $ and $n_0=20 d^8$.  Than any graph $G\in C_{d}$ on $n\geq n_0$, $n\notin M$ vertices, which is $\epsilon$-far from $P$, contains more than $\lambda n$ vertices $v$ with $1$-type $\tau$.
	\end{claim}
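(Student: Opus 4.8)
The claim to prove is Claim~\ref{claim:farImpliesManyCounterexamples-negihbourhoodRegularity}: for a graph $G$ that is $\varepsilon$-far from being $\tau$-neighbourhood regular (where $\tau$ is a $1$-type whose neighbourhood is a disjoint union of cliques attached at the centre) and whose number of vertices $n$ lies outside the arithmetic obstruction set $M$, there must be $\geq \lambda n$ vertices $v$ whose $1$-type is already $\tau$. Equivalently, by contraposition: if $G$ has $n\notin M$, $n\geq n_0$ vertices and only $\leq\lambda n$ vertices of the \emph{wrong} type, then $G$ is $\varepsilon$-close to $P$. (One must be careful: "being $\tau$-regular" means \emph{all} vertices have type $\tau$, so the relevant count is vertices whose $1$-ball is \emph{not} of type $\tau$. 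I expect the intended reading, consistent with Lemma~\ref{lemma:testerFramework}, is that $F$ is the set of "bad" $1$-types and we count vertices with a bad type; I will phrase the proof so that "$\leq\lambda n$ bad vertices" is the hypothesis and "fix them all by few edge modifications" is the goal.)

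The plan is the standard "fix each bad vertex locally without triggering a chain reaction" argument, using the clique structure of $\tau$. First I would observe that since $\tau$ describes a centre whose deleted neighbourhood is a disjoint union of cliques $K_{c_1},\dots,K_{c_m}$ (with multiplicities recorded by $\maxcl^B(a,i)$), a vertex $v$ has type $\tau$ iff the maximal cliques through $v$ have exactly the right sizes and multiplicities and $v$ has no other neighbours. Let $W$ be the set of bad vertices, $|W|\leq\lambda n$. Each bad vertex $v$ sits in a bounded-radius ($r=1$, so radius-$1$) ball of $\leq d+1$ vertices; collect all maximal cliques meeting $W$ — there are at most $d|W|$ of them and they span at most $d^2|W|$ vertices, call this set $X$ (still $\leq d^2\lambda n$ vertices, and the "contaminated" region $X$ together with a radius-$O(1)$ collar has $\leq d^{O(1)}\lambda n$ vertices). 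Since $n\geq n_0 = 20d^8$ is large and $\lambda$ is chosen tiny ($\lambda = \varepsilon/(20d^6)$), there are abundantly many vertices of degree $0$ (or more generally, "spare capacity") far away from $X$: indeed any $\tau$-regular target graph must use a fixed global count of vertices in cliques of each size, and because $n\notin M$ the divisibility $i \mid \maxcl^B(a,i)\cdot n$ holds for all $i\leq d$, so a $\tau$-regular graph on exactly $n$ vertices with these clique multiplicities \emph{exists} as an abstract isomorphism type and in fact can be built by partitioning the $n$ vertices into the required cliques.

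Concretely, the modification I would perform: delete all edges incident to $X$ (cost $\leq d\cdot|X| \leq d^3\lambda n$ edge deletions), leaving the vertices of $X$ isolated; this does not change the $1$-type of any vertex outside the radius-$1$ collar of $X$, and crucially every vertex that was \emph{good} and outside the collar is still good (its $1$-ball is untouched). Now I have a pool of isolated vertices consisting of $X$ together with whatever isolated vertices $G$ already had; I need the total number of isolated (or rebuildable) vertices to be divisible appropriately so they can be repartitioned into fresh cliques $K_{c_1},\dots,K_{c_m}$ realising type $\tau$. Here is where $n\notin M$ enters decisively: the number of good vertices is $n - |W| \equiv n \pmod{\text{(stuff)}}$, and since $i\mid \maxcl^B(a,i)\cdot n$, after deleting $X$ the remaining "free" vertices — those not in a complete good clique-cluster — number a multiple of the clique sizes, so they can be tiled exactly by the required clique gadgets. (If a small mismatch of $O(d)$ vertices remains because some good cliques were partially inside the collar and had to be dissolved, I absorb those $O(d)$ vertices into $X$ as well, increasing the cost by only $O(d^2)$, negligible against $\varepsilon d n$.) Adding the edges of the fresh cliques costs $\leq d^2\cdot|X|\leq d^4\lambda n$ more. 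Total cost $\leq (d^3 + d^4)\lambda n + O(d^2) \leq \varepsilon d n$ by the choice $\lambda = \varepsilon/(20d^6)$ and $n_0 = 20d^8$, so the resulting graph $G'$ is $\varepsilon$-close to $G$ and is $\tau$-neighbourhood regular. This proves the contrapositive.

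The main obstacle — and the subtle point the paper flags as a "chain reaction" — is ensuring that repairing the bad vertices does not create new bad vertices among the previously good ones. The clique hypothesis on $\tau$ is exactly what makes this tractable: because neighbourhoods are disjoint-clique unions attached only at the centre, the "good" cliques are genuine connected components of the neighbourhood structure (a maximal clique through a good vertex contains only good vertices of the same clique, or must be entirely dissolved), so deleting the contaminated region is a clean cut — it only affects vertices within distance $1$ of $X$, and those are either already in $W$ or can be swallowed into $X$ at negligible extra cost. The second delicate point is the exact counting/divisibility: I must show the free vertices available for rebuilding are a non-negative integer combination (with the prescribed multiplicities scaled by something $\equiv n$) of the clique sizes, which is where $n\notin M$ is used; getting the bookkeeping to close without an off-by-$O(d)$ error that can't be absorbed is the part that needs care, but the slack between $\lambda\sim\varepsilon/d^6$ and the budget $\varepsilon d n$ is generous enough to absorb all lower-order corrections.
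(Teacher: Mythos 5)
Your proposal has a genuine gap. The crux is your claim that ``a maximal clique through a good vertex contains only good vertices of the same clique,'' which is false, and on it rests your assertion that deleting the contaminated region $X$ is a ``clean cut.'' Here is a counterexample to the claim: let $\tau$ say that the deleted neighbourhood is a single $K_3$, so a good vertex $w$ sits in exactly one maximal $K_4 =\{w,a,b,c\}$; if $a$ has one extra stray edge to some $z$ outside, then $a$ is bad but $w$ is good, and the maximal clique through the good $w$ contains the bad $a$. More to the point: even when you form $X$ as the union of all maximal cliques meeting a bad vertex, there can be a good vertex $w\notin X$ adjacent to some $x\in X$ via a maximal clique $K$ that contains no bad vertex (so $K$, and hence $w$, is not absorbed into $X$), with $x\in X$ only because $x$ also belongs to a \emph{different} maximal clique that does meet a bad vertex. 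Deleting all edges incident to $X$ then destroys $w$'s clique $K$ and makes $w$ bad. Swallowing $w$ into $X$ and repeating does not terminate: isolating any vertex set $S$ always damages $N(S)\setminus S$, so ``nuke and rebuild'' blows up along the boundary -- this is precisely the chain reaction the paper warns about, and your collar argument does not stop it.

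The paper's actual proof avoids this by a multi-step surgical procedure rather than isolation. Its first step deletes only the edges inside maximal cliques that \emph{overlap another maximal clique in $\geq 2$ vertices}. This accepts a one-time inflation of the bad-vertex count (to $O(d^4)\lambda n$), but it is a one-shot operation with a clean stopping point: afterwards no two maximal cliques share more than one vertex, so deleting the edges inside any maximal clique $K$ changes the $1$-type of vertices \emph{in $K$ only} (an edge $\{a,b\}\subseteq K$ appearing in the $1$-ball of $w\notin K$ would force the maximal clique on $\{w,a,b\}$ to overlap $K$ in $\{a,b\}$, contradicting the just-established structure). This is what makes the subsequent surgery local. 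The paper then deletes wrong-size cliques (step~2), rebalances the per-vertex clique counts between vertices with too many and too few maximal $i$-cliques via edge swaps using far-away donor cliques (step~3), and finally corrects the residual global imbalance using precisely the divisibility $i\mid\maxcl^B(a,i)\cdot n$ guaranteed by $n\notin M$ (step~4). Your divisibility bookkeeping (``absorb $O(d)$ vertices'') is too vague to replace step~4: the paper needs $\sum_v(\maxcl^{G^{(3)}}(v,i)-\maxcl^B(a,i))\equiv 0\pmod i$ to pair up the swap operations, and that congruence is what $n\notin M$ buys. So the missing idea in your proposal is the structural preparation (removing clique overlaps first, accepting bounded collateral damage) that makes every later modification affect only the vertices you intend it to.
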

	\begin{proof}
		We proceed by contraposition. Let $G=(V,E)\in C_{d}$ be a graph on $n\geq n_0$, $n\notin M$ vertices and assume that $G$ contains $\leq \lambda n$ vertices of $1$-type $\tau$. We will now discribe an algorithmic procedure which takes $<\epsilon d n$ edge modifications and transforms $G$ into a graph $G^{(4)}\in P$, which will prove the claim. 
		
		Let $\tilde{E}^{(1)}:=\{e\in E\mid \text{there are distinct } K, K'\in \setOfMaxCliques, |K\cap K'|>1,e\subseteq K\}$.  Let $G^{(1)}$ be the graph $G^{(1)}=(V, E^{(1)})$, where $E^{(1)}=E\setminus\tilde{E}^{(1)}$. First note that $G^{(1)}$ has no distinct maximal cliques $K,K'$ with $|K\cap K'|>1$. Furthermore 
		\begin{align*}
		|\tilde{E}^{(1)}|&\leq {d\choose 2}\cdot |\{K \in \setOfMaxCliques \mid \text{exists } K'\in \setOfMaxCliques , |K\cap K'|>1 \}|\leq \frac{d^3\lambda n}{2},
		\end{align*}
		where the second inequality holds because every $K\in \setOfMaxCliques$ such that there is $K'\in\setOfMaxCliques$ with $|K\cap K'|>1$ and $K\neq K'$ must contain one of the $\lambda n$ vertices $v$ of $1$-type $\tau$ and there are $\leq d\lambda n$ maximal cliques containing such a vertex. In addition, the removal of the edges in $\tilde{E}^{(1)}$ will affect no more than $d^4\lambda n$ vertices because there are no more than $d^3\lambda n$ vertices contained within an edge of $\tilde{E}^{(1)}$, each of their $1$-neighbourhoods contains $d$ vertices and any vertex, whose $1$-neighbourhood is affected, must be of distance $1$ to one of the vertices contained in an edge in $\tilde{E}^{(1)}$. Hence $G^{(1)}$ contains $\leq (d^4+1)\lambda n<2d^4\lambda n$ vertices $v$ of $1$-type $\tau$.

		Note that in $G^{(1)}$ for all vertices $v$ the graph $\mathcal{N}^{G^{(1)}}_{1}(v)\setminus \{v\}$ is a disjoint union of cliques but there might be $K\in \setOfMaxCliques[G^{(1)}]$ such that $\maxcl^B(a,|K|)=0$. We define the edge set  $\tilde{E}^{(2)}:=\{e\in E^{(1)}\mid \text{exists }K\in \setOfMaxCliques[G^{(1)}], e\subseteq K,\maxcl^B(a,|K|)=0\}$ and let $G^{(2)}$ be the graph $G^{(2)}=(V, E^{(2)}))$, where $E^{(2)}=E^{(1)}\setminus\tilde{E}^{(2)}$.
		Furthermore 
		\begin{align*}
		|\tilde{E}^{(2)}|&\leq d\cdot |\{v\mid \text{exists }K\in \setOfMaxCliques[G^{(1)}], v\in K,\maxcl^B(a,|K|)=0\}|\leq d\cdot 2d^4\lambda n,
		\end{align*} 
		where the first inequality holds because every clique in $G^{(1)}$ has size $\leq d$ and the second because $\mathcal{N}^{G^{(1)}}_{1}(v)\notin \tau$ for every $v\in \{v\mid \text{exists }K\in \setOfMaxCliques[G^{(1)}], v\in K,\maxcl^B(a,|K|)=0\}$.
		
		Note that $\maxcl^B(a,|K|)\not= 0$ for all $K\in
		\setOfMaxCliques[G^{(2)}]$, but there might be $v\in V$ and
		$i\leq d$ with $\maxcl^B(a,i)\not=\maxcl^{G^{(2)}}(v,i)$. Moreover,
		note that because $n\geq n_0$ there are at least $2d$ $4$-balls in
		$G^{(2)}$ which are completely disjoint from the $4$-balls of any vertex
		$v$ of $1$-type $\tau$. $G^{(3)}$ will also have this
		property.
		Let $G^{(3)}=(V,E^{(3)})$ be the graph obtained from $G$ by the following operations.
		For every pair $v,v'$ such that there is $i\leq d$ with
		$\maxcl^B(a,i)>\maxcl^{G^{(2)}}(v,i)$ and
		$\maxcl^B(a,i)<\maxcl^{G^{(2)}}(v',i)$, let $w$ be a vertex of type
		$\tau$ which has at least distance $4$ to $v$ and to $v'$.
		Let $ K'=\{v'_1,\dots,v'_{i-1},v'\}\in \setOfMaxCliques[G^{(2)}]$ and
		$K=\{v_1,\dots,v_{i-1},w\}\in \setOfMaxCliques[G^{(2)}]$. Delete the
		edges $\{\{v',v'_j\},\{w,v_j\}\mid  j\in [i-1]\}$ and add the edges
		$\{\{v,v_j\}, \{w,v'_j\}\mid j\in [i-1]\}$. Note that the vertices
		$v_1,\dots,v_{i-1},v'_1,\dots,v'_{i-1},w$ are still contained in the same
		number of cliques as before, while $v$ is contained in one additional
		$i$-clique and $v'$ is contained in one less. 
		
		Note that in $G^{(3)}$  either $\maxcl^B(a,i)\geq
		\maxcl^{G^{(3)}}(v,i)$ for all vertices $v$ or $\maxcl^B(a,i)\leq
		\maxcl^{G^{(3)}}(v,i)$ for all $v$ for every $i\in [d]$.
		Let $G^{(4)}$ be the graph obtained from $G^{(3)}$ by the following
		operations. For every $i$ such that there is a vertex $v$ with
		$\maxcl^B(a,i)<\maxcl^{G^{(3)}}(v,i)$, we pick $i$ not necessarily
		distinct vertices $v_1,\dots,v_i$ with
		$\maxcl^B(a,i)<\maxcl^{G^{(3)}}(v_j,i)$ for $1\leq j\leq i$. Note that
		this is possible because $\sum_{v\in
			V(G^{(3)})}\maxcl^{G^{(3)}}(v,i)\equiv 0 \mod i$ and $\maxcl^B(a,i)\cdot
		n\equiv 0 \mod i$ by assumption $n\notin M$ and hence we have $\sum_{v\in
			E^{(3)}}(\maxcl^{G^{(3)}}(v,i)-\maxcl^B(a,i))\equiv 0 \mod i$. Let
		$K_j\in \setOfMaxCliques[G^{(3)}]$ such that $v_j\in K_j$ for every
		$1\leq j\leq i$. Let $K=\{w_1,\dots,w_i\}\in \setOfMaxCliques[G^{(3)}]$
		such that the distance between any pair $v_j,w_k$ is at least $4$. Remove
		the set of edges $\{\{w_j,w_k\},\{v_j,v\}\mid v\in K_j, j,k\in [i]\}$ and
		add the set of edges $\{\{w_j,v\}\mid v\in K_j,j\in [i]\}$. Note that
		this reduces the number of maximal $i$-cliques $v_1,\dots,v_i$ are in by
		one, while leaving the number of cliques $w_1,\dots,w_i$ are in the
		same.
		Similarly, for  every $i$  such that there is a vertex $v$ with
		$\maxcl^B(a,i)>\maxcl^{G^{(3)}}(v,i)$ we pick $i$ not necessarily
		distinct vertices $v_1,\dots,v_i$ with
		$\maxcl^B(a,i)>\maxcl^{G^{(3)}}(v_j,i)$ for $ j\in [i]$. Let
		$w_1,\dots,w_i$ be vertices with $\maxcl^B(a,i)=\maxcl^{G^{(3)}}(w_j,i)$
		such that $w_1,\dots,w_i$ are of distance at least $4$ from every $v_j$, $1\leq j\leq i$, and $w_1,\dots,w_i$ are pairwise of distance at least
		$4$.
		Let $K_j\in \setOfMaxCliques[G^{(3)}]$ with $w_j\in K_j$ for $ j\in
		[i]$. Remove the set of edges  $\{\{w_j,w\}\mid w\in K_j,1\leq j\leq i\}$
		and add the set of edges $\{\{v_j,w\}\{w_j,w_k\}\mid w\in K_j,j,k\in
		[i]\}$. Note that this adds one to the number of $i$-cliques
		$v_1,\dots,v_i$ are in, while leaving the number of $i$-cliques
		$w_1,\dots,w_i$ are in the same.
		
		By construction $G^{(4)}\in P$. The number of edge modifications in
		total  is $|E^{(1)}|+|E^{(2)}|$ plus the number of modifications it takes
		to transform $G^{(2)}$ into $G^{(4)}$. First note that 
		$\sum_{i=3}^{d}\sum_{v\in
			V}|\maxcl^D(a,i)$ $-\maxcl^{G^{(2)}}(v,i)|\leq 2d \cdot
		2d^4\lambda n$ since every of the $2d^4\lambda n$ vertices $v$ in
		$G^{(2)}$ of $1$-type $\tau$ can contribute at most $2d$ to
		the sum above. Since transforming $G^{(2)}$ into $G^{(4)}$ we proceed
		greedily, meaning we reduce the number $\sum_{i=3}^{d}\sum_{v\in
			V}|\maxcl^D(a,i)-\maxcl^{G^{(2)}}(v,i)|$ by at least one in
		every step, and every such reduction takes a maximum of $4d^2$ edge
		modifications in total we need less than 
		\begin{displaymath}
		|E^{(1)}|+|E^{(2)}|+4d^2\cdot 2d\cdot 2d^4\lambda n\leq 20 d^7\lambda n=\epsilon d n
		\end{displaymath}
		edge modifications. 
	\end{proof}
	Let $F:=\{\tau'\mid \tau \text{ is a }1\text{-type },\tau\not=\tau'\}$. Note that $|F|\leq t<\infty$, where equality occurs when $B\notin C_{d}$. Then Claim~\ref{claim:farImpliesManyCounterexamples-negihbourhoodRegularity} combined with Lemma~\ref{lemma:testerFramework} for $M$ and $F$ defined as above proves the Theorem. 
\end{proof}

\section{Conclusion}\label{sec:conclusion}
We studied testability of properties definable in first-order logic in the bounded-degree model of property testing for graphs and relational structures, where \emph{testability} of a property means if it is testable with constant query complexity. We showed that all properties in $\Sigma_2$ are testable (Theorem~\ref{thm:sigma2}), 
and we complemented this by exhibiting a property in $\Pi_2$ that is not testable (Theorem~\ref{thm:pi2}). 

Similar results were obtained in the dense graph model in~\cite{alon2000efficient}, albeit with very different methods. Indeed, non-testability of first-order logic in the bounded-degree model is somewhat unexpected: Testing algorithms proceed by sampling vertices and then exploring their local neighbourhoods, and
it is well-known that first-order logic can only express `local' properties. On
graphs and structures of bounded degree this is witnessed by Hanf's strong normal form of first-order logic~\cite{Hanf1965}, which is built around the absence and presence of different isomorphism types of local neighbourhoods.
However, our negative result shows that locality of first-order logic is not sufficient for testability. This also answers an open question from~\cite{AdlerH18}. 

We obtained our non-testable properties by 
encoding the zig-zag construction of bounded degree expanders into first-order logic on
relational structures (Theorem~\ref{thm:nonTestabilityForStructures}) and then extending this to undirected graphs (Theorem~\ref{thm:simpleDelta2}). We believe that this will be of independent interest. 

    Finally, we took an approach suggested by Hanf's normal form, and we proved testability of some 
	 first-order properties that speak about isomorphism types of neighbourhoods, including testability 
	 of $1$-neighbourhood-freeness, and  $r$-neighbourhood-freeness under a mild assumption on the 
	 degrees (Theorem~\ref{thm:dNeighbourhoodFreeness}, Theorem~\ref{thm:1NeighbourhoodFreeness}, and Theorem~\ref{thm:neighbourhoodRegularity}). 
In particular, these theorems imply that there are properties defined by formulas in $\Pi_2\setminus \Sigma_2$ which are testable.
Moreover, these properties are neither monotone nor hereditary, so they are an interesting example towards the remote goal of a full characterisation of the testable properties in the bounded-degree model.
\vspace{1em}

\paragraph{Acknowledgements.} 
We thank Sebastian Ordyniak for inspiring discussions.
The second author thanks Micha\l{} Pilipczuk and Pierre Simon for inspiring discussions at the docCourse on Sparsity in Prague 2018.

\bibliographystyle{alpha}
\bibliography{testingFO}

\newpage
\appendix
\begin{center}\huge\bf Appendix \end{center}

\section{Basics of graphs and first-order logic}\label{sec:basic_appendix}
We let $\mathbb{N}$ denote the set of natural numbers including $0$, and $\Npos:=\mathbb N\setminus\{0\}$. For $n\in \mathbb{N}$ we let $[n]:=\{0,1,\dots,n-1\}$ denote the set of the first $n$ natural numbers. For a set $S$ and $k\in \mathbb{N}$ we denote the Cartesian product $S\times\dots \times S$ of $k$ copies of $S$ by $S^k$. We use $[S]^2$ to denote the set of all two-element subsets of $S$, and we denote the disjoint union of sets by $\sqcup$.
\subsection{Undirected graphs}\label{app:undirectedGraphs}
Unless otherwise specified we allow graphs to have self-loops and parallel edges. We represent an undirected graph $G$ as a triple $(V,E,f)$, where $V$ is the set of vertices, $E$ is the set of edges and $f:E\rightarrow V\cup [V]^2$ is the incidence map. An isomorphism from $G_1=(V_1,E_1,f_1)$ to $G_2=(V_2,E_2,f_2)$ is a pair of bijective maps $(h_V,h_E)$, where $h_V:V_1\rightarrow V_2$ and $h_E:E_1\rightarrow E_2$, such that $h_V(f_1(e))=f_2(h_E(e))$ for any $e\in E_1$, where $h_V(X):=\{h_V(x)\mid x\in X\}$ for any set $X\subseteq V_1$. Undirected graphs without self-loops and parallel edges are called \emph{simple}. 
For a simple graph $G$, we also use the notation $G=(V,E)$, where $V$ is the vertex set and
$E\subseteq [V]^2$. 
The \emph{degree} $\deg_G(v)$ of a vertex $v$ in a graph $G$ is the number of edges to which $v$ is incident. In particular, self-loops contribute one to the degree. We will say that a graph $G$ is \emph{$d$-regular} for some $d\in \mathbb{N}$ if every vertex in $G$ has degree $d$. We specify paths in graphs by tuples of vertices. The \emph{length} of a path on $n$ vertices is $n-1$.  We define the distance between two vertices $v$ and $w$ in a graph $G$, denoted $\dist_G(v,w)$, as the length of a shortest path from $v$ to $w$ or $\infty$ if there is no path from $v$ to $w$ in $G$. Any subset $S\subseteq V$ of vertices \emph{induces} a graph  $G[S]:=(S,\{e\in E\mid f(e)\in S\cup [S]^2\},f|_{S})$. A \emph{connected component} of $G$ is a graph induced by a maximal set $S$, such that each pair $v,w\in S$ has finite distance in $G$. A graph is connected if it has only one connected component. We refer the reader to~\cite{diestelBook}
for the basic notions of graph theory.

We also consider rooted undirected trees. By specifying a root we can uniquely direct the edges away from the root. This allows us to use the terminology of \emph{children} and \emph{parents} for undirected rooted trees. We call a  tree \emph{$k$-ary} if every vertex has either none or exactly $k$ children and we call it complete if, for every $i\in \mathbb N$, there are either exactly $k^i$ or no vertices of distance $i$ to the root of the tree.
\subsection{Relational structures and first-order logic}\label{sec:foappendix}
We will briefly introduce structures and first-order logic and point the reader to~\cite{EF95} for a more detailed introduction.
A  \textit{signature} is a finite set $\sigma =\{R_1,\dots,R_\ell,c_1,\dots,c_m\}$ of relation symbols $R_i$ and constant symbols $c_j$. Every relation symbol $R_i$  has an arity  $\ar(R_i)\in \Npos$.
A \textit{$\sigma$-structure} is a tuple $\mathcal{A}=(A,R_1^\mathcal{A},\dots,R_\ell^\mathcal{A},c_1^\mathcal{A},\dots,c_m^\mathcal{A})$, where $A$ is a \emph{finite} set, called the \emph{universe} of $\mathcal A$, $R_i^\mathcal{A}\subseteq A^{\ar(R_i)}$ is an $\ar(R_i)$-ary relation on $A$ and $c_j^\mathcal{A}\in A$. For a signature $\sigma$ and a $\sigma$-structure $\mathcal{A}$ we call $|A|$ the \emph{size} of $\mathcal{A}$. Two $\sigma$-structures $\mathcal{A}$ and $\mathcal{B}$ are \emph{isomorphic} if there is a bijective map from $A$ to $B$ that preserves all relations and constants. If a signature $\sigma$ contains no constant symbols we call $\sigma$ a \emph{relational signature} and $\sigma$-structures \emph{relational structures}. 
Note that if $\sigma=\{E_1,\ldots,E_{\ell}\}$ is a signature where each $E_i$ is a binary relation
symbol, then $\sigma$-structures are directed graphs 
with at most $\ell$ edge-colours.

In the following we let $\sigma$ be a relational signature.
For a $\sigma$-structure $\mathcal A$ and a subset $B\subseteq A$, we let $\mathcal A[B]$
denote the \emph{substructure} of $\mathcal A$ \emph{induced} by $B$, i.\,e.\ $\mathcal A[B]$ has universe $B$ and $R^{\mathcal A[B]}:=R^{\mathcal A}\cap B^{\text{ar}(R)}$ for all $R\in \sigma$.
A structure $\mathcal B$ is a \emph{substructure} of $\mathcal A$ if  $\mathcal B=\mathcal A[B]$
for some $B\subseteq A$.
The \emph{degree} of an element $a\in A$ denoted by $\deg_\mathcal{A}(a)$ is defined to be the number of tuples in $\mathcal{A}$ containing $a$.
We define the \textit{degree} of $\mathcal{A}$ denoted by $\deg(\mathcal{A})$ to be the maximum degree of its elements. For any $d\in \mathbb{N}$ we let $C_d$ be the class of all $\sigma$-structures of bounded degree $d$. This notion of degree follows \cite{AdlerH18}. 
We say that $\mathcal A$ is \emph{$d$-regular}, if $\deg_\mathcal{A}(a)=d$ for every element $a\in A$.

The set $\FOsigma$ of all first-order formulas over $\sigma$ is recursively built from atomic formulas
$'x=y'$ and $'R_i(x_1,\dots,x_{\ar(R)})'$, for $R\in \sigma$ and variables $x,y,x_1,\dots, x_{\ar(R)}$, and is closed under Boolean connectives $\lnot,\land$ and $\lor$ and existential ($\exists$) and universal ($\forall$) quantification over elements of the structure. We let $\FO:=\bigcup_{\sigma\text{ signature}} \FOsigma$.
We use $\exists^{\geq m}x\,\phi$ (and $\exists^{= m}x\,\phi$,, respectively)
as a shortcut for the FO formula expressing that  the number of witnesses $x$ satisfying $\phi$
is at least $m$ (exactly $m$, respectively).
We say that a variable occurs \emph{freely} in an FO formula if at least one of its occurrences is not bound by any quantifier.
We use $\varphi(x_1,\dots,x_k)$ to express that the set of variables which occur  freely in the FO formula $\varphi$ is a subset of $\{x_1,\dots,x_k\}$. For a formula $\varphi(x_1,\dots,x_k)$, a $\sigma$-structure $\mathcal{A}$ and $a_1,\dots,a_k\in A$ we write $\mathcal{A}\models \varphi(a_1,\dots,a_k)$ if $\varphi$ evaluates to true after assigning $a_i$ to $x_i$, for $1\leq i\leq k$. A \emph{sentence} of FO is a formula with no free variables. For an FO sentence $\varphi$ we say that $\mathcal{A}$ is a \emph{model} of $\varphi$ if $\mathcal{A}\models \varphi$.

\subsection{Hanf normal form and $d$-regular structures}\label{sec:hanftheorem}
The \textit{Gaifman graph} of a $\sigma$-structure $\mathcal{A}$ is the simple, undirected graph $G(\mathcal{A})=(A,E)$, where $\{x,y\}\in E$, if there is an $R\in \sigma$ and a tuple $\overline{b}=(b_1,\dots,b_{\ar(R)})\in R^\mathcal{A}$, such that $x=b_j$ and $y=b_k$ for some $1\leq k,j\leq \ar(R)$ with $j\not=k$. We use $G(\mathcal{A})$ to apply graph theoretic notions to relational structures. 
For two elements $a,b\in A$ we define the \emph{distance} between $a$ and $b$ in $\mathcal{A}$ as
$\dist_\mathcal{A}(a,b):=\dist_{G(\mathcal{A})}(a,b)$. 
For $r\in \mathbb{N}$ and   $a\in A$, the \textit{$r$-neighbourhood} of $a$ is the set $N_r^\mathcal{A}(a):=\{b\in A: \dist_\mathcal{A}(a,b)\leq r\}$. 

If $c$ is a constant symbol with $c\notin\sigma$, we let 
$\sigma_c:=\sigma\cup\{c\}$. 
For $r\in \mathbb N$ and $a\in A$,
the \textit{$r$-ball around $a$} is defined as the $\sigma_c$-structure 
$\mathcal{N}_r^\mathcal{A}(a):=(\mathcal A[N_r^\mathcal{A}(a)],a)$, and $a$ is called the \emph{centre}
of $\mathcal{N}_r^\mathcal{A}(a)$.
In general we call a $\sigma_c$-structure $\mathcal{B}$ an \emph{$r$-ball}, 
if $\mathcal B\cong \mathcal{N}_r^\mathcal{A}(c^{\mathcal B})$.
We call the isomorphism classes of $r$-balls \emph{$r$-types}. For an $r$-type $\tau$ and an element $a\in A$ we say that $a$ \emph{has} ($r$-)type $\tau$ if $\mathcal{N}_r^\mathcal{A}(a)\in \tau$. Observe that for fixed $r,d\in \mathbb N$, there are only finitely many $r$-types in structures in $C_d$.
Moreover, given such an $r$-type $\tau$, there is a formula $\phi_{\tau}(x)$ such that 
for every $\sigma$-structure $\mathcal A$ and for every $a\in A$, $\mathcal A\models\phi_{\tau}(a)$ iff
$\mathcal{N}_r^\mathcal{A}(a)\in \tau$.
A \emph{Hanf-sentence} is a sentence of the form $\exists ^{\geq m} x \phi_{\tau}(x)$, for some $m\in\Npos$, where $\tau$ is an 
$r$-type. Here  $r$ is the \emph{locality radius} of the Hanf-sentence. 
An FO sentence is in \emph{Hanf normal form}, if it is a Boolean combination\footnote{By Boolean combination we 
	always mean \emph{finite} Boolean combination.} of Hanf sentences.
Two formulas $\phi(\bar x)$ and $\psi(\bar x)$ of signature $\sigma$ are called
\emph{$d$-equivalent}, if they are equivalent on $C_d$, i.\,e.\ for all $\mathcal A\in C_d$ and
$\bar a \in A^{|\bar a|}$ with $|\bar a|=|\bar x|$ we have 
$\mathcal A\models\phi(\bar a)$ iff $\mathcal A\models\psi(\bar a)$.
Hanf's locality theorem for first-order logic~\cite{Hanf1965} implies the following.

\begin{theorem}[Hanf~\cite{Hanf1965}]
	Let $d\in\mathbb N$. Every sentence of first-order logic is $d$-equivalent to a 
	sentence in Hanf normal form.
\end{theorem}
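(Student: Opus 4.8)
The plan is to deduce the Hanf normal form from Hanf's locality theorem, exploiting that on $C_d$ the $r$-balls have bounded size and hence there are only finitely many $r$-types, which lets us work with a single finite counting threshold.

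\textbf{Step 1 (quantitative locality).} First I would record the following consequence of Hanf's locality theorem~\cite{Hanf1965,EF95} specialized to $C_d$: for every $q\in\mathbb N$ there are $r=r(q)\in\mathbb N$ and $m=m(q)\in\Npos$ such that, whenever $\mathcal A,\mathcal B\in C_d$ satisfy
\[
\min\bigl(\,|\{a\in A:\mathcal N_r^{\mathcal A}(a)\in\tau\}|,\;m\bigr)=\min\bigl(\,|\{b\in B:\mathcal N_r^{\mathcal B}(b)\in\tau\}|,\;m\bigr)\quad\text{for every $r$-type }\tau,
\]
then $\mathcal A$ and $\mathcal B$ satisfy exactly the same first-order sentences of quantifier rank at most $q$. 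One may take the classical Hanf radius $r=(3^q-1)/2$; the threshold $m$ can be chosen finite and depending only on $q$, $d$ and $\sigma$, since every $r$-ball in a $d$-bounded structure has at most $1+d+\dots+d^{r}$ elements. If one wants this self-contained rather than cited, it is the standard Ehrenfeucht--Fra\"iss\'e argument: the duplicator maintains, after $i$ rounds, an isomorphism between the $3^{q-i}$-neighbourhoods of the played tuples while always keeping a large enough reservoir of yet-unplayed realizations of every type so that no later round can be forced to fail.

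\textbf{Step 2 (profiles).} Fix an FO sentence $\phi$ of quantifier rank $q$ and let $r,m$ be as in Step~1. Let $\tau_1,\dots,\tau_t$ be all $r$-types occurring in $C_d$ (finitely many, by the size bound on $r$-balls). Call a function $f:\{\tau_1,\dots,\tau_t\}\to\{0,1,\dots,m\}$ a \emph{profile}; there are finitely many. Say $\mathcal A\in C_d$ \emph{realizes} $f$ if for each $i$ the number of elements of $\mathcal A$ of $r$-type $\tau_i$ equals $f(\tau_i)$ when $f(\tau_i)<m$ and is at least $m$ when $f(\tau_i)=m$; every $\mathcal A\in C_d$ realizes exactly one profile $f_{\mathcal A}$. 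By Step~1, any two structures realizing the same profile agree on $\phi$, so we may let $\mathcal F_\phi$ be the set of profiles whose realizations satisfy $\phi$ (the condition being vacuous, hence immaterial, for profiles with no realization).

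\textbf{Step 3 (assembling the normal form) and the main obstacle.} For each profile $f$ set, using the formulas $\phi_{\tau_i}(x)$ defining $r$-types,
\[
\psi_f:=\bigwedge_{i:\,f(\tau_i)<m}\Bigl(\exists^{\geq f(\tau_i)}x\,\phi_{\tau_i}(x)\ \wedge\ \neg\,\exists^{\geq f(\tau_i)+1}x\,\phi_{\tau_i}(x)\Bigr)\ \wedge\ \bigwedge_{i:\,f(\tau_i)=m}\exists^{\geq m}x\,\phi_{\tau_i}(x),
\]
which is a Boolean combination of Hanf sentences and satisfies $\mathcal A\models\psi_f$ iff $\mathcal A$ realizes $f$. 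Then for every $\mathcal A\in C_d$ we have $\mathcal A\models\phi$ iff $f_{\mathcal A}\in\mathcal F_\phi$ iff $\mathcal A\models\bigvee_{f\in\mathcal F_\phi}\psi_f$, so $\phi$ is $d$-equivalent to $\bigvee_{f\in\mathcal F_\phi}\psi_f$, a sentence in Hanf normal form, which proves the theorem. The only genuinely technical point is Step~1: pinning down explicit finite parameters $r$ and $m$ on $C_d$ for which agreement of truncated $r$-type counts implies elementary equivalence up to rank $q$. The Ehrenfeucht--Fra\"iss\'e bookkeeping, simultaneously preserving local isomorphism and keeping enough spare realizations of each type for all remaining rounds, is the delicate part; everything after it is a finite case distinction repackaged as a disjunction of Hanf sentences.
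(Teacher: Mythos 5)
Your proposal is correct: it is the standard derivation of Hanf normal form on bounded-degree classes, namely the threshold version of Hanf's locality theorem obtained via the Ehrenfeucht--Fra\"iss\'e game, followed by a finite case distinction over truncated $r$-type profiles, each profile being expressible as a Boolean combination of Hanf sentences. The paper itself gives no proof of this statement --- it is cited directly from Hanf --- and your argument matches the textbook proof (e.g.\ Ebbinghaus--Flum) that the citation points to, with the only deferred technicality being exactly the one you identify, the quantitative Step~1.
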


\section{Deferred Proofs from Section~\ref{sec:zigZagProduct}}
\label{sec:proof_app}
\begin{proof}[Proof of Lemma \ref{lem:nonBipartitenessConnectedness}]
	Let $1=\lambda_1\geq \lambda_2\geq \dots\geq \lambda_N$ be the eigenvalues of $G^2[S]$. Since $G^2[S]$ is connected, Lemma \ref{lem:connectedBipartiteEigenvalues} implies that $\lambda_1>\lambda_2$. Now assume that $-1$ is an eigenvalue of $G^2[S]$ with eigenvector $\overline{v}$. Then the vector $\overline{v}'$ defined by $\overline{v}'_v=\overline{v}_v$ for all $v\in S$ and $\overline{v}'_v=0$ otherwise is the eigenvector for eigenvalue $-1$ of the graph $G^2$. But $G^2$ can not have a negative eigenvalue as every eigenvalue of $G^2$ is a square of a real number. Therefore $\lambda_1\not= \lambda_N$ and $\lambda(G^2[S])<1$ as claimed. 
\end{proof}
%
%




\end{document}